\documentclass[a4paper,11pt]{article}
\pdfoutput=1

\usepackage{xcolor}
\definecolor{MyDarkBlue}{rgb}{0.15,0.25,0.45}
\usepackage{graphicx}
\usepackage[textwidth = 430 pt, textheight = 630 pt]{geometry}

\usepackage[utf8]{inputenc}

\usepackage{amsmath,amssymb,amsthm}
\usepackage{amsfonts}
\usepackage{mathrsfs}
\usepackage{bbm}
\usepackage[normalem]{ulem}
\usepackage{booktabs,longtable}
\usepackage{latexsym}
\usepackage{aliascnt}     

\usepackage{hyperref}
\hypersetup{
    hypertexnames=false,
    colorlinks=true,
    citecolor=MyDarkBlue,
    linkcolor=MyDarkBlue,
    urlcolor=MyDarkBlue,
    pdfauthor={},
    pdftitle={},
    pdfsubject={hep-th math-ph},
    breaklinks=true
}

\newcommand{\sharedtheorem}[2]{%
    \newaliascnt{#1}{theorem}
    \newtheorem{#1}[#1]{#2}
    \aliascntresetthe{#1}
}

\theoremstyle{plain}       
\newtheorem{theorem}{Theorem}[section]   
\sharedtheorem{lemma}{Lemma}
\sharedtheorem{proposition}{Proposition}
\sharedtheorem{corollary}{Corollary}
\sharedtheorem{conjecture}{Conjecture}

\theoremstyle{definition}  
\sharedtheorem{definition}{Definition}
\sharedtheorem{example}{Example}
\sharedtheorem{question}{Question}
\sharedtheorem{notation}{Notation}

\theoremstyle{remark}      
\sharedtheorem{remark}{Remark}

\numberwithin{equation}{section}

\usepackage[nameinlink,noabbrev]{cleveref}

\crefname{theorem}{theorem}{theorems}
\crefname{lemma}{lemma}{lemmata}
\crefname{proposition}{proposition}{propositions}
\crefname{corollary}{corollary}{corollaries}
\crefname{conjecture}{conjecture}{conjectures}
\crefname{definition}{definition}{definitions}
\crefname{example}{example}{examples}
\crefname{notation}{notation}{notations}
\crefname{question}{question}{questions}
\crefname{remark}{remark}{remarks}

\Crefname{theorem}{Theorem}{Theorems}
\Crefname{lemma}{Lemma}{Lemmata}
\Crefname{proposition}{Proposition}{Propositions}
\Crefname{corollary}{Corollary}{Corollaries}
\Crefname{conjecture}{Conjecture}{Conjectures}
\Crefname{definition}{Definition}{Definitions}
\Crefname{example}{Example}{Examples}
\Crefname{notation}{Notation}{Notations}
\Crefname{question}{Question}{Questions}
\Crefname{remark}{Remark}{Remarks}

\usepackage{tikz}
\usetikzlibrary{matrix,cd,arrows}
\usepackage{mathtools}

\tikzcdset{scale cd/.style={every label/.append style={scale=#1},
        cells={nodes={scale=#1}}}}

\let\fn\footnote
\renewcommand{\footnote}[1]{\linespread{1.1}\fn{#1}\linespread{1.29}}

\makeatletter
\renewcommand{\section}{\@startsection
    {section}{1}{\z@}{-3.5ex plus -1ex minus
        -.2ex}{2.3ex plus .2ex}{\mathversion{bold}\bf }}
\renewcommand{\subsection}{\@startsection{subsection}{2}{\z@}{-3.25ex
        plus -1ex minus
        -.2ex}{1.5ex plus .2ex}{\mathversion{bold}\bf }}
\renewcommand{\subsubsection}{\@startsection{subsubsection}{3}{-2.45ex}{-3.25ex
        plus -1ex minus -.2ex}{1.5ex plus .2ex}{\it }}
\renewcommand\paragraph{\@startsection{paragraph}{4}{\z@}%
    {3.25ex \@plus1ex \@minus.2ex}%
    {-1em}%
    {\normalfont\normalsize\bfseries\mathversion{bold}}}

\renewcommand{\@seccntformat}[1]{\@nameuse{the#1}.~~}

\makeatletter
\newcommand{\groupappendicesintoc}{%
    \addtocontents{toc}{\protect\addvspace{1em}}%
    \phantomsection\addcontentsline{toc}{section}{Appendices}%
    \let\oldaddcontentsline\addcontentsline
    \renewcommand\addcontentsline[3]{%
        \def\@tempa{##2}\def\@tempb{section}%
        \ifx\@tempa\@tempb
        \oldaddcontentsline{##1}{subsection}{##3}%
        \else
        \oldaddcontentsline{##1}{##2}{##3}%
        \fi
    }%
}
\makeatother

\newcommand{\appendices}{
    \section*{Appendix}\label{appendices}\setcounter{subsection}{0}
    \appendix
    \crefalias{section}{appendix}
    \crefalias{subsection}{appendix}
    \crefname{appendix}{appendix}{appendices}
    \Crefname{appendix}{Appendix}{Appendices}
    \groupappendicesintoc
}

\newcommand{\makecommand}[3]{%
    \foreach \i in #3 {%
        \expandafter\xdef\csname #1\i\endcsname{\noexpand#2{\unexpanded\expandafter{\i}}}%
    }%
}

\makecommand{fr}{\mathfrak}{{c,der,g,h,A,X,gl,sl,o,so,osp,u,su,sp,spin,string,Poisson,td,tb}}

\makecommand{sf}{\mathsf}{{id,String,Lie,Hom,SU,SO,Sp,Cat,Set,Spin,Pin,inv,CE,Diff,HSymp,TD,TB,pr,Aut,GL,SL,IO,Mat,Sym,Inn,Out,AUT,Vect}}

\makecommand{rm}{\mathrm}{{Ham,diag,im,Ad,dim}}

\newcommand{\latinalphabet}{A,a,B,b,C,c,d,D,E,e,F,f,G,g,H,h,I,i,J,j,K,k,L,l,M,m,N,n,O,o,P,p,Q,q,R,r,S,s,T,t,U,u,V,v,W,w,X,x,Y,y,Z,z}
\newcommand{\caplatinalphabet}{A,B,C,D,E,F,G,H,I,J,K,L,M,N,O,P,Q,R,S,T,U,V,W,X,Y,Z}
\makecommand{I}{\mathbbm}{\latinalphabet}
\makecommand{bf}{\mathbf}{\latinalphabet}
\makecommand{bm}{\bm}{\latinalphabet}
\makecommand{ca}{\mathcal}{\caplatinalphabet}
\makecommand{fr}{\mathfrak}{\latinalphabet}
\makecommand{rm}{\mathrm}{\latinalphabet}
\makecommand{sf}{\mathsf}{\latinalphabet}
\makecommand{sc}{\mathscr}{\latinalphabet}

\newcommand{\acton}{\triangleright}

\def\slasha#1{\setbox0=\hbox{$#1$}#1\hskip-\wd0\hbox to\wd0{\hss\sl/\/\hss}}

\def\periodb#1{\setbox0=\hbox{$#1$}#1\hskip-\wd0\hbox to\wd0{-}}

\newcommand{\dpar}{\partial}

\newcommand{\unit}{\mathbbm{1}}   			

\newcommand{\eand}{~~~\mbox{and}~~~}

\newenvironment{myenumerate}{
    \vspace{-2mm}\begin{enumerate}
        \setlength{\itemsep}{-1mm}
    }{\vspace{-2mm}\end{enumerate}}

\makeatletter
\AtBeginDocument{%
    \let\oldref\ref
    \renewcommand*{\ref}{\@ifstar{\oldref*}{\cref}}%
    \renewcommand{\eqref}[1]{{\rm (\oldref{#1})}}
}
\makeatother

\newcommand{\eps}{\varepsilon}

\newcommand{\sfTBLFT}{\mathsf{T}_2\mathsf{BL}^{F_2}}
\newcommand{\sfTBRFT}{\mathsf{T}_2\mathsf{BR}^{F_2}}
\newcommand{\sfTDFT}{\mathsf{T}_2\mathsf{D}^{F_2}}
\newcommand{\sfTBLFTT}{\mathsf{T}_2\mathsf{BL}^{F_3}}
\newcommand{\sfTBRFTT}{\mathsf{T}_2\mathsf{BR}^{F_3}}
\newcommand{\sfTDFTT}{\mathsf{T}_2\mathsf{D}^{F_3}}
\newcommand{\sfTBLFTA}{\mathsf{T}_2\mathsf{BL}^{F_2,\ltimes}}
\newcommand{\sfTBRFTA}{\mathsf{T}_2\mathsf{BR}^{F_2,\ltimes}}
\newcommand{\sfTDFTA}{\mathsf{T}_2\mathsf{D}^{F_2,\ltimes}}
\newcommand{\sfTBLFTR}{\mathsf{T}_2\mathsf{BL}^{F_2,{\rm red}}}
\newcommand{\sfTBRFTR}{\mathsf{T}_2\mathsf{BR}^{F_2,{\rm red}}}
\newcommand{\sfTDFTR}{\mathsf{T}_2\mathsf{D}^{F_2,{\rm red}}}
\newcommand{\sfTBLFTRA}{\mathsf{T}_2\mathsf{BL}^{F_2,{\rm red},\ltimes}}
\newcommand{\sfTBRFTRA}{\mathsf{T}_2\mathsf{BR}^{F_2,{\rm red},\ltimes}}
\newcommand{\sfTDFTRA}{\mathsf{T}_2\mathsf{D}^{F_2,{\rm red},\ltimes}}

\newcommand{\low}{\mathrm{low}}

\begin{document}
    
    \begin{titlepage}
        \setcounter{page}{0}
        \renewcommand{\thefootnote}{\fnsymbol{footnote}}
        \vspace*{2.0cm}
        \begin{center}
            {\LARGE \bf
                Differential $T_2$-Duality and Spans \\[0.5cm] of Principal 3-Bundles with Connections
            }
            \vskip1.5cm
            {\large Gianni Gagliardo$^{a}$, Christian Saemann$^{a}$, and Roberto Tellez-Dominguez$^{b}$}\footnote{{\it E-mail addresses:\/}
                \href{mailto:ggg2001@hw.ac.uk}{\ttfamily ggg2001@hw.ac.uk},
                \href{mailto:c.saemann@hw.ac.uk}{\ttfamily c.saemann@hw.ac.uk}, 
                \href{mailto:roberto.tellez@icmat.es}{\ttfamily roberto.tellez@icmat.es}
            }
            
            \setcounter{footnote}{0}
            \renewcommand{\thefootnote}{\arabic{thefootnote}}
            \vskip1cm
            $^a$ 
            Maxwell Institute for Mathematical Sciences\\
            Department of Mathematics,
            Heriot--Watt University\\
            Edinburgh EH14 4AS, United Kingdom\\[.3cm]
            
            $^b$
            Department of Mathematics and Data Science, CEU San Pablo University\\ 
            C. Juli\'an Romea, 23, 28003 Madrid, Spain
        \end{center}
        \vskip1.0cm
        \begin{center}
            {\bf Abstract}
        \end{center}
        \begin{quote}
            We present a generalization of T-duality, called $T_2$-duality, that captures certain aspects of U-duality in M-theory. In particular, it features backgrounds that contain a 2-gerbe over a principal torus bundle, structures familiar from eleven-dimensional supergravity backgrounds. In special cases, it also incorporates S-duality in a precise sense. Our main result is a theorem characterizing \(T_2\)-dual pairs through spans of categorified principal bundles with connections, giving explicit rules for constructing a \(T_2\)-dual for a given background. A number of explicit examples illustrate how the various characteristic classes are mapped between the left and right backgrounds. For two-dimensional torus fibers, we also relate our construction to the combination of dimensional reduction and T-duality that links eleven-dimensional supergravity to type IIB supergravity. Finally, we discuss how $T_2$-duality arises from a Buscher-like span of 3d sigma models.
        \end{quote}

    \end{titlepage}

    \tableofcontents
    
    \section{Introduction and results}\label{sec:intro}
    
    \paragraph{Introduction.} T-duality~\cite{Giveon:1994fu} is a prominent feature of string theory, distinguishing it from particle physics. In toroidal compactifications of M-theory, the corresponding non-perturbative duality is U-duality, which incorporates both T-duality and S-duality, the latter being a strong-weak coupling duality. T-duality teaches us that spacetime geometry may not be fundamental, since string theories in spacetimes with very different topologies can be fully equivalent. U-duality takes this considerably further by showing that apparently different kinds of extended objects can be related by a change of duality frame. For example, the basic S-duality transformation in type IIB string theory interchanges the fundamental string and the D1-brane.
    
    Studying U-duality is certainly a very powerful approach towards learning more about M-theory. For example, the active research area of exceptional field theory, see \cite{Berman:2020tqn,Hohm:2019bba} for reviews, underlines the importance of this approach. Concretely, U-duality combines degrees of freedom arising from the metric and from higher gauge fields into unified generalized geometric objects. It allows us to see the usually listed constituents of M-theory, such as M2- and M5-branes, as well as Kaluza--Klein momenta and winding modes as different components of a unified multiplet. Furthermore, U-duality, like T-duality, predicts the existence of non-geometric backgrounds that cannot be described in terms of ordinary Riemannian geometry.
    
    There is a well-established form of topological T-duality~\cite{Bouwknegt:2003vb,Bunke:2005sn,Bunke:2005um}, where toroidal string backgrounds, i.e.~gerbes $\scG$ over torus bundles $P$, are related over a correspondence space. Concretely, two backgrounds $\scG\rightarrow P\rightarrow M$ and $\hat \scG\rightarrow \hat P\rightarrow M$ are T-dual if the pullbacks of the gerbes $\scG$ and $\hat \scG$ to the correspondence space $P\times_M\hat P$ are isomorphic, with the restriction of the isomorphism to each fiber satisfying the Poincaré condition.
    
    A higher differential geometric description of topological T-duality was given in~\cite{Nikolaus:2018qop}. It was shown that backgrounds of type $F_2$, i.e.~those appearing in geometric T-dualities, can be encoded in a principal 2-bundle with a particular structure 2-group $\sfT\sfB^{F_2}_d$. This 2-group is now replaced by a second 2-group, $\sfT\sfD^{F_2}_d$, whose classifying space is homotopy equivalent to that of $\sfT\sfB^{F_2}_d$, so that there is a natural bijection between isomorphism classes of principal $\sfT\sfD^{F_2}_d$- and $\sfT\sfB^{F_2}_d$-bundles over the same base manifold. The advantage of the 2-group $\sfT\sfD^{F_2}_d$ is that it admits a natural action\footnote{There are minor subtleties, see~\cite{Kim:2022opr,Waldorf:2022lzs}.} of the T-duality group $\sfO(d,d;\IZ)$.\footnote{There is another, surprising advantage: the 2-group $\sfT\sfD^{F_2}_d$, contrary to the 2-group $\sfT\sfB^{F_2}_d$, admits an adjustment, data required to define non-flat connections on principal $\sfT\sfD^{F_2}_d$-bundles~\cite{Kim:2022opr}.} As explained in~\cite{Nikolaus:2018qop}, there are morphisms
    \begin{equation}
        \phi:\sfT\sfD^{F_2}_d\rightarrow \sfT\sfB^{F_2}_d
        \eand
        \hat \phi:\sfT\sfD^{F_2}_d\rightarrow \sfT\sfB^{F_2}_d~,
    \end{equation}
    where $\hat\phi$ is obtained by first applying the action of the flip element in $\sfO(d,d;\IZ)$ that implements T-duality and then applying $\phi$. Given a principal $\sfT\sfD^{F_2}_d$-bundle, these two morphisms define two principal $\sfT\sfB^{F_2}_d$-bundles, and, as shown in~\cite{Nikolaus:2018qop}, the corresponding backgrounds of type $F_2$ are T-dual.  
    
    A differential refinement of this description, including the gerbe and Kaluza--Klein connections on the torus bundle, was recently provided in~\cite{Kim:2022opr}.\footnote{See also~\cite{Waldorf:2022tib} for a full combination of topological T-duality with the local rules for T-duality, known as Buscher rules.} This captures the features of differential T-duality~\cite{Kahle:2009uta}, i.e.~T-duality without metrics but Kaluza--Klein connections.
    
    A comparably complete topological picture for U-duality is still largely missing. Exceptional generalized geometry and exceptional field theory do not capture the global and finite aspects of U-duality in the way that the Bunke--Schick formalism does for T-duality. Global aspects of U-duality have been studied in four-dimensional supergravity~\cite{Lazaroiu:2021vmb}, but only for the 1-form gauge fields, rather than the higher form fields of M-theory. More closely related to our setting is spherical T-duality~\cite{Bouwknegt:2014oka}, where the second Chern class of a principal $\sfSU(2)$-bundle $P$, which is the characteristic class of the Chern--Simons 2-gerbe of $P$, is exchanged with a topological class in degree 7. While all 2-gerbes in dimension 4 are isomorphic to Chern--Simons 2-gerbes of some principal $\sfSU(2)$-bundle, this no longer holds in higher dimensions. Dualities for the specific type of 2-gerbes that are expected to model the fields in M-theory under Hypothesis H have also been studied ~\cite{Fiorenza:2018ekd,Sati:2018tvj,Giotopoulos:2024jcr}. However, these are rational constructions (i.e.~these constructions are formulated at the level of rational homotopy theory, using $L_\infty$-algebraic models), and so they do not provide a global, finite description of the kind sought here.
    
    Our aim in this paper is to formulate a more direct duality, called $T_2$-duality, between general 2-gerbes over torus bundles and other topological data; namely, 2-gerbes over 1-gerbes over torus bundles. Our expectation is that a global, finite formulation of U-duality will incorporate an underlying $T_2$-duality. Using simplicial manifolds and higher differential-geometric generalizations of~\cite{Nikolaus:2018qop,Kim:2022opr}, we describe $T_2$-duality as a span of certain principal 3-bundles with connection. For the latter, we have to rely on adjusted connections~\cite{Saemann:2019dsl,Rist:2022hci,Fischer:2024vak} and in particular~\cite{Gagliardo:2025oio}.
    
    \subsection*{Results and discussion}
    
    \paragraph{$T_2$-duality.}
    We define topological $T_2$-duality in \ref{def:t2duality}, and \ref{def:t2dualitydiff} extends this to differential $T_2$-duality, in which all involved bundles and higher bundles carry connections. Contrary to T-duality, we have two different types of backgrounds: left and right backgrounds. While a left background is a 2-gerbe with band\footnote{That is, the analogue of a higher structure group.} $\sfU(1)$ over a principal $\IT^d$-bundle, a right background is a 2-gerbe with band $\sfU(1)$ over a 1-gerbe with band $\IT^{d,\ast}$ over a principal $\IT^{\binom{d}{2},\ast}$-torus bundle. Analogously to the Bunke--Schick picture, a left and a right background are $T_2$-dual if there exists an isomorphism between the 2-gerbes over the correspondence space of the fibrations, restricting over its fibers to a natural object that we call the \emph{Poincar\'e gerbe}.
    
    Let us now show that the asymmetry in the backgrounds is rather natural, from both from a mathematical and a physical perspective. From the mathematical perspective, we note first that the dual torus $\IT^{d,\ast}$ defining the fibers of a right background in T-duality is meant to parameterize flat line bundles over the fibers $\IT^d$ of the left background, as these give automorphisms of its gerbe with connection. When the gerbe on the left background is replaced by a 2-gerbe, the natural generalization is thus to consider right backgrounds with fibers isomorphic to $\caM(\IT^d)$, the moduli space of flat 1-gerbes over $\IT^d$. The space $\caM(\IT^d)$ is naturally identified with the groupoid $\Lambda^2 \IT^{d,\ast}\times B\IT^{d,\ast}$. This leads to a ``less symmetric'' but still natural correspondence space with fibers $\IT^d \times \caM(\IT^d)$, which carries the universal Poincar\'e gerbe, as we explain in \ref{ssec:poincare}.
    
    From a physical perspective, consider the Buscher procedure for a three-dimensional sigma model with target space a topologically trivial left $T_2$-duality background. As is familiar from ordinary T-duality, this procedure relates the model to a physically equivalent three-dimensional sigma model by gauging the torus directions, introducing a Lagrange multiplier to ensure flatness of the gauge connection, and integrating out the gauge potential (which also causes the torus directions to drop out of the action). Effectively, the torus directions are traded for a Lagrange multiplier. In the case of ordinary T-duality and 2d sigma models, the Lagrange multiplier is simply a vector of 0-forms on the worldsheet and interpreted as coordinates on a dual torus $\IT^{d,\ast}$. In our case of $T_2$-duality, the Lagrange multiplier is a vector of 1-forms, which is naturally seen as coordinates on the groupoid $B\IT^{d,\ast}$ which parameterizes the fibers of a gerbe over the target space base manifold. (Recall that sigma models are maps into a manifold and gauged sigma models are maps into groupoids, such as the total space of a gerbe.) Thus, in $T_2$-duality the $\IT^d$ directions of the left background are traded for $B\IT^{d,\ast}$ directions on the right background; the additional $\Lambda^2 \IT^{d,\ast}$ directions only appear when the topology of the left background is intricate enough, as explained in \ref{ssec:F3-T2-duality}. A detailed discussion for the case $d=2$ is found in \ref{ssec:sigmamodel}.
    
    \paragraph{Toroidal 3-groups.} As a first step towards a formulation of $T_2$-duality in terms of higher principal bundles, we have to identify the appropriate higher structure groups. In the case of T-duality, the categorified torus of~\cite{Ganter:2014zoa} was found to provide a key structure 2-group~\cite{Nikolaus:2018qop}. In~\ref{ssec:categorical_tori}, we introduce the appropriate notion of categorical 3-torus, a particular form of 3-group, and we denote the categorical 3-torus featuring in our description of $T_2$-duality by $\sfTDFT_d$.
    
    Unfortunately, this 3-group is generally not a semi-strict 3-group or, equivalently, a 2-crossed module of groups. Therefore, we have to introduce the new definition of an almost 2-crossed module in \ref{sec:almostcrossed}. This is a generalization of 2-crossed modules with a conceptually transparent set of axioms. We also derive the Čech cocycle relations for principal 3-bundles with these 3-groups as higher structure groups, which then gives us a simplicial model for the delooping, which, in turn, allows us to define weak homomorphisms between 3-groups of this type.
    
    It turns out that almost 2-crossed modules, while straightforward to define, are complicated to work with. We therefore restrict ourselves to the special case of toroidal 3-groups, and all the relevant 3-groups in this paper are of this type. We derive the Čech cocycle and coboundary relations of toroidal principal 3-bundles, i.e.~principal 3-bundles with toroidal higher structure group, giving us access to delooping and weak morphisms. We also discuss how adjusted connections are defined on these. The details for one particular case are given in \ref{app:adjustment}; the other case needed below is a truncation of the  data presented there.
    
    In \ref{ssec:T2-duality_3-groups}, we then introduce the other two important toroidal 3-groups: $\sfTBLFT_d$ and $\sfTBRFT_d$. \Cref{prop:leftbackgroundsf2,prop:rightbackgroundsf2} show that principal 3-bundles with these two structure groups define topological left and right $T_2$-duality backgrounds, respectively, and, conversely, that any such background can be obtained from such principal 3-bundles.
    
    Crucially, \cref{prop:span_3_groups} arranges these toroidal 3-groups into a span of 3-groups which induces spans of corresponding toroidal principal 3-bundles:
    \begin{equation}\label{eq:span_and_bundle_spans}
        \begin{tikzcd}
            & \sfTDFT_d \arrow[dl,"\phi"'] \arrow[dr,"\hat \phi"]
            \\
            \sfTBLFT_d & & \sfTBRFT_d 
        \end{tikzcd}
        ~~~\longrightarrow~~~
        \begin{tikzcd}
            & \scP_\rmC\arrow[dl,"p"'] \arrow[dr,"\hat p"]
            \\
            \scP & & \hat \scP
        \end{tikzcd}
    \end{equation}
    
    \paragraph{$T_2$-duality as spans of toroidal 3-bundles.} \Cref{sec:t2df2} then contains the main theorems of this paper. First, we show in \ref{thm:spans_are_t2_duality} that, for a span of principal 3-bundles as in \eqref{eq:span_and_bundle_spans}, the contained principal $\sfTBLFT_d$- and $\sfTBRFT_d$-bundles $\scP$ and $\hat \scP$ describe a $T_2$-dual pair of topological $T_2$-duality backgrounds; \cref{thm:spans_are_differential_t2_duality} gives its differential refinement. Second, \ref{cor:f2t2dual} shows that any topological left $T_2$-duality background admits a topological $T_2$-dual, and \ref{cor:difff2t2dual} shows that such a background admits a choice of connection for which a differential $T_2$-dual exists. We present an illustrative example in \ref{ssec:exs2}.
    
    \paragraph{Affine $T_2$-duality.} Extending $T_2$-duality to affine $T_2$-duality, which allows for affine torus bundles in the backgrounds, involves only technical difficulties. Recall that an affine torus bundle with $d$-dimensional fibers is (an associated bundle to) a principal $\sfGL(d,\IZ)\ltimes \IT^d$-bundle. The toroidal 3-group $\sfTDFT_d$ can be extended to a 3-group $\sfTDFTA_d$ that incorporates a natural $\sfGL(d,\IZ)$ action on all of its components. We explain the details in \ref{sec:affine}, where we also establish the expected generalizations of the theorems of~\ref{sec:t2df2} and give two illustrative examples.
    
    \paragraph{Reduced $T_2$-duality.} It is useful to introduce the concept of reduced $T_2$-duality by restricting the discrete lattices appearing in the definition of the toroidal 3-groups $\sfTDFT_d$ and $\sfTDFTA_d$. These restrictions induce corresponding restrictions on the toroidal 3-groups and toroidal 3-bundles involved in an (affine) $T_2$-duality. This allows us to see other transformations of geometric backgrounds as special cases of $T_2$-duality, for which we give some examples.
    
    Of particular interest is here the reduced $T_2$-duality discussed in \ref{ssec:embedding_T-duality}, as it shows how ordinary T-duality is embedded into $T_2$-duality. In particular, this form of reduced $T_2$-duality directly embeds the results of~\cite{Nikolaus:2018qop,Kim:2022opr,Waldorf:2022tib} into our constructions.
    
    \paragraph{Physical interpretation.} Let us begin by stressing that $T_2$-duality is certainly not a fully fledged form of $U$-duality, but it shares at least some features. In particular, it contains T-duality, and it relates higher-geometric backgrounds built from 2-gerbes, 1-gerbes, and torus fibrations. We also have a natural action of $\sfSL(2,\IZ)$ on $\sfTDFT_2$, which suggests a possible extension of S-duality to topologically non-trivial backgrounds.
    
    Restricting ourselves to the topologically trivial case for $d=2$, we can link $T_2$-duality to the transition from 11d supergravity (SUGRA) fields to type IIB SUGRA fields, as described, for example, in~\cite{Bergshoeff:1995as}. We make this link precise in \ref{sec:rrfields}. In particular, we can see how the connections on left and right $T_2$-duality backgrounds enter directly into the Kaluza--Klein metric and Ramond--Ramond and Kalb--Ramond form fields of 11d and type IIB supergravity, respectively. We also see how $\sfSL(2,\IZ)$-transformations on the two-dimensional torus fibrations in the left background induce $S$-duality transformations on fields on the $T_2$-dual right background.
    
    We can also make the higher analogue of the Buscher procedure explicit, constructing a 3d sigma model which has a (topologically trivial) left $T_2$-duality background as target space. This is done for $d=2$ in \ref{ssec:sigmamodel}. Gauging the toroidal symmetry and integrating out the gauge potential leads to a dual 3d sigma model in which the target space is that of a topological right-background.
    
    \subsection*{Outlook}
    
    As explained at the end of \ref{sec:rrfields}, our construction of $T_2$-duality is definitely not fully fledged U-duality, and several important gaps remain to be addressed. Most importantly, we would have to incorporate M5-branes (presumably by including some twisted form of 5-gerbes) into the picture to achieve a democratic formulation of the M-theory field content. 
    
    Our constructions are also restricted to a generalization of the differential T-duality of~\cite{Kahle:2009uta}, i.e. we do not fully incorporate the Riemannian structure beyond its Kaluza--Klein part.  Witten's $G_4$ flux quantization condition~\cite{Witten:1996md}, for example, already shows that the curvatures of the (higher) connections on the involved $p$-gerbes and the torus bundle need to mix non-trivially (in the form of more general adjustments in the definition of higher connections) with the Riemannian structure. Such a consistent inclusion of the frame bundle may also contribute to the further development of Hypothesis H, see e.g.~\cite{Fiorenza:2020dcp,Sati:2020nob}, which aims at a precise mathematical description of the fundamental constituents of M-theory.
    
    More immediately, we would also like to expand our discussion to $T_2$-dualities of type $F_1$, possibly by a direct generalization of the corresponding T-duality treatment of \cite{Kim:2022opr}. A first, significant step in this direction has already been made in~\cite{Tellez-Dominguez:2026krp}.
    
    \section{\texorpdfstring{$T_2$}{T2}-duality}\label{sec:simpalmost}
    
    In topological $T$-duality, string backgrounds with torus fibers over a common base are linked by the Poincar\'e line bundle over their fiber product. Our definition of $T_2$-duality now replaces this bundle with a natural higher analogue, the Poincar\'e gerbe. The resulting duality requires us to consider smooth 2-gerbes over smooth 1-gerbes. To define such objects, we use a framework based on simplicial manifolds\footnote{See \ref{app:simplicial} for a concise background on simplicial manifolds.}. We first introduce higher gerbes with connections over simplicial manifolds, then construct the Poincaré gerbe, and then define $T_2$-duality backgrounds and the $T_2$-duality relation. A final subsection analyzes $T_2$-duality using the Serre spectral sequences for the relevant fibrations to identify the correct notion of backgrounds of type $F_2$ and to show that $T_2$-duals always exist for these.
    
    \subsection{Higher gerbes with connections over simplicial manifolds}\label{ssec:higher_gerbes}
    
    We start with a motivating example. Let $M$ be a manifold and let $\sigma:Y\rightarrow M$ be a surjective submersion. For $n\geq 1$, define the $n$-fold fiber product by 
    \begin{equation}
        Y^{[n]}\coloneqq Y\times_M \ldots \times_M Y=\{(y_1,\ldots,y_n)\in Y^n\,|\,\sigma(y_1)=\ldots=\sigma(y_n)\}~.
    \end{equation}
    A standard example is obtained from an open cover $\mathcal{U} = \{U_a\}_{a \in A}$ of $M$ with $\sigma$ the embedding of $Y=\sqcup_{a \in A} U_a$ into $M$. The spaces $Y^{[n]}$ then consist of the $n$-fold intersections $U_{a_1\ldots a_n} \coloneqq  \cap_{i=1}^n U_{a_i}$. The \emph{\v{C}ech resolution} of $M$ with respect to $\sigma$ is the nerve $\check{C}(\sigma)_{\bullet}$ of the Čech groupoid with
    \begin{equation}\label{eq:defcech}
        \check{C}(\sigma)_n \coloneqq  Y^{[n+1]}~.
    \end{equation}
    For all non-decreasing functions $f:[n_1] \rightarrow [n_2]$, the corresponding simplicial pullback map is given by 
    \begin{equation}
        f^{\ast}(y_{a_0},\ldots,y_{a_{n_1}}) = (y_{a_{f(0)}},\ldots,y_{a_{f(n_1)}})~.
    \end{equation}
    The surjective submersion $\sigma$ induces a canonical morphism of simplicial manifolds $\check{C}(\sigma)_{\bullet} \rightarrow M_\bullet$, where we regard $M$ as the constant simplicial manifold. This morphism yields a homotopy equivalence between the geometric realizations of $\check{C}(\sigma)_{\bullet}$ and $M_\bullet$~\cite{Segal1968}. 
    
    Now let $\sfT$ be an abelian Lie group, and consider the simplicial model $B^{p+1} \sfT$ for its $(p+1)$-fold delooping, as defined in \ref{app:simplicial}. A morphism of simplicial manifolds
    \begin{equation}
        t:\check{C}(\sigma)_{\bullet} \rightarrow B^{p+1}\sfT_{\bullet}
    \end{equation}
    is determined by its component in simplicial degree $p+1$, i.e.~a smooth map\footnote{Here and in the following, a hat over an index denotes an omission of the corresponding entry.}
    \begin{equation}
        \begin{gathered}
            t_{p+1}:\check{C}(\sigma)_{p+1} \rightarrow (B^{p+1}\sfT)_{p+1}~,
            \\
            \sum_{j=0}^{p+2} (-1)^jt_{y_0,\ldots,\hat{y}_j,\ldots,y_{p+2}}=0~.
        \end{gathered}
    \end{equation}
    That is, such a simplicial map is equivalently a smooth $\sfT$-valued \v{C}ech $(p+1)$-cocycle subordinate to $\sigma$. Associated to such a cocycle, we have the simplicial manifold
    \begin{subequations}\label{eq:totalspacelgerbe}
        \begin{equation}
            (t^{\ast}EB^{p}\sfT)_{\bullet} \coloneqq  \check{C}(\sigma)_{\bullet} \times_{(B^{p+1}\sfT)_{\bullet}} E(B^p\sfT)_{\bullet}
        \end{equation}
        with $n$-simplices
        \begin{equation}
            \begin{aligned}
                (t^{\ast}EB^{p}\sfT)_{n} &\phantom{:}= \Big\{ ((y_0,\ldots,y_n),(t_{i_0,\ldots,i_{p}})_{i_0 < i_1 < \ldots< i_{p} \in [n]}) \in Y^{[n+1]} \times \sfT^{\binom{n+1}{p+1}} 
                \\
                &\hspace{1.5cm}
                \:\Big|~\forall i_0 < \ldots < i_{p+1} \in [n], \: \sum_{j} (-1)^j t_{i_0,\ldots,\hat{i}_j,\ldots,i_{p+1}} = t_{y_{i_0}\ldots y_{i_{p+1}}} \Big\}~.
            \end{aligned}
        \end{equation}
    \end{subequations}
    In each simplicial degree $n$, this is a principal $\sfT^{\binom{n}{p}}$-bundle over $Y^{[n+1]}$. The principal action is induced from the canonical action of $(B^p\sfT)_n \cong \sfT^{\binom{n}{p}}$ on $(EB^{p}\sfT)_n$. An alternative presentation is given by the diffeomorphisms
    \begin{align}
        \begin{split}
            \left(Y^{[n+2]}\times (B^{p}\sfT_{\bullet})_n \right) / \sim &\rightarrow (t^{\ast}EB^{p}\sfT)_n ~,
            \\
            [(y_0,\ldots,y_{n+1},(t_{i_0,\ldots,i_p}))] &\mapsto (y_0,\ldots,y_n,(t_{i_0,\ldots,i_p} + t_{y_{n+1}y_{i_0}\ldots y_{i_p}}))~,
        \end{split}
    \end{align}
    where the equivalence relation is
    \begin{equation}
        \begin{aligned}
            &(y_0,\ldots,y_{n+1},(t_{i_0,\ldots,i_p})) \sim
            \\
            &\hspace{3cm}(y_0,\ldots,y_n,y'_{n+1},(t_{i_0,\ldots,i_p} + t_{y_{n+1}y_{i_0}\ldots y_{i_p}} - t_{y'_{n+1}y_{i_0}\ldots y_{i_p}}))~.
        \end{aligned}
    \end{equation}
    A coboundary $(s_{i_0\ldots i_p})$ between two simplicial maps $t^1, \,t^2:\check{C}(\sigma)_{\bullet} \rightarrow B^{p+1}\sfT_{\bullet}$, determines a map $t^{1,\ast}EB^{p}\sfT_{\bullet} \rightarrow t^{2,\ast}EB^{p}\sfT_{\bullet}$ defined at each level $n$ by the following isomorphism of principal $\sfT^{\binom{n}{p}}$-bundles: 
    \begin{equation}
        (y_0,\ldots,y_n,(t_{i_0,\ldots,i_p})) \mapsto (y_0,\ldots,y_n,(t_{i_0,\ldots,i_p} + s_{y_{i_0}\ldots y_{i_p}} ))~.
    \end{equation}
    
    The above constructions motivate a more general definition of $p$-gerbes over a simplicial manifold $M_{\bullet}$.
    \begin{definition}[$p$-gerbe]\label{def:highgerbes}
        Let $M_{\bullet}$ be a simplicial manifold, let $\sfT$ be an abelian Lie group, and let $p\in\IN$. A \emph{$p$-gerbe $P$ with band $\sfT$} over $M_{\bullet}$ consists of the following data:
        \begin{myenumerate}
            \item a simplicial manifold $Y_{\bullet}$, called a \emph{resolution}, with a simplicial morphism $Y_{\bullet} \rightarrow M_{\bullet}$, that is degree-wise a surjective submersion and induces a homotopy equivalence on geometric realizations;
            \item a simplicial manifold $P_{\bullet}$ with a morphism $P_{\bullet} \rightarrow Y_{\bullet}$ for which there exists a simplicial morphism $t:Y_{\bullet} \rightarrow B^{p+1}\sfT_{\bullet}$ and an isomorphism $P_{\bullet} \cong Y_{\bullet} \times_{B^{p+1}\sfT_{\bullet}} E(B^p\sfT)_{\bullet}$.\\[-0.5cm]
        \end{myenumerate}
    \end{definition}
    \noindent Unless stated otherwise, the band will be $\sfT=\sfU(1)$. Note that a $p$-gerbe with band $\sfT$ is a simplicial manifold whose $n$-simplices are of the form
    \begin{equation}
        P_n\cong Y_n\times \sfT^{\binom{n}{p}}~.
    \end{equation}
    
    \begin{definition}[Isomorphisms of $p$-gerbes]
        Given two $p$-gerbes $(Y^1_{\bullet},P^1_{\bullet})$ and $(Y^2_{\bullet},P^2_{\bullet})$ with band $\sfT$ over $M_{\bullet}$, an \emph{isomorphism} between them consists of the following data:
        \begin{myenumerate}
            \item a morphism of simplicial manifolds $r:Y^1_{\bullet} \rightarrow Y^2_{\bullet}$;
            \item a morphism of simplicial manifolds $\psi:P^1_{\bullet} \rightarrow P^2_{\bullet}$ covering $r$ and such that each $\psi_n:P^1_n \rightarrow r^{\ast}_nP^2_n$ is an isomorphism of principal $\sfT^{\binom{n}{p}}$-bundles over $Y^1_n$.\\[-0.5cm]
        \end{myenumerate}
    \end{definition}
    The above definition is sufficient for our purposes. More generally, one may want to generalize $r$ to a common refinement of the covers.
    
    We now turn to connections. Consider again the motivating setting from above, i.e.~a $p$-gerbe with band $\sfT$ as a map $t:\check{C}(\sigma)_{\bullet} \rightarrow B^{p+1}\sfT_{\bullet}$. Write $\mathfrak{t}$ for the Lie algebra of $\sfT$. Generalizing the standard description of connections on 1-gerbes, cf.~\cite{Murray:9407015,Chatterjee:1998,Hitchin:1999fh}, a connection is specified by differential forms 
    \begin{equation}
        \begin{aligned}
            (A_{y_0},A_{y_0y_1},\ldots,A_{y_0\ldots y_p})\in \Omega^1(Y,\frt)\times \Omega^2(Y^{[2]},\frt)\times \ldots \times \Omega^{p+1}(Y^{[p+1]},\frt)~,
        \end{aligned}
    \end{equation}
    such that
    \begin{equation}\label{eq:gerbecon1}
        \begin{aligned}
            \sum_{j} (-1)^j A_{y_0,\ldots,\hat{y}_j,\ldots,y_{p+1}} &= t_{y_0\ldots y_{p+1}}^{\ast}\theta^\sfT~, 
            \\
            \sum_{j} (-1)^j A_{y_0,\ldots,\hat{y}_j,\ldots,y_{k}} &= \rmd A_{y_0,\ldots,y_{k}}~,
        \end{aligned}
    \end{equation}
    over all $Y^{[k]}$, $k=1,\ldots p$. Moreover, we may define a connection $A_1$ on the $\sfT$-bundle $t^{\ast}(EB^p\sfT)_p \rightarrow Y^{[p+1]}$ by
    \begin{equation}
        A_1 \coloneqq  t_{y_{i_0}\ldots y_{i_p}}^{\ast}\theta^\sfT-A_{y_0,\ldots,y_{p}}~,
    \end{equation}
    as well as $A_j \in \Omega^j(t^{\ast}(EB^p\sfT)_{p+1-j},\mathfrak{t})$ for $j = 2, \,\ldots,\,p+1$ by $A_j \coloneqq  A_{y_0,\ldots,y_{j-1}}$. Equations~\eqref{eq:gerbecon1} state that the total derivative of the connection cochain $(A_1,\ldots,A_{p+1})$ in the simplicial de~Rham complex of $t^{\ast}(EB^p\sfT)_{\bullet}$ is the pullback of a closed form $F_{p+2} \coloneqq  \rmd A_{p+1} \in \Omega^{p+2}(M,\mathfrak{t})$. We further generalize this as follows.
    
    \begin{definition}[Connection on a $p$-gerbe]\label{def:connection}
        Let $M_{\bullet}$ be a simplicial manifold, let $\sfT$ be an abelian Lie group with Lie algebra $\mathfrak{t}$, and let $(Y_{\bullet},P_{\bullet})$ be a $p$-gerbe with band $\sfT$ over $M_{\bullet}$. A \emph{connection} on $(Y_{\bullet},P_{\bullet})$ consists of the following data:
        \begin{myenumerate}
            \item a collection $F=\{F_j\}_{j=2}^{p+2}$ with $F_j \in \Omega^{j}(M_{p+2-j},\mathfrak{t})$, called \emph{curvature forms}, such that 
            \begin{equation}
                \begin{aligned}
                    \delta F_2 &= 0~,
                    \\
                    \delta F_j - \rmd F_{j-1} &= 0~, \quad j= 3, \, \ldots, \, p+2~,
                    \\
                    \rmd F_{p+2} &= 0~;
                \end{aligned}
            \end{equation}
            \item a collection $A=\{A_j\}_{j=1}^{p+1}$ with $A_j \in \Omega^{j}(P_{p+1-j},\mathfrak{t})$, called \emph{connection forms}, such that $A_1$ is a connection on the $\sfT$-bundle $P_p \rightarrow Y_p$ and 
            \begin{equation}
                \begin{aligned}\label{eq:A-relations}
                    \delta A_1 &= 0~,
                    \\
                    \delta A_j + \rmd A_{j-1} &= F_j~, \quad j= 2, \, \ldots, \, p+1~,
                    \\
                    \rmd A_{p+1} &= F_{p+2}~,
                \end{aligned}
            \end{equation}
            where the pullbacks along $P_{\bullet} \rightarrow M_{\bullet}$ are omitted from the notation. 
        \end{myenumerate}
    \end{definition}
    \noindent The curvature forms are included explicitly to require that the corresponding combinations of the connection forms descend from $P_\bullet$ to $M_\bullet$.
    
    \begin{definition}[Connections on an isomorphism of $p$-gerbes]
        Let $(Y^1_{\bullet},P^1_{\bullet})$ and $(Y^2_{\bullet},P^2_{\bullet})$ be two $p$-gerbes  with band $\sfT$ over $M_{\bullet}$ with connections $(A^1,F^1)$ and $(A^2,F^2)$, and let $(r,\psi)$ be an isomorphism between the $p$-gerbes. A \emph{connection on the isomorphism} $(r,\psi)$ consists of the following data:
        \begin{myenumerate}
            \item a collection $\{G_j\}_{j=2}^{p+1}$ with $G_j \in \Omega^{j}(M_{p+1-j},\mathfrak{t})$, called \emph{curvature forms of the isomorphism}, satisfying
            \begin{equation}
                \begin{aligned}
                    \delta G_2 &= F_2^1 - F_2^2~,
                    \\
                    \delta G_j + \rmd G_{j-1}&= F_j^1 - F_j^2~, \quad j= 3, \, \ldots, \, p+1~,
                    \\
                    \rmd G_{p+1} &= F_{p+2}^1 - F_{p+2}^2~;
                \end{aligned}
            \end{equation}
            \item a collection $\{R_j\}_{j=1}^{p}$ with $R_j \in \Omega^{j}(P_{p-j},\mathfrak{t})$, called \emph{connection forms on the isomorphism}, satisfying
            \begin{equation}
                \begin{aligned}
                    \delta R_1 &= A_1^1 - \psi^{\ast}A_1^2~,
                    \\
                    \delta R_j -\rmd R_{j-1} &= A_j^1 - r^{\ast}A_j^2 - G_{j}~, \quad j= 2, \, \ldots, \, p-1~,
                    \\
                    \rmd R_p &= -A_{p+1}^1 + r^{\ast}A_{p+1}^2 - G_{p+1}~.
                \end{aligned}
            \end{equation}
        \end{myenumerate}
    \end{definition}
    \noindent We stress in particular that an isomorphism equipped with connection need not preserve the curvature forms on the $p$-gerbes strictly; the failure of strict preservation is measured by the curvature forms $G_j$.
    
    \begin{remark}
        Upon geometric realization, a $p$-gerbe with band $\sfT$ over a simplicial manifold $M_{\bullet}$ determines a map $||M_{\bullet}||\rightarrow B^{p+1}\sfT=|B^{p+1}\sfT_\bullet|$. If $\sfT = \mathbb T^d$, then pulling back the canonical cohomology class in $H^{p+2}(B^{p+1}T,\IZ^d) \cong \IZ^{d,\ast} \otimes \IZ^d$, we obtain a class in $H^{p+2}(||M_{\bullet}||,\IZ^d)$, called the \emph{Dixmier--Douady class} of the $p$-gerbe. Its image in de~Rham cohomology of $M_{\bullet}$ under the isomorphism from \ref{th:simpderham} can be represented by the curvature of any connection on the $p$-gerbe.
    \end{remark}
    
    \begin{remark}\label{rk:isolgerbes}
        The trivial $p$-gerbe (with trivial connection) over $M_{\bullet}$ is the one given by $Y_{\bullet} = M_{\bullet}$ and $t=\unit$ (with $A=0$ and $F=0$). It is not hard to see that an automorphism (with connection) of the trivial $p$-gerbe (with trivial connection) can be identified with a $(p-1)$-gerbe (with connection). In particular, for $\sfT = \mathbb T^d$ the automorphism determines a class in $H^{p+1}(||M_{\bullet}||,\IZ^d)$ whose image in de~Rham cohomology can be represented by the forms $(G_2,\ldots,G_{p+1})$ if a connection has been chosen.
    \end{remark}
    
    \subsection{The Poincar\'e gerbe}\label{ssec:poincare}
    
    Consider the $d$-torus $\mathbb T^d$. The set of principal $\sfU(1)$-bundles\footnote{or, equivalently, $0$-gerbes with band $\sfU(1)$} with connection may be identified up to bundle isomorphism (and non-canonically) with
    \begin{equation}
        H^2(\mathbb T^d,\IZ) \times \Omega^1(\mathbb T^d) / \Omega^1_\IZ(\IT^d)~,
    \end{equation}
    where $H^2(\IT^d,\IZ)$ records the first Chern class, $\Omega^1(\mathbb T^d)$ parameterizes the change from a chosen reference connection, and $\Omega^1_\IZ(\IT^d)$ are the closed 1-forms with integral periods, describing the action of the remaining bundle automorphisms\footnote{These are often called ``large gauge transformations'' in the physics literature.} on the chosen connection. The subgroup of isomorphism classes represented by flat bundles is therefore
    \begin{equation}
        \Omega^1_{\rm cl}(\mathbb T^d) / \Omega^1_\IZ(\IT^d) \cong H^1(\mathbb T^d,\mathbb R) / H^1(\mathbb T^d,\IZ) \cong \mathbb T^{d,\ast}~.
    \end{equation} 
    
    Similarly, after choosing representatives of the topological classes, a flat 1-gerbe with band $\sfU(1)$ over $\mathbb T^d$ may be represented by a closed 2-form\footnote{Recall that the cohomology of the $d$-torus $\IT^d$ is free of torsion.}. Automorphisms of such a gerbe are described by principal $\sfU(1)$-bundles with connection. Consequently, their isomorphism classes are parameterized by
    \begin{equation}\label{eq:moduli_flat_1_gerbes}
        \frac{\Omega^2_{\rm cl}(\mathbb T^d,\mathbb R)}{H^2(\mathbb T^d,\IZ) \times \Omega^1(\mathbb T^d) / \Omega^1_\IZ(\IT^d)} \cong H^2(\mathbb T^d,\mathbb R) / H^2(\mathbb T^d,\IZ) = \Lambda^2 \mathbb T^{d,\ast}~.
    \end{equation}
    Here, we used \ref{not:torus}.
    
    In higher differential geometry, it is often desirable to retain the automorphisms and hence to replace this quotient by the corresponding \emph{action groupoid}. In our setting of simplicial manifolds, we thus have the following.
    \begin{proposition}
        For $d \in \mathbb N$, the moduli groupoid of flat $1$-gerbes over $\mathbb T^d$ with band $\sfU(1)$ and their isomorphisms is the simplicial manifold $\mathcal{M}(\mathbb T^d) \coloneqq  \Lambda^{2} \mathbb T^{d,\ast} \times B \mathbb T^{d,\ast}$. 
    \end{proposition}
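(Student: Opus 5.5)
We make the heuristic identification \eqref{eq:moduli_flat_1_gerbes} precise at the level of groupoids: objects are flat $1$-gerbes with connection, and morphisms are isomorphisms with connection (i.e.~flat line bundles with connection, by \ref{rk:isolgerbes}), modulo 2-isomorphisms.

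\textbf{Step 1: objects.} Every flat 1-gerbe over $\IT^d$ is topologically trivial: its Dixmier--Douady class is torsion in $H^3(\IT^d,\IZ)$, which is free. So we may trivialise the underlying gerbe, and the connection is described by a global closed 2-form $B\in\Omega^2_{\rm cl}(\IT^d,\IR^{})$. By Hodge theory, or by averaging over the translation action of $\IT^d$ on itself, $B=\beta+\rmd a$ with $\beta$ translation-invariant. Invariant 2-forms are $\Lambda^2(\IR^d)^\ast$.

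\textbf{Step 2: morphisms.} An isomorphism $B_1\to B_2$ of trivialised gerbes with connection is a line bundle $L$ with connection $\nabla$ such that $B_2-B_1=F_\nabla$. Thus $\beta_2-\beta_1$ must be integral, and objects connected by integral shifts, i.e.~by $\Lambda^2\IZ^{d,\ast}$, are identified. This recovers the object set $\Lambda^2\IT^{d,\ast}=\Lambda^2(\IR^d)^\ast/\Lambda^2\IZ^{d,\ast}$.

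\textbf{Step 3: automorphisms.} Automorphisms of a fixed object are flat line bundles with connection. Their isomorphism classes form $\IT^{d,\ast}$ by the computation preceding the statement. Since $\IT^{d,\ast}$ is abelian and acts trivially on objects, the resulting action groupoid is $\Lambda^2\IT^{d,\ast}\times B\IT^{d,\ast}$, and its nerve gives the claimed simplicial manifold.

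\textbf{Step 4: smoothness and skeleton.} The main obstacle is making precise that this is an equivalence of \emph{smooth} groupoids rather than just of sets, and that the choices assemble coherently. Two sources of non-canonicity must be handled.

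First, gauge choices $\rmd a$ and non-invariant connections are absorbed by morphisms. We choose canonical representatives, the invariant $\beta$ together with the invariant flat connections $\rmd+2\pi\rmi\langle\xi,x\rangle$ with $\xi\in(\IR^d)^\ast$, to build a skeleton. We then check that the inclusion of this skeleton is essentially surjective and fully faithful.

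Second, integral shifts of $\beta$ must be implemented by chosen line bundles $L_\lambda$ for $\lambda\in\Lambda^2\IZ^{d,\ast}$. The composition $L_\lambda\otimes L_\mu\cong L_{\lambda+\mu}$ is only defined up to flat bundles, which could produce a nontrivial extension, i.e.~a twisted product instead of a direct product. We will argue that this cocycle can be trivialised. The natural candidate for $L_\lambda$ is the bundle with connection form $\pi\,\lambda_{ij}x^i\rmd x^j$, and we must verify that tensor products of these are compatible.

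If the cocycle does not trivialise, the statement should be read as holding up to equivalence of simplicial manifolds, which suffices for the later construction of the Poincaré gerbe.
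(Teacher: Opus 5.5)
Your Steps 1--3 are the justification the paper itself offers. The paper gives no separate proof, only the quotient computation \eqref{eq:moduli_flat_1_gerbes}, the remark that the factor $B\IT^{d,\ast}$ records the isotropy (flat line bundles), and the statement that 2-isomorphisms are discarded. So your route is the same. The proposal is incomplete only at the coherence point you raise in Step 4 and leave open with a hedge, and that point closes immediately.

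Take as objects the invariant forms $\beta\in\Lambda^2\IR^{d,\ast}$. A morphism out of $\beta$ is then the isomorphism class of a line bundle with connection whose curvature is an invariant integral form, and composition is tensor product. These classes therefore form an \emph{abelian} group $E$ with $0\to\IT^{d,\ast}\to E\to\Lambda^2\IZ^{d,\ast}\to 0$. Since $\Lambda^2\IZ^{d,\ast}$ is free, this sequence splits: choose $L_\lambda$ on a basis and take tensor products. Hence no twisted product can occur. The groupoid is the action groupoid of $\Lambda^2\IZ^{d,\ast}\times\IT^{d,\ast}$ on $\Lambda^2\IR^{d,\ast}$ with $\IT^{d,\ast}$ acting trivially, and this is equivalent to $\Lambda^2\IT^{d,\ast}\times B\IT^{d,\ast}$ via the covering $\Lambda^2\IR^{d,\ast}\to\Lambda^2\IT^{d,\ast}$.

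An explicit splitting is the lower-triangular potential $2\pi\rmi\,\lambda^{\low}(x,\rmd x)$ of \ref{not:low} on $\IR^d$, with multipliers $\exp(2\pi\rmi\,\lambda^{\low}(n,x))$ for $n\in\IZ^d$. These satisfy the cocycle identity on the nose because $\lambda^{\low}(n,m)\in\IZ$. They are also additive in $\lambda$, so $L_\lambda\otimes L_\mu=L_{\lambda+\mu}$ strictly. This is the same lower-triangular trick the paper uses in the cocycles of \ref{def:poincare}.

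Your symmetric-gauge candidate $\pi\lambda_{ij}x^i\rmd x^j$ also works, but not as it stands. Its multipliers $\exp(\pi\rmi\,\lambda(n,x))$ violate the cocycle identity by the sign $\exp(\pi\rmi\,\lambda(n,m))$. They must be corrected by the factor $\exp(\pi\rmi\,\lambda^{\low}(n,n))$, which is again additive in $\lambda$.

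The fallback in your last paragraph is therefore not needed. (A small slip in Step 4: the invariant flat connections are $\rmd+2\pi\rmi\langle\xi,\rmd x\rangle$, not $\rmd+2\pi\rmi\langle\xi,x\rangle$.)
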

    \noindent Here, the additional factor of $B \mathbb T^{d,\ast}$ accounts for the isotropy in the quotient~\eqref{eq:moduli_flat_1_gerbes}; we do not retain the higher isomorphisms, as they are not needed for our purposes.
    
    Using the simplicial model~\eqref{eq:ordinary_first_delooping}, the $n$-simplices of this moduli groupoid are explicitly given by 
    \begin{align}
        \begin{split}
            \mathcal{M}(\mathbb T^d)_n &=\Big\{ (\hat{v}^0,(w^1_{i_0i_1})_{i_0<i_1 \in [n]}) \in \Lambda^{2} \mathbb T^{d,\ast} \times (\mathbb T^{d,\ast})^{\binom{n+1}{2}} 
            \\
            &\hspace{1cm}\:\big|\:~ \forall \: i_0 < i_{1} < i_2 \in [n], \: \textstyle w^1_{i_1i_2} - w^1_{i_0i_2} + w^1_{i_0i_1} = 0\Big\}~.
        \end{split}
    \end{align}
    
    Since $\mathcal{M}(\mathbb T^d)$ parameterizes flat gerbes over $\mathbb T^d$, there is a universal gerbe with connection over $\mathbb T^d \times \mathcal{M}(\mathbb T^d)$ which enters the definition of $T_2$-duality in \ref{sec:tlduality}. This is the higher analogue of the Poincar\'e bundle, where the latter exists over $\mathbb T^d \times \mathbb T^{d,\ast}$ because $\mathbb T^{d,\ast}$ parameterizes flat principal $\sfU(1)$-bundles over $\mathbb T^d$. We use the following notation for the canonical evaluation pairings:
    \begin{align}
        \langle -, -\rangle: \mathbb R^d \otimes \mathbb R^{d,\ast} \rightarrow \mathbb R~,
        \\ 
        \langle -, \langle -, -\rangle \rangle: \mathbb R^d \otimes \mathbb R^d \otimes \Lambda^2 \mathbb R^{d,\ast} \rightarrow \mathbb R~.
    \end{align} 
    
    \begin{definition}[Poincaré gerbe]\label{def:poincare}
        Let $Z_\bullet$ denote the Čech resolution associated with the universal covering map $\IR^d\rightarrow \IT^d$. Its $n$-simplices read as 
        \begin{equation}
            \begin{aligned}
                Z_n &\coloneqq  \Big\{ (\{v^0_{i_0}\}_{i_0 \in [n]}, \{\lambda_{i_0i_1} \}_{i_0 < i_1 \in [n]})\in (\mathbb R^d)^{\binom{n+1}{1}} \times (\IZ^d)^{\binom{n+1}{2}}
                \\
                &\hspace{5cm}
                \:|\:
                \forall \,i_0 < i_{1} \in [n], \: \textstyle v^0_{i_1} - v^0_{i_0} = \lambda^0_{i_0i_1}\Big\}~.
            \end{aligned}
        \end{equation}
        The \emph{Poincar\'e gerbe} is the 1-gerbe $\mathfrak{P}_\bullet$ with band $\sfU(1)$ over the simplicial manifold $M_{\bullet} = \mathbb T^d \times \mathcal{M}(\mathbb T^d)$ given by the resolution
        \begin{subequations}\label{eq:poincaredef}
            \begin{equation}
                Y_\bullet=Z_\bullet\times \caM(\IT^d)_\bullet
            \end{equation}
            and the total space\footnote{See \ref{not:low} for an explanation of the map $\langle-,\langle-,-\rangle\rangle^\low$.}
            \begin{equation}
                \begin{aligned}
                    \mathfrak{P}_n &\coloneqq  \Big\{ ((v^0_{i_0})_{i_0 \in [n]}, (\lambda_{i_0i_1} )_{i_0 < i_1 \in [n]}, \hat{v}^0,(w^1_{i_0i_1})_{i_0<i_1 \in [n]},(v^2_{i_0i_1})_{i_0<i_1 \in [n]}) 
                    \\
                    &\hspace{3cm}
                    \in (\mathbb R^d)^{\binom{n+1}{1}} \times (\IZ^d)^{\binom{n+1}{2}} \times \Lambda^{2} \mathbb T^{d,\ast} \times (\mathbb T^{d,\ast})^{\binom{n+1}{2}} \times (\mathbb T^{1})^{\binom{n+1}{2}}
                    \\
                    &\hspace{1cm}
                    \:\Big|\: \forall \,i_0 < i_{1} \in [n], \: \textstyle v^0_{i_1} - v^0_{i_0} = \lambda^0_{i_0i_1}~,
                    \\
                    &\hspace{1cm}
                    \phantom{\:\Big|\:}\forall \,i_0 < i_{1} < i_2 \in [n], \: \textstyle w^1_{i_1i_2} - w^1_{i_0i_2} + w^1_{i_0i_1} = 0~, 
                    \\
                    &\hspace{1cm}
                    \phantom{\:\Big|\:}\forall \,i_0 < i_{1} < i_2 \in [n], \: \textstyle v^2_{i_1i_2} - v^2_{i_0i_2} + v^2_{i_0i_1} = \langle \lambda_{i_{0}i_1}, w^1_{i_1i_2} \rangle + \langle \lambda_{i_0i_1}, \langle \lambda_{i_1i_2}, \hat{v}^{0} \rangle \rangle^\low\Big\}~.
                \end{aligned}
            \end{equation}
            Its \emph{canonical connection} is given by the connection forms $A_1 \in \Omega^1(\mathfrak{P}_1,\mathbb R)$, $A_2 \in \Omega^2(\mathfrak{P}_0,\mathbb R)$
            \begin{equation}
                \begin{aligned}
                    A_1 &= \rmd v^2_{01} + \langle v^0_0, \rmd w_{01}^1 \rangle + \langle v^0_0, \langle \lambda_{01}, \rmd\hat{v}^{0} \rangle \rangle^\low,\\ 
                    A_2 &= -\langle \rmd v^0_0 , \langle v^0_0, \rmd\hat{v}^0 \rangle \rangle^\low,
                \end{aligned}
            \end{equation}
            and the curvature forms $F_2 \in \Omega^2(M_1,\mathbb R)$, $F_3 \in \Omega^3(M_0,\mathbb R)$
            \begin{equation}
                \begin{aligned}\label{eq:fpoincare}
                    F_2 &= \langle \rmd v_0^0 , \rmd w_{01}^1 \rangle~, 
                    \\ 
                    F_3 &= -\langle \rmd v^0_0 , \langle \rmd v^0_0 , \rmd\hat{v}^0 \rangle \rangle^\low~. 
                \end{aligned}
            \end{equation}
        \end{subequations}
    \end{definition}
    
    Equations~\eqref{eq:fpoincare}, together with the de~Rham representatives described in \ref{ex:bltsimpman}, show that the Poincaré gerbe represents the canonical class
    \begin{equation}\label{eq:canonical_class_Poincare}
        \Pi \in H^{3}(\mathbb T^d \times \mathcal{M}(\mathbb T^d), \IZ) = H^{3}(\mathbb T^d \times \Lambda^2 \mathbb T^{d,\ast} \times B\mathbb T^{d,\ast}, \IZ)~.
    \end{equation}
    More precisely, under the Künneth decomposition, $\Pi$ is the sum of the canonical pairing components
    \begin{equation}\label{eq:canonical_pairings_Poincare}
        \begin{aligned}
            \Lambda^2 \IZ^{d,\ast} \otimes \Lambda^2 \IZ^d &= H^2(\mathbb T^d,\IZ) \otimes H^1(\Lambda^2 \mathbb T^{d,\ast},\IZ) \subset H^{3}(\mathbb T^d \times \Lambda^2 \mathbb T^{d,\ast} \times B\mathbb T^{d,\ast}, \IZ)~, 
            \\
            \IZ^{d,\ast} \otimes \IZ^d &= H^1(\mathbb T^d,\IZ) \otimes H^2( B\mathbb T^{d,\ast}, \IZ) \subset H^{3}(\mathbb T^d \times \Lambda^2 \mathbb T^{d,\ast} \times B\mathbb T^{d,\ast}, \IZ)~.
        \end{aligned}
    \end{equation}

    \subsection{\texorpdfstring{$T_2$}{T2}-duality backgrounds and duality relation}\label{sec:tlduality}
    
    We now formulate $T_2$-duality as a higher analogue of topological T-duality, cf.~\cite{Bunke:2005um}, replacing the Poincaré bundle by the Poincar\'e gerbe from \ref{def:poincare}. We start with the definition of the relevant backgrounds entering the duality.
    \begin{definition}[Topological $T_2$-duality backgrounds]\label{def:t2-backgrounds}
        A \emph{topological left $T_2$-duality background} consists of a triple $(M,P,R)$, where
        \begin{myenumerate}
            \item $P \rightarrow M$ is a principal $\mathbb T^d$-bundle and
            \item $R \rightarrow P$ is a $2$-gerbe with band $\sfU(1)$.
        \end{myenumerate} 
        
        A \emph{topological right $T_2$-duality background} consists of a quadruple $(M,\hat P,\hat Q,\hat R)$, where
        \begin{myenumerate}
            \item $\hat{P} \rightarrow M$ is a principal $\Lambda^2 \mathbb T^{d,\ast}$-bundle\footnote{see \ref{not:torus}},
            \item $\hat{Q} \rightarrow \hat{P}$ is a 1-gerbe with band $\mathbb T^{d,\ast}$, whose restriction to any fiber $\hat{P}_{|x}$, $x \in M$, is trivial, and
            \item $\hat{R} \rightarrow \hat{Q}$ is a $2$-gerbe with band $\sfU(1)$.
        \end{myenumerate} 
    \end{definition}
    \noindent We note that contrary to T-duality, there is an asymmetry between the two types of backgrounds. Mathematically, this asymmetry arises from the asymmetry in the moduli space $\caM(\IT^d)$ of flat gerbes. Physically, it arises from dualizing torus directions in 3d sigma models, which yields 1-forms that should be regarded as coordinates on gerbes, cf.~\ref{sec:physics}.
    
    The duality relation is now straightforward:
    \begin{definition}[Topological $T_2$-duality]\label{def:t2duality}
        Consider a topological left $T_2$-duality background $(M,P,R)$ and a topological right $T_2$-duality background $(M,\hat{P},\hat Q,\hat R)$. We call these backgrounds \emph{$T_2$-dual} if the following data exists:
        \begin{myenumerate}
            \item an isomorphism $\psi$ of 2-gerbes over $P \times_M \hat Q$ between the pullback of $R$ and the pullback of $\hat R$;\label{itm:t2phi}
            \item for each $x \in M$, trivializations $\chi_x$ of $R|_{P_x}$ and $\hat{\chi}_x$ of $\hat R|_{\hat Q_x}$ such that $\chi_x \circ \psi_x \circ \hat \chi_x^{-1}$, as an automorphism of the trivial 2-gerbe over \label{itm:2:t2dualityrel}
            \begin{equation}
                P_x \times \hat Q_x \cong \mathbb T^d \times \Lambda^2 \mathbb T^{d,\ast} \times B(\mathbb T^{d,\ast}) = \mathbb T^d \times \mathcal{M}(\mathbb T^d)~,
            \end{equation}
            is isomorphic to the Poincar\'e gerbe (cf.~\ref{rk:isolgerbes}).\\[-0.5cm]
        \end{myenumerate}
    \end{definition}
    
    Note that \ref{def:t2duality} does not require that $\hat Q\rightarrow M$ be a principal 2-bundle for the natural Lie 2-group structure on $\mathcal{M}(\mathbb T^d) = \Lambda^2 \mathbb T^{d,\ast} \times B \mathbb T^{d,\ast}$. In fact, it follows from~\eqref{eq:lhat} below that  $\hat Q \rightarrow M$ can only be principal when $P \rightarrow M$ is topologically trivial. 
    
    The following definition is inspired by the notion of differential $T$-duality in~\cite{Waldorf:2022tib}, which is closely related to the differential T-duality pairs from~\cite{Kahle:2009uta}.
    \begin{definition}[Differential $T_2$-duality backgrounds]
        A \emph{differential left/right $T_2$-duality background} is a topological left/right $T_2$-duality background in which the principal bundles and $p$-gerbes are endowed with connections.
    \end{definition}
    \begin{definition}[Differential $T_2$-duality]\label{def:t2dualitydiff}
        Consider a differential left $T_2$-duality background 
        $(M,P,R)$ and a differential right $T_2$-duality background $(M,\hat P,\hat Q,\hat R)$. We say that these backgrounds are \emph{$T_2$-dual} if the following holds:
        \begin{myenumerate}
            \item the backgrounds are $T_2$-dual as topological backgrounds with isomorphism $\psi$;
            \item the isomorphism $\psi$ comes with a connection whose curvature forms are\footnote{See \ref{not:low} for the notation $\langle-,\langle-,-\rangle\rangle^\low$.}
            \begin{equation}\label{eq:defdifft2dual}
                \begin{aligned}
                    G_2&=\langle A , \hat{\Lambda} \rangle~~~\in~~\Omega^2(P \times_M \hat Q_1,\mathbb R)~,
                    \\
                    G_3&=\langle A , \hat{B} \rangle - \langle A , \langle A , \hat{A} \rangle \rangle^\low~~~\in~~\Omega^{3}(P \times_M \hat Q_0,\mathbb R)~,
                \end{aligned}
            \end{equation}
            where $A$ and $\hat A$ are the connections on $P$ and $\hat P$, and $(\hat B,\hat \Lambda)$ is the connection on $\hat Q$.
        \end{myenumerate}
    \end{definition}
    
    \begin{remark}
        On every fiber of $P \times_M \hat Q \rightarrow M$, the differential forms~\eqref{eq:defdifft2dual} restrict to the curvature of the Poincar\'e gerbe given in~\eqref{eq:fpoincare}.
    \end{remark}
    
    \subsection{Characterization of topological \texorpdfstring{$T_2$}{T2}-backgrounds}\label{ssec:characterization}   
    
    Let us now further characterize the backgrounds entering into a $T_2$-duality by introducing the notion of left  and right $T_2$-duality backgrounds of type $F_2$. These will be the backgrounds for which we can show that $T_2$-duals always exist. The arguments here are direct generalizations of standard arguments with spectral sequences in topological T-duality~\cite{Bunke:2005um,Baraglia:1105.0290}, although some technical difficulties arise due to the complicated fibration structure of right $T_2$-duality backgrounds. We start by recalling some standard definitions and facts, then prove some technical lemmas, and conclude with the final result in \cref{th:existencet2duals}. For further details on the Serre spectral sequence, see  e.g.~\cite[Sec.~5.1]{Hatcher:2002bk}.
    
    Let 
    \begin{equation}
        F \xrightarrow{~~~} P \xrightarrow{~\pi_P~} M
    \end{equation}
    be a fibration over a connected manifold $M$. We denote by $\caH^t(\pi_P)$ the local system on $M$ given by $x \mapsto H^t(\pi_P^{-1}(x),\IZ)$, with the monodromy action of the fundamental group $\pi_1(M)$. In most cases of interest to us, $\caH^t(\pi_P)$ is just the constant local system $H^t(F,\IZ)$. 
    
    We now choose a CW structure on $M$ and write $M_j$ for the $j$-skeleton of $M$ and $P_j := \pi_P^{-1}(M_j)$. This yields fibrations
    \begin{equation}
        M\supset \ldots \supset M_1\supset M_0\supset M_{-1}=\varnothing
        \eand 
        P\supset \ldots \supset P_1\supset P_0\supset P_{-1}=\varnothing~.
    \end{equation}
    Applying the cohomology functor to the latter filtration, we obtain the decreasing filtration
    \begin{equation}\label{eq:cohomology_filtration}
        \begin{aligned}
            &H^q(P,\IZ) = F_0 H^q(P,\IZ) \supset F_1 H^q(P,\IZ) \supseteq \ldots 
            \\
            &\hspace{2cm}\supseteq F_q H^q(P,\IZ) = \pi_P^{\ast}H^q(M,\IZ) \supseteq F_{q+1}H^q(P,\IZ) = 0~, 
        \end{aligned}
    \end{equation}
    with 
    \begin{equation}\label{eq:skeletalfiltration}
        F_jH^q(P,\IZ) = \ker\big(H^q(P,\IZ) \rightarrow H^q(P_{j-1},\IZ)\big)~.
    \end{equation}
    
    By the standard theory of spectral sequences for filtered complexes, the successive quotients are given by
    \begin{equation}
        F_jH^q(P,\IZ)~/~F_{j+1}H^q(P,\IZ) ~\cong~ E_{\infty}^{j,q-j}~,
    \end{equation}
    where $E_\infty^{j,q-j}$ is the $\infty$-page of a first-quadrant cohomology spectral sequence $E_r^{s,t}$ converging to $H^\bullet(P,\IZ)$,
    \begin{equation}
        \begin{gathered}
            E_2^{s,t} \Longrightarrow H^{s+t}(P,\IZ)
            ~,~~~
            d_r: E_r^{s,t} \rightarrow E_r^{s+r,t-r+1}~,
            \\
            E_{r+1}^{s,t} = \frac{\ker d_r: E_{r}^{s,t}\rightarrow E_r^{s+r,t-r+1}}{\operatorname{im} d_r: E_r^{s-r,t+r-1}\rightarrow E_r^{s,t}}~.
        \end{gathered}
    \end{equation}
    
    Concretely, the $E_1$-page of this spectral sequence is given by\footnote{Here, $H^{s+t}(P_s,P_{s-1},\IZ)$ denotes the relative cohomology $H^{s+t}((P_s,P_{s-1});\IZ)$, and $C^s(M,\caH^t(\pi_P))$ are the $s$-dimensional cellular cochains with coefficients in $\caH^t(\pi_P)$.}
    \begin{equation}
        E_1^{s,t} \coloneqq H^{s+t}(P_s,P_{s-1},\IZ) = C^s(M,\caH^t(\pi_P))
    \end{equation}
    as cellular cochains and $d_1$ is the cellular differential. Recall that relative cohomology fits into a long exact sequence, which contains a connecting homomorphism
    \begin{equation}
        \delta: H^t(P_1,\IZ)\rightarrow H^{t+1}(P_2,P_1,\IZ) = E_1^{2,t-1}
    \end{equation}
    
    The $E_2$-page\footnote{which is independent of the choice of CW structure} reads as 
    \begin{equation}
        E_2^{s,t} \coloneqq H^s(M,\caH^t(\pi_P))~,
    \end{equation}
    and we need the explicit form of the differential $d_2$ acting on $u \in E_2^{0,t}$, which is defined as follows. Representing $u$ on the $E_1$-page, the condition $u \in E_2^{0,t}$ means that
    \begin{equation}
        u\in \ker(d_1:H^t(P_0, P_{-1}, \mathbb{Z})=H^t(P_0, \mathbb{Z}) \xrightarrow{~\delta~} H^{t+1}(P_1, P_0, \mathbb{Z}))~,
	\end{equation}
    and we have the left exact sequence
    \begin{equation}
        H^t(P_1,\IZ) \xrightarrow{~~~} H^{t}(P_0,\IZ) \xrightarrow{~\delta~} H^{t+1}(P_1,P_0,\IZ)~.
	\end{equation}
    Hence, there is a $\tilde{u} \in H^t(P_1,\IZ)$ whose restriction to $H^{t}(P_0,\IZ)$ is $u$. Then we have $\delta \tilde u\in H^{t+1}(P_2,P_1,\IZ)$ and $\delta \tilde u\in \ker(d_1)$, and we can set 
    \begin{equation}\label{eq:d2obstruction}
        d_2 u\coloneqq [\delta \tilde u]~.
    \end{equation}
    
    We now return to the filtration~\eqref{eq:cohomology_filtration}. For $c \in F_jH^q(P,\IZ)$ we define its \emph{$j$-th symbol} as
    \begin{equation}
        \sigma_j(c) := [c] \in F_jH^q(P,\IZ)~/~F_{j+1}H^q(P,\IZ)~\cong~E_{\infty}^{j,q-j}~.
    \end{equation}
    Note that the above-mentioned long exact sequence contains in particular maps $e^*:H^q(P_j,P_{j-1},\IZ)\rightarrow H^q(P_j,\IZ)$, and $\sigma_j(c)$ is represented in $E_1^{j,q-j}$ by the image of any lift of $c|_{P_j}$ along $e^*$. In particular, suppose $c = c(\mathcal{G})$ is the class of a $(q-2)$-gerbe $\mathcal{G} \rightarrow P$. By~\eqref{eq:skeletalfiltration}, $c(\mathcal{G})$ lies in $F_jH^{q}(P,\IZ)$ precisely if $\mathcal{G}$ is trivializable over $P_{j-1}$, and any such trivialization $\psi$ determines a representative of $\sigma_j(c(\mathcal{G}))$ in $E_1^{j,q-j}$; namely the obstruction to extending $\psi$ over $P_j$, which is indeed an element of $H^q(P_j,P_{j-1},\IZ)$ projecting to $c(\mathcal{G})|_{P_j}$. 
    
    The first lemma will allow us to relate the two conditions in the \ref{def:t2duality} of topological $T_2$-duality to properties of the Dixmier--Douady classes of the $2$-gerbes.
    \begin{lemma}\label{lem:correspondence}
        Let 
        \begin{equation}
            F \xrightarrow{~~~} P \xrightarrow{~\pi_P~} M
            \eand
            \hat{F} \xrightarrow{~~~} \hat{P} \xrightarrow{~\pi_{\hat{P}}~} M
        \end{equation}
        be fibrations over a connected manifold $M$ with connected fibers such that $H^{\bullet}(F,\IZ)$ and $H^{\bullet}(\hat{F},\IZ)$ are torsion-free. Write 
        \begin{equation}
            C := P \times_M \hat{P}~, \qquad p: C \rightarrow P~, \qquad \hat{p}: C \rightarrow \hat{P}
        \end{equation}
        for the correspondence space and its projections, and let $q \geq 2$. If $c \in H^q(P,\IZ)$ and $\hat{c} \in H^q(\hat{P},\IZ)$ satisfy 
        \begin{equation}
            p^{\ast}c = \hat{p}^{\ast}\hat{c} \in H^q(C,\IZ)~,
        \end{equation}
        then the following hold.
        \begin{myenumerate}
            \item\label{itm:corrf2} $c \in F_2H^q(P,\IZ)$ and $\hat{c} \in F_2H^q(\hat{P},\IZ)$.
            \item\label{itm:corrsymbols} Denote by 
            \begin{equation}\label{eq:mixedpart}
                \mathcal{K}^{t} := \bigoplus_{\substack{i+j = t \\ i,j \geq 1}} \caH^i(\pi_P) \otimes \caH^j(\pi_{\hat P}) \subseteq \caH^t(\pi_P\times \pi_{\hat P})
            \end{equation}
            the mixed part of the K\"unneth decomposition of the fibers of $\pi_C:C \rightarrow M$. Then there are representatives $\zeta \in E_2^{2,q-2}(\pi_P)$ of $\sigma_2(c) \in E_{\infty}^{2,q-2}(\pi_P)$ and $\hat{\zeta} \in E_2^{2,q-2}(\pi_{\hat{P}})$ of $\sigma_2(\hat{c}) \in E_{\infty}^{2,q-2}(\pi_{\hat{P}})$, and a class 
            \begin{equation}
                \theta \in H^0(M,\mathcal{K}^{q-1}) \subseteq E_2^{0,q-1}(\pi_C)~,
            \end{equation}
            such that 
            \begin{equation}\label{eq:correspondencesymbols}
                d_2\theta = p^{\ast}\zeta - \hat{p}^{\ast}\hat{\zeta} \quad \in \quad E_2^{2,q-2}(\pi_C)~.
            \end{equation}
        \end{myenumerate}
    \end{lemma}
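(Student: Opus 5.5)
We need to show two things: that both classes lie in $F_2$, and that their second symbols are related, on the $E_2$-page of $\pi_C$, by $d_2$ of a mixed Künneth class. We work throughout with the Serre spectral sequences of $\pi_P$, $\pi_{\hat P}$ and $\pi_C$, built from compatible CW filtrations pulled back from a single CW structure on $M$. The projections $p,\hat p$ are fibre maps over $\mathrm{id}_M$, so they induce morphisms of filtered complexes and hence of spectral sequences, $p^\ast:E_r(\pi_P)\to E_r(\pi_C)$ and $\hat p^\ast:E_r(\pi_{\hat P})\to E_r(\pi_C)$. Since the fibre cohomology is torsion-free, the Künneth theorem gives
\[
\caH^t(\pi_C)\cong\bigoplus_{i+j=t}\caH^i(\pi_P)\otimes\caH^j(\pi_{\hat P}).
\]
On the fibres, $p^\ast$ is inclusion of the summand $j=0$ and $\hat p^\ast$ of the summand $i=0$; both are split injective as maps of local systems.

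\textbf{Item \ref{itm:corrf2}.} We show $c\in F_1$ and then $c\in F_2$, and symmetrically for $\hat c$.
\begin{myenumerate}
\item \emph{$F_1$.} Restrict $p^\ast c=\hat p^\ast\hat c$ to the fibre $F\times\hat F$ over a point $x\in M_0$. In the Künneth decomposition, the left side has only a component in $H^q(F)\otimes H^0(\hat F)$ and the right side only in $H^0(F)\otimes H^q(\hat F)$. Since $q\ge 1$, these summands meet only in $0$, so $c|_{F}=0$ and $\hat c|_{\hat F}=0$. As $M$ is connected we may take $M_0$ to be a single point, so $c$ restricts to zero on $P_0$, i.e.\ $c\in F_1$.
\item \emph{$F_2$.} Look at $\sigma_1(c)\in E_\infty^{1,q-1}(\pi_P)$, a subquotient of $E_2^{1,q-1}=H^1(M,\caH^{q-1}(\pi_P))$. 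Its image under $p^\ast$ equals $\hat p^\ast\sigma_1(\hat c)$. Again the two lie in the complementary Künneth summands $(q-1,0)$ and $(0,q-1)$ of $\caH^{q-1}(\pi_C)$, where $q-1\ge 1$.
\item The main subtlety is that injectivity on $E_2$ does not automatically give injectivity on $E_\infty$. We avoid this by working at the $E_1$/cochain level. Choose a trivialisation of the gerbe (equivalently, a null-cochain) over $P_0$ and pull it back to $C$. The equality $p^\ast c=\hat p^\ast\hat c$ comes with a trivialisation of the difference. Use it to adjust the choices so that the $E_1^{1,q-1}$ representatives satisfy $p^\ast u=\hat p^\ast\hat u+d_1(\text{mixed }0\text{-cochain})+d_1(\text{pure terms})$.
\item Projecting onto the $(q-1,0)$ summand, which is a retraction commuting with $d_1$ since the decomposition is one of local systems, shows $u$ is a $d_1$-coboundary. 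Modifying the trivialisation over $P_0$ accordingly, it extends over $P_1$, so $c\in F_2$.
\end{myenumerate}

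\textbf{Item \ref{itm:corrsymbols}.} Now choose trivialisations $\psi$ of $c$ over $P_1$ and $\hat\psi$ of $\hat c$ over $\hat P_1$. Their obstructions to extending over the $2$-skeleton give cocycles $\zeta,\hat\zeta\in E_1^{2,q-2}$ representing $\sigma_2(c)$ and $\sigma_2(\hat c)$.
\begin{myenumerate}
\item Over $C_1$ we have two trivialisations of the same class, $p^\ast\psi$ and $\hat p^\ast\hat\psi$, composed with the given isomorphism. Their difference is a class $\tilde\theta\in H^{q-1}(C_1,\IZ)$.
\item By naturality of the connecting map $\delta$, $\delta\tilde\theta=p^\ast\zeta-\hat p^\ast\hat\zeta$.
\item The restriction $\theta_0=\tilde\theta|_{C_0}\in E_1^{0,q-1}(\pi_C)=H^{q-1}(F\times\hat F)$ is a $d_1$-cocycle, since it extends to $C_1$. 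Hence it defines a class in $E_2^{0,q-1}=H^0(M,\caH^{q-1}(\pi_C))$, and by \eqref{eq:d2obstruction} $d_2\theta_0=[\delta\tilde\theta]=p^\ast\zeta-\hat p^\ast\hat\zeta$.
\item It remains to make $\theta_0$ lie in the mixed part $\caK^{q-1}$. Decompose $\theta_0=\theta_{\mathrm{pure}}+\hat\theta_{\mathrm{pure}}+\theta$, with $\theta_{\mathrm{pure}}$ in the $(q-1,0)$ summand, $\hat\theta_{\mathrm{pure}}$ in the $(0,q-1)$ summand, and $\theta\in\caK^{q-1}$. The pure parts are invariant sections, so $\theta_{\mathrm{pure}}=p^\ast a$ and $\hat\theta_{\mathrm{pure}}=\hat p^\ast\hat a$ with $a\in E_2^{0,q-1}(\pi_P)$, $\hat a\in E_2^{0,q-1}(\pi_{\hat P})$.
\item Replace $\zeta$ by $\zeta-d_2a$ and $\hat\zeta$ by $\hat\zeta+d_2\hat a$. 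These still represent $\sigma_2(c)$ and $\sigma_2(\hat c)$ in $E_\infty$, because $d_2$-images are quotiented out. Since $d_2$ commutes with $p^\ast$ and $\hat p^\ast$, this gives exactly \eqref{eq:correspondencesymbols}.
\end{myenumerate}

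The step I expect to be the main obstacle is the cochain-level bookkeeping in the $F_2$ part of item \ref{itm:corrf2}. A purely $E_2$-level injectivity argument is insufficient there, so the trivialisation must be explicitly adjusted using the Künneth projection.
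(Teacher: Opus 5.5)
Your proposal is correct and follows the same overall strategy as the paper. Both use the K\"unneth splitting of $\caH^\bullet(\pi_C)$ into pure and mixed local systems. Both obtain the vanishing of $\sigma_0,\sigma_1$ from the fact that $p^\ast$ and $\hat p^\ast$ land in complementary summands. Both then write $p^\ast\zeta-\hat p^\ast\hat\zeta$ as $d_2$ of a class in $E_2^{0,q-1}(\pi_C)$ and absorb its pure parts into the representatives.

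The one real difference is in the $F_2$ step, and it comes from an obstacle that is not actually there. In bidegrees $(0,q)$ and $(1,q-1)$, every differential $d_r$ with $r\ge 2$ has its source in a column $s<0$. Hence $E_\infty^{j,q-j}$ is obtained from $E_2^{j,q-j}$ by successively passing to kernels of the outgoing differentials. It is therefore a \emph{subgroup} of $E_2^{j,q-j}$ for $j=0,1$, not merely a subquotient, and the map induced by $p$ on $E_\infty$ is the restriction of the one on $E_2$. Injectivity and trivially intersecting images thus pass to $E_\infty$ automatically; this is exactly the paper's short argument.

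Your $E_1$-level replacement is nevertheless valid, and it needs none of the ``adjusting of choices'' you describe. Any two lifts $p^\ast u,\hat p^\ast\hat u\in H^q(C_1,C_0)$ of $p^\ast c|_{C_1}=\hat p^\ast\hat c|_{C_1}$ differ by $d_1$ of a $0$-cochain, by exactness. Projecting onto the $\caH^{q-1}(\pi_P)$-summand, which commutes with $d_1$, then gives $u\in\operatorname{im}d_1$. Two smaller simplifications of the same kind:
\begin{itemize}
\item You do not need $M_0$ to be a single point, since $P_0$ is the disjoint union of the fibres over the $0$-cells.
\item The lemma gives no isomorphism, only an equality of classes. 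Exactness of $H^{q-1}(C_1)\to H^q(C_2,C_1)\to H^q(C_2)$ alone supplies your $\tilde\theta$ with $\delta\tilde\theta=p^\ast\zeta-\hat p^\ast\hat\zeta$.
\end{itemize}

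For item 2, the paper argues abstractly from $E_\infty^{2,q-2}\subseteq E_3^{2,q-2}$ that $p^\ast\zeta-\hat p^\ast\hat\zeta\in\operatorname{im}d_2$. You instead construct the preimage explicitly, as the restriction to $C_0$ of the difference class $\tilde\theta\in H^{q-1}(C_1,\IZ)$. Your version is longer, but it produces $\theta$ in precisely the form $[\Psi\circ\psi\circ\hat\Psi^{-1}]$ that is used later in the proof of \cref{th:existencet2duals}.
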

    
    \begin{proof}
        Since the fibers are connected and their cohomology is torsion-free, we have a decomposition of local systems $\caH^t(\pi_P\times \pi_{\hat P}) = \caH^t(\pi_P) \oplus \mathcal{K}^t \oplus \caH^t(\pi_{\hat P})$. The maps induced by $p$ and $\hat{p}$ on second pages of the corresponding spectral sequences are the inclusions of the first and third summands, which for $t \geq 1$ are injective with images intersecting trivially. Note that no differential $d_r$ with $r\geq 2$ has image in bidegrees $(0,q)$ and $(1,q-1)$, hence
        \begin{equation}
            E^{j,q-j}_\infty \subset E^{j,q-j}_2
            ~~~\mbox{for}~~~
            j=0,1~.
        \end{equation}
        For $j=0,1$, the symbols $\sigma_j$ are therefore elements of the second page, and naturality of the spectral sequence gives, again for $j=0,1$,
        \begin{equation}
            p^{\ast}\sigma_j(c) = \sigma_j(p^{\ast}c) = \sigma_j(\hat{p}^{\ast}\hat{c}) = \hat{p}^{\ast}\sigma_j(\hat{c}) \quad \in \quad E_{2}^{j,q-j}(\pi_C)
        \end{equation}
        for $c \in F_jH^q(P,\IZ)$ and $\hat{c} \in F_jH^q(\hat{P},\IZ)$. But $p^{\ast}\sigma_j(c)$ lies in the summand of $E_{2}^{j,q-j}(\pi_C)$ taking values in $\caH^{q-j}(\pi_P)$, while $\hat{p}^{\ast}\sigma_j(\hat{c})$ lies in the summand taking values in $\caH^{q-j}(\pi_{\hat P})$. Therefore, both must vanish for $j=0,1$, which shows~\cref{itm:corrf2}. For $j=2$, however, $\sigma_2$ takes values in the third page. The argument then shows that for representatives $\zeta$ and $\hat{\zeta}$ of $\sigma_2(c)$ and $\sigma_2(\hat{c})$ in the second page, $p^{\ast}\zeta$ and $\hat{p}^{\ast}\hat{\zeta}$ represent the same class in the third page, and so $p^{\ast}\zeta - \hat{p}^{\ast}\hat{\zeta} = d_2\eta$ for some $\eta \in E_2^{0,q-1}(\pi_C)$. Writing $\eta = p^{\ast}\eta_1 + \theta + \hat{p}^{\ast}\eta_2$ according to the K\"unneth decomposition, naturality of $d_2$ lets us replace $\zeta$ by $\zeta - d_2\eta_1$ and $\hat{\zeta}$ by $\hat{\zeta} + d_2\eta_2$, which are again representatives of the two symbols and yield~\eqref{eq:correspondencesymbols}.
    \end{proof}
    
    The next lemma will let us derive properties of the Dixmier--Douady class of the gerbe $\hat{Q} \rightarrow \hat{P}$ in a $T_2$-dualizable right background, as well as express elements of $\caH^2(\pi_{\hat P}\circ \pi_{\hat Q})$ in a practical way.
    
    \begin{lemma}\label{lem:d2gerbe}
        Let $(M,\hat{P},\hat{Q},\hat{R})$ be a right $T_2$-duality background and $\pi_{\hat Q}: \hat{Q} \rightarrow \hat{P}$ the gerbe projection. Then $c(\hat{Q}) \in F_1H^3(\hat{P},\IZ^{d,\ast})$ for the fibration $\pi_{\hat{P}}$, the fibration $\pi_{\hat{Q}}$ has constant $\caH^1(\pi_{\hat P}\circ \pi_{\hat Q}) = \Lambda^2\IZ^d$, and there is a short exact sequence of local systems on $M$,
        \begin{equation}\label{eq:seshtwo}
            0 \longrightarrow \Lambda^2(\Lambda^2\IZ^d) \xrightarrow{~\pi_{\hat Q}^{\ast}~} \caH^2(\pi_{\hat P}\circ \pi_{\hat Q}) \longrightarrow \IZ^d \longrightarrow 0~,
        \end{equation}
        with extension class in $\operatorname{Ext}^1(\IZ^d,\Lambda^2(\Lambda^2\IZ^d)) = E_2^{1,2}(\pi_{\hat{P}};\IZ^{d,\ast})$ equal to $\sigma_1(c(\hat{Q}))$. In particular,~\eqref{eq:seshtwo} splits if and only if $c(\hat{Q}) \in F_2H^3(\hat{P},\IZ^{d,\ast})$. Moreover, in this case every trivialization $\tau$ of $\pi_{\hat Q}$ over $\hat{P}_1$ determines a splitting $s_{\tau}: \IZ^d \rightarrow \caH^2(\pi_{\hat P}\circ \pi_{\hat Q})$ of~\eqref{eq:seshtwo} and a representative
        \begin{equation}\label{eq:nutau}
            \nu_{\tau} \in E_2^{2,1}(\pi_{\hat{P}};\IZ^{d,\ast}) = H^2(M,\Lambda^2\IZ^{d} \otimes \IZ^{d,\ast})
        \end{equation}
        of $\sigma_2(c(\hat{Q})) \in E_3^{2,1}(\pi_{\hat{P}};\IZ^{d,\ast})$.
    \end{lemma}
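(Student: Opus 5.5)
The plan is to work fiberwise over $M$ and use the Serre spectral sequence of $\pi_{\hat P}$ with coefficients in $\IZ^{d,\ast}$. Write $\hat F=\Lambda^2\IT^{d,\ast}$ for the fiber of $\pi_{\hat P}$, so that $H^1(\hat F,\IZ)=\Lambda^2\IZ^d$ and $H^2(\hat F,\IZ)=\Lambda^2(\Lambda^2\IZ^d)$. Since $\pi_{\hat P}$ is a principal torus bundle, the monodromy acts by translations, which are homotopic to the identity, so all local systems $\caH^t(\pi_{\hat P})$ are constant.

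\textbf{The filtration claim.} By \ref{def:t2-backgrounds}, the restriction of $\hat Q$ to every fiber $\hat P_x$ is trivial. Hence $c(\hat Q)$ restricts to zero on $P_0$ (a disjoint union of fibers), so $c(\hat Q)\in F_1H^3(\hat P,\IZ^{d,\ast})$ by \eqref{eq:skeletalfiltration}.

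\textbf{The fibers of $\pi_{\hat P}\circ\pi_{\hat Q}$.} Next I identify the fiber over $x$ as the geometric realization of the trivial gerbe, namely $\hat F\times B\IT^{d,\ast}$. I use that $B\IT^{d,\ast}$ is a $K(\IZ^d,2)$, i.e.\ $H^1=0$ and $H^2=\IZ^d$, with no torsion in low degrees. The Künneth formula then gives
\begin{equation*}
H^1=\Lambda^2\IZ^d~,\qquad H^2=\Lambda^2(\Lambda^2\IZ^d)\oplus\IZ^d~,
\end{equation*}
but this splitting depends on the chosen trivialization of $\hat Q|_{\hat P_x}$. The canonical data are a different pair of maps: the pullback $\pi_{\hat Q}^\ast$ from $H^2(\hat F)$, which is injective, and the restriction to a fiber $B\IT^{d,\ast}$ of $\pi_{\hat Q}$, whose image in $H^2(B\IT^{d,\ast})=\IZ^d$ does not depend on the trivialization. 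These two maps form \eqref{eq:seshtwo}. The $H^1$ part involves no choice, and its monodromy is inherited from $\caH^1(\pi_{\hat P})$, which is constant.

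\textbf{The extension class.} A loop $\gamma$ in $M$ gives a monodromy automorphism of $H^2$. This automorphism is the identity on the sub and on the quotient of \eqref{eq:seshtwo}, so it is determined by a homomorphism $\IZ^d\to\Lambda^2(\Lambda^2\IZ^d)$. Concretely, pick a trivialization of $\hat Q$ over the fiber at the base point of $\gamma$ and transport it around $\gamma$. The discrepancy between the transported and the original trivialization is a $\IT^{d,\ast}$-bundle over $\hat F$, and its Chern class lies in
\begin{equation*}
H^2(\hat F,\IZ^{d,\ast})=\operatorname{Hom}\big(\IZ^d,\Lambda^2(\Lambda^2\IZ^d)\big)~.
\end{equation*}
This discrepancy is exactly the obstruction to extending the trivialization over the $1$-cell $\gamma$. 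By the discussion after the definition of the symbol, the resulting cellular $1$-cochain represents $\sigma_1(c(\hat Q))$ in $E_1^{1,2}$. The cocycle in $H^1(M,\operatorname{Hom}(\IZ^d,\Lambda^2(\Lambda^2\IZ^d)))$ classifying the extension is therefore $\sigma_1(c(\hat Q))$. No $d_r$ with $r\ge2$ lands in bidegree $(1,2)$, so $E_\infty^{1,2}\subset E_2^{1,2}$ and the extension class can be compared with $\sigma_1$ directly. Consequently, the sequence splits iff $\sigma_1(c(\hat Q))=0$, which holds iff $c(\hat Q)\in F_2$.

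\textbf{The case $c(\hat Q)\in F_2$.} A trivialization $\tau$ over $\hat P_1$ identifies $\hat Q|_{\hat P_1}\cong\hat P_1\times B\IT^{d,\ast}$. Pulling back the generators of $H^2(B\IT^{d,\ast})$ along this identification gives a lift $\IZ^d\to H^2$ on each fiber. This lift is invariant under monodromy along all $1$-cells, and it defines $s_\tau$. The obstruction to extending $\tau$ over $\hat P_2$ is a class in
\begin{equation*}
H^3(\hat P_2,\hat P_1;\IZ^{d,\ast})=C^2\big(M,\Lambda^2\IZ^d\otimes\IZ^{d,\ast}\big)~.
\end{equation*}
It is a cellular cocycle, because the obstruction projects to $c(\hat Q)|_{\hat P_2}$, which is a global class. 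Its cohomology class is $\nu_\tau$, and by the symbol discussion it represents $\sigma_2(c(\hat Q))$ in $E_3^{2,1}$.

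The main obstacle is the extension-class step: matching the monodromy-defined Ext class precisely, including signs, with the cellular representative of $\sigma_1$. I would first check this over $M=S^1$, where $\hat Q$ is a gerbe on $S^1\times\hat F$, and then reduce the general case to this one by naturality along $1$-cells.
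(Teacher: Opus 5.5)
Your argument is correct and is essentially the paper's. The paper picks a trivialization $\tau_0$ over $\hat P_0$ and views the induced class $\tilde\iota_0$ as a $0$-cochain in $E_1^{0,2}(\pi_{\hat Q};\IZ^{d,\ast})$ lifting $\unit$ along \eqref{eq:seshtwo}. It then observes that $d_1\tilde\iota_0$ is at once the connecting-homomorphism image of $\unit$ (the extension class) and $\pi_{\hat Q}^\ast$ of the obstruction to extending $\tau_0$ over $\hat P_1$, and it treats $s_\tau$ and $\nu_\tau$ exactly as you do. Your monodromy-around-loops computation is this same $d_1$ evaluated on loops; phrasing it at the level of cellular cochains spares you the reduction to loops based at a single $0$-cell that you flag as remaining work.
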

    
    \begin{proof}
        A right background is by definition trivializable over each fiber $\hat{P}_x$, which is precisely $c(\hat Q) \in F_1H^3(\hat{P},\IZ^{d,\ast})$. Since\footnote{Recall that $BS^1\cong \IC P^\infty$ and $H^k(\IC P^\infty,\IZ)$ is $\IZ$ for $k$ even and trivial else.} $H^1(B\mathbb T^{d,\ast},\IZ) = 0$, we see that $\caH^1(\pi_{\hat P}\circ \pi_{\hat Q}) \cong \caH^1(\pi_{\hat P})$ and, because $\hat P$ is a principal $\Lambda^2 \IT^{d,\ast}$-bundle, we have $\caH^1(\pi_{\hat P}\circ \pi_{\hat Q}) \cong \Lambda^2\IZ^d$. Furthermore, we have $H^2(B\mathbb T^{d,\ast},\IZ) \cong \IZ^d$, which makes~\eqref{eq:seshtwo} immediate. Now a trivialization of $\hat Q$ over a subspace $Z \subseteq \hat{P}$ corresponds to a class in $H^2(\pi_{\hat Q}^{-1}(Z),\IZ^{d,\ast})$ restricting to $\unit \in H^2(B\mathbb T^{d,\ast},\IZ^{d,\ast}) = \operatorname{End}(\IZ^{d,\ast})$ on the fibers of $\pi_{\hat Q}$. Then choose a trivialization $\tau_0$ over $\hat{P}_0$, which exists as $c(\hat{Q}) \in F_1H^3(\hat{P},\IZ^{d,\ast})$, and let $\tilde{\iota}_0$ be the corresponding class. Restricting it to the fibers of $\pi_{\hat{Q}}$ gives a cochain in $E_1^{0,2}(\pi_{\hat{Q}};\IZ^{d,\ast}) = C^0(M,\caH^2(\pi_{\hat P}\circ \pi_{\hat Q}) \otimes \IZ^{d,\ast})$ lifting $\unit$ along~\eqref{eq:seshtwo}. Its differential $d_1\tilde{\iota}_0$ is, on the one hand, the image of $\unit$ under the connecting homomorphism of~\eqref{eq:seshtwo}, that is, the extension class; on the other hand it is $\pi_{\hat Q}^{\ast}$ of the obstruction to extending $\tau_0$ over $\hat{P}_1$, which represents $\sigma_1(c(\hat{Q}))$ by the description of the symbols recalled above. Finally, if $c(\hat{Q}) \in F_2H^3(\hat{P},\IZ^{d,\ast})$, a trivialization $\tau$ over $\hat{P}_1$ exists. Moreover, the restriction to the fibers of $\pi_{\hat{Q}}$ of its class $\tilde{\iota}_{\tau}$ is now $d_1$-closed, and hence an element of $H^0(M,\caH^2(\pi_{\hat P}\circ \pi_{\hat Q}) \otimes \IZ^{d,\ast}) = H^0(M,\sfHom(\IZ^d,\caH^2(\pi_{\hat P}\circ \pi_{\hat Q})))$ lifting $\unit$ along~\eqref{eq:seshtwo}, i.e. a splitting $s_{\tau}$. The obstruction to extending $\tau$ over $\hat{P}_2$ is $d_1$-closed and represents $\sigma_2(c(\hat{Q}))$, giving~\eqref{eq:nutau}.
    \end{proof}
    
    The final lemma we require yields a closed-form expression for the differential $d_2$ on the spectral sequence of the correspondence space of a pair of $T_2$-duality backgrounds. Recall the definition of the canonical class $\Pi$ of the Poincar\'e gerbe from~\eqref{eq:canonical_class_Poincare} as well as the canonical pairing components~\eqref{eq:canonical_pairings_Poincare}. We also note that
    \begin{equation}\label{eq:cohomologyfiber}
        H^{\bullet}(\mathbb T^d \times \Lambda^2\mathbb T^{d,\ast} \times B\mathbb T^{d,\ast},\IZ) = \Lambda^{\bullet}\big(\IZ^{d,\ast} \oplus \Lambda^2 \IZ^{d}\big) \otimes \operatorname{Sym}^{\bullet}\big(\IZ^d\big)~,
    \end{equation}
    where $\IZ^{d,\ast} = H^1(\mathbb T^d,\IZ)$ and $\Lambda^2\IZ^d = H^1(\Lambda^2\mathbb T^{d,\ast},\IZ)$ sit in degree $1$ and $\IZ^d = H^2(B\mathbb T^{d,\ast},\IZ)$ sits in degree $2$    
    
    \begin{lemma}\label{lem:d2correspondence}
        Let $(M,P,R)$ and $(M,\hat{P},\hat{Q},\hat{R})$ be topological left and right $T_2$-duality backgrounds with $c(\hat{Q}) \in F_2H^3(\hat{P},\IZ^{d,\ast})$, and write $C := P \times_M \hat{Q}$ and $\pi_C=C\rightarrow M$. Choose a trivialization $\tau$ of $\pi_{\hat Q}$ over $\hat{P}_1$, use it to write $\caH^2(\pi_{\hat P}\circ \pi_{\hat Q}) = \Lambda^2(\Lambda^2 \IZ^d) \oplus \IZ^d$ under a splitting $s_{\tau}$, and let $\nu := \nu_{\tau}$ be the associated representative of $\sigma_2(c(\hat{Q}))$ as in \cref{lem:d2gerbe}. Then the differential $d_2$ of the fibration $\pi_C$ is the unique $H^{\bullet}(M,\IZ)$-linear antiderivation with
        \begin{equation}\label{eq:d2generators}
            \begin{aligned}
                d_2 \alpha &= \langle \alpha, c(P)\rangle \in H^2(M,\IZ)~, &&\alpha \in \IZ^{d,\ast}~,\\
                d_2 \beta &= \langle \beta, c(\hat{P})\rangle \in H^2(M,\IZ)~, &&\beta \in \Lambda^2 \IZ^{d}~,\\
                d_2 s_{\tau}(n) &= \langle n, \nu \rangle \in H^2(M,\Lambda^2\IZ^{d})~, &&n \in \IZ^{d}~,
            \end{aligned}
        \end{equation}
        In particular, decomposing
        \begin{equation}\label{eq:h2fiberdecomposition}
            \caH^2(\pi_C) = \underbrace{\Lambda^2\IZ^{d,\ast}}_{= \,\caH^2(\pi_P)} \oplus \underbrace{\big(\IZ^{d,\ast} \otimes \Lambda^2\IZ^d\big)}_{= \,\mathcal{K}^2} \oplus \underbrace{\Lambda^2(\Lambda^2 \IZ^d) \oplus \IZ^d}_{= \,\caH^2(\pi_{\hat P}\circ \pi_{\hat Q})}~,
        \end{equation}
        the class $\Pi$ of the Poincar\'e gerbe satisfies
        \begin{equation}\label{eq:d2poincare}
            d_2 \Pi = \Big(c(\hat{P}),\; \delta c(P) - \nu,\; 0,\; c(P)\Big) \quad \in \quad E_2^{2,2}(\pi_C)~,
        \end{equation}
        where $\delta: \IZ^d \rightarrow \operatorname{Hom}(\IZ^d,\Lambda^2\IZ^d) = \Lambda^2 \IZ^d \otimes \IZ^{d,\ast}$ is the canonical map $\delta(k)(l) := k \wedge l$.
    \end{lemma}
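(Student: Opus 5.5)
The plan is to reduce everything to the three generating fibrations and then extend by multiplicativity. By the Künneth decomposition \eqref{eq:cohomologyfiber}, the fiber cohomology of $\pi_C$ is generated as an algebra by $\IZ^{d,\ast}=\caH^1(\pi_P)$, $\Lambda^2\IZ^d=\caH^1(\pi_{\hat P})$ and, via the splitting $s_\tau$ from \cref{lem:d2gerbe}, $s_\tau(\IZ^d)\subset\caH^2(\pi_{\hat P}\circ\pi_{\hat Q})$. The Serre spectral sequence of $\pi_C$ is multiplicative, and since the $E_2$-page is $H^\bullet(M,\IZ)$ tensored with the fiber cohomology (the local systems being constant in this setting), $d_2$ is an $H^\bullet(M,\IZ)$-linear antiderivation. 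It is therefore determined by its values on generators, which is the uniqueness claim. The first step is thus to compute \eqref{eq:d2generators}.

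For $\alpha\in\IZ^{d,\ast}$, I would pull back along $p:C\to P$. By naturality, $d_2\alpha$ equals the $d_2$ of the principal $\IT^d$-bundle $P\to M$. That this transgression is $\langle\alpha,c(P)\rangle$ is the standard fact for torus bundles: it reduces to $d=1$ by pushing out along $\alpha:\IT^d\to \sfU(1)$, and there it is the definition of the Chern class via the Gysin/transgression map. The same argument applied to $\hat P$, a principal $\Lambda^2\IT^{d,\ast}$-bundle with $H^1$ of the fiber equal to $\Lambda^2\IZ^d$, gives $d_2\beta=\langle\beta,c(\hat P)\rangle$. For $s_\tau(n)$, I would pull back along $C\to\hat Q$. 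Here $d_2 s_\tau(n)\in E_2^{2,1}$ is computed by \eqref{eq:d2obstruction}: lift the class over the $1$-skeleton and take the connecting homomorphism into the $2$-skeleton. The class $\tilde\iota_\tau$ from the proof of \cref{lem:d2gerbe} is exactly such a lift, defined over $\pi_{\hat Q}^{-1}(\hat P_1)$. Its $\delta$ is $\pi_{\hat Q}^\ast$ of the obstruction to extending $\tau$ over $\hat P_2$, which is $\nu_\tau$. Pairing with $n$ then gives the third line of \eqref{eq:d2generators}.

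The second step is to compute $d_2\Pi$ by expanding $\Pi$ in these generators. Per \eqref{eq:canonical_pairings_Poincare}, I would write $\Pi=\sum_{i<j}\beta_{ij}\,\alpha^i\alpha^j+\sum_i\alpha^i s_\tau(e_i)$, with $\alpha^i$ a basis of $\IZ^{d,\ast}$ and $\beta_{ij}=e_i\wedge e_j$. Compatibility of $s_\tau$ with the Poincaré class holds because the universal fiber gerbe pairs $H^1(\IT^d)$ with the degree-2 generator of $B\IT^{d,\ast}$, and the splitting only changes the $\Lambda^2(\Lambda^2\IZ^d)$ part. Applying the Leibniz rule then produces three terms:
\begin{itemize}
\item $d_2\beta_{ij}\cdot\alpha^i\alpha^j$ assembles into $c(\hat P)$ in the $\Lambda^2\IZ^{d,\ast}$ component;
\item $\beta_{ij}\,d_2(\alpha^i\alpha^j)$ together with $d_2\alpha^i\, s_\tau(e_i)$ give the $\delta c(P)$ term in $\mathcal K^2$ and the $c(P)$ term in $\IZ^d$;
\item $\alpha^i\,d_2 s_\tau(e_i)$ gives the $-\nu$ term in $\mathcal K^2$, with the sign coming from the antiderivation rule for the odd generator $\alpha^i$.
\end{itemize}
The $\Lambda^2(\Lambda^2\IZ^d)$ component vanishes because no $\beta\beta$ terms appear.

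I expect the main obstacle to be sign and identification bookkeeping. This includes the sign convention of the antiderivation, the identification $\Lambda^2\IZ^d\otimes\IZ^{d,\ast}$ in which $\delta$ is read, and the check that the third generator's transgression is exactly $\nu_\tau$ rather than differing by a $d_1$-exact correction. I would handle the last point by verifying that choosing a different lift over $\hat P_1$ changes both $\nu$ and $s_\tau$ consistently.
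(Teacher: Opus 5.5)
Your proposal is correct and follows the same route as the paper. Ring generation by $\IZ^{d,\ast}$, $\Lambda^2\IZ^d$ and $s_\tau(\IZ^d)$ gives uniqueness. Naturality plus the torus-bundle transgression gives the first two formulas. The lift $\tilde\iota_\tau$ identifies $d_2 s_\tau$ with the obstruction $\nu_\tau$ to extending $\tau$ over $\hat P_2$. A Leibniz-rule expansion of $\Pi$ in a basis then gives $d_2\Pi$, and you carry that expansion out more explicitly than the paper, with consistent signs.

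Your worry about a $d_1$-exact discrepancy in the third generator is moot. Both $d_2 s_\tau(n)$ and $\nu_\tau$ are only defined as classes in $E_2^{2,1}$, where such corrections vanish.
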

    
    \begin{proof}
        By~\eqref{eq:cohomologyfiber}, the ring $\caH^{\bullet}(\pi_C)$ is generated by $\IZ^{d,\ast} \oplus \Lambda^2\IZ^d$ in degree $1$ together with the image of $s_{\tau}$ in degree $2$; hence~\eqref{eq:d2generators} determines it. Denote by $p: C \rightarrow P$ and $\hat{p}: C \rightarrow \hat{Q}$ the projections. The first two formulas follow by naturality from the corresponding statement for a principal torus bundle, for which $d_2$ is contraction with the Chern class~\cite{1969728}. For the third one, recall that $d_2s_{\tau}$ is the obstruction to extending over $\hat{Q}_2$ the class $\tilde{\iota}_{\tau} \in H^2(\hat{Q}_1,\IZ^{d,\ast})$ lifting $s_{\tau}$ as in~\eqref{eq:d2obstruction}. But this is the same as extending $\tau$ over $\hat{P}_2$, whose obstruction is $\nu$, because $\caH^1(\pi_{\hat P}\circ \pi_{\hat Q}) = \caH^1(\pi_P)$ yields $E^{2,1}_1(\pi_{\hat{P}};\IZ^{d,\ast}) = E^{2,1}_1(\pi_{\hat{Q}};\IZ^{d,\ast})$. Thus $d_2s_{\tau} = \nu$, as desired. Formula~\eqref{eq:d2poincare} then follows directly from these relations, writing the canonical pairings in a basis.
    \end{proof}
    
    
    With the above lemmata, we are now ready to prove the main result of this section.
    
    \begin{theorem}\label{th:existencet2duals}
        Let $(M,P,R)$ and $(M,\hat{P},\hat{Q},\hat{R})$ be topological left and right $T_2$-duality backgrounds, respectively, over the same manifold $M$. If they are $T_2$-dual, then their characteristic classes
        \begin{equation}
            \begin{gathered}
                c(P) \in H^2(M,\IZ^d)~, \qquad c(R) \in H^4(P,\IZ)~, \\
                c(\hat{P}) \in H^2(M,\Lambda^2\IZ^{d,\ast})~, \qquad c(\hat{Q}) \in H^3(\hat{P},\IZ^{d,\ast})~, \qquad c(\hat{R}) \in H^4(\hat{Q},\IZ)~,
            \end{gathered}
        \end{equation}
        satisfy the following.
        \begin{myenumerate}
            \item\label{itm:thleft} $c(R) \in F_2H^4(P,\IZ)$ for the fibration $\pi_P$;
            \item\label{itm:thright} $c(\hat{R}) \in F_2H^4(\hat{Q},\IZ)$ for the fibration $\pi_{\hat{Q}}$;
            \item\label{itm:thqhat} $c(\hat{Q}) \in F_2H^3(\hat{P},\IZ^{d,\ast})$ for the fibration $\pi_{\hat{P}}$;
            \item\label{itm:thmomentum} the identities
            \begin{align}
                \begin{split}\label{eq:momentumexchange}
                    \sigma_2(c(R)) &= \big[c(\hat{P})\big] \in E_3^{2,2}(\pi_P)~, \\ 
                    \sigma_2(c(\hat{Q})) &= \big[\delta c(P)\big] \in E_3^{2,1}(\pi_{\hat{P}};\IZ^{d,\ast})~, \\ 
                    \sigma_2(c(\hat{R})) &= -\big[c(P)\big] \in E_3^{2,2}(\pi_{\hat{Q}})~,
                \end{split}
            \end{align}
            where $\delta$ is as in \cref{lem:d2correspondence} and $[\,\cdot\,]$ denotes the class of a $d_2$-cocycle in $E_3 = \ker d_2/\operatorname{im}d_2$, the relevant second pages being
            \begin{align}
                \begin{split}\label{eq:momentumpages}
                    E_2^{2,2}(\pi_P) &= H^2(M,\Lambda^2\IZ^{d,\ast})~, \\
                    E_2^{2,1}(\pi_{\hat{P}};\IZ^{d,\ast}) &= H^2(M,\Lambda^2\IZ^{d} \otimes \IZ^{d,\ast})~, \\
                    E_2^{2,2}(\pi_{\hat{Q}}) &= H^2\big(M,\Lambda^2(\Lambda^2\IZ^d) \oplus \IZ^d\big)~,
                \end{split}
            \end{align}
            and $c(P)$ being regarded in the summand $H^2(M,\IZ^d)$ of the last one.
        \end{myenumerate}
    \end{theorem}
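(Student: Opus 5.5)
The plan is to translate the data of \cref{def:t2duality} into an identity of cohomology classes on the correspondence space and then read off everything from the three lemmata. Write $C := P\times_M \hat Q$ with projections $p: C\to P$ and $\hat p: C\to \hat Q$. The isomorphism $\psi$ in item~\ref{itm:t2phi} of \cref{def:t2duality} gives
\begin{equation*}
p^\ast c(R) = \hat p^\ast c(\hat R) \in H^4(C,\IZ).
\end{equation*}
The fibers $\IT^d$ and $\Lambda^2\IT^{d,\ast}\times B\IT^{d,\ast}$ are connected with torsion-free cohomology; for the second fiber this uses the fiberwise triviality of $\hat Q$ and \eqref{eq:cohomologyfiber}. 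So \cref{lem:correspondence} applies with $q=4$ to the fibrations $\pi_P$ and $\pi_{\hat P}\circ\pi_{\hat Q}$. Part~\ref{itm:corrf2} of that lemma immediately gives items~\ref{itm:thleft} and~\ref{itm:thright}. Part~\ref{itm:corrsymbols} then produces representatives $\zeta$ of $\sigma_2(c(R))$ and $\hat\zeta$ of $\sigma_2(c(\hat R))$, together with $\theta\in H^0(M,\mathcal K^3)$, such that
\begin{equation*}
d_2\theta = p^\ast\zeta - \hat p^\ast\hat\zeta.
\end{equation*}

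Next I identify $\theta$ with the Poincaré class. Item~\ref{itm:2:t2dualityrel} of \cref{def:t2duality} says that over each fiber, the automorphism $\chi_x\circ\psi_x\circ\hat\chi_x^{-1}$ is the Poincaré gerbe, whose class is $\Pi$ by \eqref{eq:canonical_class_Poincare}. The plan is to show that the mixed component $\theta$ is precisely the fiberwise class of this automorphism: I will follow the construction in the proof of \cref{lem:correspondence}, where $\eta\in E_2^{0,3}(\pi_C)$ arises by comparing the two trivializations over $C_1$, and check that its mixed Künneth part is the class of $\psi$ composed with the fiberwise trivializations. Hence $\theta=\Pi$, viewed in $H^0(M,\mathcal K^3)$ under \eqref{eq:canonical_pairings_Poincare}. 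The pure parts of $\eta$ only change $\zeta,\hat\zeta$ within their symbol classes, so this identification is harmless.

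Item~\ref{itm:thqhat} is the step I expect to be the main obstacle, because \cref{lem:d2correspondence} assumes it. I would argue by contradiction using \cref{lem:d2gerbe}. If $\sigma_1(c(\hat Q))\neq0$, then \eqref{eq:seshtwo} is non-split. Now $\Pi$ pairs $\IZ^{d,\ast}$ with the degree-2 generator $\IZ^d$, and this generator would then not be a global section of $\caH^2(\pi_{\hat P}\circ\pi_{\hat Q})$. More precisely, $d_1$ of the cochain lifting $\Pi$ would equal the extension class paired with $\IZ^{d,\ast}$, lying in the mixed summand $C^1(M,\IZ^{d,\ast}\otimes\Lambda^2(\Lambda^2\IZ^d))$. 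But the relation $p^\ast c(R)=\hat p^\ast c(\hat R)$ forces $\theta$ to be a $d_1$-cocycle, and the pure summands cannot cancel a mixed term. So the extension class vanishes, which gives item~\ref{itm:thqhat}. Making the cochain-level comparison precise, in particular that $\theta$ is a $d_1$-closed section even before splitting, is the delicate part; I would first try to run the proof of \cref{lem:correspondence} at the $E_1$-level to get this.

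With item~\ref{itm:thqhat} established, choose a trivialization $\tau$ and apply \cref{lem:d2correspondence}. By \eqref{eq:d2poincare},
\begin{equation*}
d_2\Pi = \big(c(\hat P),\ \delta c(P)-\nu,\ 0,\ c(P)\big).
\end{equation*}
I compare this componentwise with $p^\ast\zeta-\hat p^\ast\hat\zeta$ in the decomposition \eqref{eq:h2fiberdecomposition}:
\begin{itemize}
\item The $\Lambda^2\IZ^{d,\ast}$ component gives $\zeta=c(\hat P)$.
\item The mixed component must vanish, since $p^\ast\zeta-\hat p^\ast\hat\zeta$ has no mixed part. This gives $\nu=\delta c(P)$, and since $\nu$ represents $\sigma_2(c(\hat Q))$ by \cref{lem:d2gerbe}, this yields the second identity.
\item The $\caH^2(\pi_{\hat P}\circ\pi_{\hat Q})$ components give $-\hat\zeta=(0,c(P))$, i.e.\ $\hat\zeta=-c(P)$.
\end{itemize}
Passing to $E_3$ classes gives \eqref{eq:momentumexchange}. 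The sign and orientation conventions in the pairings will need to be checked against \eqref{eq:fpoincare}.
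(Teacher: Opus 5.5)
Your proposal is correct and follows essentially the paper's proof. The paper likewise gets items 1--2 from \ref{lem:correspondence}. It defines $\theta$ concretely as $[\Psi\circ\psi\circ\hat\Psi^{-1}]\in H^3(C_1,\IZ)$ from trivializations $\Psi$, $\hat\Psi$ of $R|_{P_1}$ and $\hat R|_{\hat Q_1}$, which is exactly the identification you plan. It reads off item 3 from the fact that the Poincar\'e condition makes the $\IZ^{d,\ast}\otimes\caH^2(\pi_{\hat P}\circ\pi_{\hat Q})$-component of $\theta$ a global section lifting the identity, i.e.\ a splitting of \eqref{eq:seshtwo}; this is your connecting-homomorphism argument stated directly. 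It then compares the four summands of \eqref{eq:d2poincare}.

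One bookkeeping point: choose $\tau$ so that $s_\tau$ equals the splitting furnished by $\theta$. This is always possible, since every element of $E_2^{0,2}$ lifts to $\hat P_1$, and it is what justifies writing $\theta=\Pi$ in the decomposition \eqref{eq:h2fiberdecomposition}.
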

    \begin{proof}
        Write $C := P \times_M \hat{Q}$ with projections $p$ and $\hat{p}$. The isomorphism $\psi$ of~\cref{itm:t2phi} in \cref{itm:2:t2dualityrel} of \ref{def:t2duality} gives $p^{\ast}c(R) = \hat{p}^{\ast}c(\hat{R})$, so \cref{itm:corrf2} of \cref{lem:correspondence} proves~\cref{itm:thleft,itm:thright}. In particular $c(R)$ and $c(\hat{R})$ restrict trivially to $P_1$ and $\hat{Q}_1$, so we may choose trivializations $\Psi$ of $R|_{P_1}$ and $\hat{\Psi}$ of $\hat{R}|_{\hat{Q}_1}$. The obstructions to extending them over $P_2$ and $\hat{Q}_2$ are representatives $\zeta$ and $\hat{\zeta}$ of $\sigma_2(c(R))$ and $\sigma_2(c(\hat{R}))$, and, as in the proof of \cref{itm:corrsymbols} of \cref{lem:correspondence},
        \begin{equation}\label{eq:thetaphi}
            \theta := \big[\Psi \circ \psi \circ \hat{\Psi}^{-1}\big] \in H^3(C_1,\IZ) \qquad \text{satisfies} \qquad d_2 \theta = p^{\ast}\zeta - \hat{p}^{\ast}\hat{\zeta}~,
        \end{equation}
        where we also write $\theta$ for the induced section of $\mathcal{K}^3 = (\Lambda^2\IZ^{d,\ast} \otimes \Lambda^2 \IZ^d) \oplus (\IZ^{d,\ast} \otimes \caH^2(\pi_{\hat P}\circ \pi_{\hat Q}))$.
        
        By~\cref{itm:2:t2dualityrel} of \cref{def:t2duality}, the image of $\theta$ under the projection induced by~\eqref{eq:seshtwo} is the class $\Pi$ of~\eqref{eq:canonical_pairings_Poincare}. Its second component is thus a section of $\sfHom(\IZ^d,\caH^2(\pi_{\hat P}\circ \pi_{\hat Q}))$ splitting~\eqref{eq:seshtwo}, so \cref{lem:d2gerbe} gives~\cref{itm:thqhat} and identifies $\theta$ with $\Pi$. Substituting~\eqref{eq:d2poincare} into~\eqref{eq:thetaphi} and comparing the four summands of~\eqref{eq:h2fiberdecomposition} gives~\cref{itm:thmomentum}.
    \end{proof}
    
    Following~\cite{Bunke:2005um}, we single out the backgrounds satisfying the conditions of \cref{th:existencet2duals}.
    \begin{definition}[Backgrounds of type $F_2$]\label{def:typef2}
        A topological left $T_2$-duality background $(M,P,R)$ is \emph{of type $F_2$} if $c(R) \in F_2H^4(P,\IZ)$. A topological right $T_2$-duality background $(M,\hat{P},\hat{Q},\hat{R})$ is \emph{of type $F_2$} if
        \begin{equation}
            c(\hat{R}) \in F_2H^4(\hat{Q},\IZ)~,~~~c(\hat{Q}) \in F_2H^3(\hat{P},\IZ^{d,\ast})~,
		\end{equation}
        and $\sigma_2(c(\hat{R}))$ admits a representative $\hat{y}$ in the summand $H^2(M,\IZ^d)$ of~\eqref{eq:momentumpages} such that
        \begin{equation}\label{eq:typef2right}
            \sigma_2(c(\hat{Q})) = -\big[\delta \hat{y}\big] \in E_3^{2,1}(\pi_{\hat{P}};\IZ^{d,\ast})~,
        \end{equation}
        with the notation of~\eqref{eq:momentumexchange}.
    \end{definition}
    
    \Cref{th:existencet2duals} implies that $T_2$-duality backgrounds admitting $T_2$-duals must be of type $F_2$. The next two sections use Lie 3-groups to show that this is also a sufficient condition, that backgrounds of type $F_2$ can always be equipped with connections admitting a differential $T_2$-dual, and that both topological and differential $T_2$-duals can be constructed by choosing some specific cocycle data representing the background.

    \section{Almost 2-crossed modules of Lie groups and toroidal 3-groups}\label{sec:almostcrossed}
    
    We now introduce the general class of structure 3-groups that will allow us to describe $T_2$-duality as spans of principal 3-bundles. Lie 2-crossed modules~\cite{Conduche:1984:155,Conduche:2003} are commonly used as algebraic models for 3-groups. These provide a convenient description of Lie Gray $3$-groups~\cite{Kamps:2002aa}. For our purposes, however, we need to be slightly more general, and we introduce a new notion of almost 2-crossed modules of groups. 
    
    \subsection{Almost 2-crossed modules}
    
    Almost 2-crossed modules of groups are 2-crossed modules of groups in which one of the strict group actions has been lifted to a weak one.
    \begin{definition}[Almost 2-crossed module of groups]\label{def:almostcrossed}
        An almost 2-crossed module of groups is a normal complex of groups
        \begin{subequations}
            \begin{equation}
                \sfL \xrightarrow{~\sft~}\sfH \xrightarrow{~\sft~}\sfG~,~~~\sft\circ \sft=\unit~,
            \end{equation}
            together with the following structure maps and coherence conditions:
            \begin{myenumerate}
                \item a (strict, unital) action of $\sfG$ on $\sfL$ by automorphisms, denoted by $\acton$;
                \item a unital weak action $\triangleright : \sfG \times \sfH \rightarrow \sfH$ for which $\sft$ is equivariant in the sense that
                \begin{equation}
                    \begin{aligned}
                        g \triangleright \unit &= \unit
                        ~,~~~&
                        \unit \triangleright h &= h~,
                        \\
                        \sft(g \triangleright \ell) &= g \triangleright \sft(\ell)
                        ~,~~~&
                        \sft(g \triangleright h) &= g\sft(h)g^{-1}~,
                    \end{aligned}
                \end{equation}
                where the failure of the action to be strictly associative and multiplicative is measured by maps 
                \begin{equation}
                    \alpha: \sfG \times \sfG \times \sfH \rightarrow \sfL
                    \eand
                    \beta: \sfG \times \sfH \times \sfH \rightarrow \sfL
                \end{equation}
                as follows:
                \begin{equation}\label{eq:weak_action}
                    \begin{aligned}
                        (g_1g_2) \triangleright h &= g_1 \triangleright (g_2 \triangleright h)\, \sft(\alpha(g_1,g_2,h))~,
                        \\
                        g \triangleright (h_1h_2) &= (g \triangleright h_1) (g \triangleright h_2)\, \sft(\beta(g,h_1,h_2))~,
                    \end{aligned}
                \end{equation}
                with the maps satisfying the unitality conditions
                \begin{equation}\label{eq:unitality}
                    \begin{gathered}
                        \alpha(\unit,g,h) = \alpha(g,\unit,h) = \alpha(g_1,g_2,\unit)= \alpha(g_1,g_2,\sft(\ell))= \unit ~,
                        \\
                        \beta(\unit,h_1,h_2) = \beta(g,\unit,h) = \beta(g,h,\unit)=\beta(g,\sft(\ell_1),\sft(\ell_2))=\unit~,
                    \end{gathered}
                \end{equation}
                and the coherence conditions
                \begin{equation}\label{eq:associativity}
                    \begin{aligned}
                        \alpha(g_1g_2,g_3,h)&=\alpha(g_1,g_2,g_3 \triangleright h)^{-1}(g_1 \triangleright \alpha(g_2,g_3,h))
                        \\
                        &\hspace{0.5cm}\times\beta(g_1,g_2 \triangleright (g_3 \triangleright h),\sft(\alpha(g_2,g_3,h)))\, \alpha(g_1,g_2g_3,h)
                        ~,
                        \\
                        \beta(g,h_1,h_2h_3) &= \beta(g,h_2,h_3)^{-1}((g \triangleright h_3)^{-1}\triangleright  \beta(g,h_1,h_2))\, \beta(g,h_1h_2,h_3)~,
                        \\
                        \beta(g_1g_2,h_1,h_2) 
                        &= \alpha(g_1,g_2,h_2)^{-1}((g_1 \triangleright (g_2 \triangleright h_2)^{-1}) \triangleright \alpha(g_1,g_2,h_1))^{-1} 
                        \\
                        &\hspace{0.5cm}\times
                        \beta(g_1,g_2\acton h_1,g_2\acton h_2)(g_1 \triangleright \beta(g_2,h_1,h_2))
                        \\
                        &\hspace{0.5cm}\times \beta(g_1,(g_2\acton h_1)(g_2\acton h_2),\sft(\beta(g_2,h_1,h_2)))\,\alpha(g_1,g_2,h_1h_2)~,
                    \end{aligned}
                \end{equation}
                where we use the shorthand notation
                \begin{equation}\label{eq:defhactl}
                    h \triangleright \ell \coloneqq \ell\{\sft(\ell^{-1}),h\}~;
                \end{equation} 
                \item a map $\{-,-\}: \sfH \times \sfH \rightarrow \sfL$, called the \emph{Pfeiffer lifting}, satisfying
                \begin{equation}\label{eq:peiff_axioms_def}
                    \begin{gathered}
                        \sft\{h_1,h_2\} = h_1h_2h^{-1}_1(\sft(h_1) \triangleright h_2^{-1})~,
                        \\
                        \{\sft(\ell_1),\sft(\ell_2)\} = \ell_1\ell_2\ell^{-1}_1\ell^{-1}_2~,
                        \\
                        \{\sft(\ell),h\}\{h,\sft(\ell)\} = \ell (\sft(h) \triangleright \ell^{-1})~, 
                        \\
                    \end{gathered}
                \end{equation}
                and
                \begin{equation}\label{eq:peiff_ax}
                    \begin{aligned}
                        \{h_1h_2,h_3\} &= 
                        \{h_1,h_2h_3h_2^{-1}\}(\sft(h_1)\acton\{h_2,h_3\})\,\beta(\sft(h_1),h_2h_3^{-1}h_2^{-1},\sft(\{h_2,h_3\}))
                        \\
                        &\hspace{0.5cm}\times\alpha(\sft(h_1),\sft(h_2),h_3^{-1})~,
                        \\
                        \{h_1,h_2h_3\} &= \{h_1,h_2\}\beta(\sft(h_1),h_2,h_2^{-1})\{h_1,h_3\}\{\sft(\{h_1,h_3\}^{-1}),\sft(h_1)\acton h_2\}
                        \\
                        &\hspace{0.5cm}\times\beta(\sft(h_1),h_2,h_2^{-1})^{-1}\beta(\sft(h_1),h_3^{-1},h_2^{-1})~,
                        \\
                        g \triangleright \{h_1,h_2\} &= 
                        R_1
                        \{g \triangleright h_1, g \triangleright h_2\} 
                        R_2~,
                    \end{aligned}
                \end{equation}
                where
                \begin{equation}
                    \begin{aligned}
                        R_1&=(g\acton (h_1h_2))\acton (\beta(g,h_1^{-1},h_1)^{-1}\beta(g,h_1,h_2))~,
                        \\
                        R_2&=\beta(\sft(g\acton h_1),g\acton(h_2^{-1}),(g\acton(h_2^{-1}))^{-1}(g\acton h_2)^{-1})^{-1}
                        \\&\hspace{1cm}
                        \times (\sft(g\acton h_1)\acton((g\acton h_2^{-1})^{-1}\acton \beta(g,h_2^{-1},h_2)))^{-1}
                        \\&\hspace{1cm}
                        \times \alpha(\sft(g\acton h_1),g,h_2^{-1})~\alpha(g,\sft(h_1),h_2^{-1})^{-1}
                        \\&\hspace{1cm}
                        \times \beta(g,h_1^{-1},\sft(h_1)\acton h_2^{-1})~\beta(g,h_1h_2,h_1^{-1}(\sft(h_1)\acton h_2^{-1}))~.
                    \end{aligned}
                \end{equation}
            \end{myenumerate}
        \end{subequations}
    \end{definition}
    \noindent A 2-crossed module is the same as an almost 2-crossed module for which $\alpha$ and $\beta$ are trivial. 
    
    The above axioms arise from those of a 2-crossed module by weakening the action of $\sfG$ on $\sfH$: relations~\eqref{eq:unitality} are implied by unitality of the actions,~\eqref{eq:associativity} are implied by the associativity of the group products and the two ways in which $(g_1g_2)\acton (h_1h_2)$ can be resolved, and~\eqref{eq:peiff_axioms_def} follow from compatibility with the application of $\sft$ on both sides. Full coherence of the axioms can then be verified by direct symbolic computation.
    
    As for ordinary 2-crossed modules, the axioms imply that the complex
    \begin{subequations}
        \begin{equation}
            \sfL\xrightarrow{~\sft~}\sfH
        \end{equation} 
        is a crossed module of groups with action~\eqref{eq:defhactl}. Indeed, the two relevant identities are satisfied:
        \begin{equation}
            \sft(h \triangleright \ell_1) = h \sft(\ell_1) h^{-1}
            \eand
            \sft(\ell_1) \triangleright \ell_2 = \ell_1\ell_2\ell_1^{-1}~.
        \end{equation}
    \end{subequations}
    
    The $\sfH$-action on $\sfL$ fails to be $\sfG$-equivariant in the following sense:
    \begin{equation}
        g \triangleright (h \triangleright \ell) = \big((g \triangleright h) \triangleright (g \triangleright \ell) \big)\beta(g,\sft(\ell),h^{-1})\beta(g,h,\sft(\ell)h^{-1})\beta(g,\sft(\ell^{-1}),\sft(h\acton \ell))~.
    \end{equation}
    
    We are exclusively interested in almost 2-crossed modules in which $\sfG$, $\sfH$, and $\sfL$ are Lie groups. Applying the tangent functor to these, we obtain a notion of almost 2-crossed modules of Lie algebras.
    
    \subsection{Principal 3-bundles and homomorphisms of almost 2-crossed modules}
    
    Let $\caG=(\sfL\rightarrow \sfH\rightarrow \sfG)$ be an almost 2-crossed module of Lie groups. Principal $\caG$-bundles, i.e.~principal 3-bundles with $\caG$ as structure 3-group and their isomorphisms, admit several equivalent descriptions. Here, we use Čech cocycles and (higher) Čech coboundaries. The cocycles will always be subordinate to a surjective submersion $\sigma:Y\rightarrow M$. As before, we write $Y^{[n]}=Y\times_M\ldots\times_M Y$ for the $n$-fold fiber product.
    
    A Čech cocycle for a principal $\caG$-bundle subordinate to $\sigma$ consists of smooth maps
    \begin{equation}
        (g_{ij})\in C^\infty(Y^{[2]},\sfG)~,~~~(h_{ijk})\in C^\infty(Y^{[3]},\sfH)~,~~~(\ell_{ijkl})\in C^\infty(Y^{[4]},\sfL)
    \end{equation}
    such that
    \begin{subequations}\label{eq:cocycle3pbs}
        \begin{equation}\label{eq:cocycle-gh-relation}
            \begin{aligned}
                \sft(h_{ijk})g_{ij}g_{jk} &= g_{ik}~,
                \\
                h_{ikl}h_{ijk}\sft(\ell_{ijkl}) &= h_{ijl} \, g_{ij} \triangleright h_{jkl}~,
                \\
            \end{aligned}
        \end{equation}
        and
        \begin{equation}\label{eq:cocycle-l-relation}
            \begin{aligned}
                &\ell_{ijkl}\,((g_{ij} \triangleright h_{jkl})^{-1} \triangleright \ell_{ijlm} ) \,\beta(g_{ij},h_{jlm},h_{jkl}) \,g_{ij} \triangleright \ell_{jklm}
                \\& \hspace{1cm}
                \times \: \beta(g_{ij},h_{jlm}h_{jkl},\sft(\ell_{jklm})) \, \beta^{-1}(g_{ij},h_{jkm},g_{jk} \triangleright h_{klm})
                \\& \hspace{1cm}
                = (h_{ijk}^{-1} \triangleright \ell_{iklm}) \, \{h_{ijk}^{-1}, (g_{ik} \triangleright h_{klm})^{-1}\} \,  \alpha(\sft(h_{ijk}^{-1}),g_{ik},h_{klm})
                \\& \hspace{2cm}
                \times \: \alpha^{-1}(g_{ij},g_{jk},h_{klm}) \,\big((g_{ij} \triangleright (g_{jk} \triangleright h_{klm}))^{-1} \triangleright \ell_{ijkm}\big)
            \end{aligned}
        \end{equation}
        for all $ij\in Y^{[2]}$, $ijk\in Y^{[3]}$, etc. The first relation in~\eqref{eq:cocycle-gh-relation} is a lift of the cocycle condition of a principal $\sfG$-bundle, and the second relation describes a lift of the cocycle condition in a principal 2-bundle. These two relations agree with those for a 2-crossed module of Lie groups~\cite{Jurco:2009px,Saemann:2013pca}.  Relation~\eqref{eq:cocycle-l-relation} is then obtained by deforming the cocycle relations for a 2-crossed module of Lie groups such that applying $\sft$ to both sides yields an identity. One then verifies coherence of the relations over $Y^{[6]}$.
    \end{subequations}
    
    These Čech cocycles allow us to give a simplicial model for the delooping $B\caG$ of $\caG$, which is obtained by applying the coskeleton functor~\cite{Zhu:0801.2057} to the complex of groups underlying $\caG$, leading to the $2$-tower of $\caG$, cf.~\cite{Li:2309.00901}. 
    
    Concretely, we have
    \begin{equation}
        \begin{aligned}
            (B\mathcal{G})_n \coloneqq  \Big\{ &\big((g_{ij})_{i<j \in [n]},(h_{ijk})_{i<j<k \in [n]},(\ell_{ijkl})_{i<j<k<l \in [n]}\big) \in \sfG^{\binom{n+1}{2}} \times \sfH^{\binom{n+1}{3}} \times \sfL^{\binom{n+1}{4}} 
            \\
            &\Big| \: \mbox{all $(g_{ij})$, $(h_{ijk})$, and $(\ell_{ijkl})$ satisfy~\eqref{eq:cocycle3pbs}} \Big\}~.
        \end{aligned}
    \end{equation} 
    For each non-decreasing function $f:[n_1] \rightarrow [n_2]$, we define $f^{\ast}: (B\mathcal{G})_{n_2} \rightarrow (B\mathcal{G})_{n_1}$ by
    \begin{align}
        \begin{split}\label{eq:simpmapspullback2}
            ((g_{ij}),(h_{ijk}), (\ell_{ijkl})) &\mapsto ((f^{\ast}g_{ij}),(f^{\ast}h_{ijk}), (f^{\ast}\ell_{ijkl}))~,
            \\
            f^{\ast}g_{ij} \coloneqq  g_{f(i)f(j)}
            ~,~~~ 
            f^{\ast}h_{ijk} &\coloneqq  h_{f(i)f(j)f(k)}
            ~,~~~
            f^{\ast}\ell_{ijkl} \coloneqq  \ell_{f(i)f(j)f(k)f(l)}
        \end{split}
    \end{align}
    with the understanding that $g_{ij}$, $h_{ijk}$ and $\ell_{ijkl}$ equal $\unit$ whenever there are repeated indices. The face and degeneracy maps on $B\caG_\bullet$ are then defined as $d^n_j = (\delta^n_j)^{\ast}$ and $s^n_j = (\sigma^n_j)^{\ast}$, as usual, cf.~\ref{app:simplicial}.
    
    \begin{proposition}
        The geometric realization $|B\mathcal{G}_\bullet|$ of this simplicial manifold is the \emph{classifying space for principal $\mathcal{G}$-bundles}. 
    \end{proposition}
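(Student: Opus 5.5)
The plan is to reduce the statement to the standard fact that, for a simplicial (Lie) group or, more generally, a reduced Kan simplicial manifold $B\caG_\bullet$ which models a delooping, the geometric realization classifies the corresponding Čech-type cocycle data up to coboundary. The first step is to observe that, by construction, a simplicial morphism $\check C(\sigma)_\bullet\rightarrow B\caG_\bullet$ is the same as a Čech cocycle $(g_{ij},h_{ijk},\ell_{ijkl})$ subordinate to $\sigma$ satisfying~\eqref{eq:cocycle3pbs}. This holds because $B\caG_\bullet$ is $3$-coskeletal: it is obtained by applying the coskeleton functor to the $2$-tower of $\caG$, so a simplicial map into it is determined by its components up to level $3$. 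The identification then follows from the face-map formulas~\eqref{eq:simpmapspullback2}. Similarly, simplicial homotopies $\check C(\sigma)_\bullet\times\Delta[1]\rightarrow B\caG_\bullet$, after passing to a common refinement, correspond to the (higher) Čech coboundaries. These we would write down by the same deformation procedure used to obtain~\eqref{eq:cocycle-l-relation}.

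Second, I would show that $B\caG_\bullet$ is a Lie $3$-groupoid with a single object, i.e.\ that the horn-filling maps $(B\caG)_n\rightarrow\Lambda^n_k(B\caG)$ are surjective submersions for all $n$, and isomorphisms for $n\geq 4$. For $n=1,2$ this is immediate: one solves the first relation of~\eqref{eq:cocycle-gh-relation} for the missing $g$, introducing a free parameter $h$. For $n=3$, one solves the second relation for the missing $h$, with $\ell$ free. For $n=4$, one solves~\eqref{eq:cocycle-l-relation} uniquely for the missing $\ell$; here one uses that $\sfL$ is a group and each $\ell_{\dots}$ enters~\eqref{eq:cocycle-l-relation} through a group automorphism applied to it. For $n\geq5$, the coherence of~\eqref{eq:cocycle-l-relation} over $Y^{[6]}$ guarantees that the unique filler from lower data exists.

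Third, I would invoke the general theory: for a Lie $\infty$-groupoid $X_\bullet$ that is Kan in the above sense, the set of concordance (or homotopy) classes of maps $\check C(\sigma)_\bullet\rightarrow X_\bullet$, colimited over covers, is in bijection with $[M,|X_\bullet|]$. The inputs are the homotopy equivalence $\|\check C(\sigma)_\bullet\|\simeq M$ of~\cite{Segal1968} and the fact that geometric realization of a good simplicial space turns Kan fibrations into quasifibrations, as in the standard treatment of $B\sfG$ for simplicial Lie groups and~\cite{Zhu:0801.2057}. Combined with the first step, this identifies isomorphism classes of principal $\caG$-bundles with $[M,|B\caG_\bullet|]$.

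The main obstacle is the $n=4$ and $n=5$ horn analysis. Here the weakness encoded by $\alpha$ and $\beta$ must be shown not to obstruct unique filling. Concretely, one has to verify that, for each face index $k$, the relation~\eqref{eq:cocycle-l-relation} together with the axioms~\eqref{eq:associativity}--\eqref{eq:peiff_ax} can be rearranged to isolate $\ell$ on face $k$, and that the $6$-fold consistency holds identically. I would first try to avoid doing this by hand by noting that $\sft$ applied to~\eqref{eq:cocycle-l-relation} is automatically satisfied. This reduces solvability to checking that the resulting $\ell$ lies in the correct coset, which is forced. I would then check the full coherence symbolically, as the paper does for the axioms themselves.
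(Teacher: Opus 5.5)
Your skeleton is the paper's: \v{C}ech cocycles are simplicial maps $\check C(\sigma)_\bullet\to B\caG_\bullet$, Segal's equivalence $\|\check C(\sigma)_\bullet\|\simeq M$ turns them into maps $M\to|B\caG_\bullet|$, and one then argues that every homotopy class arises. The paper's proof stops at that surjectivity assertion. Your horn-filling analysis, showing that $B\caG_\bullet$ is a Lie $3$-groupoid, is extra input the paper never states, and it is the right input for knowing that maps out of \v{C}ech nerves compute maps into the associated stack.

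The genuine gap is in Step~3. The general classification result you invoke is a statement about \emph{concordance} classes, i.e.\ cocycles over $M\times[0,1]$ interpolating between the two given ones. It is false for simplicial homotopy classes. Take the constant simplicial manifold $N_\bullet$ on a manifold $N$: it is a Lie $0$-groupoid whose horn maps are all isomorphisms. Maps $\check C(\sigma)_\bullet\to N_\bullet$ are exactly smooth maps $M\to N$, and simplicially homotopic ones are equal. So homotopy classes give $C^\infty(M,N)$ rather than $[M,N]=[M,|N_\bullet|]$.

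Your Step~1 matches the paper's coboundaries with simplicial homotopies, i.e.\ with isomorphisms of principal $\caG$-bundles. Steps~1 and~3 together therefore only show that \emph{concordance} classes of principal $\caG$-bundles are in bijection with $[M,|B\caG_\bullet|]$. To reach isomorphism classes you still need a further lemma: a principal $\caG$-bundle over $M\times[0,1]$ is isomorphic, via a coboundary, to the pullback of its restriction to $M\times\{0\}$. This is homotopy invariance of $\caG$-valued \v{C}ech cohomology. It is exactly where the group-like nature of $B\caG_\bullet$ enters, since the example above shows it fails for general Lie $\infty$-groupoids, and it needs its own argument. For a Lie group it is the classical fact that bundles over $M\times[0,1]$ are pulled back from $M$. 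The paper's sketch does not address injectivity either, but you explicitly claim the isomorphism-class statement, so this step must be supplied.

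Two smaller slips. First, $B\caG_\bullet$ is $4$-coskeletal, not $3$-coskeletal, since \eqref{eq:cocycle-l-relation} is a genuine constraint on $4$-simplices; your identification of maps with cocycles is nonetheless correct. Second, in the horn $\Lambda^3_0$ the missing $h_{jkl}$ enters only through $g_{ij}\acton h_{jkl}$. A weak action only satisfies $g\acton(g^{-1}\acton h)=h\,\sft(\alpha(g,g^{-1},h))^{-1}$, so $g\acton-$ is invertible only modulo $\sft(\sfL)$. You must therefore solve for $(h_{jkl},\ell_{ijkl})$ jointly and check that the solution set is a manifold submersing onto the horn. This is automatic for toroidal 3-groups, where the action is strict.
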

    \begin{proof}
        A Čech cocycle as in~\eqref{eq:cocycle3pbs} for a principal $\caG$-bundle is a simplicial map from the nerve of the Čech groupoid of a surjective submersion $Y_\bullet\rightarrow M$ to $B\caG_\bullet$. Such a map, in turn, determines a continuous map $M \rightarrow |B\mathcal{G}|$. Moreover, any such continuous map is homotopic to a map obtained in this way. Such cocycle data determine a smooth principal $\caG$-bundle over $M$ subordinate to $Y\rightarrow M$ in terms of Čech cocycles, generalizing the principal 3-bundles of~\cite{Jurco:2009px,Saemann:2013pca}.
    \end{proof}
    
    Having the classifying spaces at our disposal allows us, in turn, to define (weak) homomorphisms of almost 2-crossed modules.
    \begin{definition}[Homomorphism, isomorphism, homotopy equivalence]\label{def:weak_morphisms}
        Consider almost 2-crossed modules of groups $\caG^1$ and $\caG^2$. A \emph{(weak) homomorphism} $\psi:\caG^1\rightarrow \caG^2$ is a morphism of simplicial manifolds between $B\caG^1_\bullet$ and $B\caG^2_\bullet$. We sometimes refer to the latter morphism as $B\psi$. 
        
        The \emph{kernel} of a weak homomorphism $\psi$ is the simplicial manifold $\psi^{-1}(\unit)$ whose $n$-simplices are given by the sub-manifolds $\psi_n^{-1}(\unit_n) \subset B\mathcal{G}_n$, where the simplicial unit $\unit_\bullet$ has $n$-simplices
        \begin{equation}
            \unit_n=((\unit,\ldots),(\unit,\ldots),(\unit,\ldots)) \in B\mathcal{G}_n~.
        \end{equation}
        
        An \emph{isomorphism} is a surjective homomorphism whose kernel is trivial, i.e.~the simplicial unit $\unit_\bullet$.
        
        A \emph{(weak) homotopy equivalence} is a homomorphism whose kernel has contractible geometric realization. 
    \end{definition}
    \noindent In particular, the geometric realization of a weak homotopy equivalence, $|\psi|:|B\mathcal{G}_\bullet| \rightarrow |B\mathcal{G}'_\bullet|$, is a weak homotopy equivalence of topological spaces.
    
    We also note that a strict homomorphism between two almost 2-crossed modules is simply a triple of group homomorphisms between the underlying groups that preserve all structure maps. Such a map trivially lifts to a map $B\caG^1_\bullet\rightarrow B\caG^2_\bullet$.
    
    From these definitions it is clear that a homomorphism $\psi:\mathcal{G} \rightarrow \mathcal{G}'$ sends the Čech cocycles of a principal $\mathcal{G}$-bundle over a manifold $M$ to a Čech cocycle of a principal $\mathcal{G}'$-bundle over $M$ by post-composition with $B\psi$. Moreover, every principal $\mathcal{G}'$-bundle arises from a principal $\mathcal{G}$-bundle in this way, up to isomorphism, if $\psi$ is a weak homotopy equivalence.
    
    \subsection{Toroidal 3-groups}
    
    Although the preceding definition is conceptually direct, general almost 2-crossed modules are cumbersome to use, particularly due to the involved axioms of the Peiffer lifting $\{-,-\}$. We therefore restrict ourselves to the following special class tailored to our applications in the context of $T_2$-duality.
    \begin{definition}[Toroidal 3-group]\label{def:toroidal_3-group}
        A \emph{toroidal 3-group} is an almost 2-crossed module of Lie groups $(\sfL\xrightarrow{\sft}\sfH\xrightarrow{\sft}\sfG)$ in which the groups $\sfH$ and $\sfL$ are abelian\footnote{We always use additive notation for abelian groups and subgroups.}, the $\sfG$-actions are strict\footnote{Although the action of $\sfG$ on $\sfH$ is strict in this specialization, the equivariance of the Peiffer lifting remains weak. This residual weakening is precisely what is needed in the description of $T_2$-duality.}, i.e.
        \begin{equation}
            \sft(\alpha(g_1,g_2,h))=\sft(\beta(g,h_1,h_2))=0
        \end{equation}
        for all $g,g_{1,2}\in \sfG$ and $h,h_{1,2}\in \sfH$, and the structure maps satisfy the properties
        \begin{equation}
            \begin{gathered}
                \sft(h)\acton \ell=\ell~,
                \\
                \{\sft(\ell_1),\sft(\ell_2)\}=\{\sft(\{h_1,h_2\}),h_3\}=0~,
            \end{gathered}
        \end{equation}
        and
        \begin{equation}
            \begin{gathered}
                \alpha(\sft(h_1),\sft(h_2),h_3)=0~,
                \\
                \beta(\sft(h_1),h_2,h_3)=\beta(g,\sft(\ell),h)=\beta(g,h,\sft(\ell))=0~,
                \\
                g_1\acton \alpha(g_2,g_3,h)=\alpha(g_2,g_3,h)~,
                \\
                g_1\acton \beta(g_2,h_1,h_2)=\beta(g_2,h_1,h_2)
            \end{gathered}
        \end{equation}
        for all $g,g_{1,2,3}\in \sfG$, $h,h_{1,2,3}\in \sfH$, and $\ell,\ell_{1,2}\in \sfL$. Moreover, $\alpha$ and $\beta$ are additive in the $\sfH$-valued arguments, as is the Peiffer lifting $\{-,-\}$.
        
        For simplicity, we also demand that the Lie algebra of $\sfG$ is abelian. 
    \end{definition}
    
    The axioms of an almost 2-crossed module further simplify as follows for toroidal 3-groups.
    \begin{lemma}
        Consider a toroidal 3-group as above. Then we have 
        \begin{subequations}
            \begin{equation}
                \begin{aligned}
                    \{\sft(\ell),h\}&=-\{h,\sft(\ell)\}~, 
                    \\
                    h_1\acton \{h_2,h_3\}&=\{h_2,h_3\}~,
                \end{aligned}
            \end{equation}
            \begin{equation}
                \begin{aligned}
                    h_1\acton \alpha(g_1,g_2,h_2)&=\alpha(g_1,g_2,h_2)~,
                    \\
                    h_1\acton \beta(g,h_2,h_3)&=\beta(g,h_2,h_3)~,
                \end{aligned}
            \end{equation}
            \begin{equation}
                \begin{aligned}
                    \alpha(g_1g_2,g_3,h)&=\alpha(g_2,g_3,h)+ \alpha(g_1,g_2g_3,h)-\alpha(g_1,g_2,g_3 \triangleright h)~,
                    \\
                    \beta(g,h_1,h_2h_3) &= \beta(g,h_1,h_2)+ \beta(g,h_1h_2,h_3)-\beta(g,h_2,h_3)~,
                    \\
                    \beta(g_1g_2,h_1,h_2) 
                    &= \beta(g_1,g_2\acton h_1,g_2\acton h_2)+\beta(g_2,h_1,h_2)
                    \\
                    &\hspace{0.5cm}
                    +\alpha(g_1,g_2,h_1h_2)
                    -\alpha(g_1,g_2,h_2)-\alpha(g_1,g_2,h_1)~.
                \end{aligned}
            \end{equation}
            The axioms for the Peiffer lifting in~\eqref{eq:peiff_ax} reduce to linearity as well as the weak equivariance relation
            \begin{equation}
                \begin{aligned}
                    g \triangleright \{h_1,h_2\} &=\{g\acton h_1,g\acton h_2\}+\beta(g,h_1,h_2)-\beta(g,h_2,h_1)
                    \\
                    &\hspace{1.5cm}+\alpha(g,\sft(h_1),h_2)-\alpha(\sft(h_1),g,h_2)
                \end{aligned}
            \end{equation}
        \end{subequations}
        for all $g,g_{1,2,3}\in \sfG$, $h,h_{1,2,3}\in \sfH$, and $\ell\in\sfL$.
    \end{lemma}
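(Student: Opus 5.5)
The approach is direct substitution: take the general axioms of \cref{def:almostcrossed} and simplify them using the defining properties of a toroidal 3-group in \cref{def:toroidal_3-group}. Since $\sfH$ and $\sfL$ are abelian, I switch to additive notation throughout. Products like $\ell_1\ell_2\ell_1^{-1}\ell_2^{-1}$ then vanish, and the shorthand~\eqref{eq:defhactl} becomes
\begin{equation}
h\acton\ell=\ell+\{\sft(-\ell),h\}~.
\end{equation}

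\textbf{Identities involving the Peiffer lifting.} I first use the third relation in~\eqref{eq:peiff_axioms_def}, which in additive form reads $\{\sft(\ell),h\}+\{h,\sft(\ell)\}=\ell-\sft(h)\acton\ell$. The toroidal axiom $\sft(h)\acton\ell=\ell$ makes the right-hand side vanish, giving the antisymmetry $\{\sft(\ell),h\}=-\{h,\sft(\ell)\}$.

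For the $\sfH$-action on $\sfL$, I plug $\ell=\{h_2,h_3\}$ into the formula for $h\acton\ell$. The correction term is $\{\sft(-\{h_2,h_3\}),h_1\}$, which vanishes by additivity together with the axiom $\{\sft(\{h_1,h_2\}),h_3\}=0$. Hence $h_1\acton\{h_2,h_3\}=\{h_2,h_3\}$.

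For $h_1\acton\alpha$ and $h_1\acton\beta$, the correction term is $\{\sft(\alpha),h_1\}$, respectively $\{\sft(\beta),h_1\}$. Since $\sft\circ\alpha=\sft\circ\beta=0$ by definition of a toroidal 3-group, additivity of the Peiffer lifting gives $\{0,h_1\}=0$. So both are invariant.

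\textbf{The coherence conditions~\eqref{eq:associativity}.} I rewrite them additively and remove terms as follows.
\begin{itemize}
\item Every $\beta$-term whose last argument is $\sft(\alpha)$ or $\sft(\beta)$ has argument $\sft=0$ in $\sfH$, so it vanishes by unitality~\eqref{eq:unitality}. Alternatively, the toroidal axiom $\beta(g,h,\sft(\ell))=0$ gives the same conclusion.
\item $\sfG$-actions on $\alpha,\beta$ are trivial by the toroidal axioms.
\item $\sfH$-actions on $\alpha,\beta$ are trivial by the step above.
\end{itemize}
After these removals the three coherence conditions collapse to the three stated additive cocycle-type relations. Care is needed to track the signs coming from the inverses, and to note that in the third relation the two $\alpha$ terms enter with a minus sign.

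\textbf{The Peiffer axioms~\eqref{eq:peiff_ax}.} In the first two axioms I expect every $\alpha$- and $\beta$-term to vanish.
\begin{itemize}
\item Any term whose first argument is $\sft(h_1)$ vanishes, using $\beta(\sft(h),\cdot,\cdot)=0$ and $\alpha(\sft(h_1),\sft(h_2),\cdot)=0$.
\item The term $\{\sft(\{h_1,h_3\}),\sft(h_1)\acton h_2\}$ vanishes by the toroidal axiom.
\item $\sft(h_1)\acton\{h_2,h_3\}=\{h_2,h_3\}$ holds because $\sft$ of an element of $\sfH$ acts on $\sfL$ trivially. Here I use that the $\sfG$-action restricted along $\sft$ agrees with the toroidal axiom $\sft(h)\acton\ell=\ell$, which I read as applying to the $\sfG$-action.
\item $h_2h_3h_2^{-1}=h_3$ because $\sfH$ is abelian.
\end{itemize}
What remains is plain biadditivity of the Peiffer lifting, which is already assumed.

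The main obstacle is the third axiom, whose correction terms $R_1$ and $R_2$ must be simplified. In $R_1$:
\begin{itemize}
\item the outer $\sfH$-action drops out;
\item additivity of $\beta$ in its $\sfH$-arguments together with unitality gives $\beta(g,-h_1,h_1)$ versus $\beta(g,h_1,h_2)$.
\end{itemize}
In $R_2$, I simplify the arguments using abelianness of $\sfH$ and strictness of the action, then apply additivity of $\alpha$ and $\beta$ in their $\sfH$-slots. After collecting terms I expect
\begin{equation}
\beta(g,h_1,h_2)-\beta(g,h_2,h_1)+\alpha(g,\sft(h_1),h_2)-\alpha(\sft(h_1),g,h_2)~.
\end{equation}
This is bookkeeping rather than anything conceptual. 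I would verify it term by term, and I expect the sign conventions of the inverse $\alpha$ in $R_2$ to be where errors are most likely.
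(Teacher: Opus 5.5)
Your route is the same as the paper's, whose proof just says the identities are immediate consequences of the toroidal axioms. Your treatment of the antisymmetry, of the $\sfH$-invariance of $\{h_2,h_3\}$, $\alpha$ and $\beta$, of the three coherence conditions, and of the first two axioms in \eqref{eq:peiff_ax} is correct. The gap is the one computation that is not immediate: the third axiom in \eqref{eq:peiff_ax}. You leave it at ``I expect'', and the tools you list do not produce the formula you expect.

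Two smaller points come first. The arguments $\sft(h_1)\acton h_2^{-1}$ in the last two factors of $R_2$ involve the $\sfG$-action on $\sfH$, which is not trivial; the axiom $\sft(h)\acton\ell=\ell$ concerns $\sfL$. The first relation in \eqref{eq:peiff_axioms_def}, together with strictness, gives $\sft(h_1)\acton h=h-\sft\{h_1,h\}$. The resulting $\sft\{h_1,h_2\}$ contributions drop out only through $\beta(g,h,\sft(\ell))=0$. Next, $R_1=\beta(g,h_1,h_1)+\beta(g,h_1,h_2)$, and unitality does not remove the diagonal term. Instead it cancels against $-\beta(g,h_1+h_2,h_1+h_2)$ from the last factor of $R_2$, which also cancels the $\beta(g,h_2,h_2)$ coming from the second factor.

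The decisive point is the $\alpha$-factor of $R_2$. It is $\alpha(\sft(g\acton h_1),g,h_2^{-1})$, and $\sft(g\acton h_1)=g\sft(h_1)g^{-1}$. Honest bookkeeping therefore gives
\[
g\acton\{h_1,h_2\}=\{g\acton h_1,g\acton h_2\}+\beta(g,h_1,h_2)-\beta(g,h_2,h_1)+\alpha(g,\sft(h_1),h_2)-\alpha(g\sft(h_1)g^{-1},g,h_2)~.
\]
Nothing in \cref{def:toroidal_3-group} lets you replace $g\sft(h_1)g^{-1}$ by $\sft(h_1)$, since only $\frg$ is assumed abelian, not $\sfG$. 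The replacement is harmless for abelian $\sfG$, as in $\sfTDFT_d$, $\sfTBLFT_d$ and $\sfTBRFT_d$, but not in general. Take the structure maps of $\sfTBLFTA_2$, let $s\in\sfGL(2,\IZ)$ swap $e_1$ and $e_2$, and let $e^1,e^2$ be the dual basis. With $g=(s,\tfrac12 e_2)$, $h_1=(e_1,0)$ and $h_2=(0,e^1\wedge e^2)$, you get $\alpha(\sft(h_1),g,h_2)=(0,[-\tfrac12])$ but $\alpha(g\sft(h_1)g^{-1},g,h_2)=(0,[0])$. Both $g\acton\{h_1,h_2\}$ and $\{g\acton h_1,g\acton h_2\}$ equal $(-e^1,[0])$, so only the conjugated version holds there. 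You must therefore either add the hypothesis that $\sfG$ is abelian (or that $\sft(\sfH)$ is central in $\sfG$), or state the last identity with $\alpha(g\sft(h_1)g^{-1},g,h_2)$. The printed statement and the paper's one-line proof tacitly make the same replacement.
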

    \begin{proof}
        These relations are immediate consequences of the axioms of toroidal 3-groups.
    \end{proof}
    
    Explicit examples of toroidal 3-groups will be introduced in \ref{sec:toroidal-3-group-examples,sec:affine}. The former will be relevant to a description of $T_2$-duality in terms of spans of principal 3-bundles; the latter will play the same role in affine $T_2$-duality.
    
    \subsection{Toroidal principal 3-bundles with connection}\label{ssec:toroidal-3-bundles}
    
    Let $\caG=(\sfL\rightarrow \sfH\rightarrow \sfG)$ be a toroidal 3-group. A principal $\caG$-bundle subordinate to a surjective submersion $\sigma:Y\rightarrow M$ is represented by smooth maps
    \begin{subequations}\label{eq:toroidal_cocycles}
        \begin{equation}
            (g_{ij})\in C^\infty(Y^{[2]},\sfG)~,~~~(h_{ijk})\in C^\infty(Y^{[3]},\sfH)~,~~~(\ell_{ijkl})\in C^\infty(Y^{[4]},\sfL)
        \end{equation}
        such that
        \begin{equation}
            \begin{aligned}
                \sft(h_{ijk})g_{ij}g_{jk} &= g_{ik}~,
                \\
                h_{ikl}+h_{ijk}+\sft(\ell_{ijkl}) &= h_{ijl}+ g_{ij} \triangleright h_{jkl}~,
            \end{aligned}
        \end{equation}
        and
        \begin{equation}
            \begin{aligned}
                &\ell_{ijkl} + ((-g_{ij} \triangleright h_{jkl}) \triangleright \ell_{ijlm} ) +\beta(g_{ij},h_{jlm},h_{jkl})+g_{ij} \triangleright \ell_{jklm}-\beta(g_{ij},h_{jkm},g_{jk}\acton h_{klm})
                \\
                &\hspace{2cm}=(-h_{ijk})\triangleright \ell_{iklm} +\{h_{ijk} , g_{ik} \triangleright h_{klm}\} -\alpha(\sft(h_{ijk}),g_{ik},h_{klm})
                \\
                &\hspace{4cm}-\alpha(g_{ij},g_{jk},h_{klm}) +(-(g_{ij}+g_{jk}) \triangleright h_{klm}) \triangleright \ell_{ijkm}
            \end{aligned}
        \end{equation}
        for all $ijk\in Y^{[3]}$, $ijkl\in Y^{[4]}$, and $ijklm\in Y^{[5]}$. These are the specializations of~\eqref{eq:cocycle3pbs} to a toroidal 3-group. We refer to a triple $((g_{ij}),(h_{ijk}),(\ell_{ijkl}))$ as a Čech cocycle. We call a principal 3-bundle with toroidal structure group a \emph{toroidal principal 3-bundle} for short.
    \end{subequations}
    
    We also develop the (higher) bundle isomorphisms or (higher) Čech coboundaries. We consider two such cocycles $((g_{ij}),(h_{ijk}),(\ell_{ijkl}))$ and $((\tilde g_{ij}),(\tilde h_{ijk}),(\tilde \ell_{ijkl}))$ equivalent if they are related by a 1-coboundary\footnote{We use the same letters for cocycles and coboundaries; they are distinguished by the number of indices.}
    \begin{subequations}
        \begin{equation}
            (g_i)\in C^\infty(Y,\sfG)~,~~~(h_{ij})\in C^\infty(Y^{[2]},\sfH)~,~~~(\ell_{ijk})\in C^\infty(Y^{[3]},\sfL)
        \end{equation}
        such that
        \begin{equation}
            \begin{aligned}
                \tilde g_{ij}&=\sft(h_{ij})-g_i+g_{ij}+g_j~,
                \\
                \tilde h_{ijk}&=(-g_i)\acton (h_{ik}+h_{ijk}+\sft(\ell_{ijk})-g_{ij}\acton h_{jk}-h_{ij})~,
            \end{aligned}
        \end{equation}
        and
        \begin{equation}
            \begin{aligned}
                \tilde \ell_{ijkl}&=\{\sft(\ell_{jkl}),h_{ij}\}+(-g_i)\acton\Big(-(g_{ik}\acton h_{kl})\acton \ell_{ijk}-(-g_{ij}\acton h_{jkl})\acton \ell_{ijl}
                \\
                &\hspace{6cm}-(-h_{ijk})\acton\ell_{ikl}+g_{ij}\acton \ell_{jkl}+R_{ijkl}\Big)
            \end{aligned}
        \end{equation}
        for all $ij\in Y^{[2]}$, $ijk\in Y^{[3]}$, and $ijkl\in Y^{[4]}$, where $R_{ijkl}$ is a complicated expression that vanishes if $\alpha$ and $\beta$ are trivial. We refrain from presenting this lengthy expression here, and only give $R_{ijkl}$ for the concrete relevant toroidal 3-groups. 
    \end{subequations}
    
    We consider two such 1-coboundaries $((g_{i}),(h_{ij}),(\ell_{ijk}))$ and $((\tilde g_{i}),(\tilde h_{ij}),(\tilde \ell_{ijk}))$ equivalent, if they are related by a 2-coboundary
    \begin{subequations}
        \begin{equation}
            (h_{i})\in C^\infty(Y,\sfH)~,~~~(\ell_{ij})\in C^\infty(Y^{[2]},\sfL)
        \end{equation}
        such that
        \begin{equation}
            \begin{aligned}
                \tilde g_{i}&=\sft(h_{i})g_i~,
                \\
                \tilde h_{ij}&=h_i+h_{ij}+\sft(\ell_{ij})-g_{ij}\acton h_j~,
            \end{aligned}
        \end{equation}
        and
        \begin{equation}
            \begin{aligned}
                \tilde \ell_{ijk}&=(g_{ik}\acton h_k)\acton\Big(-(-h_{ijk})\acton\ell_{ik}+\ell_{ijk}+(-g_{ij}\acton h_{jk})\acton \ell_{ij}+g_{ij}\acton \ell_{jk}+R_{ijk}\Big)
            \end{aligned}
        \end{equation}
        for all $i\in Y$, $ij\in Y^{[2]}$, and $ijk\in Y^{[3]}$, where $R_{ijk}$ is again a somewhat lengthy expression that vanishes when $\alpha$ and $\beta$ are trivial.
    \end{subequations}      
    
    Finally, we consider two such 2-coboundaries $((h_{i}),(\ell_{ij}))$ and $((\tilde h_{i}),(\tilde \ell_{ij}))$ equivalent if they are related by a 3-coboundary
    \begin{subequations}
        \begin{equation}
            (\ell_{i})\in C^\infty(Y^{[2]},\sfL)
        \end{equation}
        such that
        \begin{equation}
            \begin{aligned}
                \tilde h_{i}&=\sft(\ell_{i})+h_i~,
                \\
                \tilde \ell_{ij}&=-(-h_i-h_{ij})\acton \ell_i+\ell_{ij}+(-g_{ij}\acton h_j)\acton g_{ij}\acton \ell_j
            \end{aligned}
        \end{equation}
        for all $i\in Y$ and $ij\in Y^{[2]}$.
    \end{subequations}      
    
    Flat connections on higher principal bundles can be described using the higher inner action groupoid $\scA(B\caG)$ of the higher structure group $\caG$. For Lie groupoids, this action groupoid was defined in~\cite{Fischer:2024vak}, and this definition extends in an evident manner to deloopings of higher groups. In the case of a toroidal 3-group $\caG$, we obtain a simplicial differential graded (dg-)manifold. The differential cocycles describing flat principal $\caG$-bundles with connections are then morphisms of simplicial dg-manifolds 
    \begin{equation}
        T[1]\check \scC(\sigma)_\bullet \xrightarrow{~~~} \scA(\caG)_\bullet~.
    \end{equation}
    Bundle isomorphisms and higher isomorphisms are represented by simplicial homotopies and higher simplicial homotopies between these maps.
    
    Our applications, however, require non-flat connections. Contrary to ordinary principal bundles, the gauge transformations\footnote{i.e.~the action of bundle automorphisms} of flat connections do not extend unchanged to the non-flat case. Instead, one has to deform these by expressions proportional to the curvature forms, and these deformations are called an \emph{adjustment}, see~\cite{Saemann:2019dsl,Kim:2019owc,Rist:2022hci}, and especially~\cite{Gagliardo:2025oio}. 
    
    For a toroidal 3-group $(\sfL\xrightarrow{\sft}\sfH\xrightarrow{\sft}\sfG)$ with corresponding Lie 3-algebra\footnote{i.e.~the almost 2-crossed module of Lie algebras obtained from the tangent functor at the identity} $(\frl\xrightarrow{\sft}\frh\xrightarrow{\sft}\frg)$, an adjustment is encoded by smooth maps
    \begin{equation}
        \kappa_1:\sfG\times \frg\rightarrow \frh~,~~~\kappa_2:\sfG\times \frh\rightarrow \frl~,~~~\kappa_3:\sfH\times \frg\rightarrow \frl~,~~~\kappa_4:\sfG\times \sfG\times \frg\rightarrow \frl~,
    \end{equation}
    which then appear in the gauge transformations of the local connection forms.
    
    A local connection on a principal $\caG$-bundle subordinate to a surjective submersion $\sigma:Y\rightarrow M$ is described by forms
    \begin{equation}
        A\in \Omega^1(Y,\frg)
        ~,~~~
        B\in \Omega^2(Y,\frh)
        ~,~~~
        C\in \Omega^3(Y,\frl)~.
    \end{equation}
    Gauge transformations will be of the general form
    \begin{equation}
        g\in \Omega^0(Y,\sfG)~,~~~\Lambda\in \Omega^1(Y,\frh)~,~~~\Sigma\in \Omega^2(Y,\frl)~,
    \end{equation}
    and act according to
    \begin{equation}
        \begin{tikzcd}[column sep=3.5cm]
            (\tilde A,\tilde B,\tilde C) & \arrow[l,"{(A,B,C,g,\Lambda,\Sigma)}"'] (A,B,C)
        \end{tikzcd}
    \end{equation}
    with\footnote{Although $\sfG$ may be non-abelian, its Lie algebra $\frg$ is abelian by assumption.}
    \begin{equation}
        \begin{aligned}
            \tilde A&=g^{-1} Ag+g^{-1}\rmd g~,
            \\
            \tilde B&=g^{-1}\acton B+\rmd \Lambda+ \tilde A\acton \Lambda-\kappa_1(g^{-1},F)~,
            \\ 
            \tilde C&=g^{-1}\acton C-\rmd\Sigma+\tilde A\acton \Sigma-\{\tilde B,\Lambda\}-\{\Lambda,\tilde B\}+\{\Lambda,\rmd \Lambda+\tilde A\acton \Lambda\}
            \\
            &\hspace{2cm}+\kappa_3(\Lambda,F)+\kappa_2(g^{-1},H)-\kappa_4(g,A,F)~.
        \end{aligned}
    \end{equation}
    Similarly, one introduces 2- and 3-gauge transformations. Their actions in the non-flat case also acquire additional terms involving the adjustment maps $\kappa_{1,2,3,4}$.
    
    The two conditions (1) that all morphisms compose consistently and (2) that higher morphisms link lower morphisms with the same source and target yield conditions on the maps $\kappa_{1,2,3,4}$, the \emph{adjustment conditions}.
    
    The derivation of the cocycles and higher coboundaries for adjusted connections is involved even for general toroidal 3-groups. Since we will only need this data for the toroidal 3-group $\sfTDFTA_d$ and its truncation, the toroidal 3-group $\sfTDFT_d$, we refrain from presenting the general discussion. Details in this special case are found in~\ref{app:adjustment}.
    
    We stress that we will work with a shifted parameterization of higher gauge transformations. To explain this convention, consider the rewriting of the transformation\footnote{In this example, as for the group $\sfTDFTA_d$, $A\acton \Lambda=0$.}
    \begin{equation}
        \begin{aligned}
            B'&=g^{-1}\acton B+\rmd\Lambda-\kappa_1(g^{-1},\rmd A)
            \\
            &=g^{-1}\acton B+\rmd(\Lambda-\kappa_1(g^{-1},A))+\kappa_1(\rmd g^{-1},A)
            \\
            &=g^{-1}\acton B+\rmd \Lambda'+\kappa_1(\rmd g^{-1},A)
        \end{aligned}
    \end{equation}
    for bilinear $\kappa_1$ and $\Lambda'=\Lambda-\kappa_1(g^{-1},A)$. Throughout this paper, all higher gauge parameters are understood in this shifted convention.
    
    \section{Toroidal 3-groups for \texorpdfstring{$T_2$}{T2}-duality}\label{sec:toroidal-3-group-examples}
    
    Having prepared the general theory, we can now introduce the toroidal 3-groups that will allow us (a) to describe $T_2$-duality backgrounds as toroidal principal 3-bundles and (b)~to capture $T_2$-duality relations by spans of toroidal principal 3-bundles.
    
    \subsection{Categorical 3-tori and the toroidal 3-group \texorpdfstring{$\sfTDFT_d$}{T2DF2d}}\label{ssec:categorical_tori}
    
    We begin with categorical 3-tori, a class of toroidal 3-groups containing the structure 3-group $\sfTDFT_d$ required for $T_2$-duality. Categorical tori were introduced in~\cite{Ganter:2014zoa} as central extensions of the torus by the 2-group $\sfB\sfU(1)$. We generalize this by one categorical degree as follows.    
    \begin{definition}[Categorical 3-torus]\label{def:categorical-3-torus}
        Consider three lattices inside vector spaces, $\Lambda_i \subset V_i$, $i=0,1,2$, together with maps 
        \begin{equation}
            \langle -,-\rangle_0: \Lambda_0 \otimes \Lambda_0 \rightarrow \Lambda_1
            \eand 
            \langle -,-\rangle_1: \Lambda_0 \otimes \Lambda_1 \rightarrow \Lambda_2
        \end{equation}
        such that 
        \begin{equation}\label{eq:jacobi}
            \langle u_0, \langle v_0, w_0 \rangle_0 \rangle_1 + \langle v_0, \langle u_0, w_0 \rangle_0 \rangle_1 = 0~,
        \end{equation}
        and linearly extend these maps to the full vector spaces.
        
        Then the corresponding categorical 3-torus $\caT$ is a toroidal 3-group with underlying complex
        \begin{equation}
            \begin{tikzcd}[column sep=0.8cm,row sep=0.2cm]
                \Lambda_1 \times V_2/\Lambda_2 \arrow[r,"\sft"] 
                & \Lambda_0 \times V_1 \arrow[r,"\sft"] 
                & V_0 
                \\
                (\lambda^1,[v^2])\arrow[r,mapsto] & (0,\lambda^1)~,
                \\
                &(\lambda^0,v^1) \arrow[r,mapsto] & \lambda^0~,
            \end{tikzcd}
        \end{equation}
        and structure maps\footnote{See \ref{not:low} for the definition of $\langle-,\langle-,-\rangle\rangle^\low$.} 
        \begin{equation}
            \begin{aligned}
                v^0 \triangleright (\lambda^0,v^1) &= (\lambda^0, v^1 + \langle v^0, \lambda^0 \rangle_0)~, 
                \\
                v^0 \triangleright (\lambda^1,[v^2]) &= (\lambda^1, [v^2 + \langle v^0, \lambda^1 \rangle_1])~,
                \\
                (\lambda^0,v^1) \triangleright (\lambda^1,[v^2]) &= (\lambda^1,[v^2])~,
                \\
                \{(\lambda^0_1,v^1_1),(\lambda^0_2,v^1_2)\} &= (-\langle \lambda^0_1,\lambda^0_2 \rangle_0,0)~,
                \\
                \alpha(v^0_1,v^0_2,(\lambda^0,v^1)) &= (0,[-\langle v^0_1, \langle v^0_2, \lambda^0 \rangle_0 \rangle_1^{\low}])~,
                \\
                \beta(v^0,(\lambda^0_1,v^1_1),(\lambda^0_2,v^1_2)) &= 0~.
            \end{aligned}
        \end{equation}
    \end{definition}
    
    Substituting the above structure maps into the cocycle equations~\eqref{eq:toroidal_cocycles}, we obtain the following explicit model for the delooping $B\mathcal{T}$ of a categorical torus $\caT$:
    \begin{align}\label{eq:bhightori}
        \begin{split}
            (B\mathcal{T})_n &\coloneqq  \Big\{ ((v^0_{ij}),(\lambda^0_{ijk},v^1_{ijk}),(\lambda^1_{ijkl},[v^2_{ijkl}]))\in V_0^{\binom{n+1}{2}} \!\times (\Lambda_0 \times V_1)^{\binom{n+1}{3}} \times (\Lambda_1 \times V_2/\Lambda_2)^{\binom{n+1}{4}} 
            \\
            &\hspace{1.5cm}\Big| \: v^0_{ij} - v^0_{ik} + v^0_{jk} = -\lambda^0_{ijk}~,
            \\
            &\hspace{1.5cm}\phantom{\Big|}\:v^1_{ijk} - v^1_{ijl} + v^1_{ikl} - v^1_{jkl} = \langle v^0_{ij}, \lambda^{0}_{jkl}  \rangle_0 -\lambda^1_{ijkl}~,
            \\
            &\hspace{1.5cm}\phantom{\Big|}\:   [v^2_{ijkl} - v^2_{ijkm} + v^2_{ijlm} - v^2_{iklm} + v^2_{jklm}]
            \\
            &\hspace{2cm}=[-\langle v^0_{ij}, \lambda^1_{jklm} \rangle_1 + \langle \lambda^0_{ijk}, \langle v^0_{ik}, \lambda^0_{klm} \rangle_0 \rangle_1^{\low} + \langle v^0_{ij}, \langle v^0_{jk}, \lambda^0_{klm} \rangle_0 \rangle_1^{\low}] \Big\}~.
        \end{split}
    \end{align}    
    As before, we assume here normalized cocycles in which all components with repeated indices are trivial.
    
    We note that for trivial $V_2$ and $\Lambda_2$, this toroidal 3-group is equivalent to an ordinary crossed module of groups $\mathcal{T}_{\leq 1}$: we can replace the 3-term complex $\Lambda_1\rightarrow \Lambda_0\times V_1\rightarrow V_0$ by the 2-term complex $\Lambda_0\times V_1/\Lambda_1\rightarrow V_0$. In this way, we recover the categorical 2-tori constructed from $\langle-,-\rangle_0$ in~\cite{Ganter:2014zoa}. 
    
    \begin{proposition}\label{prop:charclass}
        Consider a categorical 3-torus $\caT$ given by the data 
        \begin{equation}
            (\Lambda_0,\Lambda_1,\Lambda_2,\langle -,-\rangle_0,\langle -,-\rangle_1)~,
        \end{equation}
        as above, let $\mathcal{T}_{\leq 1}$ be the crossed module of groups defined by $(\Lambda_0,\Lambda_1,\langle-,-\rangle_0)$, and let $M$ be a manifold. Then:
        \begin{myenumerate}
            \item\label{it:charclass1} a principal $V_0/\Lambda_0$-bundle $P \rightarrow M$ determines a class in $H^4(M,\Lambda_1)$ which vanishes precisely when $P$ can be lifted to a principal $\mathcal{T}_{\leq 1}$-bundle;
            \item\label{it:charclass2} a principal $\mathcal{T}_{\leq 1}$-bundle $\scP \rightarrow M$ determines a class in $H^5(M,\Lambda_2)$ which vanishes precisely when $\scP$ can be lifted to a principal $\mathcal{T}$-bundle.
        \end{myenumerate}
    \end{proposition}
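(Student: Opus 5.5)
We will prove both parts by Čech obstruction theory, using the explicit delooping~\eqref{eq:bhightori}: lifting a cocycle means adding one more layer of cochains such that one more cocycle equation holds. Throughout, we work with a good cover, or refine the surjective submersion $\sigma$ so that all fiber products $Y^{[n]}$ are disjoint unions of contractible pieces. Smooth lattice-valued maps are then locally constant, and vector-space-valued maps can be lifted and integrated freely.

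For \cref{it:charclass1}, we start from a principal $V_0/\Lambda_0$-bundle $P$ with transition functions $[v^0_{ij}]$. On contractible double overlaps we lift these to $v^0_{ij}\in V_0$. The defect $\lambda^0_{ijk}:=-(v^0_{ij}-v^0_{ik}+v^0_{jk})$ is then a locally constant $\Lambda_0$-valued Čech 2-cocycle representing $c(P)$. A lift to $\caT_{\leq1}$ is equivalent to choosing $v^1_{ijk}\in V_1$ and $\lambda^1_{ijkl}\in\Lambda_1$ satisfying the second relation in~\eqref{eq:bhightori}.

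We define the candidate obstruction cochain
\begin{equation*}
o_{ijkl}:=\langle v^0_{ij},\lambda^0_{jkl}\rangle_0\in C^\infty(Y^{[4]},V_1)~.
\end{equation*}
A direct computation, using that $\lambda^0$ is a cocycle and that $\langle-,-\rangle_0$ is bilinear, should give $\check\delta o_{ijklm}=\langle\lambda^0_{ijk},\lambda^0_{klm}\rangle_0$ up to sign. This is the cup product $\lambda^0\cup_{\langle,\rangle_0}\lambda^0$, and it is $\Lambda_1$-valued. The cocycle $\lambda^0\cup\lambda^0\in\check C^4(\Lambda_1)$ therefore defines the class in $H^4(M,\Lambda_1)$; it is the image of $c(P)$ under the quadratic map $H^2(M,\Lambda_0)\to H^4(M,\Lambda_1)$ induced by $\langle-,-\rangle_0$.

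Sufficiency: if this class vanishes, we write $\lambda^0\cup\lambda^0=\check\delta\mu$ with $\mu$ a $\Lambda_1$-valued 3-cochain. Then $o-\mu$ is a $\check\delta$-closed $V_1$-valued cochain. Since $V_1$-valued smooth Čech cohomology is fine, $o-\mu=\check\delta v^1$, and setting $\lambda^1:=-\mu$ (up to the sign conventions) solves the relation. Necessity: any solution exhibits $\lambda^0\cup\lambda^0$ as $\check\delta$ of $\Lambda_1$-valued data, after applying $\check\delta$ to the relation. Finally, we check independence of the choices of lifts $v^0_{ij}$. Changing the lifts by a $\Lambda_0$-valued 1-cochain, or by a coboundary $v^0_i$, changes $o$ by terms that are $\check\delta$-exact modulo $\Lambda_1$-valued exact terms. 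This follows from the coboundary formulas of \ref{ssec:toroidal-3-bundles}.

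For \cref{it:charclass2}, we proceed identically one degree higher. We start from $\caT_{\leq1}$-data, lifted to $(v^0,\lambda^0,v^1,\lambda^1)$ satisfying the first two relations of~\eqref{eq:bhightori}. We define the $V_2$-valued 4-cochain
\begin{equation*}
O_{ijklm}:=-\langle v^0_{ij},\lambda^1_{jklm}\rangle_1+\langle\lambda^0_{ijk},\langle v^0_{ik},\lambda^0_{klm}\rangle_0\rangle_1^\low+\langle v^0_{ij},\langle v^0_{jk},\lambda^0_{klm}\rangle_0\rangle_1^\low~.
\end{equation*}
The main step is to show that $\check\delta O$ is $\Lambda_2$-valued. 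It should be a combination of cup products such as $\lambda^0\cup\lambda^1$ and $\lambda^0\cup(\lambda^0\cup\lambda^0)$ built from $\langle,\rangle_1$. Granting this, the argument closes as before: $[\check\delta O]\in H^5(M,\Lambda_2)$ is the obstruction, and $v^2$ exists modulo $\Lambda_2$ exactly when this class vanishes, by fineness of $V_2$.

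This step is the main obstacle. The $\low$-projection of \ref{not:low} and the Jacobi-type identity~\eqref{eq:jacobi} must conspire to cancel all non-integral terms that involve $v^0$. The identity is what makes $\langle u,\langle u,-\rangle_0\rangle_1$ antisymmetric, which is the reason the $\low$ truncation is needed in the first place. I would first verify the cancellation in the case $\langle,\rangle_0=0$, where only $-\langle v^0,\lambda^1\rangle_1$ survives and the result is the cup product $\lambda^0\cup\lambda^1$. I would then treat the quadratic terms by expanding $v^0_{ik}=v^0_{ij}+v^0_{jk}+\lambda^0_{ijk}$ and using~\eqref{eq:jacobi} to pair terms. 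The coherence of the structure maps $\alpha$ and $\{-,-\}$ in \cref{def:categorical-3-torus}, which is guaranteed by their being a toroidal 3-group, should provide exactly the needed identity.

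Alternatively, and more conceptually, the almost 2-crossed module structure realizes $\caT$ as an extension $B^2(V_2/\Lambda_2)\to\caT\to\caT_{\leq1}$. The class is then the pullback, along the classifying map of $\scP$, of the extension class in $H^5(B\caT_{\leq1},\Lambda_2)$. Likewise, part (1) comes from $B\Lambda_1$-type extensions via $H^4(B(V_0/\Lambda_0),\Lambda_1)$. I would use this fibration picture to justify that the Čech class is well-defined and independent of choices.
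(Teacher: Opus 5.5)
Your proposal is sound and follows the paper's proof. In both, the obstruction is read directly off the cocycle relations~\eqref{eq:bhightori}: in part~(1) it is the $\Lambda_1$-valued 4-cocycle $\check\delta o=-\langle\lambda^0_{ijk},\lambda^0_{klm}\rangle_0$, and in part~(2) it is $\check\delta O$. You also make explicit the fineness and independence-of-choices arguments that the paper leaves unstated.

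The step you defer closes along the lines you sketch. Expand $v^0_{ik}=v^0_{ij}+v^0_{jk}+\lambda^0_{ijk}$, use $\check\delta\lambda^1=-\langle\lambda^0_{ijk},\lambda^0_{klm}\rangle_0$, and split $\langle a,\langle b,c\rangle_0\rangle_1=\langle a,\langle b,c\rangle_0\rangle_1^{\low}-\langle b,\langle a,c\rangle_0\rangle_1^{\low}$, which is where~\eqref{eq:jacobi} enters. All $v^0$-dependence then cancels, leaving
\begin{equation*}
(\check\delta O)_{ijklmn}=\langle\lambda^0_{ijk},\lambda^1_{klmn}\rangle_1+\langle\lambda^0_{ijk},\langle\lambda^0_{ikl},\lambda^0_{lmn}\rangle_0\rangle_1^{\low}-\langle\lambda^0_{jkl},\langle\lambda^0_{ijl},\lambda^0_{lmn}\rangle_0\rangle_1^{\low}~.
\end{equation*}
This is the cocycle stated in the paper, which prints the $\lambda^1$-term with the opposite sign. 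Its cubic part is a $\cup_1$-type expression, not $\lambda^0\cup(\lambda^0\cup\lambda^0)$ as you guessed; the latter would have degree $6$ rather than $5$.
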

    \begin{proof}
        The first statement is well known, e.g.~\cite{Stevenson:0106018}. Let $[v^0_{ij}]: Y^{[2]}\rightarrow V_0/\Lambda_0$ be the cocycle data for $P$, subordinate to some surjective submersion $\sigma:Y\rightarrow M$. We can choose lifts of these functions to $V_0$-valued functions $v^0_{ij}:Y^{[2]}\rightarrow V_0$ and write 
        \begin{equation}
            \lambda^0_{ijk}:Y^{[3]}\rightarrow \Lambda_0~,~~~\lambda^0_{ijk} \coloneqq  -v^0_{ij} + v^0_{ik} - v^0_{jk}~.
        \end{equation}
        A direct calculations shows that
        \begin{equation}
            \xi_{ijkl}:Y^{[4]}\rightarrow \Lambda_1~,~~~
            \xi_{ijkl}\coloneqq -\langle \lambda^0_{ijk},\lambda^0_{jkl} \rangle_0 
        \end{equation}
        determines a \v{C}ech cocycle and thus a class $[\langle \lambda^0 , \lambda^0 \rangle_0] \in H^4(M,\Lambda_1)$. A lift of $P$ to a principal $\mathcal{T}_{\leq 1}$-bundle is determined by functions $v^1_{ijk}: Y^{[3]}\rightarrow V_1$ and $\lambda^1_{ijkl}:Y^{[4]}\rightarrow \Lambda_1$ as above, satisfying 
        \begin{equation}
            v^1_{ijk} - v^1_{ijl} + v^1_{ikl} - v^1_{jkl} = \langle v^0_{ij}, \lambda^{0}_{jkl}  \rangle_0 -\lambda^1_{ijkl}~,
        \end{equation}
        cf.~\eqref{eq:bhightori}, which exist precisely when the cocycle $\xi_{ijkl}$ is a coboundary.
        
        To prove the second statement, consider a lift of $P$ to a principal $\mathcal{T}_{\leq 1}$-bundle $\scP$ given by one such choice of functions $v^1_{ijk}$, $\lambda^1_{ijkl}$. The corresponding class in $H^5(M,\Lambda_2)$ is given by the $\Lambda_2$-valued \v{C}ech cocycle
        \begin{equation}
            - \langle \lambda^0_{ijk}, \lambda^1_{klmn} \rangle + \langle \lambda^0_{ijk}, \langle \lambda^0_{ikl}, \lambda^0_{lmn} \rangle \rangle^{\low} - \langle \lambda^0_{jkl}, \langle \lambda^0_{ijl}, \lambda^0_{lmn} \rangle \rangle^{\low}~.
        \end{equation}
        From~\eqref{eq:bhightori}, it is again clear that this class vanishes precisely when $\scP$ lifts to a principal $\mathcal{T}$-bundle.
    \end{proof}
    
    A categorical 3-torus $\caT$ can be seen as a central extension of 3-groups,
    \begin{equation}
        B^2(V_2/\Lambda_2) \rightarrow \mathcal{T} \rightarrow \mathcal{T}_{\leq 1}~.
    \end{equation}
    As such, the following proposition says that $\mathcal{T}$ as defined from the data
    \begin{equation}
        (\Lambda_0,\Lambda_1,\Lambda_2,\langle -,-\rangle_0,\langle -,-\rangle_1)
    \end{equation}
    is a square root of the Lie 2-crossed module defined from the same data in~\cite{Gagliardo:2025oio}.
    \begin{proposition}\label{prop:stricttorus}
        Assume that $\tfrac{1}{2}\langle -,-\rangle_1$ is $\Lambda_2$-valued on $\Lambda_0 \otimes \Lambda_1$. Then $\mathcal{T}$ is isomorphic to the Lie $2$-crossed module $\tilde{\mathcal{T}}$ defined by $(\Lambda_0,\Lambda_1,\Lambda_2,\langle -,-\rangle_0,\tfrac{1}{2}\langle -,-\rangle_1)$ in~\cite{Gagliardo:2025oio}. 
    \end{proposition}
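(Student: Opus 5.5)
We need to show that $\mathcal{T}$, a toroidal 3-group with nontrivial $\alpha$, is isomorphic in the sense of \cref{def:weak_morphisms} to a genuine Lie 2-crossed module $\tilde{\mathcal T}$ with the same underlying complex. The plan is to write down an explicit simplicial isomorphism $B\mathcal{T}_\bullet\rightarrow B\tilde{\mathcal{T}}_\bullet$. It will be the identity on the $(v^0_{ij})$ and $(\lambda^0_{ijk},v^1_{ijk})$ components and $\lambda^1_{ijkl}$, and will modify only the $V_2/\Lambda_2$-valued components $[v^2_{ijkl}]$ by a correction built from the lower data.

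First, I would recall from~\cite{Gagliardo:2025oio} the structure maps of $\tilde{\mathcal T}$. These are the same actions and Peiffer lifting as in \cref{def:categorical-3-torus}, with $\langle-,-\rangle_1$ replaced by $\tfrac12\langle-,-\rangle_1$ (symmetrized so that the action of $V_0$ on $\Lambda_0\times V_1$ is strict) and with $\alpha=\beta=0$. From this I would write its delooping $(B\tilde{\mathcal T})_n$ in the form~\eqref{eq:bhightori}. The top cocycle condition then reads $[\delta v^2]=[-\tfrac12\langle v^0_{ij},\lambda^1_{jklm}\rangle_1 + (\text{terms quadratic in }v^0,\lambda^0)]$ with no $\low$-terms.

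Next, I would make the ansatz $\tilde v^2_{ijkl}=v^2_{ijkl}+c_{ijkl}$. The correction $c_{ijkl}$ should be a universal polynomial in $v^0_{ij},v^0_{jk},v^0_{kl}$, $\lambda^0$ and $v^1_{jkl}$ on the 3-simplex. The natural candidates are $\tfrac12\langle v^0_{ij},v^1_{jkl}\rangle_1$ and a term $\langle v^0_{ij},\langle v^0_{jk},v^0_{kl}\rangle_0\rangle_1^{\low}$ with appropriate coefficients. The condition to be met is that the Čech coboundary $\delta c$, evaluated with the lower cocycle relations of~\eqref{eq:bhightori}, equals the difference between the two top-level right-hand sides modulo $\Lambda_2$. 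Matching terms in this way determines $c$.

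The main obstacle is exactly this matching. Here the hypothesis that $\tfrac12\langle-,-\rangle_1$ is $\Lambda_2$-valued on $\Lambda_0\otimes\Lambda_1$ enters: leftover terms of the form $\tfrac12\langle\lambda^0,\lambda^1\rangle_1$ and the difference between $\langle-,\langle-,-\rangle\rangle^{\low}$ and its symmetrized half must vanish in $V_2/\Lambda_2$. The Jacobi-type identity~\eqref{eq:jacobi} is what converts the antisymmetric part of $\langle u_0,\langle v_0,w_0\rangle_0\rangle_1$ into the $\low$-part. I would first check the computation on $5$-simplices with all $\lambda^1=0$ and then add the $\lambda^1$-terms.

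Finally, I would check that the map commutes with the face and degeneracy maps. It is defined by the index formula~\eqref{eq:simpmapspullback2}, and $c$ vanishes on degenerate simplices because all the pairings vanish when indices repeat. Its inverse is given by subtracting $c$, so the kernel is $\unit_\bullet$ and the map is surjective, hence an isomorphism by \cref{def:weak_morphisms}.
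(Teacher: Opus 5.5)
Your strategy is the paper's: a simplicial map between the deloopings that is the identity on every component except $[v^2_{ijkl}]$, which is shifted by a universal polynomial. Your first correction term $\tfrac12\langle v^0_{ij},v^1_{jkl}\rangle_1$ is the one used there. The paper maps $B\tilde{\mathcal T}\to B\mathcal T$ with $v^2=\tilde v^2+c$, so its $c$ is minus yours; the formulas below use that convention.

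As written, however, the matching would not close, and the first reason is that your description of $\tilde{\mathcal T}$ is wrong at the decisive point. $\tilde{\mathcal T}$ is not $\mathcal T$ with $\langle-,-\rangle_1$ halved and $\alpha=\beta=0$. With the Peiffer lifting of \cref{def:categorical-3-torus} and trivial $\alpha,\beta$, equivariance $u_0\acton\{h_1,h_2\}=\{u_0\acton h_1,u_0\acton h_2\}$ for $h_1=(\lambda_0,u_1)$, $h_2=(\mu_0,v_1)$ fails by $[\tfrac12\langle u_0,\langle\lambda_0,\mu_0\rangle_0\rangle_1]$. So, for nontrivial pairings, that structure is not a 2-crossed module at all. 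In $\tilde{\mathcal T}$ the Peiffer lifting acquires a second component, $\{(\lambda_0,u_1),(\mu_0,v_1)\}=(-\langle\lambda_0,\mu_0\rangle_0,[\tfrac12\langle\lambda_0,v_1\rangle_1])$, which restores equivariance via \eqref{eq:jacobi}. The top cocycle condition of $B\tilde{\mathcal T}$ then reads
\[
[(\delta\tilde v^2)_{ijklm}]=\big[-\tfrac12\langle v^0_{ij},\lambda^1_{jklm}\rangle_1+\tfrac12\langle\lambda^0_{ijk},v^1_{klm}+\langle v^0_{ik},\lambda^0_{klm}\rangle_0\rangle_1\big]~.
\]
Its $v^1$-term, which your version lacks, is exactly what accounts for the $-\tfrac12\langle\lambda^0_{ijk},v^1_{klm}\rangle_1$ produced by $\delta$ of $\tfrac12\langle v^0_{ij},v^1_{jkl}\rangle_1$, once the $v^0$- and $v^1$-relations are used. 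The other two pieces of that coboundary are $-\tfrac12\langle v^0_{ij},\lambda^1_{jklm}\rangle_1$ and $\tfrac12\langle v^0_{ij},\langle v^0_{jk},\lambda^0_{klm}\rangle_0\rangle_1$. The first matches the $\lambda^1$-terms exactly, so no $\tfrac12\langle\lambda^0,\lambda^1_{jklm}\rangle_1$ leftovers arise.

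The second problem is that the cubic correction you name does not work, and your account of the hypothesis is off. Set $S(u,v,w)\coloneqq\langle u,\langle v,w\rangle_0\rangle_1^{\low}+\langle v,\langle u,w\rangle_0\rangle_1^{\low}$. What is left to produce is $\tfrac12S(v^0_{ij},v^0_{jk},\lambda^0_{klm})+\tfrac12S(\lambda^0_{ijk},v^0_{ik},\lambda^0_{klm})$, i.e.\ the $\low$-terms of $B\mathcal T$ minus half the corresponding full pairings. This is a real-valued function of the continuous data, so it does not vanish in $V_2/\Lambda_2$; it must be a coboundary.

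For trilinear $F$ one has $\delta F(v^0_{ij},v^0_{jk},v^0_{kl})=-F(\lambda^0_{ijk},v^0_{kl},v^0_{lm})+F(v^0_{ij},\lambda^0_{jkl},v^0_{lm})-F(v^0_{ij},v^0_{jk},\lambda^0_{klm})$. The first two terms carry a real variable in the last slot, and such terms occur nowhere in the target, so $\langle v^0_{ij},\langle v^0_{jk},v^0_{kl}\rangle_0\rangle_1^{\low}$ has the wrong shape.

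The correction that works is quadratic in the single real variable and linear in the lattice one: the paper takes $-\tfrac12\langle v^0_{ij},\langle v^0_{ij},\lambda^0_{jkl}\rangle_0\rangle_1^{\low}$. Use $\delta\lambda^0=0$ and the fact that the polarization of $u\mapsto\langle u,\langle u,w\rangle_0\rangle_1^{\low}$ is $S$. Then its coboundary equals the required expression minus $\tfrac12\langle\lambda^0_{ijk},\langle\lambda^0_{ijk},\lambda^0_{klm}\rangle_0\rangle_1^{\low}$. Expanding $\lambda^0_{ijk}$ in the basis defining $\low$ writes this all-lattice remainder as a sum of elements of $\tfrac12\langle\Lambda_0,\Lambda_1\rangle_1\subset\Lambda_2$. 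That is where the hypothesis enters the verification.
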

    \begin{proof} 
        The toroidal 3-group $\tilde{\mathcal{T}}$ has underlying complex
        \begin{equation}
            \begin{tikzcd}[column sep=0.8cm,row sep=0.2cm]
                \Lambda_1 \times V_2/\Lambda_2 \arrow[r,"\sft"] 
                & \Lambda_0 \times V_1 \arrow[r,"\sft"] 
                & V_0 
                \\
                (\lambda_1,[u_2])\arrow[r,mapsto] & (0,\lambda_1)~,
                \\
                &(\lambda_0,u_1) \arrow[r,mapsto] & \lambda_0~,
            \end{tikzcd}
        \end{equation}
        and structure maps
        \begin{equation}
            \begin{aligned}
                u_0 \triangleright (\lambda_1,[u_2]) &= (\lambda_1, [u_2 + \tfrac{1}{2}\langle u_0, \lambda_1 \rangle_1])~,
                \\
                (\lambda_0,u_1) \triangleright (\lambda_1,[u_2]) &= (\lambda_1,[u_2])~,
                \\
                \{(\lambda_0,u_1),(\mu_0,v_1)\} &= (-\langle \lambda_0,\mu_0 \rangle_0,[\tfrac{1}{2}\langle \lambda_0,v_1 \rangle])~,
                \\
                \alpha(u_0,v_0,(\lambda_0,u_1)) &= (0,[0])~,
                \\
                \beta(u_0,(\lambda_0,u_1),(\mu_0,v_1)) &= 0~.
            \end{aligned}
        \end{equation}
        The corresponding delooping is 
        \begin{align}
            \begin{split}
                (B\mathcal{\tilde T})_n &\coloneqq  \Big\{ ((\tilde v^0_{ij}),(\tilde \lambda^0_{ijk},\tilde v^1_{ijk}),(\tilde \lambda^1_{ijkl},[\tilde v^2_{ijkl}])) \in V_0^{\binom{n+1}{2}}\!\times\! (\Lambda_0 \times V_1)^{\binom{n+1}{3}} \times (\Lambda_1 \times V_2/\Lambda_2)^{\binom{n+1}{4}} 
                \\
                &\hspace{1.5cm}\Big| \: \tilde v^0_{ij} - \tilde v^0_{ik} + \tilde v^0_{jk} = -\tilde \lambda^0_{ijk}~,
                \\
                &\hspace{1.5cm}\phantom{\Big|}\:\tilde v^1_{ijk} - \tilde v^1_{ijl} + \tilde v^1_{ikl} - \tilde v^1_{jkl} = \langle \tilde v^0_{ij}, \tilde \lambda^{0}_{jkl}  \rangle_0 -\tilde \lambda^1_{ijkl}~,
                \\
                &\hspace{1.5cm}\phantom{\Big|}\:   [\tilde v^2_{ijkl} - \tilde v^2_{ijkm} + \tilde v^2_{ijlm} - \tilde v^2_{iklm} + \tilde v^2_{jklm}]
                \\
                &\hspace{2cm}=[-\tfrac{1}{2}\langle \tilde v^0_{ij}, \tilde \lambda^1_{jklm} \rangle_1 + \tfrac{1}{2} \langle \tilde \lambda^0_{ijk}, \tilde v^1_{klm} + \langle \tilde v^0_{ik}, \tilde \lambda^0_{klm} \rangle_0 \rangle_1] \Big\}~,
            \end{split}
        \end{align}
        and the isomorphism of 3-groups $\tilde{\mathcal{T}} \rightarrow \mathcal{T}$ is given by the simplicial map $B\tilde{\mathcal{T}} \rightarrow B\mathcal{T}$ which is the identity except for 
        \begin{align}\label{eq:isostrict}
            v^2_{ijkl} \coloneqq  \tilde{v}^2_{ijkl} + \tfrac{1}{2}\langle \tilde v^0_{ij},\tilde v^1_{jkl} \rangle - \tfrac{1}{2}\langle \tilde v^0_{ij}, \langle \tilde v^0_{ij}, \tilde \lambda^0_{jkl} \rangle \rangle^{\low}.  
        \end{align}
    \end{proof}
    
    We now specialize \ref{def:categorical-3-torus} to the categorical 3-torus relevant to $T_2$-duality.
    \begin{definition}[3-group $\sfTDFT_d$]
        The toroidal 3-group $\sfTDFT_d$ is the categorical 3-torus constructed from the following data:
        \begin{subequations}
            \begin{equation}
                \begin{aligned}
                    V_0 &= \mathbb R^d \oplus \Lambda^2 \mathbb R^{d,\ast}~, 
                    \quad &V_1 &= \mathbb R^{d,\ast}~, 
                    \quad &V_2 &= \mathbb R~, 
                    \\
                    \Lambda_0 &= \IZ^d \oplus \Lambda^2 \IZ^{d,\ast}~, 
                    \quad &\Lambda_1 &= \IZ^{d,\ast}~, 
                    \quad &\Lambda_2 &= \IZ~, 
                \end{aligned}
            \end{equation}
            \begin{equation}\label{eq:pairingst2dnf2}
                \langle v^0_1+\hat v^0_1,v^0_2+\hat v^0_2\rangle_0 \coloneqq \iota_{v^0_1}\hat v^0_2
                ~,~~~
                \langle v^0_1 + \hat v^0_1,\lambda^1 \rangle_1 \coloneqq \iota_{v^0_1} \lambda^1
            \end{equation}
            for $v^0_{1,2}\in \mathbb R^d$, $\hat v^0_{1,2}\in \Lambda^2 \mathbb R^{d,\ast}$, and $\lambda^1\in \IZ^{d,\ast}$.
        \end{subequations}
    \end{definition}
    \noindent To understand the above choice, note that 
    \begin{equation}
        V_0/\Lambda_0\cong \IT^d\times \Lambda^2\IT^{d,\ast}
        ~,~~~
        V_1/\Lambda_1\cong \IT^{d,\ast}
        \eand
        V_2/\Lambda_2\cong \sfU(1)~,
    \end{equation}
    so that the toroidal 3-group $\sfTDFT_d$ can be seen as a 2-gerbe with band $\sfU(1)$ over the space
    \begin{equation}
        \IT^d\times \caM(\IT^d)=\mathbb T^d \times \Lambda^2 \mathbb T^{d,\ast} \times B\mathbb T^{d,\ast}~.
    \end{equation}
    
    The Čech cocycles~\eqref{eq:toroidal_cocycles} for toroidal principal 3-bundles with structure 3-group $\sfTDFT_d$ then consist of the following data:
    \begin{subequations}\label{eq:cocycledatat2d}
        \begin{equation}
            \begin{aligned}
                (v^0_{ij},\hat{v}^0_{ij}): Y^{[2]} &\rightarrow \mathbb R^d \times \Lambda^2\mathbb R^{d,\ast}~,
                \\
                (\lambda^0_{ijk},\hat{\lambda}^0_{ijk},v^1_{ijk}): Y^{[3]}&\rightarrow \IZ^d \times \Lambda^2\IZ^{d,\ast} \times \mathbb R^{d,\ast}~,
                \\
                (\lambda^1_{ijkl},[v^2_{ijkl}]): Y^{[4]} &\rightarrow \IZ^{d,\ast} \times \mathbb R/\IZ~,
            \end{aligned}
        \end{equation}
        satisfying
        \begin{equation}\label{eq:cocycleT2D}
            \begin{aligned}
                v^0_{ij} - v^0_{ik} + v^0_{jk} &= -\lambda^0_{ijk}~,
                \\
                \hat{v}^0_{ij} - \hat{v}^0_{ik} + \hat{v}^0_{jk} &= -\hat{\lambda}^0_{ijk}~,
                \\
                v^1_{ijk} - v^1_{ijl} + v^1_{ikl} - v^1_{jkl} &= \langle v^0_{ij}, \hat{\lambda}^{0}_{jkl}  \rangle_0 -\lambda^1_{ijkl}~,
                \\
                [v^2_{ijkl} - v^2_{ijkm} + v^2_{ijlm} - v^2_{iklm} + v^2_{jklm}]& 
                \\
                =[-\langle v^0_{ij}, \lambda^1_{jklm} \rangle_1 + &\langle \lambda^0_{ijk}, \langle v^0_{ik}, \hat{\lambda}^0_{klm} \rangle_0 \rangle_1^\low + \langle v^0_{ij}, \langle v^0_{jk}, \hat{\lambda}^0_{klm} \rangle_0 \rangle_1^\low]
            \end{aligned}
        \end{equation}
        for all $ijk\in Y^{[3]}$, $ijkl\in Y^{[4]}$, and $ijklm\in Y^{[5]}$. 
    \end{subequations}
    
    The simplicial model $(B\sfTDFT_d)\bullet$ for the delooping of $\sfTDFT_d$ is then defined by 
    \begin{equation}
        \begin{aligned}
            (B\sfTDFT_d)_n &\coloneqq  \Big\{ ((v^0_{ij},\hat{v}^0_{ij}),(\lambda^0_{ijk},\hat{\lambda}^0_{ijk},v^1_{ijk}),(\lambda^1_{ijkl},[v^2_{ijkl}])) 
            \\
            &\hspace{1cm}
            \in (\mathbb R^d \times \Lambda^2\mathbb R^{d,\ast})^{\binom{n+1}{2}} 
            \times (\IZ^d \times \Lambda^2\IZ^{d,\ast})^{\binom{n+1}{3}} 
            \times (\IZ^{d,\ast} \times \mathbb R/\IZ)^{\binom{n+1}{4}} 
            \\
            &\hspace{1cm}
            \Big| \: \mbox{data satisfying~\eqref{eq:cocycleT2D}}~\Big\}~.
        \end{aligned}
    \end{equation}
    
    \subsection{Toroidal 3-groups for \texorpdfstring{$T_2$}{T2}-duality backgrounds}\label{ssec:T2-duality_3-groups}
    
    We require two further toroidal 3-groups, encoding left and right $T_2$-duality backgrounds, respectively. For left $T_2$-duality backgrounds, we will use the following toroidal 3-group.
    \begin{definition}[Toroidal 3-group $\sfTBLFT_d$]\label{def:TBLFT}
        The toroidal 3-group
        $\sfTBLFT_d$ has underlying complex
        \begin{subequations}\label{eq:TBLFT}
            \begin{equation}
                \begin{tikzcd}[column sep=0.8cm,row sep=0.2cm]
                    \IZ^{d,\ast} \times \mathbb R/\IZ \arrow[r,"\sft"] & \IZ^d \times \Lambda^2 \IZ^{d,\ast} \arrow[r,"\sft"] & \mathbb R^d
                    \\
                    (\lambda^1,[v^2]) \arrow[r,mapsto] & (0,0)~,
                    \\
                    &(\lambda^0,\hat \lambda^0) \arrow[r,mapsto] & \lambda^0~,
                \end{tikzcd}
            \end{equation}
            and structure maps\footnote{See \ref{not:low} in \ref{app:notation} for details on the notation $B^\low$.}
            \begin{equation}
                \begin{aligned}
                    v^0 \triangleright (\lambda^0,\hat \lambda^0) &= (\lambda^0,\hat \lambda^0)~,
                    \\
                    v^0 \triangleright (\lambda^1,[v^2]) &= (\lambda^1, [v^2 + \iota_{v^0}\lambda^1])~,
                    \\
                    (\lambda^0,\hat \lambda^0) \triangleright (\lambda^1,[v^2]) &= (\lambda^1,[v^2])~,
                \end{aligned}
            \end{equation}
            \begin{equation}
                \begin{aligned}
                    \{(\lambda^0_1,\hat \lambda^0_1),(\lambda^0_2,\hat \lambda^0_2)\} &= (-\iota_{\lambda^0_1}\hat \lambda^0_2,[0])~,
                    \\
                    \alpha(v^0_1,v^0_2,(\lambda^0,\hat \lambda^0)) &= (0,[-\hat \lambda^{0,\low}(v^0_2,v^0_1)])~,
                    \\
                    \beta(v^0,(\lambda^0_1,\hat \lambda^0_1),(\lambda^0_2,\hat \lambda^0_2)) &= (0,[0])~.
                \end{aligned}
            \end{equation}
        \end{subequations}
    \end{definition}
    \noindent The above structure maps satisfy \ref{def:toroidal_3-group}, as one readily verifies.
    
    A principal $\sfTBLFT_d$-bundle over $M$ subordinate to a surjective submersion $\sigma:Y\rightarrow M$ is described by the following cocycle data:
    \begin{subequations}
        \begin{equation}\label{eq:cocycledatat2b}
            \begin{aligned}
                (v^0_{ij}): Y^{[2]} &\rightarrow \mathbb R^d~,
                \\
                (\lambda^0_{ijk},\hat{\lambda}^0_{ijk}): Y^{[3]} &\rightarrow \IZ^d \times \Lambda^2\IZ^{d,\ast}~,
                \\
                (\lambda^1_{ijkl},[v^2_{ijkl}]): Y^{[4]}&\rightarrow \IZ^{d,\ast} \times \mathbb R/\IZ
            \end{aligned}
        \end{equation}
        satisfying 
        \begin{equation}
            \begin{aligned}
                v^0_{ij} - v^0_{ik} + v^0_{jk} &= -\lambda^0_{ijk}~,
                \\
                \lambda^0_{ijk} - \lambda^0_{ijl} + \lambda^0_{ikl} - \lambda^0_{jkl} &= 0~,
                \\
                \hat{\lambda}^0_{ijk} - \hat{\lambda}^0_{ijl} + \hat{\lambda}^0_{ikl} - \hat{\lambda}^0_{jkl} &= 0~,
                \\
                \lambda^1_{ijkl} - \lambda^1_{ijkm} + \lambda^1_{ijlm} - \lambda^1_{iklm} + \lambda^1_{jklm} &= - \langle \lambda^0_{ijk}, \hat{\lambda}^{0}_{klm} \rangle_0~,
                \\
                [v^2_{ijkl} - v^2_{ijkm} + v^2_{ijlm} - v^2_{iklm} + v^2_{jklm}] 
                \\
                =[-\langle v^0_{ij}, \lambda^1_{jklm} \rangle_1 &+ \langle \lambda^0_{ijk}, \langle v^0_{ik}, \hat{\lambda}^0_{klm} \rangle_0 \rangle_1^\low + \langle v^0_{ij}, \langle v^0_{jk}, \hat{\lambda}^0_{klm} \rangle_0 \rangle_1^\low]
            \end{aligned}
        \end{equation}
    \end{subequations}
    for all $ijk\in Y^{[3]}$, $ijkl\in Y^{[4]}$ and $ijklm\in Y^{[5]}$.
    
    The following 3-group is appropriate for right $T_2$-duality backgrounds.
    \begin{definition}[Toroidal 3-group $\sfTBRFT_d$]\label{def:TBRFT}
        The toroidal 3-group $\sfTBRFT_d$ has underlying complex
        \begin{subequations}\label{eq:TBRFT}
            \begin{equation}
                \begin{tikzcd}[column sep=0.8cm,row sep=0.2cm]
                    \mathbb R/\IZ  \arrow[r,"\sft"] & \IZ^d \times \Lambda^2 \IZ^{d,\ast} \times \mathbb T^{d,\ast}  \arrow[r,"\sft"] & \Lambda^2 \mathbb R^{d,\ast} 
                    \\
                    {[w^2]}  \arrow[r,mapsto] & (0,0,[0])~,
                    \\
                    &(\lambda^0,\hat \lambda^0,[w^1]) \arrow[r,mapsto] & \hat \lambda^0~,
                \end{tikzcd}
            \end{equation}
            and structure maps
            \begin{equation}
                \begin{aligned}
                    \hat v^0 \triangleright (\lambda^0,\hat \lambda^0,[w^1]) &= (\lambda^0,\hat \lambda^0,[w^1+\iota_{\lambda^0} \hat v^0])~,
                    \\
                    \hat v^0 \triangleright [w^2] &= [w^2]~,
                    \\
                    (\lambda^0,\hat \lambda^0,[w^1]) \triangleright [w^2] &= [w^2]~,
                \end{aligned}
            \end{equation}
            \begin{equation}
                \begin{aligned}
                    \{(\lambda^0_1,\hat \lambda^0_1,[w^1_1]),(\lambda^0_2,\hat \lambda^0_2,[w^1_2])\} &= [\iota_{\lambda^0_1}w^1_2]~,
                    \\
                    \alpha(\hat v^0_1,\hat v^0_2,(\lambda^0,\hat \lambda^0,[w^1])) &= [0]~,
                    \\
                    \beta(\hat v^0,(\lambda^0_1,\hat \lambda^0_1,[w^1_1]),(\lambda^0_2,\hat \lambda^0_2,[w^1_2])) &= [\hat v^{0,\low}(\lambda^0_1,\lambda^0_2)]~.
                \end{aligned}
            \end{equation}
        \end{subequations}
    \end{definition}
    
    A principal $\sfTBRFT_d$-bundle subordinate to a surjective submersion $\sigma:Y\rightarrow M$ is then determined by the following data:
    \begin{subequations}\label{eq:cocycledatat2rb}
        \begin{equation}
            \begin{aligned}
                \hat{v}^0_{ij}: Y^{[2]} &\rightarrow \Lambda^2 \mathbb R^{d,\ast}~,
                \\
                (\lambda^0_{ijk},\hat{\lambda}^0_{ijk},[w^1_{ijk}]): Y^{[3]} &\rightarrow \IZ^d \times \Lambda^2\IZ^{d,\ast} \times \mathbb T^{d,\ast}~,
                \\
                [w^2_{ijkl}]: Y^{[4]} &\rightarrow  \mathbb R/\IZ
            \end{aligned}
        \end{equation}
        satisfying 
        \begin{equation}
            \begin{aligned}
                \hat{v}^0_{ij} - \hat{v}^0_{ik} + \hat{v}^0_{jk} &= -\hat{\lambda}^0_{ijk}~,
                \\
                \lambda^0_{ijk} - \lambda^0_{ijl} + \lambda^0_{ikl} - \lambda^0_{jkl} &= 0~,
                \\
                [w^1_{ijk} - w^1_{ijl} + w^1_{ikl} - w^1_{jkl}] &= [\langle \lambda^0_{jkl}, \hat{v}^0_{ij} \rangle]~,
                \\
                [w^2_{ijkl} - w^2_{ijkm} + w^2_{ijlm} - w^2_{iklm} + w^2_{jklm}] & 
                \\
                =[\langle \lambda^0_{klm}, \langle \lambda^0_{jkm}, \hat{v}^0_{ij} \rangle \rangle^\low - \langle \lambda^0_{jkl}, \langle &\lambda^0_{jlm}, \hat{v}^0_{ij} \rangle\rangle^\low + \langle \lambda^0_{ijk}, w^1_{klm} + \langle \lambda^0_{klm}, \hat{v}^0_{ik} \rangle \rangle ]
            \end{aligned}
        \end{equation}
    \end{subequations}
    for all $ijk\in Y^{[3]}$, $ijkl\in Y^{[4]}$ and $ijklm\in Y^{[5]}$.
    
    The three toroidal 3-groups defined so far can be arranged into a span.
    \begin{proposition}\label{prop:span_3_groups}
        There is a span (of weak homomorphisms) of toroidal 3-groups
        \begin{equation}\label{eq:span_3_groups}
            \begin{tikzcd}
                & \sfTDFT_d \arrow[dl,"\phi"'] \arrow[dr,"\hat \phi"]
                \\
                \sfTBLFT_d & & \sfTBRFT_d 
            \end{tikzcd}
        \end{equation}
        and the induced maps $|B\phi|$ and $|B\hat \phi|$ on geometric realizations of the deloopings are weak homotopy equivalences.
    \end{proposition}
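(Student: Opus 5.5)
Our approach is to write down $B\phi$ and $B\hat\phi$ explicitly as maps of simplicial manifolds, level by level on the cocycle data \eqref{eq:cocycleT2D}, and then to compute their kernels as in \cref{def:weak_morphisms}.

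For $\phi$, we take the evident forgetful map. It sends
\[
((v^0_{ij},\hat v^0_{ij}),(\lambda^0_{ijk},\hat\lambda^0_{ijk},v^1_{ijk}),(\lambda^1_{ijkl},[v^2_{ijkl}]))
\]
to $((v^0_{ij}),(\lambda^0_{ijk},\hat\lambda^0_{ijk}),(\lambda^1_{ijkl},[v^2_{ijkl}]))$. The $v^2$-relation of \eqref{eq:cocycleT2D} coincides verbatim with that of $\sfTBLFT_d$. The two integrality relations for $\lambda^0$ and $\hat\lambda^0$ follow from the first two lines by applying $\delta$. The $\lambda^1$-relation of $\sfTBLFT_d$ follows by applying the Čech differential to the $v^1$-relation: the left-hand side vanishes, and $\delta\langle v^0,\hat\lambda^0\rangle_0$ reduces to $-\langle\lambda^0,\hat\lambda^0\rangle_0$ using $\delta v^0=-\lambda^0$ and $\delta\hat\lambda^0=0$. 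Simpliciality is automatic, since the map is componentwise and compatible with the pullbacks \eqref{eq:simpmapspullback2}.

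For $\hat\phi$, we mimic the construction of~\cite{Nikolaus:2018qop}: we first apply the "flip" exchanging the roles of $v^0$ and $\hat v^0$, and then forget. Concretely, we set $[w^1_{ijk}]:=[v^1_{ijk}+\langle v^0_{\cdot\cdot},\hat v^0_{\cdot\cdot}\rangle]$ with suitable indices, and $[w^2_{ijkl}]:=[v^2_{ijkl}+c_{ijkl}]$, where the correction terms $c_{ijkl}$ are cubic expressions built from $\langle\lambda^0,\langle v^0,\hat v^0\rangle\rangle^\low$ and $\langle v^0,\langle v^0,\hat\lambda^0\rangle\rangle^\low$. We fix these terms by demanding that the $w^1$- and $w^2$-relations of \eqref{eq:cocycledatat2rb} hold modulo $\IZ$. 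The guiding heuristic comes from the Poincar\'e gerbe: the $\sfTDFT_d$-cocycle is a 2-gerbe over $\IT^d\times\caM(\IT^d)$, and $w$ should be the transition data seen from the $\caM(\IT^d)$ side. So the correction is the simplicial primitive of the connection $A_1,A_2$ appearing in \cref{def:poincare}. Note that $\hat\phi$ genuinely uses the real-valued $v^0$, so it is a weak homomorphism, not a strict one.

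The main obstacle is verifying the $w^2$-relation of $\sfTBRFT_d$. It contains terms of the form
\[
\langle\lambda^0,\langle\lambda^0,\hat v^0\rangle\rangle^\low,
\]
whereas the $\sfTDFT_d$ relation has terms $\langle v^0,\langle v^0,\hat\lambda^0\rangle\rangle^\low$. Matching them requires a discrete "integration by parts" for the cup products $\delta(a\cup b)=\delta a\cup b\pm a\cup\delta b$. It also requires the failure of $\langle-,\langle-,-\rangle\rangle^\low$ to be symmetric, which is controlled by \eqref{eq:jacobi} and must be integral. I would first try the ansatz $c_{ijkl}=\langle v^0_{ij},\langle v^0_{jk},\hat v^0_{kl}\rangle\rangle^\low$ plus lower terms in $\lambda$, and then solve for the remainder, checking that all residual discrepancies lie in $\IZ$.

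For the homotopy statement, we compute kernels. The kernel of $B\phi$ consists of cocycles with $v^0=0$, $\lambda^0=\hat\lambda^0=0$, $\lambda^1=0$, $[v^2]=0$. The remaining data are $\hat v^0$ with $\delta\hat v^0=0$, hence $\hat v^0_{ij}=\hat u_j-\hat u_i$, and $v^1$ with $\delta v^1=0$. This is the nerve-type simplicial manifold $E(\Lambda^2\IR^{d,\ast})\times E(B\IR^{d,\ast})$ for vector spaces, which is contractible. Similarly, the kernel of $B\hat\phi$ is parametrized by $v^0$ with $\delta v^0=0$ together with the $\IR^{d,\ast}$-lift of $w^1$, which is again contractible. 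Alternatively, and as a cross-check, we compare homotopy groups. By the long exact sequences of the complexes, $\pi_1,\dots,\pi_4$ of all three classifying spaces are $\IZ^d\oplus\Lambda^2\IZ^{d,\ast}$, $\IZ^{d,\ast}$, $0$ and $\IZ$ in degrees $2,3,4$ respectively. We then check that $B\phi$ and $B\hat\phi$ induce isomorphisms on them, via Whitehead's theorem.
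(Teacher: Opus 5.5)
Your treatment of $\phi$ is correct and agrees with the paper. The gap is $\hat\phi$. The substantive content of the proposition is that a simplicial map $B\sfTDFT_d\to B\sfTBRFT_d$ exists at all. You only describe a search for it, and the ansatz you propose cannot succeed.

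You let the correction in $w^2_{ijkl}=v^2_{ijkl}+c_{ijkl}$ depend only on $v^0,\hat v^0,\lambda^0,\hat\lambda^0$. However, the $w^2$-relation of $\sfTBRFT_d$ contains $\langle\lambda^0_{ijk},w^1_{klm}\rangle$, and hence a term $\langle\lambda^0_{ijk},v^1_{klm}\rangle$. The $v^2$-relation in \eqref{eq:cocycleT2D} contains no $v^1$ at all, only $\lambda^1$, through the non-integral term $-\langle v^0_{ij},\lambda^1_{jklm}\rangle$. So replace $v^1$ by $v^1+\epsilon$, with $\epsilon$ a real cochain of vanishing alternating sum (e.g.\ a coboundary). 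This keeps you in $(B\sfTDFT_d)_n$ with $\lambda^1$ fixed, and it leaves the source relation and any such $c$ unchanged. But it shifts the target side by $\langle\lambda^0_{ijk},\epsilon_{klm}\rangle$, which is not integral once $\lambda^0_{ijk}\neq 0$.

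What is missing is the quadratic term $-\langle v^0_{ij},v^1_{jkl}\rangle$ in $c_{ijkl}$. By the discrete Leibniz rule, using $v^0_{ij}-v^0_{ik}+v^0_{jk}=-\lambda^0_{ijk}$ and the $v^1$-relation, its alternating sum is
\[
\langle\lambda^0_{ijk},v^1_{klm}\rangle+\langle v^0_{ij},\lambda^1_{jklm}\rangle-\langle v^0_{ij},\langle v^0_{jk},\hat\lambda^0_{klm}\rangle\rangle~.
\]
This simultaneously cancels the $\lambda^1$-term and produces the required $\langle\lambda^0,w^1\rangle$-term. Only after that can cubic $\low$-terms of the kind you list finish the matching.

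The same issue arises for $w^1$. With $\langle v^0_{ij},\hat v^0_{jk}\rangle$ alone, the alternating sum of $w^1$ comes out as $\langle\lambda^0_{ijk},\hat v^0_{kl}\rangle$ rather than the required $\langle\lambda^0_{jkl},\hat v^0_{ij}\rangle$. You also need the term $\langle\lambda^0_{ijk},\hat v^0_{ik}\rangle$. This is the paper's map \eqref{eq:homr}, where the last term of $w^1$ should be read as $\langle\lambda^0_{ijk},\hat v^0_{ik}\rangle$. Your own Poincar\'e-gerbe heuristic actually points to the $v^1$-term, since $A_1$ contains $\langle v^0_0,\rmd w^1_{01}\rangle$.

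Because $\hat\phi$ is not pinned down, your kernel computation for $B\hat\phi$ is not established, and as stated it is misdescribed. For the correct map, $\hat v^0=\lambda^0=\hat\lambda^0=0$ gives $v^0\in B\IR^d$. The condition $[w^1]=0$ forces $v^1$ to be $\IZ^{d,\ast}$-valued, so no real lift survives. Then $\lambda^1$ and $[v^2]=[\langle v^0_{ij},v^1_{jkl}\rangle]$ are determined by $v^1$. The kernel is therefore $B\IR^d\times E(B^2\IZ^{d,\ast})$. Its discrete part is contractible because it is of $E$-type, not because it is a vector space, and this again rests on the $\langle v^0,v^1\rangle$ term.

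For $\phi$, the kernel is $B(\Lambda^2\IR^{d,\ast})\times B^2(\IR^{d,\ast})$ rather than $E(\cdot)\times E(\cdot)$, since your $\hat u_i$ are only defined up to a common shift. This slip is harmless for contractibility.

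Finally, a contractible fibre over the unit gives a weak equivalence of realizations only together with a surjectivity/fibration property. The paper uses that $B\phi$ and $B\hat\phi$ are degreewise surjective submersions. You should check this by solving the relevant coboundary equations simplex by simplex. Your homotopy-group cross-check presupposes that $\hat\phi$ exists, so it cannot replace its construction.
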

    \begin{proof}
        The complexes underlying $\sfTDFT_d$, $\sfTBLFT_d$, and $\sfTBRFT_d$ read as
        \begin{equation}
            \begin{tikzcd}[column sep=0.8cm,row sep=0.2cm]
                \IZ^{d,\ast} \times \IR/\IZ \arrow[r,"\sft"] 
                & \IZ^d \times \Lambda^2 \IZ^{d,\ast} \times \mathbb R^{d,\ast} \arrow[r,"\sft"] 
                & \mathbb R^d \times \Lambda^2 \mathbb R^{d,\ast}~,
                \\
                \IZ^{d,\ast} \times \mathbb R/\IZ \arrow[r,"\sft"] & \IZ^d \times \Lambda^2 \IZ^{d,\ast} \arrow[r,"\sft"] & \mathbb R^d~,
                \\
                \mathbb R/\IZ  \arrow[r,"\sft"] & \IZ^d \times \Lambda^2 \IZ^{d,\ast} \times \mathbb T^{d,\ast}  \arrow[r,"\sft"] & \Lambda^2 \mathbb R^{d,\ast}~,
            \end{tikzcd}
        \end{equation}
        respectively.
        
        The homomorphism $\phi$ is strict: explicitly, it is given by forgetting the components in $\Lambda^2\mathbb R^{d,\ast}$ and $\mathbb R^{d,\ast}$. Note that the fiber of the map $B\phi:B\sfTDFT_d\rightarrow B \sfTBLFT_d$ over the unit simplicial point is the simplicial manifold $B(\Lambda^2 \mathbb R^{d,\ast}) \times B^2(\mathbb R^{d,\ast})$, whose geometric realization is contractible. Since the map $B\phi$ is a degree-wise surjective submersion, it follows that the induced map $|B \sfTDFT_d|\rightarrow |B \sfTBLFT_d| $ is a weak homotopy equivalence.
        
        The map $\hat \phi$ is a weak homomorphism in the sense of \ref{def:weak_morphisms}. That is, it is encoded in a simplicial map $B\hat\phi:B\sfTDFT_d\rightarrow B\sfTBRFT_d$. It is given by the identity on the common components and 
        \begin{align}
            \begin{split}\label{eq:homr}
                [w^1_{ijk}] &\coloneqq  [v^1_{ijk} + \langle v^0_{ij},\hat{v}^0_{jk} \rangle + \langle \lambda^0_{ijk},v^0_{ik} \rangle]~,
                \\
                [w^2_{ijkl}] &\coloneqq  [v^2_{ijkl} - \langle v^0_{ij}, v^1_{jkl} \rangle + \langle  v^0_{jk}, \langle v^0_{ij}, \hat{v}^0_{kl} \rangle \rangle^\low + \langle \lambda^0_{ijk}, \langle v^0_{ik},  \hat{v}^0_{kl} \rangle \rangle^\low  
                \\
                &\quad + \langle \lambda^0_{ijk}, \langle \lambda^0_{ikl}, \hat{v}^0_{il} \rangle \rangle^\low -  \langle \lambda^0_{jkl}, \langle \lambda^0_{ijl}, \hat{v}^0_{il} \rangle \rangle^\low ]
            \end{split}
        \end{align}
        on the remaining ones. One can check that the fiber of the map $B\hat \phi$ over the simplicial unit is the simplicial manifold $B\mathbb R^d \times E(B^2 \IZ^{d,\ast})$, whose geometric realization is contractible. Since $B\hat \phi$ is again a surjective submersion, it follows also here that $|B\hat \phi|:|B\sfTDFT_d| \rightarrow |B\sfTBRFT_d|$ is a weak homotopy equivalence. 
    \end{proof}
    
    The weak homotopy equivalence of the classifying spaces then yields the following, important corollary.
    \begin{corollary}\label{cor:any_bundle_from_T2D}
        Up to bundle isomorphisms, every principal $\sfTBLFT_d$-bundle and every principal $\sfTBRFT_d$-bundle can be obtained from a principal $\sfTDFT_d$-bundle. 
    \end{corollary}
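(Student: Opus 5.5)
The plan is to read the corollary off from \cref{prop:span_3_groups} together with the classifying-space description of principal 3-bundles. By the proposition identifying $|B\caG_\bullet|$ as the classifying space for principal $\caG$-bundles, isomorphism classes of principal $\caG$-bundles over $M$ correspond to homotopy classes of maps $M\rightarrow |B\caG_\bullet|$. Each such homotopy class is realized by a Čech cocycle, i.e.~a simplicial map $\check C(\sigma)_\bullet\rightarrow B\caG_\bullet$ for a suitable surjective submersion $\sigma$. Post-composition with $B\phi$ and $B\hat\phi$ sends principal $\sfTDFT_d$-bundles to principal $\sfTBLFT_d$- and $\sfTBRFT_d$-bundles. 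On classifying maps, this corresponds to post-composition with $|B\phi|$ and $|B\hat\phi|$.

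Since $M$ is a manifold, it has the homotopy type of a CW complex. A weak homotopy equivalence $f:X\rightarrow X'$ therefore induces a bijection $[M,X]\rightarrow[M,X']$ (Whitehead). Applying this to $|B\phi|$ and $|B\hat\phi|$, which are weak homotopy equivalences by \cref{prop:span_3_groups}, shows that every homotopy class $M\rightarrow|B\sfTBLFT_d|$ (resp.\ $|B\sfTBRFT_d|$) lifts to one $M\rightarrow |B\sfTDFT_d|$. So every principal $\sfTBLFT_d$- or $\sfTBRFT_d$-bundle is isomorphic to the image of some principal $\sfTDFT_d$-bundle.

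The point needing care is matching homotopy classes with bundle isomorphism classes, rather than merely concluding ``some map exists.'' I would also give a direct cocycle-level argument, which avoids the abstract classification and is anyway needed later for explicit duals. Given a cocycle $\check C(\sigma)_\bullet\rightarrow B\sfTBLFT_d$, I lift it degree-wise along $B\phi$. The fiber $B(\Lambda^2\IR^{d,\ast})\times B^2(\IR^{d,\ast})$ has contractible realization, and $B\phi$ is a degree-wise surjective submersion. Concretely, one must find $\hat v^0_{ij}$ with $\delta\hat v^0=-\hat\lambda^0$ and $v^1_{ijk}$ solving the third equation in~\eqref{eq:cocycleT2D}. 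This is possible after refining $\sigma$ to a good cover, since the obstructions are Čech cocycles with values in the fine sheaves of smooth $\Lambda^2\IR^{d,\ast}$- and $\IR^{d,\ast}$-valued functions, whose higher Čech cohomology vanishes by partitions of unity.

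For $\hat\phi$, I would proceed the same way: given $(\hat v^0,\lambda^0,\hat\lambda^0,[w^1],[w^2])$, I choose real lifts $v^0_{ij}$ of the $\IT^d$-cocycle determined by $\lambda^0$, which is possible on a good cover. I then solve~\eqref{eq:homr} for $v^1$ and $v^2$ after lifting $[w^1]$ to $\IR^{d,\ast}$-valued functions, absorbing the integer defect into $\lambda^1$. The step I expect to be the main obstacle is checking that the resulting $(v^2_{ijkl})$ satisfies the last cocycle condition in~\eqref{eq:cocycleT2D} modulo $\IZ$. I would handle it by invoking the contractibility of the fiber $B\IR^d\times E(B^2\IZ^{d,\ast})$ from \cref{prop:span_3_groups}, rather than by direct computation.
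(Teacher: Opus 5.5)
Your first paragraph is the paper's argument. The paper reads the corollary off from the weak homotopy equivalences $|B\phi|$ and $|B\hat\phi|$ of \cref{prop:span_3_groups} together with the classifying-space description of principal $\caG$-bundles. Your Whitehead step and your caveat about injectivity make explicit what the paper leaves implicit: its classifying-space proposition only shows that every map $M\to|B\caG_\bullet|$ is homotopic to one coming from a cocycle, so ``homotopic classifying maps give isomorphic bundles'' is taken for granted.

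Your cocycle-level argument is a genuinely different and more elementary route. After refining to a good cover, it produces a $\sfTDFT_d$-cocycle that maps on the nose to the given one. It therefore needs neither the classification of principal $\caG$-bundles by $[M,|B\caG_\bullet|]$ nor the realization-level weak-equivalence claim, both of which the paper only outlines.

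The $\phi$-half works as you describe: the right-hand side $\langle v^0,\hat\lambda^0\rangle_0-\lambda^1$ is $\delta$-closed by the $\lambda^1$-condition in $B\sfTBLFT_d$, and $[v^2]$ is unchanged.

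In the $\hat\phi$-half, however, contractibility of $|B\IR^d\times E(B^2\IZ^{d,\ast})|$ is the wrong tool for the step you single out. It is a statement about a single fiber after geometric realization. It cannot certify a pointwise identity modulo $\IZ$, nor the integrality of the defect you absorb into $\lambda^1$.

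What actually does the job is the form of \eqref{eq:homr}: $[w^1]=[v^1+E^1]$ and $[w^2]=[v^2+E^2]$, with $E^1,E^2$ depending only on lower data. So at each level the $\sfTDFT_d$-condition and the $\sfTBRFT_d$-condition differ by an identity in the lower data alone. That identity holds for all admissible lower data, for two reasons:
\begin{itemize}
\item such data on four (resp.\ five) vertices always extends to a simplex of $B\sfTDFT_d$, by solving the single top-level condition for one top component;
\item $B\hat\phi$ sends simplices to simplices.
\end{itemize}
Setting $v^1\coloneqq w^1-E^1$, $\lambda^1\coloneqq\langle v^0,\hat\lambda^0\rangle_0-\delta v^1$ and $[v^2]\coloneqq[w^2-E^2]$ then gives an honest lift. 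With these choices $\lambda^1$ is automatically $\IZ^{d,\ast}$-valued, and the last condition in \eqref{eq:cocycleT2D} is automatically satisfied.
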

    \noindent Concretely, the Čech cocycle data for the principal $\sfTBLFT_d$- and $\sfTBRFT_d$-bundles are obtained from the Čech data for a principal $\sfTDFT_d$-bundle by direct application of the morphisms $B\phi$ and $B\hat \phi$.

    \section{\texorpdfstring{$T_2$}{T2}-duality as spans of principal 3-bundles}\label{sec:t2df2}
    
    In this section, we show that left and right $T_2$-duality backgrounds can be represented by principal $\sfTBLFT_d$- and $\sfTBRFT_d$-bundles for the toroidal 3-groups introduced in \ref{def:TBLFT} and \ref{def:TBRFT}. As we saw above, the span of toroidal 3-groups~\eqref{eq:span_3_groups} directly induces spans of principal 3-bundles with these three toroidal 3-groups as structure groups:
    \begin{equation}
        \begin{tikzcd}
            & \sfTDFT_d \arrow[dl,"\phi"'] \arrow[dr,"\hat \phi"]
            \\
            \sfTBLFT_d & & \sfTBRFT_d 
        \end{tikzcd}
        ~~~\longrightarrow~~~
        \begin{tikzcd}
            & \scP_\rmC\arrow[dl,"p"'] \arrow[dr,"\hat p"]
            \\
            \scP & & \hat \scP
        \end{tikzcd}
    \end{equation}
    We also show that a $T_2$-duality between the backgrounds described by $\scP$ and $\hat \scP$ is precisely such a span of toroidal principal 3-bundles, which is fully encoded in the principal $\sfTDFT_d$-bundle $\scP_\rmC$.
    
    \subsection{Topological \texorpdfstring{$T_2$}{T2}-duality backgrounds as higher bundles}
    
    We first interpret left $T_2$-duality backgrounds as toroidal principal 3-bundles.
    \begin{proposition}\label{prop:leftbackgroundsf2}
        A principal $\sfTBLFT_d$-bundle over $M$ determines a topological left $T_2$-duality background of type $F_2$, and any such background arises from a principal $\sfTBLFT_d$-bundle up to equivalence.
    \end{proposition}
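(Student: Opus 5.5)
Given Čech data $(v^0_{ij},(\lambda^0_{ijk},\hat\lambda^0_{ijk}),(\lambda^1_{ijkl},[v^2_{ijkl}]))$ of a principal $\sfTBLFT_d$-bundle subordinate to $\sigma:Y\rightarrow M$, I would first read off the torus bundle. The classes $[v^0_{ij}]\in\IT^d$ satisfy the cocycle condition because $\lambda^0_{ijk}\in\IZ^d$, so they define a principal $\IT^d$-bundle $P\rightarrow M$ with $c(P)=[\lambda^0]\in H^2(M,\IZ^d)$.

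Next I would build the 2-gerbe $R\rightarrow P$ as a simplicial object in the sense of \cref{def:highgerbes}. The resolution $Y_\bullet$ of $P$ is the Čech nerve of the surjective submersion $Y\times\IR^d\rightarrow P$, $(y,x)\mapsto [y,x]$. On this resolution, $x_j-x_i$ is determined up to $v^0_{ij}$. I would define a $\sfU(1)$-valued Čech 3-cocycle on $(Y\times\IR^d)^{[4]}$ by combining $v^2_{ijkl}$ with correction terms. These terms should be of the form $\langle x_i,\lambda^1_{ijkl}\rangle$, terms quadratic in $x$ paired with $\hat\lambda^0$ via $\langle-,\langle-,-\rangle\rangle^\low$, and terms pairing $x$ with $\lambda^0$. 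They are modelled on the Poincaré gerbe formula in \cref{def:poincare}.

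The cocycle condition should then follow from the last three relations defining $B\sfTBLFT_d$. This verification is the main computational obstacle. The $\low$-terms and the non-strict equivariance $\alpha$ have to cancel exactly against the $x$-dependent corrections. I would try to organise it by pulling back the weak homomorphism $\hat\phi$ formulas \eqref{eq:homr} with $v^0$ replaced by $x$. Equivalently, I would view $R$ as obtained from the pullback of $P$ to itself, where it trivializes and the data become those of $\sfTBLFT_d$ with trivial $v^0$.

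Type $F_2$ should be immediate. Over a point, or over any $1$-skeleton on which $\sigma$ admits sections, the data can be gauged so that $\lambda^0$ and $\hat\lambda^0$ vanish. This requires $\lambda^0$, $\hat\lambda^0$ and $\lambda^1$ restricted there to be coboundaries, which holds on $M_1$ for degree $\geq 2$ cocycles. The 2-gerbe then becomes trivial on $P_1$, so $c(R)\in F_2H^4(P,\IZ)$ by \eqref{eq:skeletalfiltration}. In addition, $\sigma_2(c(R))$ should be represented by $[\hat\lambda^0]$, consistent with \cref{th:existencet2duals}.

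For the converse, let $(M,P,R)$ be of type $F_2$. I would choose $v^0_{ij}$ lifting a cocycle of $P$, giving $\lambda^0$. By the spectral sequence, $\sigma_2(c(R))\in E^{2,2}_\infty$ is represented by some class in $H^2(M,\Lambda^2\IZ^{d,\ast})$. I pick a cocycle $\hat\lambda^0$ for it. The obstruction class $\langle\lambda^0,\hat\lambda^0\rangle_0\in H^4(M,\IZ^{d,\ast})$ must vanish so that $\lambda^1$ exists, by \cref{prop:charclass}\,\ref{it:charclass1} applied to $\sfTDFT_d$ truncated. I expect this to follow from $d_2$- and $d_3$-closedness of the symbol, using the structure of $d_2$ for torus bundles as in \cref{lem:d2correspondence}. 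The choice of $\lambda^1$ is then adjusted to absorb $\sigma_3$-type ambiguities.

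Finally I choose $v^2$ to solve the last cocycle equation; its obstruction lies in $H^5(M,\IZ)$ by \cref{prop:charclass}\,\ref{it:charclass2}. The resulting 2-gerbe $R'$ has $c(R')-c(R)\in F_3H^4(P,\IZ)$. The remaining discrepancy comes from $E^{3,1}$ and $E^{4,0}$. It is absorbed by modifying $\lambda^1$ with a cocycle in $H^3(M,\IZ^{d,\ast})$ and $v^2$ with a class from $H^4(M,\IZ)$, pulled back from the base. This yields an isomorphism $R\cong R'$.
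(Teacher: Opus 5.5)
Your construction of $R$ follows the paper. The resolution is the \v{C}ech nerve of $Y\times\IR^d\to P$, and the $\sfU(1)$-valued 3-cocycle is $v^2_{ijkl}$ corrected by $-\langle x_i,\lambda^1_{ijkl}\rangle+\langle x_i,\langle v^0_{ij},\hat\lambda^0_{jkl}\rangle\rangle^{\low}+\langle x_i-x_j-v^0_{ij},\langle x_j,\hat\lambda^0_{jkl}\rangle\rangle^{\low}$. Note that the integer jumps $x_i-x_j-v^0_{ij}$ of the resolution enter here, not $\lambda^0$; the paper does not write out the cocycle check either.

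Your proof that $c(R)\in F_2H^4(P,\IZ)$ differs from the paper's and is sound. On a neighbourhood of $M_1$ all relevant cohomology in degrees $\geq 2$ vanishes (alternatively, $|B\sfTBLFT_d|$ is simply connected). Hence the $\sfTBLFT_d$-bundle, and with it $R|_{P_1}$, is trivial, which is exactly \eqref{eq:skeletalfiltration}. The paper instead reads off both directions from a gluing picture, combined with an analogue of Nikolaus--Waldorf's Lemma~2.1.6. In that picture, trivial 2-gerbes on $Y\times\IT^d$ are glued along trivial 1-isomorphisms. The 2-isomorphisms are line bundles on $\IT^d$ with Chern classes $\hat\lambda^0_{ijk}\in H^2(\IT^d,\IZ)$. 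The 3-isomorphisms are the $\IR/\IZ$-valued functions on $Y^{[4]}\times\IT^d$ determined by $(\lambda^1_{ijkl},[v^2_{ijkl}])$.

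The converse has a genuine gap. Rebuilding the cocycle layer by layer from the symbols of $c(R)$ requires identifying each \v{C}ech obstruction with a spectral-sequence differential, and you only have this for the first one. The obstruction to $\lambda^1$ is $\iota_{c(P)}[\hat\lambda^0]=\pm d_2[\hat\lambda^0]$, so $d_2$-closedness alone suffices there. You never show that the obstruction to $v^2$ in $H^5(M,\IZ)$ vanishes, and this is where $d_3$ is actually needed. You would have to prove two things:
\begin{itemize}
\item its class in $E_3^{5,0}=H^5(M,\IZ)/\langle c(P),H^3(M,\IZ^{d,\ast})\rangle$ equals $\pm d_3[\hat\lambda^0]$, which is a secondary, Massey-product-type computation;
\item the remainder can then be killed by shifting $\lambda^1$ by a $\IZ^{d,\ast}$-valued cocycle.
\end{itemize}

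Several further claims are asserted rather than proved. These are that $\sigma_2(c(R'))=[\hat\lambda^0]$, and that shifting $\lambda^1$ by a cocycle $\mu$ with $\langle c(P),\mu\rangle=0$, or $v^2$ by an $\IR/\IZ$-valued cocycle, changes $\sigma_3$, respectively the $F_4$-part, by the corresponding class. Without them, neither $c(R')-c(R)\in F_3H^4(P,\IZ)$ nor the final matching follows.

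The missing idea is to extract the cocycle from $R$ itself rather than from its symbols. Take a good cover and trivialize $R$ over each $\pi_P^{-1}(U_i)\cong U_i\times\IT^d$, which is possible since $c(R)\in F_1$. Kill the \v{C}ech 1-cocycle of transition 1-gerbes with values in $H^3(\IT^d,\IZ)$; it represents $\sigma_1(c(R))=0$ in $E_\infty^{1,3}\subseteq E_2^{1,3}$. Then read off $\hat\lambda^0_{ijk}$ and $(\lambda^1_{ijkl},[v^2_{ijkl}])$ from the 2- and 3-isomorphisms. Because $R$ already exists, every obstruction vanishes automatically and no symbol matching is needed.
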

    \begin{proof}
        Consider the cocycle data~\eqref{eq:cocycledatat2b} for a principal $\sfTBLFT_d$-bundle subordinate to $\sigma:Y\rightarrow M$, from which we construct the simplicial manifold $R_\bullet$ with
        \begin{align}
            \begin{split}
                R_n &\coloneqq  \Big\{ ((v^0_{i}),([v^2_{ijk}])) \in Y^{[n+1]}\times (\mathbb R^d)^{n+1} \times (\mathbb R/\IZ)^{\binom{n+1}{3}} \\
                &\hspace{1cm}\Big| \: \: [v^0_{i} - v^0_{j}] = [v^0_{ij}]~,
                \\
                &\hspace{1cm}\:\: \phantom{\Big|}[v^2_{ijk} - v^2_{ijl} + v^2_{ikl} - v^2_{jkl}]= [v^2_{ijkl} - \langle v^0_i,\lambda^1_{ijkl} \rangle + \langle v^0_i, \langle v^0_{ij},  \hat{\lambda}^0_{jkl} \rangle \rangle^\low 
                \\
                &\hspace{1.5cm} \quad \quad \quad \quad \quad \quad \quad \quad \quad \quad \quad \quad \quad \quad  + \langle v^0_i - v^0_j - v^0_{ij}, \langle v^0_j,\hat{\lambda}^0_{jkl} \rangle \rangle^\low ] \Big\}~,
            \end{split}
        \end{align}
        where the equivalence classes are in the quotients $\IR^d/\IZ^d$ and $\IR/\IZ$. Recall from \ref{ssec:higher_gerbes} that a principal $\IT^d$-bundle\footnote{i.e.~a $0$-gerbe with band $\IT^d$} subordinate to a surjective submersion $\sigma:Y\rightarrow M$ is a simplicial manifold $P_\bullet$ with $P_n\cong Y^{[n+1]}\times \IT^{d\times(n+1)}$. The remaining cocycle data define a 2-gerbe $R$ with band $\sfU(1)\cong \IR/\IZ$ over the simplicial manifold $P_\bullet$ for the resolution induced by the projection $\IR^d\rightarrow \IT^d$. We have thus obtained the desired topological left $T_2$-duality background.
        
        The $2$-gerbe $R\rightarrow P$ can also be seen as a gluing of the trivial $2$-gerbes $\sigma^\ast R$ over $Y\times \mathbb T^d \cong \sigma^\ast P$ along the trivial isomorphisms of $2$-gerbes $\psi_{ij}: \sigma_i^\ast R \rightarrow \sigma_j^\ast R$ over $Y^{[2]}\times \mathbb T^d$, the $2$-isomorphisms of $2$-gerbes $\psi_{ijk}:\psi_{jk} \circ \psi_{ij} \Rightarrow \psi_{ik}$ over $Y^{[3]}\times \mathbb T^d$ given by the pullbacks from $ \mathbb T^d$ of the $\mathbb R/\IZ$-bundles with Chern class $\hat{\lambda}^0_{ijk} \in \Lambda^2 \IZ^{d,\ast} = H^2(\mathbb T^d, \IZ)$, and $3$-isomorphisms of $2$-gerbes $\psi_{ikl} \circ \psi_{ijk} \Rightarrow \psi_{jkl} \circ \psi_{ijl}$ over $Y^{[4]}\times \mathbb T^d$ defined by $(\lambda^1_{ijkl},[v^2_{ijkl}]) \in C^{\infty}(Y^{[4]}, \IZ^{d,\ast} \times \mathbb R/\IZ) \subset C^{\infty}(Y^{[4]}\times\mathbb T^d, \mathbb R/\IZ)$. This implies, by an argument similar to the one in~\cite[Lemma 2.1.6]{Nikolaus:2018qop}, which we refrain from repeating, that its Dixmier--Douady class lies in the $F_2$ step of the Serre filtration, and that every such gerbe arises from this construction.
    \end{proof}
    
    We then turn to right $T_2$-duality backgrounds. 
    \begin{proposition}\label{prop:rightbackgroundsf2}
        A principal $\sfTBRFT_d$-bundle over $M$ determines a topological right $T_2$-duality background $(M,\hat P, \hat Q, \hat R)$ of type $F_2$, and any such background arises from a principal $\sfTBRFT_d$-bundle up to equivalence.
    \end{proposition}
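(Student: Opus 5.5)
I will mirror the proof of \cref{prop:leftbackgroundsf2}, now peeling off three layers from the cocycle data~\eqref{eq:cocycledatat2rb} instead of two.

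\textbf{Step 1: the three layers.}
\begin{itemize}
\item The pair $(\hat v^0_{ij},\hat\lambda^0_{ijk})$ with $[\hat v^0_{ij}]\in\Lambda^2\IT^{d,\ast}$ is the cocycle of a principal $\Lambda^2\IT^{d,\ast}$-bundle $\hat P\rightarrow M$. I realize it as a simplicial manifold $\hat P_\bullet$ with resolution given by $\Lambda^2\IR^{d,\ast}\rightarrow\Lambda^2\IT^{d,\ast}$, exactly as for $P_\bullet$ in the left case.
\item Over $\hat P_\bullet$, I build $\hat Q_\bullet$ with $n$-simplices consisting of points $(y_0,\dots,y_n)$, lifts $\hat v^0_i\in\Lambda^2\IR^{d,\ast}$ with $[\hat v^0_i-\hat v^0_j]=[\hat v^0_{ij}]$, and classes $[w^1_{ij}]\in\IT^{d,\ast}$ for $i<j$. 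Their Čech differential is required to equal $[w^1_{ijk}]$ corrected by $\langle\lambda^0_{ijk},\hat v^0_i\rangle$-type terms, chosen so that the third relation of~\eqref{eq:cocycledatat2rb} is exactly the consistency condition. This is a 1-gerbe with band $\IT^{d,\ast}$ in the sense of \cref{def:highgerbes}, with $t$ given by the corrected $[w^1_{ijk}]$.
\item Analogously, $[w^2_{ijkl}]$, corrected by the $\langle-,\langle-,-\rangle\rangle^\low$ terms in $\hat v^0_i$ and $\lambda^0$, defines a 2-gerbe $\hat R_\bullet\rightarrow\hat Q_\bullet$. The fourth relation of~\eqref{eq:cocycledatat2rb} is the cocycle condition on $\hat Q_\bullet$.
\end{itemize}
Fiberwise triviality of $\hat Q$ holds because, over a point $x\in M$, the correction terms depend only on the locally constant integers $\lambda^0$. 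So $(M,\hat P,\hat Q,\hat R)$ is a topological right background.

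\textbf{Step 2: type $F_2$.} I would use the gluing picture from the left case.
\begin{itemize}
\item $\hat Q$ is obtained by gluing trivial $\IT^{d,\ast}$-gerbes over $Y\times\Lambda^2\IT^{d,\ast}$. The gluing isomorphisms are trivial on $Y^{[2]}$, and on $Y^{[3]}$ they are the $\IT^{d,\ast}$-bundles over $\Lambda^2\IT^{d,\ast}$ whose Chern class is determined by $\lambda^0_{ijk}\in\IZ^d\subset H^1(\Lambda^2\IT^{d,\ast},\IZ)\otimes\IZ^{d,\ast}$, via the pairing $\iota_{\lambda^0}$. Arguing as in~\cite[Lemma 2.1.6]{Nikolaus:2018qop}, the gluing is trivial over $Y^{[2]}$, which gives $c(\hat Q)\in F_2H^3(\hat P,\IZ^{d,\ast})$. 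Its symbol $\sigma_2$ is represented by $\delta$ applied to the class of $\lambda^0$.
\item Similarly, $\hat R$ is glued from trivial 2-gerbes over $Y\times\Lambda^2\IT^{d,\ast}\times B\IT^{d,\ast}$. The 2-isomorphisms on $Y^{[3]}$ are $\sfU(1)$-bundles with class $\lambda^0_{ijk}\in\IZ^d=H^2(B\IT^{d,\ast},\IZ)$, so $c(\hat R)\in F_2H^4(\hat Q,\IZ)$. Its symbol is represented by $-[\lambda^0]$ in the summand $H^2(M,\IZ^d)$.
\item Taking $\hat y=-[\lambda^0]$ gives $\sigma_2(c(\hat Q))=-[\delta\hat y]$, which is condition~\eqref{eq:typef2right}. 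I would fix the signs by comparing with the computation in \cref{lem:d2correspondence}.
\end{itemize}

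\textbf{Step 3: converse.} Given a type $F_2$ right background, I proceed as follows.
\begin{itemize}
\item Choose a cover trivializing $\hat P$, giving $(\hat v^0,\hat\lambda^0)$.
\item By \cref{lem:d2gerbe}, choose a trivialization $\tau$ of $\hat Q$ over $\hat P_1$. The obstruction $\nu_\tau$ over $\hat P_2$ can be changed by coboundaries and by the choice of $\tau$. Condition~\eqref{eq:typef2right} lets me arrange $\nu_\tau=-\delta\hat y$ at cochain level, and I take $\lambda^0$ to be a cocycle representative of $-\hat y$.
\item The remaining freedom in the gluing of $\hat Q$ is a $\IT^{d,\ast}$-valued 2-cochain $w^1$. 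The decomposition of gerbes over $\Lambda^2\IT^{d,\ast}$ into a flat part and a Chern-class part shows that, after refinement, every such $\hat Q$ has the form of Step 1.
\item For $\hat R$, the condition $c(\hat R)\in F_2$ lets me trivialize it over $\hat Q_1$. Its $\sigma_2$-representative lies in the $\IZ^d$ summand, by the type $F_2$ assumption, and can be matched with $-\lambda^0$. The remaining data is then a $\sfU(1)$-valued cochain $w^2$, which must satisfy the fourth relation of~\eqref{eq:cocycledatat2rb} by the obstruction-theoretic description.
\end{itemize}

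\textbf{Main obstacle.} The hard part is this converse, specifically ensuring that a single integral cocycle $\lambda^0$ simultaneously represents both the symbol of $\hat Q$ and the symbol of $\hat R$. This requires the compatibility~\eqref{eq:typef2right} at the cochain level rather than only in $E_3$. I would first try to absorb the $d_2$-exact discrepancy by modifying $\tau$ using the $d_2$ formulas of \cref{lem:d2correspondence}. An alternative, if that fails, is to avoid the direct construction: \cref{cor:any_bundle_from_T2D} combined with the later existence results would supply the cocycle.
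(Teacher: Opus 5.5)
Your Steps~1 and~2 follow the paper's own argument: $\hat R_\bullet$ is assembled layer by layer from~\eqref{eq:cocycledatat2rb}, and type $F_2$ is read off from the gluing description via the analogue of Lemma~2.1.6 of~\cite{Nikolaus:2018qop}. There is one small slip. Over $Y^{[3]}$ the gluing data of the 1-gerbe $\hat Q$ are $\IT^{d,\ast}$-valued \emph{functions} $\hat v\mapsto[w^1_{ijk}-\iota_{\lambda^0_{ijk}}\hat v]$ on $\Lambda^2\IT^{d,\ast}$, not bundles; up to sign, their homotopy classes are $\delta(\lambda^0_{ijk})\in H^1(\Lambda^2\IT^{d,\ast},\IZ^{d,\ast})$. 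The real trouble is in the converse, and it is more basic than the $\lambda^0$-matching you flag.

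\textbf{The missing $E^{3,1}$ data.} In the gluing picture, arrange the 2-isomorphisms of $\hat R$ over $Y^{[3]}$ to have class $\lambda^0_{ijk}\in H^2(B\IT^{d,\ast},\IZ)$. The 3-isomorphisms over $Y^{[4]}$ are then $\sfU(1)$-valued functions on $Y^{[4]}\times\Lambda^2\IT^{d,\ast}$ (functions on $B\IT^{d,\ast}$ are constant), and these can wind along $\Lambda^2\IT^{d,\ast}$. The windings lie in $H^1(\Lambda^2\IT^{d,\ast},\IZ)=\Lambda^2\IZ^d$ and carry the $E^{3,1}$-part of $c(\hat R)$, where $E_2^{3,1}=H^3(M,\Lambda^2\IZ^d)$. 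A $\sfTBRFT_d$-cocycle contains only the windings dictated by $\lambda^0$ through the $\langle\lambda^0,\langle\lambda^0,\hat v^0\rangle\rangle^{\low}$-terms, because $\sfL=\IR/\IZ$ has no $\Lambda^2\IZ^d$-factor. Compare $\sfTBLFT_d$: there the factor $\IZ^{d,\ast}=H^1(\IT^d,\IZ)$ of $\sfL$ is exactly what absorbs the analogous windings in \cref{prop:leftbackgroundsf2}.

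So your claim that the remaining data is just a cochain $[w^2_{ijkl}]$ fails. With \cref{def:typef2} as written, the converse is in fact false for $d\geq2$. A counterexample:
\begin{itemize}
\item Take $M=S^3$, $\hat P$ and $\hat Q$ trivial, and $\hat R$ pulled back from a 2-gerbe on $S^3\times\Lambda^2\IT^{d,\ast}$ with class $\omega\times\beta$. Here $\omega$ generates $H^3(S^3,\IZ)$ and $0\neq\beta\in\Lambda^2\IZ^d$.
\item With one $0$-cell and one $3$-cell, $c(\hat R)\in F_3$. Every symbol in \cref{def:typef2} vanishes and $\hat y=0$ works, so the background is of type $F_2$.
\item By \cref{prop:span_3_groups}, $|B\sfTBRFT_d|\simeq|B\sfTBLFT_d|$. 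This space is simply connected with $\pi_3\cong\IZ^{d,\ast}$, coming from $\pi_1(\IT^{d,\ast})$.
\item Hence principal $\sfTBRFT_d$-bundles over $S^3$ are detected by $c(\hat Q)\in H^3(S^3,\IZ^{d,\ast})$. The one with trivial $\hat Q$ is the trivial bundle, whose $\hat R$ is trivial.
\end{itemize}
The paper settles the converse only by the appeal to Lemma~2.1.6 of~\cite{Nikolaus:2018qop}, and it misses the same point. A correct statement needs an additional hypothesis on the $E^{3,1}$-part of $c(\hat R)$.

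\textbf{Your two remedies.} Neither works as stated for the obstacle you did identify.
\begin{itemize}
\item Changing $\tau$ by $u\in H^0(M,\Lambda^2(\Lambda^2\IZ^d)\otimes\IZ^{d,\ast})$ shifts $\nu_\tau$ by $d_2u$, as you want. But it also shifts the splitting $s_\tau$ by $u$. The summand $H^2(M,\IZ^d)$ in which $\hat y$ must lie therefore moves, and the representative of $\sigma_2(c(\hat R))$ picks up a component $\pm u(\hat y)\in H^2(M,\Lambda^2(\Lambda^2\IZ^d))$. The two conditions are coupled through $\tau$; the coupling disappears only for $d=2$, where $\Lambda^2(\Lambda^2\IZ^2)=0$.
\item The fallback through \cref{cor:any_bundle_from_T2D} does not apply. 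That corollary starts from a $\sfTBRFT_d$-bundle, while \cref{cor:f2t2dual} only produces duals of given left backgrounds. The $S^3$ example is dual to no left background at all: $p^\ast c(R)$ and $\hat p^\ast c(\hat R)$ would have to agree while lying in different K\"unneth summands of $H^4(S^3\times\IT^d\times\Lambda^2\IT^{d,\ast}\times B\IT^{d,\ast},\IZ)$.
\end{itemize}
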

    \begin{proof}
        Consider cocycle data~\eqref{eq:cocycledatat2rb} for a principal $\sfTBRFT_d$-bundle, from which we construct the simplicial manifold $\hat R_\bullet$ whose successive coordinates describe the total space of $\hat P$, the 1-gerbe $\hat Q$ with band $\IT^{d,\ast}$ and the 2-gerbe $\hat R$. The simplices of $\hat R_\bullet$ read as 
        \begin{equation}\label{eq:lhat}
            \begin{aligned}
                \hat R_n &\coloneqq  \Big\{ ((\hat{v}^0_{i}),(w^1_{ij}),([w^2_{ijk}])) \in Y^{[n+1]} \times (\Lambda^2 \mathbb R^{d,\ast})^{n+1} \times (\mathbb R^{d,\ast})^{\binom{n+1}{2}} \times (\mathbb R/\IZ)^{\binom{n+1}{3}} 
                \\
                &\hspace{1cm}\Big|\: \: [\hat{v}^0_{i} - \hat{v}^0_{j}] \eqqcolon [\hat{v}^0_{ij}]~,
                \\
                &\hspace{1cm}\phantom{\Big|}\: \:[w^1_{ij} - w^1_{ik} + w^1_{jk}] = [w^1_{ijk} - \langle \lambda^0_{ijk}, \hat{v}^0_i \rangle]~,
                \\
                &\hspace{1cm}\phantom{\Big|}\:\: [w^2_{ijk} - w^2_{ijl} + w^2_{ikl} - w^2_{jkl}]= [w^2_{ijkl} + \langle \hat v^0_{ij},w^1_{jk} - w^1_{jl} + w^1_{kl} \rangle 
                \\
                &\hspace{1cm} \quad \quad \quad \quad \quad \quad \quad \quad \quad   + \langle \lambda^0_{jkl}, \langle \lambda^0_{ijl}, \hat{v}^0_i \rangle \rangle^\low - \langle \lambda^0_{ijk}, \langle \lambda^0_{ikl}, \hat{v}^0_i \rangle \rangle^\low ] \Big\}~,
            \end{aligned}
        \end{equation}
        where the equivalence classes are taken in the appropriate quotients $(\IR/\IZ)^i$, $i\in\IN$. We now compare with the description of $p$-gerbes from \ref{ssec:higher_gerbes}. The first relation is the Čech description of the principal $\Lambda^2\IT^{d,\ast}$-bundle $\hat P$. The next relation describes a $\IT^{d,\ast}$-gerbe $\hat Q$ over the simplicial total space of $\hat P$, and the last relation is the cocycle equation for the 2-gerbe $\hat R$ over $\hat Q$. Altogether, we obtain precisely a topological right $T_2$-duality background. 
        
        Note that $\hat{Q} \rightarrow \hat{P}$ is obtained by gluing trivial gerbes $\sigma^\ast \hat{Q}$ with band $\mathbb T^{d,\ast}$ over $Y\times \Lambda^2 \IT^{d,\ast} \cong \sigma^\ast \hat{P}$ along trivial isomorphisms $\psi_{ij}: \sigma_i^\ast \hat{Q} \rightarrow \sigma_j^\ast \hat{Q}$ over $Y^{[2]}\times \Lambda^2 \IT^{d,\ast}$, and $2$-isomorphisms $\psi_{ijk}:\psi_{jk} \circ \psi_{ij} \Rightarrow \psi_{ik}$ over $Y^{[3]}\times \Lambda^2 \IT^{d,\ast}$ given by the $\mathbb T^{d,\ast}$-valued functions obtained from $w^1_{ijk}$ and the pullbacks from $\Lambda^2 \IT^{d,\ast}$ of the contraction maps with $\lambda_{ijk} \in \IZ^d$. Similarly, $\hat{R} \rightarrow \hat{Q}$ is a gluing of trivial 2-gerbes, along trivial isomorphisms, 2-isomorphisms parameterized by the pull-back of $\lambda^0_{ijk} \in \IZ^{d} = H^2(B\mathbb T^{d,\ast}, \IZ)$ from the $B\mathbb T^{d,\ast}$ fibers, and $3$-isomorphisms parameterized by $[w^2_{ijkl}] \in C^{\infty}(Y^{[4]},\mathbb R/\IZ)$. Again, this implies as in~\cite[Lemma 2.1.6]{Nikolaus:2018qop} that this right background satisfies the conditions of \ref{def:typef2} to be of type $F_2$, and that every background satisfying those conditions arises from this construction.
    \end{proof}
    
    \subsection{\texorpdfstring{$T_2$}{T2}-duality as spans}
    
    We now come to the description of $T_2$-duality as spans of toroidal principal 3-bundles of the form
    \begin{equation}
        \begin{tikzcd}
            & \scP_\rmC\arrow[dl,"p"'] \arrow[dr,"\hat p"]
            \\
            \scP & & \hat \scP
        \end{tikzcd}
    \end{equation}
    Here, $\scP$ and $\hat \scP$ are principal $\sfTBLFT_d$- and $\sfTBRFT_d$-bundles, defining a left and a right $T_2$-duality background, respectively. Furthermore, $\scP_\rmC$ is a principal $\sfTDFT_d$-bundle containing the correspondence data for the topological $T_2$-duality relation, \ref{def:t2duality}, and, once all bundles are endowed with connections, for the differential $T_2$-duality relation, \ref{def:t2dualitydiff}.
    
    Let us start with topological $T_2$-duality. 
    \begin{proposition}\label{prop:lift_left}
        Let $\scP \rightarrow M$ be a principal $\sfTBLFT_d$-bundle. A lift of $\scP$ to a principal $\sfTDFT_d$-bundle $\scP_\rmC \rightarrow M$ along $\phi$ from~\eqref{eq:span_3_groups} determines a 1-gerbe $Q\rightarrow P$ with band $\mathbb T^{d,\ast}$. 
    \end{proposition}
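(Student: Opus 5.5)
The plan is to extract the gerbe $Q$ directly from the Čech data of the lift, in the same way the proofs of \cref{prop:leftbackgroundsf2,prop:rightbackgroundsf2} build simplicial manifolds from cocycles. A lift $\scP_\rmC$ of $\scP$ along $\phi$ is given by cocycle data~\eqref{eq:cocycledatat2d} whose image under the strict homomorphism $\phi$ is the data~\eqref{eq:cocycledatat2b} of $\scP$. Concretely, the lift adds the functions $\hat v^0_{ij}:Y^{[2]}\rightarrow\Lambda^2\IR^{d,\ast}$ with $\delta\hat v^0=-\hat\lambda^0$ and $v^1_{ijk}:Y^{[3]}\rightarrow\IR^{d,\ast}$ satisfying the third relation of~\eqref{eq:cocycleT2D}; the $v^2_{ijkl}$ are unchanged. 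The pair $(v^0,\lambda^0)$ still defines the principal $\IT^d$-bundle $P$ as a simplicial manifold $P_\bullet$ with the resolution coming from $\IR^d\rightarrow\IT^d$.

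On this resolution, i.e.\ on points $((y_i),(v^0_i))$ with $[v^0_i-v^0_j]=[v^0_{ij}]$, I would define a simplicial manifold $Q_\bullet$ with additional coordinates $([w_{ij}])\in(\IT^{d,\ast})^{\binom{n+1}{2}}$. These are subject to the cocycle relation
\begin{equation*}
[w_{ij}-w_{ik}+w_{jk}]=[v^1_{ijk}+\langle v^0_i-v^0_j-v^0_{ij},\,\iota_{v^0_j}\hat\lambda^0_{jk\cdot}\rangle+\text{correction}]\,,
\end{equation*}
with the precise form of the right-hand side to be fixed as follows. Modelled on the first line of~\eqref{eq:homr}, I would take the $\IT^{d,\ast}$-valued 2-cochain on $P\times_M Y^{[3]}$
\begin{equation*}
t_{ijk}:=[\,v^1_{ijk}-\iota_{v^0_i}\hat v^0_{jk}\,]\quad\mbox{or a variant such as}\quad [\,v^1_{ijk}+\langle v^0_{ij},\hat v^0_{jk}\rangle-\iota_{v^0_i}\hat v^0_{jk}\,]\,,
\end{equation*}
where $v^0_i\in\IR^d$ is the fiber coordinate. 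I would then check $\delta t=0$ modulo $\IZ^{d,\ast}$. In this check, the term $\langle v^0_{ij},\hat\lambda^0_{jkl}\rangle_0-\lambda^1_{ijkl}$ from~\eqref{eq:cocycleT2D} should combine with $\delta(\iota_{v^0}\hat v^0)$, using $\delta\hat v^0=-\hat\lambda^0$ and $v^0_i-v^0_j-v^0_{ij}\in\IZ^d$. What remains is $\lambda^1$ together with integer pairings $\langle\IZ^d,\Lambda^2\IZ^{d,\ast}\rangle\in\IZ^{d,\ast}$, all of which vanish in $\IT^{d,\ast}$. By \ref{def:highgerbes} with $p=1$ and $\sfT=\IT^{d,\ast}$, this cocycle yields the 1-gerbe $Q_\bullet\rightarrow P_\bullet$, namely the pullback of $EB\IT^{d,\ast}$ along $t$.

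I expect the main obstacle to be fixing the exact correction terms in $t_{ijk}$ so that the cocycle condition holds on the nose modulo $\IZ^{d,\ast}$ on the resolution of $P$, and not merely on $Y^{[\bullet]}$. If the naive guess fails, I would first try absorbing the offending terms via the integer differences $v^0_i-v^0_j-v^0_{ij}$, as in the definition of $R_\bullet$ in the proof of \cref{prop:leftbackgroundsf2}.

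Finally, I would note two consistency checks. First, changing the lift by a coboundary changes $t$ by a coboundary, so $Q$ is well defined up to isomorphism. Second, on each fiber $P_x\cong\IT^d$ the gerbe restricts to the one with class $\hat\lambda^0$ paired into $H^3$ as expected. This second check is not needed for the statement itself, though.
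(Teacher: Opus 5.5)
You follow the same strategy as the paper, whose proof consists only of writing down $Q_\bullet$ over the resolution $((y_i),(v^0_i))$ of $P$ with extra coordinates in $\IT^{d,\ast}$. Equivalently, it pulls back $EB\IT^{d,\ast}$ along an explicit 2-cocycle. That cocycle is the entire content of the statement, and you leave it unfixed. \textbf{Both of your explicit candidates fail.}

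For $u_{ijk}=\langle v^0_i,\hat v^0_{jk}\rangle$, the relation $\delta\hat v^0=-\hat\lambda^0$ gives $(\delta u)_{ijkl}=\langle v^0_j,\hat v^0_{kl}\rangle+\langle v^0_i,\hat\lambda^0_{jkl}\rangle$. Combined with the third relation of~\eqref{eq:cocycleT2D}, this yields $\delta[v^1-u]=[\lambda^1_{ijkl}-\langle v^0_j,\hat v^0_{kl}\rangle-\langle v^0_i+v^0_{ij},\hat\lambda^0_{jkl}\rangle]$. For your variant with $+\langle v^0_{ij},\hat v^0_{jk}\rangle$ you get $[\lambda^1_{ijkl}-\langle \lambda^0_{ijk}+v^0_j,\hat v^0_{kl}\rangle-\langle v^0_i,\hat\lambda^0_{jkl}\rangle]$. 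Neither residual is integral. So your claim that ``what remains is $\lambda^1$ together with integer pairings'' is false for the formulas you wrote.

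The underlying reason is that $\langle v^0_i,\hat v^0_{jk}\rangle$ is not a cup product. Hence $\delta$ does not distribute over it, and $\delta\hat v^0=-\hat\lambda^0$ cannot be used the way you intend. Your first displayed relation is also ill-typed: $\langle v^0_i-v^0_j-v^0_{ij},\iota_{v^0_j}\hat\lambda^0_{jk\cdot}\rangle$ is real-valued, not $\IR^{d,\ast}$-valued.

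The missing idea is to pair the fiber coordinate with the \emph{integral} cocycle $\hat\lambda^0$, not with the real transition functions $\hat v^0$. The paper takes $t_{ijk}=[v^1_{ijk}-\langle v^0_i,\hat\lambda^0_{ijk}\rangle]$, i.e.\ $Q_n$ is cut out by $[v^1_{ij}-v^1_{ik}+v^1_{jk}]=[v^1_{ijk}-\langle v^0_i,\hat\lambda^0_{ijk}\rangle]$. This is the cup product of the 0-cochain $v^0$ with the 2-cocycle $\hat\lambda^0$. Using $\delta\hat\lambda^0=0$ and~\eqref{eq:cocycleT2D}, one finds $(\delta t)_{ijkl}=[\lambda^1_{ijkl}+\langle v^0_i-v^0_j-v^0_{ij},\hat\lambda^0_{jkl}\rangle]$, which vanishes in $\IT^{d,\ast}$ because $v^0_i-v^0_j-v^0_{ij}\in\IZ^d$.

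If you prefer to keep $\hat v^0$, the cochain $[v^1_{ijk}+\langle v^0_i-v^0_j,\hat v^0_{jk}\rangle]$ also works. It differs from the paper's by the coboundary of $\langle v^0_i,\hat v^0_{ij}\rangle$. This is where your fallback of inserting the integer differences would lead, but only after replacing the term $-v^0_i$ by $v^0_i-v^0_j-v^0_{ij}$, not by adding a correction to your guess.

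Your second consistency check is also off. For normalized cocycles all the data vanish over a single point $x\in M$, so $Q|_{P_x}$ is trivial; $\hat\lambda^0$ is base data and is not visible on a single fiber.
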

    \begin{proof}
        If the cocycle data of $\scP$ arises from cocycle data~\eqref{eq:cocycledatat2d} for a principal $\sfTDFT_d$-bundle, then the corresponding 1-gerbe $Q$ is given by the simplicial manifold $Q_\bullet$ with
        \begin{align}
            \begin{split}
                Q_n \coloneqq  \Big\{ &((v^0_{i}),(v^1_{ij}])) \in Y^{[n+1]}\times (\mathbb R^d)^{n+1} \times (\mathbb T^{d,\ast})^{\binom{n+1}{2}} 
                \\
                &\Big|\: \: [v^0_{i} - v^0_{j}] = [v^0_{ij}]~,
                \\
                &\phantom{\Big|}\:\: [v^1_{ij} - v^1_{ik} + v^1_{jk}] = [v^1_{ijk} - \langle v^0_i,\hat{\lambda}^0_{ijk} \rangle]\Big\}~.
            \end{split}
        \end{align}
    \end{proof}
    \begin{remark}\label{rk:explainQ}
        Choose a splitting $\mathbb T^{d} = \mathbb T^1 \times \mathbb T^{d-1}$, and use the corresponding splittings $\mathbb T^{d,\ast} = \mathbb T^{\ast} \times \mathbb T^{d-1,\ast}$, $\Lambda^2 \mathbb T^{d,\ast} = \Lambda^2 \mathbb T^{d-1,\ast} \times \mathbb T^{\ast} \otimes \mathbb T^{d-1,\ast}$. The projection of $\langle v_i, \hat{\lambda}^0_{ijk} \rangle \in \mathbb T^{d,\ast}$ onto $\mathbb T^{\ast}$ is $\langle v_i^{\rm red}, \hat{\lambda}^{0,red}_{ijk} \rangle$, for $v^{\rm red}_i$ the projection of $v_i$ onto $\mathbb T^{d-1,\ast}$, and $\hat{\lambda}^{0,red}_{ijk}$ the projection of $\hat{\lambda}^0_{ijk}$ onto $\IZ^{\ast} \otimes \IZ^{d-1,\ast}$. In other words, the $\mathbb T^{\ast}$-component of the 1-gerbe $Q$ descends to the $\mathbb T^{d-1}$ bundle associated to $P \rightarrow M$, and is constructed from the gluing 2-isomorphisms of $R$ that are non-trivial along those $\mathbb T^{d-1}$ fibers. This indicates that the 1-gerbe $Q$ can be regarded as the collection of all fiber integrations of the 2-gerbe $R$, along the $d$ different torus fibers of $P \rightarrow M$; the fact that a lift to a $\sfTDFT_d$-bundle is necessary to present it is just a consequence of the non-trivial topology of $P \rightarrow M$.
    \end{remark}
    
    We can now show that $T_2$-duality can indeed be encoded as spans of toroidal principal 3-bundles.
    \begin{theorem}\label{thm:spans_are_t2_duality}
        \label{it:rightbackgrounds4} Let $\scP_\rmC\rightarrow M$ be a principal $\sfTDFT_d$-bundle, and let $\scP$ and $\hat \scP$ be the principal $\sfTBLFT_d$- and $\sfTBRFT_d$-bundles induced by the maps $\phi$ and $\hat \phi$ from~\eqref{eq:span_3_groups}. Then the left and right $T_2$-duality backgrounds defined by $\scP$ and $\hat \scP$ via \ref{prop:leftbackgroundsf2} and \ref{prop:rightbackgroundsf2} are topologically $T_2$-dual.
    \end{theorem}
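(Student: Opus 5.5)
We will work entirely at the level of simplicial manifolds and Čech data. Fix cocycle data \eqref{eq:cocycledatat2d} for $\scP_\rmC$ subordinate to $\sigma:Y\rightarrow M$. Applying $B\phi$ and $B\hat\phi$ gives cocycles for $\scP$ and $\hat\scP$. Via the proofs of \ref{prop:leftbackgroundsf2} and \ref{prop:rightbackgroundsf2}, these give explicit simplicial manifolds: $P_\bullet$ and $R_\bullet$ on the left, and $\hat P_\bullet$, $\hat Q_\bullet$, $\hat R_\bullet$ (see \eqref{eq:lhat}) on the right. Both sides are built over the common resolution $Y$. The correspondence space $P\times_M\hat Q$ is therefore modelled by the simplicial manifold whose $n$-simplices are $Y^{[n+1]}$ together with the coordinates $(v^0_i)$, $(\hat v^0_i)$ and $(w^1_{ij})$, subject to the constraints of both sides.

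The first step is to construct the isomorphism $\psi$ of \cref{itm:t2phi} of \ref{def:t2duality}. Over the correspondence space, the pullbacks of $R$ and $\hat R$ are 2-gerbes whose fiber coordinates $[v^2_{ijk}]$ and $[w^2_{ijk}]$ obey the cocycle relations of \ref{prop:leftbackgroundsf2} and \eqref{eq:lhat}. I propose the map
\begin{equation*}
[w^2_{ijk}] = [v^2_{ijk} + \Theta_{ijk}]~.
\end{equation*}
Here $\Theta_{ijk}$ is a polynomial in $v^0_i$, $\hat v^0_i$, $w^1_{ij}$ and the cocycle data. Its leading terms mimic the Poincaré gerbe transition data: $\langle v^0_i, w^1_{jk}\rangle$, terms of the form $\langle v^0_i,\langle v^0_{ij},\hat v^0_{jk}\rangle\rangle^\low$, and the lattice corrections appearing in \eqref{eq:homr}. $\Theta$ is fixed by requiring that the simplicial coboundary of $\Theta$ equals the difference of the two right-hand sides. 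These right-hand sides are related precisely by the formulas \eqref{eq:homr} defining $B\hat\phi$. Concretely, one substitutes $w^1_{ijk}$ and $w^2_{ijkl}$ from \eqref{eq:homr}, and $v^1_{ij}$ in terms of $w^1_{ij}$ (matching the gerbe $Q$ of \ref{prop:lift_left} with the pullback of $\hat Q$). One then checks the identity modulo $\IZ$, using \eqref{eq:cocycleT2D}. This is the core computation, and I expect it to be the main obstacle. The $\langle-,\langle-,-\rangle\rangle^\low$ terms only satisfy cocycle-type identities up to integers, so tracking these integer corrections carefully is needed. I would first do the case $\hat\lambda^0=0$, $\lambda^0=0$, where everything is linear, and then add the lattice terms.

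The second step is the fiberwise condition \cref{itm:2:t2dualityrel}. Restricting to a point $x\in M$, choose $y\in Y$ over $x$ and use the normalized cocycle, for which all transition data on $\{y\}^{[n]}$ are trivial. This gives canonical trivializations $\chi_x$ and $\hat\chi_x$: on this restriction the relations of $R$ and $\hat R$ reduce to those of trivial 2-gerbes over $P_x\cong\IT^d$ and $\hat Q_x\cong\Lambda^2\IT^{d,\ast}\times B\IT^{d,\ast}$. The composite $\chi_x\circ\psi_x\circ\hat\chi_x^{-1}$ is then the automorphism given by $\Theta$ restricted to the fiber. By \ref{rk:isolgerbes}, this automorphism is a 1-gerbe, and its transition data is exactly $\Theta_{ij}$ on the fiber resolution $Z_\bullet\times\caM(\IT^d)_\bullet$. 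Comparing with \eqref{eq:poincaredef}, the cocycle condition for $v^2_{i_0i_1}$ there, namely $\langle\lambda,w^1\rangle+\langle\lambda,\langle\lambda,\hat v^0\rangle\rangle^\low$, should coincide with the fiber restriction of the relation for $\Theta$. If it coincides only up to a coboundary, that coboundary supplies the required isomorphism to the Poincaré gerbe.

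As a consistency check, the resulting classes should reproduce the identities \eqref{eq:momentumexchange} of \cref{th:existencet2duals}. Alternatively, if the direct construction of $\Theta$ proves unwieldy, one can argue that $B\hat\phi$ itself encodes $\psi$. The coordinate change \eqref{eq:homr} in the degree-2 and degree-3 components is exactly the change of trivialization of the universal 2-gerbe over the classifying space $B\sfTDFT_d$. Pulling back this universal isomorphism along the classifying map of $\scP_\rmC$ then yields $\psi$.
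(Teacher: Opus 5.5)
Your plan is the paper's proof. The paper also defines an explicit simplicial isomorphism $[w^2_{ijk}]=[v^2_{ijk}+\Theta_{ijk}]$ between the pullbacks of $R$ and $\hat R$ to $P\times_M\hat Q$, then uses a trivialization of the $\sfTDFT_d$-bundle at $x$ to kill all cocycle data and compares the resulting fiber cocycle with the Poincar\'e cocycle up to a coboundary. The one difference is that the paper writes $\Theta$ down in closed form, and this is exactly the solvability of $\delta\Theta=\ldots$ that your proposal asserts but leaves unestablished; the formula uses $v^0_i$, $\hat v^0_i$, the fiber coordinates $w^1_{jk}$ of $\hat Q$ and cocycle data such as $v^1_{ijk}$, with no $v^1_{ij}$ to substitute, since those do not exist on $P\times_M\hat Q$. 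On a fiber this $\Theta$ becomes $[\langle v^0_j-v^0_i,w^1_{jk}\rangle-\langle v^0_i,\langle v^0_j-v^0_i,\hat v^0_k-\hat v^0_j\rangle\rangle^\low]$, which differs from the Poincar\'e cocycle $[\langle v^0_j-v^0_i,w^1_{jk}\rangle-\langle v^0_j-v^0_i,\langle v^0_k-v^0_j,\hat v^0_k\rangle\rangle^\low]$ by the coboundary of $[\langle v^0_i,\langle v^0_j-v^0_i,\hat v^0_j\rangle\rangle^\low]$, so your fallback via $B\hat\phi$ is not needed.
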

    \begin{proof}
        Let $R_\bullet$ and $\hat R_\bullet$ be the two simplicial 2-gerbes obtained from $\scP$ and $\hat \scP$ through \ref{prop:leftbackgroundsf2,prop:rightbackgroundsf2}. If $\hat \scP$ is obtained from $\scP_\rmC$ through the weak homomorphism defined in~\eqref{eq:homr}, then we have
        \begin{equation}\label{eq:hatR-n}
            \begin{aligned}
                \hat R_n \coloneqq  \Big\{ &(\{\hat{v}^0_{i}\},\{w^1_{ij}\},\{[w^2_{ijk}]\}) \in Y^{[n+1]}\times (\Lambda^2 \mathbb R^{d,\ast})^{n+1} \times (\mathbb R^{d,\ast})^{\binom{n+1}{2}} \times (\mathbb R/\IZ)^{\binom{n+1}{3}} 
                \\
                &\Big| \:\: [\hat{v}^0_{i} - \hat{v}^0_{j}] = [\hat{v}^0_{ij}]~,
                \\
                &\phantom{\Big|}\: \:[w^1_{ij} - w^1_{ik} + w^1_{jk}] = [v^1_{ijk} + \langle v^0_{ij},\hat{v}^0_{jk} \rangle - \langle \lambda^0_{ijk}, \hat{v}^0_k \rangle]~,
                \\
                &\phantom{\Big|}\:\: [w^2_{ijk} - w^2_{ijl} + w^2_{ikl} - w^2_{jkl}]= [v^2_{ijkl} + \langle v^0_{ij},w^1_{jk} - w^1_{jl} + w^1_{kl}-v^1_{jkl} \rangle
                \\
                & \quad \quad \quad \quad \quad \quad \quad    + \langle \lambda^0_{jkl}, \langle \lambda^0_{ijl}, \hat{v}^0_l \rangle \rangle^\low - \langle \lambda^0_{ijk}, \langle \lambda^0_{ikl}, \hat{v}^0_l \rangle \rangle^\low 
                \\
                &  \quad \quad \quad \quad \quad \quad \quad    + \langle v^0_{jk}, \langle v^0_{ij}, \hat{v}^0_{kl} \rangle \rangle^\low + \langle \lambda^0_{ijk}, \langle v^0_{ik}, \hat{v}^0_{kl} \rangle \rangle^\low ] \Big\}~.
            \end{aligned}
        \end{equation}
        
        To show $T_2$-duality, we first need to construct an isomorphism $\psi:p^\ast R_\bullet \rightarrow \hat p^\ast\hat R_\bullet$. This isomorphism sets
        \begin{align}
            \begin{split}
                [w^2_{ijk}] &\coloneqq  [v^2_{ijk} + \langle v^0_i, v^1_{ijk} \rangle - \langle v^0_i - v^0_j, w^1_{jk} \rangle - \langle v^0_{ij}, \langle v^0_{jk}, \hat{v}^0_k \rangle \rangle^\low - \langle v^0_{ij}, \langle v^0_{j}, \hat{v}^0_j - \hat{v}^0_k - \hat{v}^0_{jk} \rangle \rangle^\low 
                \\
                & \quad \: - \langle v^0_{i}, \langle v^0_i - v^0_{j} - v^0_{ij}, \hat{v}^0_j - \hat{v}^0_k - \hat{v}^0_{jk} \rangle \rangle^\low + \langle v^0_{i}, \langle v^0_{ij}, \hat{v}^0_{jk} \rangle \rangle + \langle \lambda^0_{ijk}, \langle v^0_i, \hat{v}^0_k \rangle \rangle 
                \\
                & \quad \: - \langle \lambda^0_{ijk}, \langle v^0_{ik}, \hat{v}^0_k \rangle \rangle^\low + \langle v^0_i, \langle v^0_i, \hat{\lambda}^0_{ijk} \rangle \rangle^\low]~.
            \end{split}
        \end{align}
        To verify the fiberwise Poincar\'e condition, fix $x \in M$. A trivialization of $\scP_\rmC|_{x}$ induces trivializations of $\scP|_{P_x}$ and $\hat\scP|_{\hat Q_x}$, which means we may restrict to the case in which all the cocycle data of $\scP_\rmC|_{x}$ vanishes. In that case the isomorphism is
        \begin{equation}\label{eq:isosimplified}
            [w^2_{ijk} - v^2_{ijk}] = [\langle v^0_j - v^0_i,w^1_{jk} \rangle - \langle v^0_i, \langle v^0_j-v^0_i, \hat{v}^0_k-\hat{v}^0_j \rangle \rangle^\low]~.
        \end{equation}
        The right-hand side of~\eqref{eq:isosimplified} is cocycle data for a gerbe over $P_x \times \hat Q_x = \mathbb T^d \times \mathcal{M}(\mathbb T^d)$, while the cocycle data for the Poincar\'e gerbe~\eqref{eq:poincaredef} is, in the notation here,
        \begin{equation}\label{eq:poincarecoycle}
            [\langle v^0_j - v^0_i,w^1_{jk} \rangle - \langle v^0_j-v^0_i, \langle v^0_k-v^0_j, \hat{v}^0_k \rangle \rangle^\low]~.
        \end{equation}
        An isomorphism between~\eqref{eq:isosimplified} and~\eqref{eq:poincarecoycle} is given by $[\langle v^0_i, \langle v^0_j - v^0_i, \hat{v}^0_j \rangle \rangle^\low]$. 
    \end{proof}
    
    We then have the following immediate and important corollary.
    \begin{corollary}\label{cor:f2t2dual}
        A topological left $T_2$-duality background $(M,P,R)$ of type $F_2$ always admits a $T_2$-dual.
    \end{corollary}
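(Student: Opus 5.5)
The plan is to chain the earlier results. Given a topological left $T_2$-duality background $(M,P,R)$ of type $F_2$, we first invoke \cref{prop:leftbackgroundsf2}: since $c(R)\in F_2H^4(P,\IZ)$, there is a principal $\sfTBLFT_d$-bundle $\scP\rightarrow M$ whose associated left background, via the construction in that proof, is equivalent to $(M,P,R)$. Concretely, we choose a surjective submersion $\sigma:Y\rightarrow M$ fine enough that $P$ trivializes over $Y$. The gluing data of $R$ then consist of trivial isomorphisms over $Y^{[2]}\times\IT^d$, of $2$-isomorphisms over $Y^{[3]}\times\IT^d$ classified by $\hat\lambda^0_{ijk}\in\Lambda^2\IZ^{d,\ast}=H^2(\IT^d,\IZ)$, and of $3$-isomorphisms encoded by $(\lambda^1_{ijkl},[v^2_{ijkl}])$. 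Together these give cocycle data~\eqref{eq:cocycledatat2b}.

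Next, we apply \cref{cor:any_bundle_from_T2D}, which rests on \cref{prop:span_3_groups}. Because $|B\phi|$ is a weak homotopy equivalence, $\scP$ lifts, up to isomorphism, to a principal $\sfTDFT_d$-bundle $\scP_\rmC$ with $\phi_\ast\scP_\rmC\cong\scP$. We can make this lift explicit, which avoids purely homotopy-theoretic existence arguments. Over a refinement of $Y$, we lift $\hat\lambda^0_{ijk}$ to $\hat v^0_{ij}$ with $\hat v^0_{ij}-\hat v^0_{ik}+\hat v^0_{jk}=-\hat\lambda^0_{ijk}$; this is possible because $\hat\lambda^0$ is a \v{C}ech cocycle with values in the discrete lattice $\Lambda^2\IZ^{d,\ast}\subset\Lambda^2\IR^{d,\ast}$, and fine sheaves are acyclic. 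We then solve for $v^1_{ijk}$ in the third relation of~\eqref{eq:cocycleT2D}. Its obstruction is a \v{C}ech $3$-cocycle valued in $\IR^{d,\ast}$, which is exact again by fineness, after possibly modifying $\lambda^1$ by a coboundary. Finally, we check that the last relation of~\eqref{eq:cocycleT2D} coincides with that of $\sfTBLFT_d$, so $[v^2]$ can be kept unchanged.

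We then apply \cref{thm:spans_are_t2_duality} to $\scP_\rmC$. The induced $\sfTBRFT_d$-bundle $\hat\scP=\hat\phi_\ast\scP_\rmC$ yields, by \cref{prop:rightbackgroundsf2}, a right background $(M,\hat P,\hat Q,\hat R)$, and the theorem shows that it is topologically $T_2$-dual to the left background defined by $\phi_\ast\scP_\rmC\cong\scP$. It remains to transport the duality along the equivalence $\scP\cong\phi_\ast\scP_\rmC$ and the equivalence between $(M,P,R)$ and the background of $\scP$. An isomorphism of left backgrounds induces an isomorphism of the pulled-back $2$-gerbes over $P\times_M\hat Q$. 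Composing this with $\psi$ preserves condition~\ref{itm:2:t2dualityrel} of \ref{def:t2duality}, since the fiberwise trivializations $\chi_x$ can be transported along the same isomorphism.

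The main obstacle is this last compatibility step. The relevant equivalences are defined up to refinement of covers and higher coboundaries, whereas \ref{def:t2duality} is phrased with isomorphisms covering maps $r$ of resolutions. To handle this, we would pass to a common refinement and verify that the coboundary data from \ref{ssec:toroidal-3-bundles} act on $R_\bullet$ by isomorphisms of simplicial $2$-gerbes, using the remark after the definition of isomorphisms, which allows refinements. The explicit lifting step is more routine, because the only nontrivial obstructions are lattice-valued and are already prescribed by the type $F_2$ data.
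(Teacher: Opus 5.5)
Your proposal is correct and follows the paper's proof. Both realize $(M,P,R)$ as a principal $\sfTBLFT_d$-bundle via \cref{prop:leftbackgroundsf2}, lift it along $\phi$ by \cref{cor:any_bundle_from_T2D}, push the lift forward along $\hat\phi$, and conclude with \cref{thm:spans_are_t2_duality}; your explicit lift (choosing $\hat v^0_{ij}$ with $\delta\hat v^0=-\hat\lambda^0$ and solving for $v^1_{ijk}$ by fineness, with no change of $\lambda^1$ actually needed since $\langle v^0_{ij},\hat\lambda^0_{jkl}\rangle-\lambda^1_{ijkl}$ is already a cocycle by the $\sfTBLFT_d$ relation for $\lambda^1$) and your transport of the duality along the equivalences only make explicit what the paper leaves implicit.
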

    \begin{proof}
        The $T_2$-dual is constructed as follows. Write $(M,P,R)$ as a principal $\sfTBLFT_d$-bundle $\scP$ over $M$ via \ref{prop:leftbackgroundsf2}. By \ref{cor:any_bundle_from_T2D}, we can lift this bundle to a principal $\sfTDFT_d$-bundle. Finally, construct the associated principal $\sfTBRFT_d$-bundle by post-composing the Čech cocycles with the map $B\hat \phi$ induced by $\hat \phi$ from~\eqref{eq:span_3_groups}. The corresponding right $T_2$-duality background obtained via \ref{prop:rightbackgroundsf2} is then a $T_2$-dual of the original background by \ref{thm:spans_are_t2_duality}.
    \end{proof}
    
    Let us now come to the differential refinement. The required cocycle data are obtained by truncating\footnote{Concretely, set all components $g_i$, $g_{ij}$ to $\unit$.} the affine differential cocycle data of \ref{app:adjustment} to the non-affine case. A connection on a principal $\sfTDFT_d$-bundle $\scP_\rmC$ over $M$ with Čech cocycles~\eqref{eq:cocycledatat2d} then consists of the following data:
    \begin{subequations}\label{eq:cocycledatacont2d}     
        \begin{equation}
            \begin{aligned}
                (A_i,\hat{A}_i) &\in \Omega^1(Y,\mathbb R^d) \times \Omega^1(Y,\Lambda^2\mathbb R^{d,\ast})~,
                \\
                B_i &\in \Omega^2(Y,\mathbb R^{d,\ast})
                ~,~~~
                &\Lambda_{ij} &\in \Omega^1(Y^{[2]},\mathbb R^{d,\ast})~,
                \\
                C_i &\in \Omega^3(Y,\mathbb R)
                ~,~~~
                &\Sigma_{ij} &\in \Omega^2(Y^{[2]},\mathbb R)
                ~,~~~
                &\Xi_{ijk} &\in \Omega^1(Y^{[3]},\mathbb R)~,
            \end{aligned}
        \end{equation}
        satisfying the relations
        \begin{align}
            \begin{split}
                A_j &= A_i+\rmd v^0_{ij} ~,
                \\
                \hat{A}_j &= \hat{A}_i+\rmd\hat{v}^0_{ij} ~,
                \\
                B_j &= B_i +\rmd\Lambda_{ij} + \langle A_i , \rmd\hat{v}^0_{ij} \rangle_0~,
                \\
                C_{j} &=C_i + \rmd\Sigma_{ij} + \langle \rmd v^0_{ij} , B_j \rangle_1 + \langle A_i ,  \langle \rmd v^0_{ij}, \hat A_i \rangle_0 \rangle_1^\low~,
                \\
                \Lambda_{ij} - \Lambda_{ik} + \Lambda_{jk} &= \rmd v^1_{ijk} + \langle \rmd v^0_{ij}, \hat{v}^0_{jk} \rangle_0~,
                \\
                \Sigma_{ij} - \Sigma_{ik} + \Sigma_{jk} &= \rmd\Xi_{ijk} - \langle \rmd v^0_{ij} , \Lambda_{jk} \rangle_1 + \langle \rmd v^0_{ij} , \langle \rmd v^0_{ij}, \hat{v}^0_{jk} \rangle_0 \rangle_1^\low~,
                \\
                \Xi_{ijk} - \Xi_{ijl} + \Xi_{ikl} - \Xi_{jkl} &= \rmd v^2_{ijkl} - \langle \rmd v^0_{ij}, v^1_{jkl} \rangle_1 + \langle v^0_{ij}, \langle \rmd v^0_{ij}, \hat{\lambda}^0_{jkl} \rangle_0 \rangle_1^\low
            \end{split}
        \end{align}
        for all $ij\in Y^{[2]}$, $ijk\in Y^{[3]}$, and $ijkl\in Y^{[4]}$.
    \end{subequations}
    The curvature decomposes into the forms
    \begin{equation}
        \begin{aligned}
            (F,\hat F) &\coloneqq  (\rmd A_i,\rmd \hat A_i)~~~\in~~\Omega^2(M,\mathbb R^d\times \Lambda^2\mathbb R^{d,\ast})~,
            \\
            H_0 &\coloneqq  \rmd B_i - \langle F, \hat{A}_i \rangle_0~~~\in~~\Omega^3(M,\mathbb R^{d,\ast})~,
            \\
            G_0 &\coloneqq  \rmd C_i + \langle A_i , H_0 \rangle_1+\langle A_i,\langle A_i,\hat F\rangle_0\rangle_1^\low~~~\in~~\Omega^4(M,\mathbb R)~,
        \end{aligned}
    \end{equation}
    and the Bianchi identities read as
    \begin{align}
        \begin{split}
            \rmd F = \rmd \hat{F} &= 0~,\\
            \rmd H_0 + \langle F , \hat{F} \rangle_0 &= 0~,\\
            \rmd G_0 - \langle F , H_0 \rangle_1 &= 0~.
        \end{split}
    \end{align}
    
    We then have the following.
    \begin{proposition}\label{prop:connection_leftbackgroundsf2}
        Let $\scP\rightarrow M$ be a principal $\sfTBLFT_d$-bundle lifted to a principal $\sfTDFT_d$-bundle $\scP_\rmC \rightarrow M$ as in \ref{prop:lift_left}. Let $(P,R)$ be the left $T_2$-duality background described by $\scP$ by \ref{prop:leftbackgroundsf2} and let $Q$ be the $\IT^{d,\ast}$-gerbe determined by the lift as in \ref{prop:lift_left}.
        A connection on $\scP_\rmC$ of curvature $(F,\hat{F},H_0,G_0)$ induces the following connection data:
        \begin{myenumerate}
            \item a connection $A$ on $P \rightarrow M$ of curvature $F$;
            \item a connection $(B,\Lambda)$ on $Q \rightarrow P$ of curvature $H_0 + \langle A , \hat{F} \rangle_0$;
            \item a connection $(C,\Sigma,\Xi)$ on $R \rightarrow P$ of curvature $G_0 - \langle A , H_0 \rangle_1 - \langle A , \langle A , \hat{F} \rangle_0 \rangle_1^\low$.
        \end{myenumerate}
    \end{proposition}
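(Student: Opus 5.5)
The plan is to construct the connection data explicitly on the simplicial manifolds $P_\bullet$, $Q_\bullet$ and $R_\bullet$ from the local forms in~\eqref{eq:cocycledatacont2d}. We then verify the relations of \ref{def:connection} degree by degree, in the same spirit as the topological constructions in the proofs of \ref{prop:leftbackgroundsf2} and \ref{prop:lift_left}. On these simplicial manifolds we have fiber coordinates $v^0_i\in\IR^d$ with $[v^0_i-v^0_j]=[v^0_{ij}]$. Throughout, we use that the transition relations of~\eqref{eq:cocycledatacont2d} hold on $Y^{[n]}$ and that the simplicial differential $\delta$ is the alternating sum of pullbacks.

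First, for $P$ we set $\mathcal A_i := A_i - \rmd v^0_i$ on $P_0$. Since $A_j=A_i+\rmd v^0_{ij}$ and $\rmd(v^0_i-v^0_j)=\rmd v^0_{ij}$, we get $\delta\mathcal A=0$, so $\mathcal A$ is a connection on $P\to M$ with curvature $\rmd\mathcal A=F$. (Whether $A_i-\rmd v^0_i$ or $A_i+\rmd v^0_i$ is the right combination depends on the orientation of the fiber coordinate; it is fixed by the first relation defining $R_n$.)

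Second, for $Q$ we take ansatz forms $\mathcal B_i := B_i + \langle v^0_i,\rmd\hat A_i\rangle_0 + (\text{correction})\in\Omega^2(P_0)$ and $\mathcal L_{ij} := \Lambda_{ij}-\rmd v^1_{ij}+\langle v^0_i,\rmd\hat v^0_{ij}\rangle_0+\ldots$ on $Q_1$. The coefficients are fixed by demanding the relations \eqref{eq:A-relations} for $p=1$. The condition $\delta\mathcal L=0$ is checked against the relation for $\Lambda_{ij}-\Lambda_{ik}+\Lambda_{jk}$, using the defining relation of $Q_n$ from \ref{prop:lift_left}. The condition $\delta\mathcal B+\rmd\mathcal L=F_2$ is checked against $B_j-B_i$ in~\eqref{eq:cocycledatacont2d}. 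Finally, $\rmd\mathcal B$ is computed with $\rmd B_i=H_0+\langle F,\hat A_i\rangle_0$. After adding the exact correction $\rmd\langle v^0_i,\hat A_i\rangle_0$, the result should be $H_0+\langle\mathcal A,\hat F\rangle_0$ up to the sign conventions. This is the advertised curvature $H_0+\langle A,\hat F\rangle_0$, read as a form on $P$.

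Third, for $R$ we do the same one degree higher. We set
\[
\mathcal C_i := C_i + \langle v^0_i, B_i\rangle_1 + \langle v^0_i,\langle v^0_i,\hat F\rangle_0\rangle_1^\low+\ldots ,
\]
mirroring the terms in the definition of $G_0$. The forms $\mathcal S_{ij}$ and $\mathcal X_{ijk}$ are built from $\Sigma_{ij}$, $\Xi_{ijk}$, $\rmd v^2_{ijk}$ and contractions of $v^0_i$ with $\Lambda$ and $v^1$. The three transition relations for $\Sigma$ and $\Xi$ in~\eqref{eq:cocycledatacont2d}, together with the cocycle relation defining $R_n$ in \ref{prop:leftbackgroundsf2}, are designed so that $\delta\mathcal X=0$ and $\delta\mathcal S+\rmd\mathcal X$ descends to $M_1$. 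The curvature is then $\rmd\mathcal C_i$, rewritten with the definition of $G_0$, giving $G_0-\langle A,H_0\rangle_1-\langle A,\langle A,\hat F\rangle_0\rangle_1^\low$. Here the lower curvature components $F_2,F_3,F_4$ on $M_\bullet$ are read off as what remains after the cancellations.

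The main obstacle is this third step. The nonlinear terms $\langle-,\langle-,-\rangle\rangle^\low$ must be tracked through $\delta$, where they fail to be symmetric. Matching them exactly against the $\alpha$-contribution in the $v^2$ cocycle relation requires fixing the correction terms in $\mathcal S_{ij}$ and $\mathcal X_{ijk}$. My first attempt is to take these corrections to be exactly the terms appearing in the $\sfTDFT_d$ gauge transformations of \ref{app:adjustment}, specialised to the pure translations $g=v^0_i$. The reason is that moving to the fiber coordinate is literally a gauge transformation by $v^0_i$, and the adjustment guarantees that the resulting forms transform covariantly.
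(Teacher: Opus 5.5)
\textbf{Verdict: genuine gap.} Your overall strategy matches the paper's. You write explicit connection forms on $P_\bullet$, $Q_\bullet$, $R_\bullet$ from the local data \eqref{eq:cocycledatacont2d} and the fiber coordinates, then check the relations of \ref{def:connection}. The paper's proof consists of nothing but these forms:
\begin{itemize}
\item $A=\rmd v^0_i+A_i$;
\item $B=B_i-\langle A,\hat A_i\rangle$ and $\Lambda=\rmd v^1_{ij}-\Lambda_{ij}-\langle\rmd v^0_i,\hat v^0_{ij}\rangle$;
\item $C=C_i+\langle A,B_i\rangle-\langle A,\langle A,\hat A_i\rangle\rangle^\low$;
\item $\Sigma=-\Sigma_{ij}+\langle\rmd v^0_i,\Lambda_{ij}\rangle+\langle\rmd v^0_i,\langle\rmd v^0_i,\hat v^0_{ij}\rangle\rangle^\low$;
\item $\Xi=\rmd v^2_{ijk}-\Xi_{ijk}-\langle\rmd v^0_i,v^1_{ijk}\rangle-\langle v^0_i,\langle\rmd v^0_i,\hat\lambda^0_{ijk}\rangle\rangle^\low$.
\end{itemize}
Finding these forms is the whole content of the proof, so leaving them as ``(correction)'' and ``$+\ldots$'' is a real gap. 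Moreover, the pieces you do commit to fail the checks.

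\textbf{(i) The connection on $P$.} With the relation $[v^0_i-v^0_j]=[v^0_{ij}]$ that you quote, your $\mathcal A_i=A_i-\rmd v^0_i$ gives $\delta\mathcal A=2\,\rmd v^0_{ij}$. The sign is forced by that relation, and the invariant form is $A_i+\rmd v^0_i$.

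\textbf{(ii) The connection on $Q$.} Adding an exact form to $\mathcal B$ cannot change $\rmd\mathcal B$. From $B_i+\langle v^0_i,\hat F\rangle_0$ you get $H_0+\langle F,\hat A_i\rangle_0+\langle\rmd v^0_i,\hat F\rangle_0$. This misses $H_0+\langle A,\hat F\rangle_0$ by $\rmd\langle A_i,\hat A_i\rangle_0$, so the missing ingredient is the non-closed term $-\langle A_i,\hat A_i\rangle_0$. The exact term $-\rmd\langle v^0_i,\hat A_i\rangle_0$ is also needed, but for the $\delta$-relation: it turns $\langle v^0_i,\hat F\rangle_0$ into $-\langle\rmd v^0_i,\hat A_i\rangle_0$. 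Similarly, your $\mathcal L$ containing $\langle v^0_i,\rmd\hat v^0_{ij}\rangle_0$ fails $\delta\mathcal L=0$. What works is $\langle\rmd v^0_i,\hat v^0_{ij}\rangle_0$. Its $\delta$ is $-\langle\rmd v^0_i,\hat\lambda^0_{ijk}\rangle_0-\langle\rmd v^0_{ij},\hat v^0_{jk}\rangle_0$, which cancels what $\rmd v^1_{ij}-\Lambda_{ij}$ produces via the relation defining $Q_n$.

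\textbf{(iii) The connection on $R$.} Your $\mathcal C_i$ does not type-check. Since $v^0_i$ is a function, $\langle v^0_i,B_i\rangle_1$ and $\langle v^0_i,\langle v^0_i,\hat F\rangle_0\rangle_1^\low$ are $2$-forms being added to the $3$-form $C_i$.

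\textbf{The missing idea.} The local forms $\hat A_i$ and $B_i$ must be paired with the pulled-back global connection $A=A_i+\rmd v^0_i$, not with $v^0_i$. This is what converts the patch-dependent term $\langle F,\hat A_i\rangle_0$ in $\rmd B_i$ into $\langle A,\hat F\rangle_0$. It also produces the terms $-\langle A,H_0\rangle_1-\langle A,\langle A,\hat F\rangle_0\rangle_1^\low$ in $\rmd C$.

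\textbf{Your gauge-transformation fallback.} Taking corrections from the appendix gauge transformations for a pure translation by $v^0_i$ is a fair guide to the shape of $\Sigma$ and $\Xi$. It cannot supply these pairing terms, however: such a transformation leaves $B_i$ and $\hat A_i$ untouched, and $\rmd B_i=H_0+\langle F,\hat A_i\rangle_0$ is not the required curvature.

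\textbf{Lower curvature components.} When you check $R$, your plan to read off the lower curvature components is genuinely needed. Already for $\hat A_i=0$, the paper's $C$ satisfies $\rmd C=G_0-\langle A,H_0\rangle_1+\rmd\langle A_i,B_i\rangle_1$. So the top curvature form differs from the stated one by $\rmd\langle A_i,B_i\rangle_1$, and a non-vanishing lower curvature component is forced.
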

    \begin{proof}
        Let $\scP_\rmC$ be given by the cocycle data~\eqref{eq:cocycledatacont2d}. Then the connection forms read as 
        \begin{align}\label{eq:conn_1}
            \begin{split}
                A &\coloneqq  \rmd v^0_i + A_i~,
                \\
                B &\coloneqq  B_i - \langle A_i+\rmd v^0_i , \hat{A}_i \rangle~,
                \\
                C &\coloneqq  C_i + \langle \rmd v^0_i +A_i, B_i \rangle - \langle \rmd v^0_i + A_i , \langle \rmd v^0_i + A_i , \hat{A}_i \rangle \rangle^\low~,
                \\
                \Lambda &\coloneqq  \rmd v^1_{ij} - \Lambda_{ij} - \langle \rmd v^0_i, \hat{v}^0_{ij} \rangle~,
                \\
                \Sigma &\coloneqq  -\Sigma_{ij} + \langle \rmd v^0_i , \Lambda_{ij} \rangle + \langle \rmd v^0_i , \langle \rmd v^0_i , \hat{v}^0_{ij} \rangle \rangle^\low~,
                \\
                \Xi &\coloneqq  \rmd v^2_{ijk} - \Xi_{ijk} - \langle \rmd v^0_i, v^1_{ijk} \rangle - \langle v^0_i, \langle \rmd v^0_i, \hat{\lambda}^0_{ijk} \rangle \rangle^\low~.                
            \end{split}
        \end{align}
    \end{proof}
    
    The corresponding statement for right $T_2$-duality backgrounds is the following.
    \begin{proposition}\label{prop:connection_rightbackgroundsf2}
        Let $\hat \scP \rightarrow M$ be a principal $\sfTBRFT_d$-bundle determining a right $T_2$-duality background $(\hat P,\hat Q,\hat R)$ as in \ref{prop:rightbackgroundsf2}. A lift of $\hat \scP$ to a principal $\sfTDFT_d$-bundle $\scP_\rmC\rightarrow M$ along~$\hat \phi$ with a connection of curvature $(F,\hat{F},H_0,G_0)$ determines the following data:
        \begin{myenumerate}
            \item a connection $\hat{A}$ on $\hat{P} \rightarrow M$ of curvature $\hat{F}$;
            \item a connection $(\hat{B},\hat{\Lambda})$ on $\hat Q \rightarrow \hat{P}$ of curvature $H_0 + \langle F , \hat{A} \rangle_0$;
            \item a connection $(\hat{C},\hat{\Sigma},\hat{\Xi})$ on $\hat R\rightarrow \hat Q$ of curvature $(\langle F , \hat{\Lambda} \rangle ,G_0 - \langle F , \hat{B} \rangle)$.
        \end{myenumerate}
    \end{proposition}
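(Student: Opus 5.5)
The plan mirrors the proof of \cref{prop:connection_leftbackgroundsf2}: starting from the differential cocycle data~\eqref{eq:cocycledatacont2d} of $\scP_\rmC$, we write down explicit connection forms on the simplicial manifold $\hat R_\bullet$ of~\eqref{eq:hatR-n}, whose coordinates are $(\hat v^0_i, w^1_{ij}, [w^2_{ijk}])$. We then verify the relations of \cref{def:connection} level by level. First, we set $\hat A \coloneqq \rmd\hat v^0_i + \hat A_i$. It is a connection on $\hat P$ because $\hat A_j - \hat A_i = \rmd\hat v^0_{ij}$ and $[\hat v^0_i - \hat v^0_j] = [\hat v^0_{ij}]$. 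Its curvature is $\rmd\hat A_i = \hat F$.

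Next, for $\hat Q$ we take $\hat B \coloneqq B_i - \langle A_i, \hat A_i + \rmd\hat v^0_i\rangle_0$. This is the dual of the left-hand formula, with the roles of $A$ and $\hat A$ exchanged, reflecting the fact that $B\hat\phi$ in~\eqref{eq:homr} shifts $w^1$ by $\langle v^0_{ij},\hat v^0_{jk}\rangle + \langle \lambda^0_{ijk}, v^0_{ik}\rangle$. We pair it with $\hat\Lambda \coloneqq \rmd w^1_{ij} - \Lambda_{ij} - \langle A_i, \rmd\hat v^0_{ij}\rangle$, possibly with corrections of the form $\langle \rmd v^0_{ij}, \hat v^0_j\rangle$. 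We fix the correction terms by requiring $\delta\hat\Lambda$ to match the differential of the $w^1$-cocycle relation in~\eqref{eq:hatR-n}, using $\Lambda_{ij}-\Lambda_{ik}+\Lambda_{jk} = \rmd v^1_{ijk} + \langle \rmd v^0_{ij},\hat v^0_{jk}\rangle_0$. Computing $\rmd\hat B$ with the Bianchi identity for $H_0$ should then give curvature $H_0 + \langle F, \hat A\rangle_0$, descending to $\hat P$. This is the expected counterpart of $H_0 + \langle A,\hat F\rangle_0$ on the left.

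For $\hat R$ we define $\hat C$, $\hat\Sigma$ and $\hat\Xi$ analogously, again modelled on~\eqref{eq:conn_1} with $A\leftrightarrow\hat A$ exchanged where appropriate. As a first ansatz we take $\hat C \coloneqq C_i + \langle A_i, B_i\rangle_1 + (\text{cubic term in } A_i,\hat A_i)$ chosen so that $\rmd\hat C = G_0 - \langle F,\hat B\rangle$. We take $\hat\Sigma \coloneqq -\Sigma_{ij} + \langle A_i, \Lambda_{ij}\rangle + \dots$, and $\hat\Xi \coloneqq \rmd w^2_{ijk} - \Xi_{ijk} + \langle A_i, \rmd w^1_{jk}\rangle + \dots$. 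The terms of the form $\langle v^0,\rmd w^1\rangle$ are forced by the $\langle v^0_{ij}, w^1_{jk}-w^1_{jl}+w^1_{kl}\rangle$ term in the $w^2$ relation of~\eqref{eq:hatR-n}. We then check the three relations $\delta\hat A_1 = 0$, $\delta\hat A_2 + \rmd\hat A_1 = F_2$ and $\delta\hat A_3 + \rmd \hat A_2 = F_3$. Here the curvature forms are $F_3 = \langle F,\hat\Lambda\rangle$ on the level-1 part (over $\hat Q_1$) and $F_4 = G_0 - \langle F,\hat B\rangle$ on $\hat Q_0$, as stated. Note that the curvature $\langle F,\hat\Lambda\rangle$ lives one simplicial level up. This is exactly why the curvature of $\hat R$ is recorded as a pair, and it reflects that $\langle F,-\rangle$ does not descend through the gerbe $\hat Q$.

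The main obstacle is the bookkeeping at the $\hat\Xi$ level. The cocycle relation for $w^2$ contains several $\langle-,\langle-,-\rangle\rangle^\low$ terms involving $\lambda^0$, $v^0_{ij}$ and $\hat v^0$, and the correct lower-triangular corrections to $\hat\Xi$ and $\hat\Sigma$ have to be guessed. I would determine them by working backwards. Apply $\delta$ to the ansatz, compare with $\rmd$ of the $w^2$ relation in~\eqref{eq:hatR-n} together with $\Xi_{ijk}-\Xi_{ijl}+\dots$ from~\eqref{eq:cocycledatacont2d}, and add whichever $\low$-terms cancel the discrepancy, exploiting that $\rmd\lambda^0=0$ kills many terms. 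Independence of the choices up to isomorphism can then be checked with the coboundary formulas, although it is not needed for the statement.
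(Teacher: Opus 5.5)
There is a genuine gap: your strategy is the paper's (write explicit forms on $\hat R_\bullet$ and check \cref{def:connection}), but your leading ansätze are wrong. The paper's proof consists only of the explicit forms $\hat A=\rmd\hat v^0_i+\hat A_i$, $\hat B=B_i+\langle A_i,\rmd\hat v^0_i\rangle_0$, $\hat\Lambda=\rmd w^1_{ij}-\Lambda_{ij}-\langle v^0_{ij},\rmd\hat v^0_j\rangle$, $\hat C=C_i-\langle A_i,B_i\rangle$, and $\hat\Sigma$, $\hat\Xi$ with specific $\low$-corrections. So the whole content lies in getting these right. Your heuristic of exchanging $A\leftrightarrow\hat A$ fails because the $\sfTDFT_d$ data are asymmetric: $\rmd B_i=H_0+\langle F,\hat A_i\rangle_0$ already contains $\hat A_i$, and $B_j-B_i=\rmd\Lambda_{ij}+\langle A_i,\rmd\hat v^0_{ij}\rangle_0$. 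For your $\hat B=B_i-\langle A_i,\hat A_i+\rmd\hat v^0_i\rangle_0$ one gets $\rmd\hat B=H_0-\langle F,\rmd\hat v^0_i\rangle_0+\langle A_i,\hat F\rangle_0$. This is not $H_0+\langle F,\hat A\rangle_0$, so your claim that the Bianchi identity ``should then give'' the stated curvature is false. On the right you only need to complete $\hat A_i$ to $\hat A$. With $\hat B=B_i+\langle A_i,\rmd\hat v^0_i\rangle_0$ you get $\rmd\hat B=H_0+\langle F,\hat A\rangle_0$ at once, and $\hat B_j-\hat B_i=\rmd\Lambda_{ij}+\langle\rmd v^0_{ij},\rmd\hat v^0_j\rangle_0=-\rmd\hat\Lambda$. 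Also $\delta\hat\Lambda=0$ follows from the $w^1$-relation in \eqref{eq:hatR-n}, the relation for $\Lambda_{ij}-\Lambda_{ik}+\Lambda_{jk}$, and $v^0_{ij}-v^0_{ik}+v^0_{jk}=-\lambda^0_{ijk}$. The shift $\langle v^0_{ij},\hat v^0_{jk}\rangle$ in $B\hat\phi$ is what forces the correction $-\langle v^0_{ij},\rmd\hat v^0_j\rangle$ in $\hat\Lambda$. A correction of your suggested form $\langle\rmd v^0_{ij},\hat v^0_j\rangle$ cannot cancel the term $\langle\lambda^0_{ijk},\rmd\hat v^0_k\rangle$ that appears in $\delta\hat\Lambda$.

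The other ansätze also cannot serve as a starting point for your backward matching. The terms $\langle A_i,\rmd\hat v^0_{ij}\rangle$ in your $\hat\Lambda$ and $\langle A_i,\rmd w^1_{jk}\rangle$ in your $\hat\Xi$ are 2-forms, whereas $\hat\Lambda$ and $\hat\Xi$ must be 1-forms. In $\hat\Xi$, the term forced by the $\langle v^0_{ij},w^1_{jk}-w^1_{jl}+w^1_{kl}\rangle$-part of \eqref{eq:hatR-n} is $-\langle\rmd v^0_{ij},w^1_{jk}\rangle$. The term $\langle A_i,\rmd w^1_{ij}\rangle$ belongs in the 2-form $\hat\Sigma$, as part of $+\langle A_i,\hat\Lambda\rangle$ (hence with $-\langle A_i,\Lambda_{ij}\rangle$, not $+$), and its derivative supplies the level-one curvature $\langle F,\hat\Lambda\rangle$. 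In $\hat C$ the sign of $\langle A_i,B_i\rangle$ must be negative. With your sign, $\rmd\hat C$ contains $\langle F,B_i\rangle-2\langle A_i,H_0\rangle$ where $G_0-\langle F,\hat B\rangle$ has $-\langle F,B_i\rangle$. No term built from $A_i$ and $\hat A_i$ alone can repair a mismatch involving $B_i$ and $H_0$. A cubic correction is indeed needed, but it must involve $\rmd\hat v^0_i$: $G_0-\langle F,\hat B\rangle$ contains $-\langle F,\langle A_i,\rmd\hat v^0_i\rangle\rangle$, which $\rmd$ of a form pulled back from $Y$ cannot produce. Compare the term $-\tfrac12\langle a,\langle a,\rmd\hat v^0\rangle\rangle$ in \ref{lem:trivial_TDF2-bundle}; the formula $\hat C=C_i-\langle A_i,B_i\rangle$ as printed in the paper's proof lacks such a term too. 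Once the leading terms are corrected as above, your procedure of fixing the remaining $\low$-terms in $\hat\Sigma$ and $\hat\Xi$ by matching $\delta$ against \eqref{eq:hatR-n} is viable.
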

    \begin{proof}
        Let $\hat R_\bullet$ be the simplicial manifold~\eqref{eq:hatR-n}. Then the connections are described as follows.
        \begin{align}\label{eq:conn_2}
            \begin{split}
                \hat{A} &\coloneqq  \rmd \hat v^0_i + \hat{A}_i~,
                \\
                \hat{B} &\coloneqq  B_i + \langle A_i , \rmd \hat{v}^0_i \rangle~,
                \\
                \hat{C} &\coloneqq  C_i - \langle A_i , B_i \rangle~,
                \\
                \hat{\Lambda} &\coloneqq  \rmd w^1_{ij} - \Lambda_{ij} - \langle v^0_{ij},\rmd \hat{v}^0_{j} \rangle~,
                \\
                \hat{\Sigma} &\coloneqq  -\Sigma_{ij} + \langle A_i , \rmd w^1_{ij}- \Lambda_{ij} - 
                \langle v^0_{ij},\rmd \hat{v}^0_j \rangle \rangle + \langle v^0_{ij}, \langle \rmd v^0_{ij} ,  \rmd \hat{v}^0_{j} \rangle \rangle^\low~,
                \\
                \hat{\Xi} &\coloneqq  \rmd w^2_{ijk} - \Xi_{ijk} - \langle \rmd v^0_{ij}, w^1_{jk} \rangle - \langle v^0_{ij}, \langle \rmd v^0_{ij},\hat{v}^0_{jk} \rangle \rangle^\low \\
                &\quad - \langle v^0_{ij}, \langle v^0_{jk}, \rmd \hat{v}^0_{k} \rangle \rangle^\low - \langle \lambda^0_{ijk}, \langle v^0_{ik}, \rmd \hat{v}^0_{k} \rangle \rangle^\low~.
            \end{split}
        \end{align}
    \end{proof}
    
    We can now establish differential $T_2$-duality.
    \begin{theorem}\label{thm:spans_are_differential_t2_duality}
        Let $\scP_\rmC\rightarrow M$ be a principal $\sfTDFT_d$-bundle defining a $T_2$-dual pair of topological left and right $T_2$-duality backgrounds by \ref{thm:spans_are_t2_duality}. Choose a connection on $\scP_\rmC$. Then the connections induced on the left and right $T_2$-duality backgrounds by \ref{prop:connection_leftbackgroundsf2} and \ref{prop:connection_rightbackgroundsf2} are $T_2$-dual.
    \end{theorem}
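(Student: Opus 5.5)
By \cref{thm:spans_are_t2_duality}, the topological part of \ref{def:t2dualitydiff} is already in place. There we wrote down an explicit isomorphism $\psi:p^\ast R_\bullet\rightarrow \hat p^\ast \hat R_\bullet$ over $P\times_M\hat Q$, given by the formula for $[w^2_{ijk}]$, and checked the fiberwise Poincar\'e condition. What remains is to equip this same $\psi$ with a connection in the sense of the definition of connections on isomorphisms of $p$-gerbes (here $p=2$). Its curvature forms must be exactly $G_2=\langle A,\hat\Lambda\rangle$ and $G_3=\langle A,\hat B\rangle-\langle A,\langle A,\hat A\rangle\rangle^\low$, where $A$, $\hat A$ and $(\hat B,\hat\Lambda)$ are the connections of \cref{prop:connection_leftbackgroundsf2,prop:connection_rightbackgroundsf2}.

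I would first check the curvature conditions, which are purely formal. We need $\delta G_2=F^1_2-F^2_2$, $\delta G_3+\rmd G_2=F^1_3-F^2_3$ and $\rmd G_3=F^1_4-F^2_4$, where the $F$'s are the curvatures listed in the two propositions, pulled back to the correspondence space.
\begin{itemize}
\item The top-degree identity: $\rmd G_3 = \langle F,\hat B\rangle-\langle A,\rmd\hat B\rangle-\ldots$. Substituting the curvature $H_0+\langle F,\hat A\rangle_0$ of $\hat Q$ and using the Bianchi identities, this should reproduce $\big(G_0-\langle A,H_0\rangle-\langle A,\langle A,\hat F\rangle\rangle^\low\big)-\big(G_0-\langle F,\hat B\rangle\big)$.
\item The lower identities follow from $\delta A=0$ and from $\delta\hat B+\rmd\hat\Lambda$ being the $\hat Q$ curvature.
\end{itemize}
This step involves no cocycle data.

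The main step is constructing the connection forms of the isomorphism, $R_1\in\Omega^1$ on the degree-1 simplices and $R_2\in\Omega^2$ on the degree-0 simplices of the resolution of the correspondence space, pulled back to $p^\ast R$. I would build them from the same ingredients as $\psi$, reading them off from the explicit forms \eqref{eq:conn_1} and \eqref{eq:conn_2}. The natural guesses are:
\begin{itemize}
\item $R_2 = \langle \rmd v^0_i+A_i, w^1\text{-type terms}\rangle$, i.e.\ a multiple of $\langle A, B_i\rangle$ corrected by $\langle A,\langle \rmd v^0_i,\hat A_i\rangle\rangle^\low$-type terms. This works because $C-\hat C=\langle \rmd v^0_i+A_i,B_i\rangle+\langle A_i,B_i\rangle+\ldots$ differ by $\langle A,\hat B\rangle$ plus exact pieces.
\item $R_1=\rmd(\text{the }[w^2_{ijk}]\text{-gluing function's }v^1\text{ part})$, i.e.\ terms like $\langle v^0_i,\Lambda_{ij}\rangle$ and $\langle v^0_i-v^0_j, \rmd w^1_{jk}\rangle$.
\end{itemize}
Concretely, I would compute $\delta R_1-(A^1_1-\psi^\ast A^2_1)$ and $\rmd R_1$ and demand that the two satisfy the prescribed relations. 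Here $A_1^{1}=\rmd v^2-\Xi$ and $A_1^{2}=\rmd w^2-\hat\Xi$, so $A^1_1-\psi^\ast A^2_1$ is $\rmd$ of $\psi$'s defining function plus the difference of the explicit correction terms in $\Xi$ and $\hat\Xi$. That makes $R_1$ essentially forced: it is $\rmd$ of the part of $\psi$ not already accounted for, minus the $\Sigma-\hat\Sigma$ discrepancy.

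The hard part is this bookkeeping. The $\langle-,\langle-,-\rangle\rangle^\low$ terms in $\psi$, $\Xi$, $\hat\Xi$, $\Sigma$ and $\hat\Sigma$ must cancel exactly, using the cocycle relations \eqref{eq:cocycleT2D} and the connection cocycle relations \eqref{eq:cocycledatacont2d}.

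To keep this manageable I would use locality as in the proof of \cref{thm:spans_are_t2_duality}. First, verify the relations on a single patch with all transition data trivial, where $\psi$ reduces to \eqref{eq:isosimplified} and everything becomes a Poincar\'e-gerbe computation twisted by $A_i$, $\hat A_i$, $B_i$. Then check the patch-independence of the candidate $R_1$, $R_2$ under the gluing rules of \eqref{eq:cocycledatacont2d}. Any leftover terms should be exact in the shifted gauge convention and absorbed into redefinitions of $R_1$ and $R_2$.
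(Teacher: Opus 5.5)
Your overall plan matches the paper's: keep the isomorphism $\psi$ of \cref{thm:spans_are_t2_duality} and equip it with connection forms $R_2$ on level $0$ and $R_1$ on level $1$ of the resolution of $P\times_M\hat Q$. The gap is that the proof consists essentially of exhibiting these two forms, and you do not produce them. The guesses you do give would make the defining relations fail. The paper takes
\begin{equation*}
\begin{aligned}
B^\psi_i&=-\langle \rmd v^0_i,\langle \rmd v^0_i,\hat v^0_i\rangle\rangle^\low~,\\
\Lambda^\psi_{ij}&=\langle v^0_i,\rmd w^1_{ij}\rangle-\langle v^0_{ij},\langle \rmd v^0_j,\hat v^0_j\rangle\rangle^\low+\langle \rmd v^0_i,\langle v^0_{ij},\hat v^0_j\rangle\rangle^\low+\langle \rmd v^0_i,\langle v^0_i,\hat v^0_i-\hat v^0_j-\hat v^0_{ij}\rangle\rangle^\low~.
\end{aligned}
\end{equation*}
These are modelled on the Poincar\'e-gerbe connection forms $A_2$, $A_1$ of \cref{def:poincare}, globalized with the transition functions $v^0_{ij}$, $\hat v^0_{ij}$.

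The idea you are missing is that these forms contain none of the connection data $A_i,\hat A_i,B_i,\Lambda_{ij}$ of $\scP_\rmC$. All of that is carried by the prescribed $G_2=\langle A,\hat\Lambda\rangle$ and $G_3=\langle A,\hat B\rangle-\langle A,\langle A,\hat A\rangle\rangle^\low$. These are chosen so that the $\langle A,B_i\rangle$-terms of $C-\hat C$ and the $\Lambda_{ij}$-terms of $\Sigma-\hat\Sigma$ cancel identically. You yourself note that $C-\hat C$ differs from $\langle A,\hat B\rangle$ by exact pieces. That is exactly why $\langle A,B_i\rangle$ belongs in $G_3$, and why $R_2$ must be a primitive of the exact remainder, which is a form in $v^0_i,\hat v^0_i$ alone. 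Your choice $R_2\supset\langle A,B_i\rangle$ cannot work. Then $\rmd R_2$ would contain $-\langle A,\rmd B_i\rangle$, i.e.\ an $\langle A,H_0\rangle$-term, while $C$, $\hat C$ and $G_3$ contain $B_i$ but never $\rmd B_i$.

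Your guesses also sit on the wrong simplicial levels. The coordinates $w^1_{ij}$ first appear on level $1$, so they cannot enter $R_2$. The term $\langle v^0_i-v^0_j,\rmd w^1_{jk}\rangle$ carries three indices. It is a term of $\delta R_1$, coming from $\delta\langle v^0_i,\rmd w^1_{ij}\rangle$, not a term of $R_1$. Likewise, ``$\rmd$ of part of $\psi$'' is a level-$2$ form that $\delta R_1$ must match; it is not $R_1$ itself.

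Finally, the locality strategy does not carry the proof. Setting all transition data to zero only reproduces the fiberwise Poincar\'e computation already done in the topological theorem. What remains is not patch-independence but the three relations linking levels $0$, $1$ and $2$. In these, $v^0_{ij}$, $\hat v^0_{ij}$ and $\lambda^0_{ijk}$ enter through the $\low$-terms of $\psi$, $\Xi$, $\hat\Xi$, $\Sigma$ and $\hat\Sigma$. Asserting that leftover terms ``should be exact and can be absorbed'' is precisely the statement that has to be proved.
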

    \begin{proof}
        Let the connections determined by the connection on $\scP_\rmC$ be 
        \begin{equation}
            (A,(B,\Lambda),(C,\Sigma,\Xi))
            \eand
            (\hat{A},(\hat{B},\hat{\Lambda}),(\hat{C},\hat{\Sigma},\hat{\Xi}))~,
        \end{equation}
        as before. We can then equip the isomorphism $\psi:p^\ast R_\bullet \rightarrow \hat p^\ast\hat R_\bullet$ from \ref{thm:spans_are_t2_duality} with a connection given by the following differential forms.
        \begin{align}
            \begin{split}
                B^{\psi}_i &\coloneqq  - \langle \rmd v^0_i , \langle \rmd v^0_i, \hat{v}^0_i \rangle \rangle^\low~,
                \\
                \Lambda^{\psi}_{ij} &\coloneqq  \langle v^0_i, \rmd w^1_{ij} \rangle - \langle v^0_{ij}, \langle \rmd v^0_j, \hat{v}^0_j \rangle \rangle^\low + \langle \rmd v^0_i, \langle v^0_{ij}, \hat{v}^0_j \rangle \rangle^\low + \langle \rmd v^0_i, \langle v^0_{i}, \hat{v}^0_i - \hat{v}^0_j - \hat{v}^0_{ij} \rangle \rangle^\low~.
            \end{split}
        \end{align}
        These satisfy the required identities
        \begin{equation}
            \begin{aligned}
                (\delta \Lambda^\psi)_{ijk}&=
                \Lambda_{ij}^{\psi} - \Lambda_{ik}^{\psi} + \Lambda_{jk}^{\psi} = \Xi_{ijk} - \psi^{\ast}\hat{\Xi}_{ijk} ~,
                \\ 
                (\delta B^\psi)_{ij}-\rmd \Lambda^\psi_{ij}&=B_i^{\psi} - B_j^{\psi} -\rmd \Lambda_{ij}^{\psi}=\hat{\Sigma}_{ij}-\Sigma_{ij} -\langle A_i , \hat{\Lambda}_{ij} \rangle~,
                \\
                \rmd B_i^\psi&=- C_i + \hat{C}_i+
                \langle A_i , \hat{B}_i \rangle~.
            \end{aligned}
        \end{equation}
        for all $i\in Y$, $ij\in Y^{[2]}$, and $ijk\in Y^{[3]}$.
    \end{proof}
    
    Finally, we show that differential $T_2$-duals of left $T_2$-duality backgrounds exist.
    \begin{corollary}\label{cor:difff2t2dual}
        Every topological left $T_2$-duality background $(M,P,R)$ of type $F_2$ admits a choice of connection for which it has a differential $T_2$-dual. 
    \end{corollary}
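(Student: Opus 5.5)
The plan is to mirror the proof of \ref{cor:f2t2dual}, adding connections at each step. First, by \ref{prop:leftbackgroundsf2}, represent the type-$F_2$ background $(M,P,R)$ by a principal $\sfTBLFT_d$-bundle $\scP$ with Čech cocycle data \eqref{eq:cocycledatat2b} subordinate to some surjective submersion $\sigma:Y\rightarrow M$. By \ref{cor:any_bundle_from_T2D}, after refining $Y$ if necessary, this cocycle lifts along $\phi$ to cocycle data \eqref{eq:cocycledatat2d} of a principal $\sfTDFT_d$-bundle $\scP_\rmC$. Concretely, we only need to add the components $\hat v^0_{ij}$ lifting $\hat\lambda^0_{ijk}$ (which exist because $\Lambda^2\IR^{d,\ast}$ is contractible, i.e.\ the $\hat\lambda^0$ Čech class becomes a $\Lambda^2\IT^{d,\ast}$-bundle) and $v^1_{ijk}$ solving the third equation of \eqref{eq:cocycleT2D}. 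This is the content of the contractible-fiber argument in \ref{prop:span_3_groups}.

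Second, I will equip $\scP_\rmC$ with a connection, i.e.\ data \eqref{eq:cocycledatacont2d}. I plan to construct it successively by the usual partition-of-unity argument. Choose $A_i,\hat A_i$ with $A_j-A_i=\rmd v^0_{ij}$ and $\hat A_j-\hat A_i=\rmd \hat v^0_{ij}$. Next solve for $\Xi_{ijk}$, then $\Sigma_{ij}$, then $C_i$, and similarly for $\Lambda_{ij}$ and then $B_i$. At each step the right-hand side of the descent equation is $\delta$-closed, by the lower equations and the cocycle relations \eqref{eq:cocycleT2D}, so exactness of the Čech--de~Rham rows for a surjective submersion gives a solution.

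The main obstacle is this step. One must check that the inhomogeneous terms, for instance $\langle \rmd v^0_{ij},\langle \rmd v^0_{ij},\hat v^0_{jk}\rangle\rangle^\low$, have vanishing Čech coboundary modulo already-solved data. This is exactly the consistency (coherence) of the adjusted differential cocycle conditions established in \ref{app:adjustment}. So I would invoke that appendix, truncated to the non-affine case, and assert that the obstruction at each level vanishes. The alternative route is a direct verification, starting from the lowest-degree equation for $\Xi$, where the contribution of $\lambda^1$ and $[v^2]$ drops out after applying $\rmd$.

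Third, with a connection on $\scP_\rmC$ in hand, \ref{prop:connection_leftbackgroundsf2} induces connections $A$ on $P$ and $(C,\Sigma,\Xi)$ on $R$. This is the choice of connection on $(M,P,R)$ asserted in the statement. \ref{prop:connection_rightbackgroundsf2} induces connections on the right background obtained from $\hat\phi$, which is topologically $T_2$-dual by \ref{thm:spans_are_t2_duality}. Finally, \ref{thm:spans_are_differential_t2_duality} shows these connected backgrounds are differentially $T_2$-dual, completing the argument.
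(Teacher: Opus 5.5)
Your proposal is correct and follows the same route as the paper. You pass to a $\sfTBLFT_d$-bundle, lift it to a $\sfTDFT_d$-bundle $\scP_\rmC$, choose a connection on $\scP_\rmC$, and conclude with \ref{prop:connection_leftbackgroundsf2}, \ref{prop:connection_rightbackgroundsf2} and \ref{thm:spans_are_differential_t2_duality}. The paper takes the existence of a connection on $\scP_\rmC$ as evident, whereas you sketch it by descent. In that sketch, $\Lambda_{ij}$ and $B_i$ must be solved before $\Sigma_{ij}$ and $C_i$, since the descent equations for the latter contain $\Lambda_{jk}$ and $B_j$.
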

    \begin{proof}
        The topological $T_2$-duality background $(M,P,R)$ admits a $T_2$-dual by \ref{cor:f2t2dual}. It also admits connection data $(A,C,\Sigma,\Xi)$ such that a differential $T_2$-dual exists. These connections evidently comprise those obtained from a choice of connection on the principal $\sfTDFT_d$-bundle.
    \end{proof}

    \subsection{Example: \texorpdfstring{$T_2$}{T2}-duality of type \texorpdfstring{$F_2$}{F2} and \texorpdfstring{$d=2$}{d=2} over \texorpdfstring{$S^2$}{S2}}\label{ssec:exs2}
    
    We illustrate the preceding constructions for $d=2$ over $M=S^2$. Write $R_{c_1,c_2,c_3} \rightarrow P_{c_1,c_2}$ for a $2$-gerbe with Dixmier--Douady class $c_3 \in \IZ \cong H^4(P_{c_1,c_2},\IZ)$ over a principal $\IT^2$-bundle with Chern class $(c_1,c_2) \in \IZ^2 \cong H^2(S^2,\IZ^2)$. Because the base space has dimension two, $c_3$ has to lie in the $F_2$ step of the filtration of 
    $H^4(P_{c_1,c_2},\IZ)$ associated to the fibration $P_{c_1,c_2} \rightarrow S^2$. Therefore, this background admits a $T_2$-dual by \ref{cor:f2t2dual}.
    
    We first claim that any $T_2$-dual of $(P_{c_1,c_2},R_{c_1,c_2,c_3})$ must be of the form 
    \begin{equation}
        (\hat{P}_{c_3},\hat Q_{(-c_2,c_1)},\hat R)~,
    \end{equation}
    where $\hat{P}_{c_3} \rightarrow S^2$ is a principal $\Lambda^2 \IT^{2,\ast}$-bundle with Chern class $c_3 \in \IZ \cong \Lambda^2 \IZ^{2,\ast} = H^2(S^2,\Lambda^2 \IZ^{2,\ast})$ and $\hat Q_{(-c_2,c_1)} \rightarrow \hat{P}_{c_3}$ is a 1-gerbe with band $\IT^{2,\ast}$ and Dixmier--Douady class $(-c_2,c_1) \in \IZ^2 \cong H^3(\hat{P}_{c_3},\IZ^{2,\ast})$. 
    In these expressions, we are writing $(e_1,e_2)$ with $e_1, \,e_2 \in \IZ$ for points in $\IZ^2$, and identifying $\IZ^{2,\ast}$ with $\IZ^2$ using the standard inner product, and $\IZ \cong \Lambda^2 \IZ^{2,\ast}$ as $x \mapsto x \rmd e_1 \wedge \rmd e_2$. Then the claim follows from \eqref{eq:momentumexchange}, writing the map $\delta$ in these coordinates.
    
    To describe $\hat R$, we give cocycle data for a principal $\sfTDFT_d$-bundle with connection over $S^2$ whose corresponding left background is $(P_{c_1,c_2},R_{c_1,c_2,c_3})$. We present $S^2$ as a family of circles parameterized by a height coordinate:
    \begin{equation}
        S^2 = (\mathbb R/\IZ \times [0,1])/\sim~,
    \end{equation}
    where the circles at $t=0$ and $t=1$ are collapsed to the two poles, 
    \begin{equation}
        ([u_1],0) \sim ([u_2],0)
        \eand
        ([u_1],1) \sim ([u_2],1)
    \end{equation}
    for all $[u_1], \,[u_2] \in \mathbb R/\IZ$. We define an open cover of $S^2$ by the following three sets:
    \begin{align}
        U_1 &\coloneqq  \{[[u],t] \in S^2 \:|\: t \neq 1\}~,
        \\
        U_{0+} &\coloneqq  \{[[u],t] \in S^2 \:|\:[u] \neq [0], \: t \neq 0\}~,
        \\
        U_{0-} &\coloneqq  \{[[u],t] \in S^2 \:|\:[u] \neq [\tfrac12], \: t \neq 0\}~.
    \end{align}
    Note that there are functions $u_+:U_{0+} \cap U_{1} \rightarrow (0,1)$ and $u_{-}:U_{0-} \cap U_{1} \rightarrow (-\tfrac12,\tfrac12)$ such that
    \begin{align}
        u_+([u],t) &= [u] \quad \text{ mod } \IZ~,
        \\
        u_-([u],t) &= [u] \quad \text{ mod } \IZ~,
        \\
        u_+([u],t)-u_-([u],t) &= \begin{cases} 1 & \text{ if } [u] \in (0,\tfrac12)~, \\
            0 & \text{ if } [u] \in (\tfrac12,1)~.
        \end{cases}
    \end{align}
    The principal $\sfTDFT_d$-bundle with connection is described by the following cocycle data. The Čech data reads as
    \begin{equation}
        \begin{aligned}
            v^0_{0+1}([[u],t]) &\coloneqq  (c_1u_+,c_2u_+)~, \quad \quad \hat{v}^0_{0+1} \coloneqq  c_3u_+~,
            \\
            v^0_{0-1}([[u],t]) &\coloneqq  (c_1u_-,c_2u_-)~, \quad \quad \hat{v}^0_{0-1} \coloneqq  c_3u_-~,
            \\
            v^0_{0+0-}([[u],t]) &\coloneqq  (0,0)~, \quad \quad \quad \quad \:\:\: 
            \hat{v}^0_{0+0-} \coloneqq  0~,
            \\
            \lambda^0_{0+0-1}([[u],t]) &\coloneqq  \begin{cases} 
                (c_1,c_2) & u_+ \in (0,\tfrac12)~,
                \\
                (0,0)& u_+ \in (\tfrac12,1)~,
            \end{cases}
            \\
            \hat{\lambda}^0_{0+0-1}([[u],t]) &\coloneqq  \begin{cases} 
                c_3 & u_+ \in (0,\tfrac12)~,
                \\
                0 & u_+ \in (\tfrac12,1)~,
            \end{cases}
            \\
            v^1_{0+0-1} &\coloneqq  0~.
        \end{aligned}
    \end{equation}
    The differential refinement is given by 
    \begin{equation}
        \begin{aligned}
            A_{0+}([[u],t]) &\coloneqq  (c_1(t-1)\rmd u,c_2(t-1)\rmd u)~, \quad \quad \hat{A}_{0+}([[u],t]) \coloneqq  c_3(t-1)\rmd u~,
            \\
            A_{0-}([[u],t]) &\coloneqq  (c_1(t-1)\rmd u,c_2(t-1)\rmd u)~, \quad \quad \hat{A}_{0-}([[u],t]) \coloneqq  c_3(t-1)\rmd u~,
            \\
            A_{1}([[u],t]) &\coloneqq  (c_1t\rmd u,c_2t\rmd u), \quad \quad \quad \quad\quad \:\:\: \quad \quad \hat{A}_{1}([[u],t]) \coloneqq  c_3t\rmd u~,
        \end{aligned}
    \end{equation}
    with all other connection forms equal to $0$. The proofs of \ref{prop:connection_leftbackgroundsf2} and \ref{prop:connection_rightbackgroundsf2} then yield explicit formulas for the remaining differential cocycle data.
    
    \section{Affine \texorpdfstring{$T_2$}{T2}-duality}\label{sec:affine}
    
    We now extend $T_2$-duality to backgrounds involving affine $\IT^d$-bundles, which are principal bundles with structure group $\sfGL(d,\IZ)\ltimes \IT^d$, cf.~\cite{Baraglia:1105.0290}. Correspondingly, we have to generalize the toroidal 3-groups introduced in \ref{sec:toroidal-3-group-examples} by incorporating the extra $\sfGL(d,\IZ)$-factor.
    
    \subsection{Toroidal 3-groups for affine \texorpdfstring{$T_2$}{T2}-duality}
    
    We start by defining the relevant toroidal 3-groups. We note that the standard action on $\mathbb R^d$ induces the following action on $\Lambda^j \mathbb R^{d,\ast}$, $j\in \IN$:
    \begin{equation}
        (g \omega)(-,\ldots,-) \coloneqq  \omega(g^{-1}-,\ldots,g^{-1}-)
    \end{equation}
    for all $g\in \sfGL(d,\IZ)$ and $\omega\in \Lambda^j \mathbb R^{d,\ast}$. In particular, the pairings $\langle -,-\rangle_0$ and $\langle -,-\rangle_1$ from~\eqref{eq:pairingst2dnf2} are equivariant and invariant, respectively:
    \begin{equation}
        g \langle -, -\rangle_0= \langle g(-), g(-) \rangle_0
        \eand
        \langle -,- \rangle_1= \langle g(-), g(-)\rangle_1 
    \end{equation}
    for all $g \in \sfGL(d,\IZ)$. We stress that these properties do not directly translate to the combination $\langle -,\langle-,-\rangle_0\rangle_1^\low$.
    
    Affine left $T_2$-duality backgrounds will be encoded by principal 3-bundles with the following structure 3-group.
    \begin{definition}[Toroidal 3-group $\sfTBLFTA_d$]
        The toroidal 3-group $\sfTBLFTA_d$ has underlying complex
        \begin{equation}
            \begin{tikzcd}[column sep=0.8cm,row sep=0.2cm]
                \IZ^{d,\ast} \times \mathbb R/\IZ \arrow[r,"\sft"] 
                & \IZ^d \times \Lambda^2 \IZ^{d,\ast} \arrow[r,"\sft"] 
                & \sfGL(d,\IZ) \ltimes \mathbb R^d 
                \\
                (\lambda^1,[v^2]) \arrow[r,mapsto] & (0,0) ~,
                \\
                &(\lambda^0,\hat \lambda^0) \arrow[r,mapsto] & (\unit,\lambda^0)~,
            \end{tikzcd}
        \end{equation}
        and the structure maps
        \begin{equation}
            \begin{aligned}
                (g,v^0) \triangleright (\lambda^0,\hat \lambda^0) &= (g\lambda^0,g\hat \lambda^0)~,
                \\
                (g,v^0) \triangleright (\lambda^1,[v^2]) &= (g\lambda^1, [v^2 + \iota_v^0(g\lambda^1)])~,
                \\
                (\lambda^0,\hat \lambda^0) \triangleright (\lambda^1,[v^2]) &= (\lambda^1,[v^2])~,
                \\
                \{(\lambda^0_1,\hat \lambda^0_1),(\lambda^0_2,\hat \lambda^0_2)\} &= (-\iota_{\lambda^0_1}\hat \lambda^0_2,[0])~,
                \\
                \alpha((g_1,v^0_1),(g_2,v^0_2),(\lambda^0,\hat \lambda^0)) &= (0,[-\hat \lambda^{0,\low}(g_2^{-1}v^0_2,g_2^{-1}g_1^{-1}v^0_1)])~,
                \\
                \beta((g,v^0),(\lambda^0_1,\hat \lambda^0_1),(\lambda^0_2,\hat \lambda^0_2)) &= (0,[0])~.
            \end{aligned}
        \end{equation}
    \end{definition}
    A principal $\sfTBLFTA_d$-bundle over $M$ subordinate to $\sigma:Y\rightarrow M$ has Čech cocycle data
    \begin{subequations}\label{eq:cocycledatat2baff}
        \begin{align}
            \begin{split}
                (g_{ij},v^0_{ij}): Y^{[2]} &\rightarrow \sfGL(d,\IZ) \times \mathbb R^d~,
                \\
                (\lambda^0_{ijk},\hat{\lambda}^0_{ijk}): Y^{[3]}&\rightarrow \IZ^d \times \Lambda^2\IZ^{d,\ast}~,
                \\
                (\lambda^1_{ijkl},[v^2_{ijkl}]): Y^{[4]}&\rightarrow \IZ^{d,\ast} \times \mathbb R/\IZ
            \end{split}
        \end{align}
        satisfying
        \begin{equation}
            \begin{aligned}
                g_{ij}g_{jk} &= g_{ik}~,
                \\
                v^0_{ij} - v^0_{ik} + g_{ij}v^0_{jk} &= -\lambda^0_{ijk}~,
                \\
                \lambda^0_{ijk}-\lambda^0_{ijl}+\lambda^0_{ikl}-g_{ij}\lambda^0_{jkl}&=0~,
                \\
                \hat{\lambda}^0_{ijk} - \hat{\lambda}^0_{ijl} + \hat{\lambda}^0_{ikl} - g_{ij}\hat{\lambda}^0_{jkl} &= 0~,
                \\
                \lambda^1_{ijkl} - \lambda^1_{ijkm} + \lambda^1_{ijlm} - \lambda^1_{iklm} + g_{ij}\lambda^1_{jklm} &= - \langle \lambda^0_{ijk}, g_{ik}\hat{\lambda}^{0}_{klm} \rangle_0~,
                \\
                [v^2_{ijkl} - v^2_{ijkm} + v^2_{ijlm} - v^2_{iklm} + v^2_{jklm}] &
                \\
                =[-\langle v^0_{ij}, g_{ij}\lambda^1_{jklm} \rangle_1 + \langle \lambda^0_{ijk}, \langle v^0_{ik}, &g_{ik}\hat{\lambda}^0_{klm} \rangle_0 \rangle_1^\low + \langle v^0_{ij}, \langle g_{ij}v^0_{jk}, g_{ik}\hat{\lambda}^0_{klm} \rangle_0 \rangle_1^\low]~.
            \end{aligned}
        \end{equation}
    \end{subequations}    
    
    Affine right $T_2$-duality backgrounds will be encoded in principal 3-bundles with the following structure 3-group.
    \begin{definition}[Toroidal 3-group $\sfTBRFTA_d$]
        The toroidal 3-group $\sfTBRFTA_d$ has underlying complex
        \begin{equation}
            \begin{tikzcd}[column sep=0.8cm,row sep=0.2cm]
                \mathbb R/\IZ \arrow[r,"\sft"] 
                & \IZ^d \times \Lambda^2 \IZ^{d,\ast} \times \IT^{d,\ast}  \arrow[r,"\sft"] 
                & \sfGL(d,\IZ) \ltimes  \Lambda^2\mathbb R^{d,\ast}
                \\
                {[w^2]} \arrow[r,mapsto] & (0,0,[0]) ~,
                \\
                &(\lambda^0,\hat \lambda^0,w^1) \arrow[r,mapsto] & (\unit,\hat \lambda^0)~,
            \end{tikzcd}
        \end{equation}
        and the structure maps
        \begin{equation}
            \begin{aligned}
                (g,\hat v^0) \triangleright (\lambda^0,\hat \lambda^0,[w^1]) &= (g\lambda^0,g\hat \lambda^0,[gw^1+\iota_{g\lambda^0}\hat v^0])~,
                \\
                (g,\hat v^0) \triangleright [w^2] &= [w^2]~,
                \\
                (\lambda^0,\hat \lambda^0,[w^1]) \triangleright [w^2] &= [w^2]~,
                \\
                \{(\lambda^0_1,\hat \lambda^0_1,[w^1_1]),(\lambda^0_2,\hat \lambda^0_2,[w^1_2])\} &= [\iota_{\lambda^0_1}w^1_2]~,
                \\
                \alpha((g_1,\hat v^0_1),(g_2,\hat v^0_2),(\lambda^0,\hat \lambda^0,[w^1])) &= [0]~,
                \\
                \beta((g,\hat v^0),(\lambda^0_1,\hat \lambda^0_1,[w^1_1]),(\lambda^0_2,\hat \lambda^0_2,[w^1_2])) &= [\hat v^{0,\low}(g\lambda^0_1,g\lambda^0_2)]~.
            \end{aligned}
        \end{equation}
    \end{definition}
    
    A principal $\sfTBRFTA_d$-bundle, again over a manifold $M$ and subordinate to $\sigma:Y\rightarrow M$, has Čech cocycle data 
    \begin{subequations}\label{eq:cocycledatat2rbaff}
        \begin{align}
            \begin{split}
                (g_{ij},\hat{v}^0_{ij}): Y^{[2]} &\rightarrow \sfGL(d,\IZ) \times \Lambda^2 \mathbb R^{d,\ast}~,
                \\
                (\lambda^0_{ijk},\hat{\lambda}^0_{ijk},[w^1_{ijk}]): Y^{[3]}&\rightarrow \IZ^d \times \Lambda^2\IZ^{d,\ast} \times \mathbb T^{d,\ast},\\
                [w^2_{ijkl}]: Y^{[4]} &\rightarrow  \mathbb R/\IZ
            \end{split}
        \end{align}
        satisfying 
        \begin{equation}
            \begin{aligned}
                g_{ij}g_{jk} &= g_{ik}~,
                \\
                \hat{v}^0_{ij} - \hat{v}^0_{ik} + g_{ij}\hat{v}^0_{jk} &= -\hat \lambda^0_{ijk}~,
                \\
                \lambda^0_{ijk} - \lambda^0_{ijl} + \lambda^0_{ikl} - g_{ij}\lambda^0_{jkl} &= 0~,
                \\
                \hat\lambda^0_{ijk}-\hat\lambda^0_{ijl}+\hat\lambda^0_{ikl}-g_{ij}\hat\lambda^0_{jkl}&=0~,
                \\
                [w^1_{ijk} - w^1_{ijl} + w^1_{ikl} - g_{ij}w^1_{jkl}] &= [\langle g_{ij}\lambda^0_{jkl}, \hat{v}^0_{ij} \rangle]~,
                \\
                [w^2_{ijkl} - w^2_{ijkm} + w^2_{ijlm} - w^2_{iklm} + w^2_{jklm}] & 
                \\
                =[\langle g_{ik}\lambda^0_{klm}, \langle g_{ij} \lambda^0_{jkm}, \hat{v}^0_{ij} \rangle \rangle^\low &- \langle g_{ij}\lambda^0_{jkl}, \langle g_{ij}\lambda^0_{jlm}, \hat{v}^0_{ij} \rangle\rangle^\low 
                \\
                &+ \langle \lambda^0_{ijk}, g_{ik}w^1_{klm} + \langle g_{ik}\lambda^0_{klm}, \hat{v}^0_{ik} \rangle \rangle ]~.
            \end{aligned}
        \end{equation}
    \end{subequations}
    
    The expected span of toroidal 3-groups is completed by the following 3-group.
    \begin{definition}[Toroidal 3-group $\sfTDFTA_d$]
        The toroidal 3-group $\sfTDFTA_d$ has underlying complex
        \begin{equation}
            \begin{tikzcd}[column sep=0.8cm,row sep=0.2cm]
                \IZ^{d,\ast} \times \mathbb R/\IZ \arrow[r,"\sft"] 
                & \IZ^d \times \Lambda^2 \IZ^{d,\ast} \times \mathbb R^{d,\ast}  \arrow[r,"\sft"] 
                & \sfGL(d,\IZ) \ltimes (\mathbb R^d \times \Lambda^2\mathbb R^{d,\ast}) 
                \\
                (\lambda^1,[v^2]) \arrow[r,mapsto] & (0,0,\lambda^1) ~,
                \\
                &(\lambda^0,\hat \lambda^0,v^1) \arrow[r,mapsto] & (\unit,\lambda^0,\hat \lambda^0)~,
            \end{tikzcd}
        \end{equation}
        and structure maps
        \begin{equation}
            \begin{aligned}
                (g,v^0,\hat v^0) \triangleright (\lambda^0,\hat \lambda^0,v^1) &= (g\lambda^0,g\hat \lambda^0,gv^1 + \iota_v^0(g\hat \lambda^0))~,
                \\
                (g,v^0,\hat v^0) \triangleright (\lambda^1,[v^2]) &= (g\lambda^1, [v^2 + \iota_v^0(g\lambda^1)])~,
                \\
                (\lambda^0,\hat \lambda^0,v^1) \triangleright (\lambda^1,[v^2]) &= (\lambda^1,[v^2])~,
                \\
                \{(\lambda^0_1,\hat \lambda^0_1,v^1_1),(\lambda^0_2,\hat \lambda^0_2,v^1_2)\} &= (-\iota_{\lambda^0_1}\hat \lambda^0_2,[0])~,
                \\
                \alpha((g_1,v^0_1,\hat v^0_1),(g_2,v^0_2,\hat v^0_2),(\lambda^0,\hat \lambda^0,v^1)) &= (0,[-\hat \lambda^{0,\low}(g_2^{-1}v^0_2,g_2^{-1}g_1^{-1}v^0_1)])~,
                \\
                \beta((g,v^0,\hat v^0),(\lambda^0_1,\hat \lambda^0_1,v^1_1),(\lambda^0_2,\hat \lambda^0_2,v^1_2)) &= (0,[0])~.
            \end{aligned}
        \end{equation}
    \end{definition}
    The full differential cohomology describing principal $\sfTDFTA_2$-bundles with connections is found in \ref{app:adjustment}.
    
    The three toroidal 3-groups defined for the affine setting can again be arranged into a span.
    \begin{proposition}\label{prop:affine}
        There is a commutative diagram of homomorphisms of toroidal 3-groups
        \begin{equation}\label{eq:TTgroups_diagram}
            \begin{tikzcd}
                \sfTBLFT_d \ar[d] & \ar[l] \sfTDFT_d \ar[r]  \ar[d] &  \sfTBRFT_d  \ar[d] \\
                \sfTBLFTA_d & \ar[l] \sfTDFTA_d \ar[r] &  \sfTBRFTA_d
            \end{tikzcd}
        \end{equation}
        such that the horizontal arrows are weak homotopy equivalences.
    \end{proposition}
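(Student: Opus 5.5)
The proof has two parts: construct the five arrows in~\eqref{eq:TTgroups_diagram} and check that the two squares commute, then show that the lower horizontal arrows are weak homotopy equivalences. The upper row is the span of \cref{prop:span_3_groups}, whose horizontal maps are already known to be weak homotopy equivalences.

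\textbf{The vertical arrows.} Each vertical arrow will be the strict homomorphism including the non-affine 3-group as the subgroup with trivial $\sfGL(d,\IZ)$-component. On underlying complexes this is $\unit\times$ identity, and the structure maps are compatible because setting $g=\unit$ in the affine formulas reproduces \ref{def:TBLFT}, \ref{def:TBRFT} and the definition of $\sfTDFT_d$. On deloopings the map sends a cocycle to the same cocycle with all $g_{ij}=\unit$.

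\textbf{The lower horizontal arrows.} The left arrow $\phi^\ltimes:\sfTDFTA_d\rightarrow\sfTBLFTA_d$ is again strict: it forgets $\hat v^0$ and $v^1$ and keeps $g$. Its structure maps visibly agree, since neither $\alpha$ nor the action on $(\lambda^1,[v^2])$ involves $\hat v^0$ or $v^1$.

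The right arrow $\hat\phi^\ltimes$ is defined by a simplicial map $B\sfTDFTA_d\rightarrow B\sfTBRFTA_d$. It is the identity on $g_{ij}$, $\hat v^0_{ij}$, $\lambda^0$ and $\hat\lambda^0$, and it is given by the $\sfGL(d,\IZ)$-twisted version of~\eqref{eq:homr}, with $g$'s inserted as in the cocycle relations~\eqref{eq:cocycledatat2baff} and~\eqref{eq:cocycledatat2rbaff}. For example,
\[
[w^1_{ijk}] \coloneqq [v^1_{ijk} + \langle v^0_{ij}, g_{ij}\hat v^0_{jk}\rangle + \langle \lambda^0_{ijk}, v^0_{ik}\rangle]~,
\]
and $[w^2_{ijkl}]$ is obtained analogously by placing $g_{ij}$, $g_{ik}$ in front of the later-index data.

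The main obstacle is verifying that these $g$-twisted formulas satisfy the affine cocycle relations for $\sfTBRFTA_d$. The difficulty is the $w^2$ relation, because $\langle-,\langle-,-\rangle_0\rangle_1^\low$ is not $\sfGL(d,\IZ)$-equivariant. I would handle this by fixing $g_{ij}$, which is locally constant, and using the fact that $\sfGL(d,\IZ)$ preserves the lattices. The non-equivariance of $(-)^\low$ then only produces integer-valued discrepancies, plus terms that are absorbed by the $\alpha$-cocycle $-\hat\lambda^{0,\low}(g_2^{-1}v^0_2,g_2^{-1}g_1^{-1}v^0_1)$. If these do not cancel directly, I would add a correction term to $w^2$, of the same shape as the $\alpha$-contributions, that absorbs them. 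Once the maps exist, commutativity of the squares is immediate: setting $g=\unit$ recovers $\phi$ and~\eqref{eq:homr}.

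\textbf{Weak homotopy equivalence.} I would argue exactly as in \cref{prop:span_3_groups}. Both lower horizontal maps are degree-wise surjective submersions, since one only solves linearly for the forgotten or shifted components. Their fibers over the simplicial unit have trivial $g$-component, so they coincide with the fibers of the non-affine maps: $B(\Lambda^2\IR^{d,\ast})\times B^2(\IR^{d,\ast})$ for $\phi^\ltimes$ and $B\IR^d\times E(B^2\IZ^{d,\ast})$ for $\hat\phi^\ltimes$. Both have contractible geometric realizations, so $|B\phi^\ltimes|$ and $|B\hat\phi^\ltimes|$ are weak homotopy equivalences.

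As an alternative check, I would compare the long exact sequences of homotopy groups of the fibrations $B(\cdot)\rightarrow B\sfGL(d,\IZ)$ obtained by projecting to the $g$-component, and apply the five lemma using the non-affine result on the fibers.
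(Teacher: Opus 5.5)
Your proposal is correct and follows essentially the same route as the paper: the vertical maps are the inclusions with $g=\unit$, the left map is the strict projection forgetting $\hat v^0$ and $v^1$, and the right map is the $g$-twisted version of~\eqref{eq:homr}, with commutativity obtained by setting $g=\unit$. The paper's formula~\eqref{eq:homraff} is exactly the naive $g$-insertion you describe, with no extra correction term. The paper also leaves the weak-homotopy-equivalence claim tacit, by analogy with \cref{prop:span_3_groups}, so your fiber-over-the-unit argument (and the five-lemma check over $B\sfGL(d,\IZ)$) makes explicit a step the paper does not write out.
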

    \begin{proof}
        The homomorphism $\sfTDFTA_d\rightarrow \sfTBLFTA_d $ is the strict projection that forgets the $\Lambda^2\IR^{d,\ast}$- and $\IR^{d,\ast}$-components. The homomorphism $\sfTDFTA_d \rightarrow \sfTBRFTA_d$ is a weak homomorphism of toroidal 3-groups, which is an extension of that from \ref{prop:span_3_groups} by a suitable inclusion of the group $\sfGL(d,\IZ)$:
        \begin{align}
            \begin{split}\label{eq:homraff}
                [w^1_{ijk}] &\coloneqq  [v^1_{ijk} + \langle v^0_{ij},g_{ij}\hat{v}^0_{jk} \rangle + \langle \lambda^0_{ijk},v^0_{ik} \rangle]~,
                \\
                [w^2_{ijkl}] &\coloneqq  [v^2_{ijkl} - \langle v^0_{ij}, g_{ij}v^1_{jkl} \rangle + \langle g_{ij} v^0_{jk}, \langle v^0_{ij},g_{ik} \hat{v}^0_{kl} \rangle \rangle^\low + \langle \lambda^0_{ijk}, \langle v^0_{ik}, g_{ik} \hat{v}^0_{kl} \rangle \rangle^\low  \\
                &\hspace{1cm} + \langle \lambda^0_{ijk}, \langle \lambda^0_{ikl}, \hat{v}^0_{il} \rangle \rangle^\low -  \langle g_{ij}\lambda^0_{jkl}, \langle \lambda^0_{ijl}, \hat{v}^0_{il} \rangle \rangle^\low ]~.
            \end{split}
        \end{align}
        The embedding maps are the evident ones, setting the $\sfGL(d,\IZ)$-factor to $\unit$. Commutativity follows from putting all $g_{\ldots}=\unit$.
    \end{proof}
    
    \subsection{Affine \texorpdfstring{$T_2$}{T2}-duality as spans}
    
    For a geometric interpretation of toroidal principal 3-bundles with the structure groups defined above, we need to slightly extend our definition of $p$-gerbes.
    \begin{definition}[Affine gerbe]
        Let $\sfG=\sfGL(d,\IZ)$ act on $\sfH=\IT^{d,\ast}$, by the standard dual action. Together with the trivial boundary homomorphism $\sft:\sfH\rightarrow \sfG$, this defines a crossed module of Lie groups 
        \begin{equation}
            \caG=(\sfH \xrightarrow{~\sft~}\sfG,\acton)~.
        \end{equation}
        An \emph{affine 1-gerbe with band $\IT^{d,\ast}$} is then a principal $\caG$-bundle, see~\cite{Rist:2022hci} for the cocycles in our conventions.
    \end{definition}
    
    We then have the following statements, linking toroidal principal 3-bundles to affine $T_2$-duality backgrounds.
    \begin{proposition}\label{prop:left-affine}
        A principal $\sfTBLFTA_d$-bundle over a manifold $M$ determines the following data:
        \begin{myenumerate}
            \item an affine $\mathbb T^d$-bundle $P \rightarrow M$;
            \item a $2$-gerbe $R \rightarrow P$ with band $\sfU(1)$.
        \end{myenumerate}
    \end{proposition}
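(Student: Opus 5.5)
The plan is to mimic the proof of \cref{prop:leftbackgroundsf2}, now carrying the $\sfGL(d,\IZ)$-components $g_{ij}$ along. Start from the cocycle data~\eqref{eq:cocycledatat2baff}, subordinate to $\sigma:Y\rightarrow M$. The components $(g_{ij},v^0_{ij})$ satisfy $g_{ij}g_{jk}=g_{ik}$ and $v^0_{ij}-v^0_{ik}+g_{ij}v^0_{jk}=-\lambda^0_{ijk}$. Hence $([g_{ij}],[v^0_{ij}])$, with $[v^0_{ij}]\in\IR^d/\IZ^d$, is a Čech cocycle with values in $\sfGL(d,\IZ)\ltimes\IT^d$. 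This defines the affine torus bundle $P$. As a simplicial manifold I will take
\begin{equation}
    P_n=\big\{(y_0,\ldots,y_n,(v^0_i))\in Y^{[n+1]}\times(\IR^d)^{n+1}\,\big|\,[v^0_i-g_{ij}v^0_j]=[v^0_{ij}]\big\}~,
\end{equation}
which is the twisted version of the resolution by $\IR^d\rightarrow\IT^d$ used before.

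Next I build $R_\bullet$ over $P_\bullet$. Its $n$-simplices add $(\IR/\IZ)^{\binom{n+1}{3}}$-coordinates $[v^2_{ijk}]$. These are subject to the affine analogue of the relation in the proof of \cref{prop:leftbackgroundsf2}: I insert $g_{ij}$ wherever an index is transported from $j$ to $i$, so that $\lambda^1_{ijkl}$ pairs with $v^0_i$ and $\hat\lambda^0_{jkl}$ becomes $g_{ij}\hat\lambda^0_{jkl}$. Explicitly, the ansatz for the defining relation is
\begin{equation}
    [\delta v^2]_{ijkl}=\big[v^2_{ijkl}-\langle v^0_i,g_{ij}\lambda^1_{jkl}\rangle\text{-type terms}+\langle v^0_i,\langle v^0_{ij},g_{ij}\hat\lambda^0_{jkl}\rangle\rangle^\low+\langle v^0_i-g_{ij}v^0_j-v^0_{ij},\langle g_{ij}v^0_j,g_{ij}\hat\lambda^0_{jkl}\rangle\rangle^\low\big]~,
\end{equation}
with the index placement of the pairing with $\lambda^1$ fixed exactly as in \cref{prop:leftbackgroundsf2} after twisting.

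To apply \cref{def:highgerbes}, I must check two things. First, the right-hand side defines a simplicial map from the resolution of $P_\bullet$ to $B^3\sfU(1)_\bullet$. Concretely, its alternating sum over five indices must vanish mod $\IZ$, which follows from the last two cocycle identities in~\eqref{eq:cocycledatat2baff}. Second, the resolution must be degreewise a surjective submersion inducing a homotopy equivalence on realizations. This is clear, since it is a Čech-type resolution of $Y^{[n+1]}\times\IT^d$ pulled back along a covering map.

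The main obstacle is the coboundary check. The combination $\langle-,\langle-,-\rangle_0\rangle_1^\low$ is not $\sfGL(d,\IZ)$-invariant, as the paper stresses. The five-term sum therefore produces discrepancy terms, of the form $\hat\lambda^{0,\low}(g^{-1}v,g^{-1}v')-(g\hat\lambda^0)^{\low}(v,v')$. These terms must be absorbed, and two mechanisms are available: the $\alpha$-map of $\sfTBLFTA_d$, which is precisely built from $\hat\lambda^{0,\low}(g_2^{-1}v^0_2,g_2^{-1}g_1^{-1}v^0_1)$, and the integrality of $\lambda^0,\hat\lambda^0$, which makes differences of $\low$-parts integral mod $\IZ$. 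If the naive ansatz fails, I would fix it first by adding a correction term $\langle v^0_i,\langle\cdot,\cdot\rangle\rangle^\low$ involving $g_{ij}$. The coefficient of that term is to be read off from the $\alpha$-contribution in the last line of~\eqref{eq:cocycledatat2baff}.

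Finally, I would remark that the band is $\sfU(1)$ untwisted, since $\sfGL(d,\IZ)$ acts trivially on $\IR/\IZ$ apart from the shift $\iota_{v^0}(g\lambda^1)$, which is absorbed in the coordinates. Thus $R$ is an ordinary 2-gerbe, while $P$ is affine.
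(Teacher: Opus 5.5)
\textbf{Verdict: same route as the paper, but the key step is missing, and the repairs you propose do not work.}

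Your construction is the paper's own. Its proof consists solely of writing down $R_n$ with $[v^0_i-g_{ij}v^0_j]=[v^0_{ij}]$ and
\[
[v^2_{ijk}-v^2_{ijl}+v^2_{ikl}-v^2_{jkl}]=\big[v^2_{ijkl}-\langle v^0_i,\lambda^1_{ijkl}\rangle+\langle v^0_i,\langle v^0_{ij},g_{ij}\hat\lambda^0_{jkl}\rangle\rangle^\low+\langle n_{ij},\langle g_{ij}v^0_j,g_{ij}\hat\lambda^0_{jkl}\rangle\rangle^\low\big]\,,\qquad n_{ij}:=v^0_i-g_{ij}v^0_j-v^0_{ij}\in\IZ^d\,,
\]
and asserting that this is a 2-gerbe; no verification is given. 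The $\lambda^1$-term carries no $g$, because $\lambda^1_{ijkl}$, like $v^0_i$, already lives in the frame of $i$. Your displayed $g_{ij}\lambda^1_{jkl}$ is mis-indexed, although your sentence that $\lambda^1_{ijkl}$ pairs with $v^0_i$ is right. The only step with content is that the alternating five-term sum of the right-hand side $t_{ijkl}$ vanishes mod $\IZ$. You defer it and offer an unspecified fallback. That step also needs all the twisted relations, including the $\hat\lambda^0$-cocycle identity and integrality of $n_{ij}$, not only the last two.

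\textbf{Why your proposed mechanisms fail.} Read $\langle u,\langle w,g\hat\lambda\rangle\rangle^\low$ as $\hat\lambda^\low(g^{-1}w,g^{-1}u)$; this is the reading that reproduces $\alpha$ in the last cocycle relation. Now transport everything to the frame of $k$.
\begin{enumerate}
\item The computation becomes literally the untwisted one of \cref{prop:leftbackgroundsf2}, which does close mod $\IZ$. The exception is where $\hat\lambda^0_{jlm}-\hat\lambda^0_{jkm}+\hat\lambda^0_{jkl}=g_{jk}\hat\lambda^0_{klm}$ is used inside $\low$.
\item There you obtain $(g_{jk}\hat\lambda^0_{klm})^\low$ instead of $\hat\lambda^{0,\low}_{klm}(g_{jk}^{-1}-,g_{jk}^{-1}-)$. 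Their difference $S$ is a symmetric integral bilinear form.
\item The five-term sum then equals $S(g_{ij}^{-1}v^0_{ij},g_{ij}^{-1}v^0_i)+S(v^0_j,g_{ij}^{-1}n_{ij})$ mod $\IZ$. The other reading of the notation gives an analogous residual.
\item Here $v^0_i$ is a real fibre coordinate, and $n_{ij}$ runs over all of $\IZ^d$ on the resolution. So this vanishes mod $\IZ$ only if $S=0$.
\end{enumerate}
Integrality of $\lambda^0,\hat\lambda^0$ does not give $S=0$, because the arguments of $S$ are real. Nor can $\alpha$ absorb it, since $\alpha$ is already used up in the last cocycle relation. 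In fact $\alpha$ itself violates
\[
\alpha(g_1g_2,g_3,h)=\alpha(g_2,g_3,h)+\alpha(g_1,g_2g_3,h)-\alpha(g_1,g_2,g_3\acton h)
\]
by $\hat\lambda^{0,\low}(g_3^{-1}x,g_3^{-1}y)-(g_3\hat\lambda^0)^\low(x,y)$, with $x=g_2^{-1}v^0_2$ and $y=g_2^{-1}g_1^{-1}v^0_1$.

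\textbf{Consequence.} The naive formula, which is both yours and the paper's, is justified only when the monodromy preserves the $\low$-splitting on the relevant $\hat\lambda^0$, for example $g_{ij}=\unit$. This fails for generic non-diagonal $\sfSL(2,\IZ)$-monodromy once $\hat\lambda^0\neq 0$.

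\textbf{What a repair would need.} A correction quadratic in $v^0_i$ is required. With
\[
\sigma_{ijkl}:=\hat\lambda^{0,\low}_{jkl}(g_{ij}^{-1}-,g_{ij}^{-1}-)-(g_{ij}\hat\lambda^0_{jkl})^\low
\]
(so that $S=-\sigma_{jklm}$), adding $-\tfrac12\sigma_{ijkl}(v^0_i,v^0_i)$ removes the fibre dependence. It still leaves a base-dependent, possibly half-integral remainder. That remainder must be compensated by modifying $\alpha$ and the base cocycle relation. Neither your proposal nor the paper's proof supplies this step.
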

    \begin{proof}
        Consider cocycle data~\eqref{eq:cocycledatat2baff} for a principal $\sfTBLFTA_d$-bundle. The corresponding affine left $T_2$-duality background $(P,R)$ is given by the simplicial manifold $R_\bullet$ with
        \begin{align}
            \begin{split}
                R_n \coloneqq  \Big\{ &((v^0_{i}),([v^2_{ijk}])) \in Y^{[n+1]}\times (\mathbb R^d)^{n+1} \times (\mathbb R/\IZ)^{\binom{n+1}{3}} 
                \\
                &\Big|\: \: [v^0_{i} - g_{ij}v^0_{j}] = [v^0_{ij}]~,
                \\
                &\phantom{\Big|}\:\: [v^2_{ijk} - v^2_{ijl} + v^2_{ikl} - v^2_{jkl}]
                \\
                &\hspace{1cm}= [v^2_{ijkl} - \langle v^0_i,\lambda^1_{ijkl} \rangle + \langle v^0_i, \langle v^0_{ij}, g_{ij} \hat{\lambda}^0_{jkl} \rangle \rangle^\low 
                \\
                &\hspace{2cm}+ \langle v^0_i - g_{ij}v^0_j - v^0_{ij}, \langle g_{ij}v^0_j,g_{ij}\hat{\lambda}^0_{jkl} \rangle \rangle^\low ] \Big\}~.
            \end{split}
        \end{align}
        This is precisely a 2-gerbe $R$ on the resolved total space of an affine $\IT^d$-bundle $P$.
    \end{proof}
    
    \begin{proposition}\label{prop:right-affine}
        A principal $\sfTBRFTA_d$-bundle over a manifold $M$ determines the following data:
        \begin{myenumerate}
            \item an affine $\Lambda^2 \mathbb T^{d,\ast}$-bundle $\hat{P} \rightarrow M$;
            \item an affine 1-gerbe $\hat Q \rightarrow \hat{P}$ with band $\mathbb T^{d,\ast}$;
            \item a $2$-gerbe $\hat R\rightarrow \hat Q$ with band $\sfU(1)$.
        \end{myenumerate}
    \end{proposition}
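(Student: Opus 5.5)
The plan is to mirror the proof of \ref{prop:rightbackgroundsf2}, now keeping track of the $\sfGL(d,\IZ)$-components $g_{ij}$ of the cocycle data~\eqref{eq:cocycledatat2rbaff}. From these data I will build a simplicial manifold $\hat R_\bullet$ and then read off $\hat P$, $\hat Q$ and $\hat R$ from its successive coordinates. Concretely, $\hat R_n$ will consist of tuples
\begin{equation*}
  \big((\hat v^0_i),(w^1_{ij}),([w^2_{ijk}])\big)\in Y^{[n+1]}\times(\Lambda^2\IR^{d,\ast})^{n+1}\times(\IR^{d,\ast})^{\binom{n+1}{2}}\times(\IR/\IZ)^{\binom{n+1}{3}}
\end{equation*}
subject to three conditions.
\begin{myenumerate}
\item The first condition is $[\hat v^0_i-g_{ij}\hat v^0_j]=[\hat v^0_{ij}]$.
\item The second is a twisted relation $[w^1_{ij}-w^1_{ik}+g_{ij}w^1_{jk}]=[w^1_{ijk}-\langle\lambda^0_{ijk},\hat v^0_i\rangle]$, with $g$'s inserted as dictated by the action $(g,\hat v^0)\acton(\lambda^0,\hat\lambda^0,[w^1])$.
\item The third is the affine version of the last relation in~\eqref{eq:lhat}, obtained by inserting $g_{ij},g_{ik}$ exactly where they appear in~\eqref{eq:cocycledatat2rbaff}.
\end{myenumerate}

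For the first item of the statement, I project to the $\hat v^0$-coordinates. The pair $(g_{ij},\hat v^0_{ij})$ satisfies $g_{ij}g_{jk}=g_{ik}$ and $\hat v^0_{ij}-\hat v^0_{ik}+g_{ij}\hat v^0_{jk}=-\hat\lambda^0_{ijk}\in\Lambda^2\IZ^{d,\ast}$. This is precisely a Čech cocycle with values in $\sfGL(d,\IZ)\ltimes\Lambda^2\IT^{d,\ast}$, so it defines an affine $\Lambda^2\IT^{d,\ast}$-bundle $\hat P$. Its simplicial total space, resolved along $\Lambda^2\IR^{d,\ast}\to\Lambda^2\IT^{d,\ast}$, is given by the first condition above.

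For the second item, the data $(g_{ij},[w^1_{ijk}-\langle\lambda^0_{ijk},\hat v^0_i\rangle])$ should form a cocycle for a principal $(\IT^{d,\ast}\to\sfGL(d,\IZ))$-bundle over the resolved total space of $\hat P$, in the conventions of~\cite{Rist:2022hci}. I will check this from the fifth and third relations of~\eqref{eq:cocycledatat2rbaff}. The twisted cocycle identity $\delta_g w^1=\langle g\lambda^0,\hat v^0\rangle$ should combine with $\delta_g\lambda^0=0$ and the transformation rule of $\hat v^0_i$ so that the $\hat v^0_{ij}$-terms cancel, leaving exactly the affine 1-gerbe cocycle condition.

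For the third item, I verify that the last condition defines a consistent $\IR/\IZ$-valued 3-cocycle on $\hat Q_\bullet$, and hence a 2-gerbe in the sense of \ref{def:highgerbes}. This is the main obstacle. One must show that the coboundary of the right-hand side vanishes mod $\IZ$ using the sixth relation of~\eqref{eq:cocycledatat2rbaff}, and the $\langle-,\langle-,-\rangle\rangle^\low$ terms are not $\sfGL(d,\IZ)$-equivariant, as the paper itself warns. To handle this, I would first treat the pieces that are manifestly equivariant, namely the $\hat v^0_{ij}$- and $w^1$-pairings, via invariance of $\langle-,-\rangle_1$. I would then absorb the non-equivariant discrepancy of the $\low$-terms into the $\beta$-map $[\hat v^{0,\low}(g\lambda^0_1,g\lambda^0_2)]$ of $\sfTBRFTA_d$. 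This works because integrality of $\lambda^0$ makes all such discrepancies integer-valued modulo the symmetric part, which is how $\beta$ was designed.

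As a consistency check, setting all $g_{ij}=\unit$ must reproduce~\eqref{eq:lhat}, and hence \ref{prop:rightbackgroundsf2}.
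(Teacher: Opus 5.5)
Your overall strategy is the paper's: build $\hat R_\bullet$ from \eqref{eq:cocycledatat2rbaff} and read off $\hat P$, $\hat Q$, $\hat R$ from its successive coordinates. Your first two conditions coincide with the paper's, and your checks of the first two items are correct. In the second you need $g\langle\lambda,\hat v\rangle=\langle g\lambda,g\hat v\rangle$ to move $g_{ij}$ past $\langle\lambda^0_{jkl},\hat v^0_j\rangle$. You leave the third relation implicit. The natural reading of your prescription, transporting every $\lambda^0$ to the frame of the first vertex, gives exactly the $\low$-terms the paper writes: $\langle g_{ij}\lambda^0_{jkl},\langle\lambda^0_{ijl},\hat v^0_i\rangle\rangle^\low-\langle\lambda^0_{ijk},\langle\lambda^0_{ikl},\hat v^0_i\rangle\rangle^\low$. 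The paper then simply asserts that the result is a 2-gerbe; it does not perform the coboundary check you single out as the main obstacle. Your proposed way of performing that check is where there is a genuine gap.

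The mechanism fails. For $g\in\sfGL(d,\IZ)$, $a,b\in\IZ^d$ and $\omega\in\Lambda^2\IR^{d,\ast}$, put $E_g(a,b;\omega):=\langle ga,\langle gb,\omega\rangle\rangle^\low-\langle a,\langle b,g^{-1}\omega\rangle\rangle^\low$.
\begin{itemize}
\item By invariance of the unmodified pairing, $E_g$ is symmetric in $(a,b)$, and it is linear in $\omega$ with integer coefficients. So it is integral for integral $\omega$, but for real $\omega$ it is just a real number. Integrality of $\lambda^0$ therefore buys nothing.
\item Run the check with the $\low$-terms above and a $w^1$-term cancelling the $w^1$-part of the sixth relation, e.g.\ $-\langle\lambda^0_{ijk},g_{ik}w^1_{kl}\rangle$. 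Everything then cancels as for trivial $g$, except the frame change in the face $jklm$, whose $\low$-terms are built from $\hat v^0_j$.
\item Write $g_{ij}\hat v^0_j=\hat v^0_i-\hat v^0_{ij}$ modulo $\Lambda^2\IZ^{d,\ast}$. The transported $\hat v^0_{ij}$-part cancels the $\beta$-terms of the sixth relation exactly, so $\beta$ is already used up.
\end{itemize}
With $t_{ijkl}$ the right-hand side of the third relation, what remains is
\begin{equation*}
(\delta t)_{ijklm}=E_{g_{ij}}\big(\lambda^0_{jkl}-\lambda^0_{jkm},\,\lambda^0_{jlm}-\lambda^0_{jkm};\,\hat v^0_i-\hat v^0_{ij}\big)\quad\text{mod }\IZ~.
\end{equation*}
This is linear in the continuous fiber coordinate $\hat v^0_i$, so it does not vanish mod $\IZ$ in general.

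A concrete instance: take $d=2$, $g_{ij}=\left(\begin{smallmatrix}0&-1\\1&0\end{smallmatrix}\right)$ and $\omega=x\,e^1\wedge e^2$. Then $E_{g_{ij}}(a,b;\omega)=-x(a_1b_2+a_2b_1)$. Admissible data with both arguments equal to $(1,1)$ exist already over a point, with all $\hat v^0_{ab}$, $\hat\lambda^0$, $w^1$, $w^2$ zero. They give $(\delta t)_{ijklm}=-2x\neq 0$ mod $\IZ$ for generic $x$.

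The $\beta$-map cannot absorb this. It enters only through the sixth relation, evaluated on $\hat v^0_{ij}$ and never on a fiber coordinate. So your argument does not establish the third item. You would need an additional correction term in the third relation, or a different argument altogether, and neither your sketch nor the paper's proof supplies one.
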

    \begin{proof}
        Consider cocycle data~\eqref{eq:cocycledatat2rbaff} for a principal $\sfTBRFTA_d$-bundle. The corresponding affine right $T_2$-duality background $(\hat P,\hat Q,\hat R)$ is given by the simplicial manifold $\hat R_\bullet$ with 
        \begin{align}
            \begin{split}
                \hat R_n \coloneqq  \Big\{ &((\hat{v}^0_{i}),(w^1_{ij}),([w^2_{ijk}])) \in Y^{[n+1]}\times (\Lambda^2 \mathbb R^{d,\ast})^{n+1} \times (\mathbb R^{d,\ast})^{\binom{n+1}{2}} \times (\mathbb R/\IZ)^{\binom{n+1}{3}} 
                \\
                &\Big| \: \: [\hat{v}^0_{i} - g_{ij}\hat{v}^0_{j}] = [\hat{v}^0_{ij}]~,
                \\
                &\phantom{\Big|}\: \:[w^1_{ij} - w^1_{ik} + g_{ij} w^1_{jk}] = [w^1_{ijk} - \langle \lambda^0_{ijk}, \hat{v}^0_i \rangle]~,
                \\
                &\phantom{\Big|}\:\: [w^2_{ijk} - w^2_{ijl} + w^2_{ikl} - w^2_{jkl}]
                \\
                &\hspace{1cm}= [w^2_{ijkl} + \langle v^0_{ij},g_{ij}(w^1_{jk} - w^1_{jl} + g_{jk} w^1_{kl}) \rangle 
                \\
                &\hspace{2cm}+ \langle g_{ij}\lambda^0_{jkl}, \langle \lambda^0_{ijl}, \hat{v}^0_i \rangle \rangle^\low - \langle \lambda^0_{ijk}, \langle \lambda^0_{ikl}, \hat{v}^0_i \rangle \rangle^\low ] \Big\}~.
            \end{split}
        \end{align}
        This is indeed a 2-gerbe $\hat R$ over the resolved total space of an affine gerbe with band $\IT^{d,\ast}$ which, in turn, sits over an affine $\Lambda^2\IT^{d,\ast}$-bundle.
    \end{proof}
    
    We now use the differential cocycles for principal $\sfTDFTA_d$-bundles with connections described in \ref{app:adjustment}. 
    \begin{proposition}\label{prop:affine_connections}
        Let $\scP\rightarrow M$ be a principal $\sfTBLFTA_d$-bundle determining a left $T_2$-duality background $(P,R)$ by \ref{prop:left-affine}. A lift to a principal $\sfTDFTA_d$-bundle $\scP_\rmC\rightarrow M$ determines an affine 1-gerbe $Q \rightarrow P$ with band $\mathbb T^{d,\ast}$. 
        
        Let $(\hat P,\hat Q,\hat R)$ be the right $T_2$-duality background corresponding to the induced principal $\sfTBRFTA_d$-bundle, again by \ref{prop:right-affine}. Then a connection on $\scP_\rmC\rightarrow M$ of curvature $(F,\hat{F},H_0,G_0)$ determines the following data:
        \begin{myenumerate}
            \item a connection $A$ on $P \rightarrow M$ of curvature $F$;
            \item a connection $(B,\Lambda)$ on $Q \rightarrow P$ of curvature $H_0 + \langle A ,\hat{F} \rangle$;
            \item a connection $(C,\Sigma,\Xi)$ on $R \rightarrow P$ of curvature $G_0 - \langle A ,H_0 \rangle - \langle A ,\langle A ,\hat{F} \rangle \rangle^\low $;
            \item a connection $\hat{A}$ on $\hat{P} \rightarrow M$ of curvature $\hat{F}$;
            \item a connection $(\hat{B},\hat{\Lambda})$ on $\hat Q\rightarrow \hat{P}$ of curvature $H_0 + \langle F ,\hat{A} \rangle$;
            \item a connection $(\hat{C},\hat{\Sigma},\hat{\Xi})$ on $\hat R\rightarrow \hat Q$ of curvature $(\langle F ,\hat{\Lambda} \rangle, G_0 - \langle F ,\hat{B} \rangle)$.
        \end{myenumerate}
    \end{proposition}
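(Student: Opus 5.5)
The plan is to mirror the non-affine proofs of \cref{prop:lift_left}, \cref{prop:connection_leftbackgroundsf2} and \cref{prop:connection_rightbackgroundsf2}. In each case we write down explicit simplicial forms from the differential cocycle data of a principal $\sfTDFTA_d$-bundle with connection given in \ref{app:adjustment}. The only change is that every transition now carries the $\sfGL(d,\IZ)$-components $g_{ij}$, which act on all coefficient lattices and vector spaces.

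\textbf{Topological part.} Starting from cocycle data of $\scP_\rmC$, we define the simplicial manifold $Q_\bullet$ with $n$-simplices consisting of $(v^0_i)$ and $([v^1_{ij}])$ subject to
\begin{equation*}
[v^0_i-g_{ij}v^0_j]=[v^0_{ij}]
\eand
[v^1_{ij}-v^1_{ik}+g_{ij}v^1_{jk}]=[v^1_{ijk}-\langle v^0_i,g_{ij}\hat\lambda^0_{jkl}\rangle]~,
\end{equation*}
with the $\hat\lambda^0$ term replaced by its appropriate $g$-twisted form. We then check, using the cocycle relations of $\sfTDFTA_d$, that the right-hand side is closed modulo the twisted Čech differential. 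This exhibits $Q$ as a principal bundle for the crossed module $(\IT^{d,\ast}\to\sfGL(d,\IZ))$ over the resolved total space of $P$ from \ref{prop:left-affine}, i.e.\ as an affine 1-gerbe.

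\textbf{Differential part.} We take the formulas \eqref{eq:conn_1} and \eqref{eq:conn_2} and insert $g_{ij}$ wherever an index-$j$ quantity is transported to patch $i$. For example, we set $A:=\rmd v^0_i+A_i$ with $A_j=g_{ij}^{-1}(A_i+\rmd v^0_{ij})$, and similarly $\hat A$, $B$, $\hat B$, $C$, $\hat C$, $\Lambda$, $\hat\Lambda$, $\Sigma$, $\hat\Sigma$, $\Xi$, $\hat\Xi$. For each we verify the relations \eqref{eq:A-relations} of \ref{def:connection}, with the stated curvature forms, on the relevant simplicial manifolds $P_\bullet$, $Q_\bullet$, $R_\bullet$, $\hat P_\bullet$, $\hat Q_\bullet$ and $\hat R_\bullet$. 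The stated curvatures are $\sfGL(d,\IZ)$-covariant expressions because $\langle-,-\rangle_0$ is equivariant and $\langle-,-\rangle_1$ is invariant. Consequently, the curvature combinations descend to $M$ once the $g_{ij}$ are included. The identities for the top-degree forms ($\rmd A_{p+1}=F_{p+2}$) are then literally the non-affine computation, since $\rmd g_{ij}=0$.

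\textbf{Main obstacle.} The hard step will be the components involving $\langle-,\langle-,-\rangle_0\rangle_1^\low$, namely $\Xi$, $\hat\Xi$, $\Sigma$ and $\hat\Sigma$. These do not transform covariantly under $\sfGL(d,\IZ)$, so the naive $g$-insertion fails by terms of the form $\hat\lambda^{0,\low}(g^{-1}v,g^{-1}g'^{-1}v')$. This is exactly the discrepancy encoded by the nontrivial $\alpha$ of $\sfTDFTA_d$ and by the adjustment maps $\kappa_{1,\dots,4}$ in the appendix. To handle this, I would first fix the shifted parameterization of gauge parameters and then use the appendix's cocycle relations verbatim, with $\kappa_4$ and $\alpha$ terms included. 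I would then add to $\Xi$ and $\hat\Xi$ the correction terms $\alpha(g_{ij},g_{jk},\cdot)$ evaluated on $\rmd v^0$ and $\hat\lambda^0$. These corrections should cancel the non-covariance on triple overlaps, which has to be checked by direct computation on $Y^{[3]}$ and $Y^{[4]}$.
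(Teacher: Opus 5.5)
Your strategy is the paper's. Its proof of this proposition only lists the forms: \eqref{eq:conn_1} verbatim for $(P,Q,R)$, since every term is already written in the frame of the first vertex and no $g_{ij}$ appears, and \eqref{eq:conn_2} for $(\hat P,\hat Q,\hat R)$ with $g_{ij}$, $g_{ik}$ inserted inside the $\low$-terms of $\hat\Lambda$, $\hat\Sigma$, $\hat\Xi$. It adds no $\alpha$-corrections and does not check \eqref{eq:A-relations}. Your proposal does not write any forms at all, and it leaves exactly the step you call the main obstacle as an expectation (``these corrections should cancel''). So the core of the argument is missing, and the obstacle is genuine. Write $\Xi^R\coloneqq\rmd v^2_{ijk}-\Xi_{ijk}-\langle\rmd v^0_i,v^1_{ijk}\rangle-\langle v^0_i,\langle\rmd v^0_i,\hat\lambda^0_{ijk}\rangle\rangle^\low$ for the paper's form on $R_2$. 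Using the relation defining $R_\bullet$ in \ref{prop:left-affine} and the appendix relations for $v^1$, $\hat\lambda^0$ and $\Xi_{ijk}$, all terms cancel except
\[
\Xi^R_{ijk}-\Xi^R_{ijl}+\Xi^R_{ikl}-\Xi^R_{jkl}=\hat\lambda^{0,\low}_{jkl}(\rmd v^0_j,v^0_j)-\big(g_{ij}\hat\lambda^0_{jkl}\big)^{\low}(g_{ij}\rmd v^0_j,g_{ij}v^0_j)~.
\]
Both $\low$-forms have antisymmetrization $\hat\lambda^0_{jkl}$, so their difference is a symmetric integral bilinear form. It vanishes for $g_{ij}=\unit$ but not in general. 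For example, take $d=2$, $g_{ij}=\begin{pmatrix}1&1\\0&1\end{pmatrix}$ and $\hat\lambda^0_{jkl}(e_1,e_2)=b$. Then the right-hand side is $-b\,u\,\rmd u$, where $u$ is the second component of $v^0_j$: exact, but nonzero.

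This also shows that your proposed repair cannot work as stated. The terms $\alpha(g_{ij},g_{jk},\cdot)$ are built from \v{C}ech data of $\scP_\rmC$, so they and their $\delta$ are pulled back from the \v{C}ech nerve $Y^{[\bullet]}$ of $M$. The defect, by contrast, depends quadratically on the fibre coordinate $v^0_j$ of $P$. Moreover, $\alpha$ is already nontrivial for $g=\unit$, where it compensates for the $\low$-construction itself, not for its failure to be $\sfGL(d,\IZ)$-covariant. A genuine fix must involve the fibre coordinates $v^0_i$ (respectively $\hat v^0_i$, $w^1_{ij}$ for $\hat\Xi$), or modify the relations defining $R_\bullet$ and $\hat R_\bullet$. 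After that, all of \eqref{eq:A-relations} has to be rechecked, and the paper's list does not supply this either.

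Two smaller points:
\begin{itemize}
\item The affine relation for $Q_\bullet$ is $[v^1_{ij}-v^1_{ik}+g_{ij}v^1_{jk}]=[v^1_{ijk}-\langle v^0_i,\hat\lambda^0_{ijk}\rangle]$ with no further twist, because $\hat\lambda^0_{ijk}$ already lives in the frame of $i$. Your version with $g_{ij}\hat\lambda^0_{jkl}$ has mismatched indices.
\item Covariance of the curvature $G_0-\langle A,H_0\rangle-\langle A,\langle A,\hat F\rangle\rangle^\low$ does not follow from the properties of $\langle-,-\rangle_0$ and $\langle-,-\rangle_1$. It follows from $\langle A,\langle A,\hat F\rangle\rangle^\low=\tfrac12\langle A,\langle A,\hat F\rangle\rangle$, which holds because $A$ is a 1-form.
\end{itemize}
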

    \begin{proof}
        A connection on the principal $\sfTDFTA_d$-bundle, given by cocycle data~\eqref{eq:cocycledatacont2daff}, determines the connection forms on the left background and $Q$ through the following formulas:
        \begin{align}
            \begin{split}
                A &\coloneqq  \rmd v^0_i + A_i~,
                \\
                B &\coloneqq  B_i - \langle A_i+\rmd v^0_i , \hat{A}_i \rangle~,
                \\
                C &\coloneqq  C_i + \langle \rmd v^0_i +A_i, B_i \rangle - \langle \rmd v^0_i + A_i , \langle \rmd v^0_i + A_i , \hat{A}_i \rangle \rangle^\low~,
                \\
                \Lambda &\coloneqq  \rmd v^1_{ij} - \Lambda_{ij} - \langle \rmd v^0_i, \hat{v}^0_{ij} \rangle~,
                \\
                \Sigma &\coloneqq  -\Sigma_{ij} + \langle \rmd v^0_i , \Lambda_{ij} \rangle + \langle \rmd v^0_i , \langle \rmd v^0_i , \hat{v}^0_{ij} \rangle \rangle^\low~,
                \\
                \Xi &\coloneqq  \rmd v^2_{ijk} - \Xi_{ijk} - \langle \rmd v^0_i, v^1_{ijk} \rangle - \langle v^0_i, \langle \rmd v^0_i, \hat{\lambda}^0_{ijk} \rangle \rangle^\low~.
            \end{split}
        \end{align}
        The connections on the right background are described as follows:
        \begin{align}
            \begin{split}
                \hat{A} &\coloneqq  \rmd \hat v^0_i + \hat{A}_i~,
                \\
                \hat{B} &\coloneqq  B_i + \langle A_i , \rmd\hat{v}^0_i \rangle~,
                \\
                \hat{C} &\coloneqq  C_i - \langle A_i , B_i \rangle~,
                \\
                \hat{\Lambda} &\coloneqq  \rmd w^1_{ij} - \Lambda_{ij} - \langle v^0_{ij},g_{ij}\rmd \hat{v}^0_{j} \rangle~,
                \\
                \hat{\Sigma} &\coloneqq  -\Sigma_{ij} + \langle A_i , \rmd w^1_{ij}- \Lambda_{ij} - 
                \langle v^0_{ij},g_{ij}\rmd\hat{v}^0_j \rangle \rangle + \langle v^0_{ij}, \langle \rmd v^0_{ij} , g_{ij} \rmd\hat{v}^0_{j} \rangle \rangle^\low~,
                \\
                \hat{\Xi} &\coloneqq  \rmd w^2_{ijk} - \Xi_{ijk} - \langle \rmd v^0_{ij}, g_{ij}w^1_{jk} \rangle - \langle v^0_{ij}, \langle \rmd v^0_{ij}, g_{ij}\hat{v}^0_{jk} \rangle \rangle^\low \\
                &\quad - \langle v^0_{ij}, \langle g_{ij}v^0_{jk}, g_{ik}\rmd\hat{v}^0_{k} \rangle \rangle^\low - \langle \lambda^0_{ijk}, \langle v^0_{ik}, g_{ik}\rmd\hat{v}^0_{k} \rangle \rangle^\low~.
            \end{split}
        \end{align}
    \end{proof}
    
    \subsection{Example: Affine \texorpdfstring{$T_2$}{T2}-duality of type \texorpdfstring{$F_2$}{F2} and \texorpdfstring{$d=2$}{d=2} over \texorpdfstring{$\IT^2$}{T2}}\label{ex:t2f2s2}
    
    We consider an affine $\IT^2$-bundle $P$ over $M = \mathbb T^2$ with monodromies around the two cycles of the base $\IT^2$ given by commuting matrices $m_1, \,m_2 \in \sfSL(2,\IZ)$. Let furthermore $(c_1,c_2)\in \IZ^2$ describe the first Chern class of the bundle in a certain trivialization of the local $\IZ^2$-system determined by $m_1, \,m_2$, and let $c_3\in \IZ$ be the Dixmier--Douady class of a 2-gerbe with band $\sfU(1)$ over $P$.
    
    To construct an affine $T_2$-duality correspondence, we give cocycle data~\eqref{eq:cocycledatat2daff} for a principal $\sfTDFTA_d$-bundle with connection over the base $M=\IT^2$, which we cover as follows:
    \begin{align}
        U_1 &:= \{[u_1,u_2] \in \mathbb T^2 \::\: u_1 \neq [0] \}~, 
        \\ 
        U_{2+} &:= \{[u_1,u_2] \in \mathbb T^2 \::\: u_1 \neq [\tfrac12], \, u_2 \neq [0] \}~, 
        \\
        U_{2-} &:= \{[u_1,u_2] \in \mathbb T^2 \::\: u_1 \neq [\tfrac12], \, u_2 \neq [\tfrac12] \}~.
    \end{align}
    We consider functions $u_{1+}:U_{1} \rightarrow \mathbb R$, $u_{1-}:U_{2+}\cup U_{2-} \rightarrow \mathbb R$ and $u_{2\pm}:U_{2\pm} \rightarrow \mathbb R$ similar to those in \ref{ssec:exs2}. The non-zero cocycle data is the following, where we write $(e_1,e_2)$ with $e_1, \,e_2 \in \mathbb R$ for points in $\mathbb R^2$ and we identify $\mathbb R \cong \Lambda^2 (\mathbb R^2)^{\ast}$ as $x \mapsto x \rmd e_1 \wedge \rmd e_2$.
    \begin{equation}
        \begin{aligned}
            g_{1,2+} &= m_1~, 
            \\
            g_{1,2-} &= m_1m_2~, 
            \\
            g_{2+,2-} &= m_2~,
        \end{aligned}
    \end{equation}
    \begin{align}
        v^0_{1,2\pm}([u_1,u_2]) &= \begin{cases} (0,0) & u_{1+} \in (0,\tfrac12)~, \\ (c_1,c_2)u_{2\pm} & u_{1+} \in (\tfrac12,1)~, \end{cases} 
        \\
        \hat{v}^0_{1,2\pm}([u_1,u_2]) &= \begin{cases} 0 & u_{1+} \in (0,\tfrac12)~, \\ c_3u_{2\pm} & u_{1+} \in (\tfrac12,1)~, \end{cases} 
        \\
        \lambda^0_{1,2+,2-} &= \begin{cases} -(c_1,c_2) & u_{1+}, \,u_{2+} \in (\tfrac12,1)~, \\ (0,0) & \text{otherwise}~, \end{cases} 
        \\
        \hat{\lambda}^0_{1,2+,2-} &= \begin{cases} -c_3 & u_{1+}, \,u_{2+} \in (\tfrac12,1)~, \\ 0 & \text{otherwise}~. \end{cases}
    \end{align}
    
    \begin{align}
        A_{1}  &= -(c_1,c_2)u_{1+}\rmd u_2~,              
        & \hat{A}_{1} &= -c_3u_{1+}\rmd u_2~, 
        \\
        A_{2+}  &= -m_1^{-1}(c_1,c_2)u_{1-}\rmd u_2~,                 
        & \hat{A}_{2+} &= -m_1^{-1}c_3u_{1-}\rmd u_2~,
        \\
        A_{2-}  &= -m_2^{-1}m_1^{-1}(c_1,c_2)u_{1-}\rmd u_2~,      
        & \hat{A}_{2-} &= -m_2^{-1}m_1^{-1}c_3u_{1-}\rmd u_2~.
    \end{align}
    The proof of \ref{prop:affine_connections} then yields explicit formulas for the connections on the affine left and right $T_2$-duality backgrounds.
    
    \subsection{Example: Affine \texorpdfstring{$T_2$}{T2}-duality of type \texorpdfstring{$F_2$}{F2} and \texorpdfstring{$d=2$}{d=2} over \texorpdfstring{$\IT^4$}{T4}}
    
    We consider an affine $\IT^2$-bundle $P$ over $M = \mathbb T^4$ with monodromy around one of the cycles of the base $\IT^4$ given by a matrix $m \in \sfSL(2,\IZ)$. Let furthermore $c_i \in \IZ$ for $i=1,\,...,\,6$ describe a collection of Chern and Dixmier--Douady classes. We construct an affine $T_2$-duality correspondence associated to this data, but this does not realize all isomorphism classes of affine $T_2$-duality correspondence spaces over $\mathbb T^4$. We give cocycle data~\eqref{eq:cocycledatat2daff} for a principal $\sfTDFTA_2$-bundle with connection over the base $M=\IT^4$, using the following ordered cover:
    \begin{align}
        U_{1} &:= \{[u_1,u_2,u_3,u_4] \in \mathbb T^4 \: : \: u_1 \neq 0\}~,
        \\
        U_{2+3+} &:= \{[u_1,u_2,u_3,u_4] \in \mathbb T^4 \::\: u_1 \neq \tfrac12, \, u_2 \neq 0, \, u_3 \neq 0\}~, 
        \\
        U_{2-3+} &:= \{[u_1,u_2,u_3,u_4] \in \mathbb T^4 \::\: u_1 \neq \tfrac12, \, u_2 \neq \tfrac12, \, u_3 \neq 0\}~, 
        \\
        U_{2+3-} &:= \{[u_1,u_2,u_3,u_4] \in \mathbb T^4 \::\: u_1 \neq \tfrac12, \, u_2 \neq 0, \, u_3 \neq \tfrac12\}~, 
        \\   
        U_{2-3-} &:= \{[u_1,u_2,u_3,u_4] \in \mathbb T^4 \::\: u_1 \neq \tfrac12, \, u_2 \neq \tfrac12, \, u_3 \neq \tfrac12\}~.      
    \end{align}
    Again, we have $\mathbb R$-valued functions $u_{i\pm}$ for $i=1,\,2,\,3$, similar to those in \ref{ssec:exs2}. We write $(e_1,e_2)$ with $e_1, \,e_2 \in \mathbb R$ for points in $\mathbb R^2$. We identify $\mathbb R^2 \cong \mathbb R^{2,\ast}$ through the standard inner product, and $\mathbb R \cong \Lambda^2 (\mathbb R^2)^{\ast}$ as $x \mapsto x(e_1,e_2) = x(e_1e_2-e_2e_1)$, which also determines an action of $SL(2,\IZ)$ on $\mathbb R$. The non-zero ordered cocycle data is the following (the subscripts $23$, $2'3'$, $2''3''$ are to be replaced by any of $2+3+$, $2-3+$, $2+3-$, $2-3-$, as long as the resulting subindex respects the order of the cover).
    \begin{align}
        g_{1,23} &= m~,
        \\
        v^0_{1,23} &= \begin{cases} (0,0) & u_{1+} \in (0,\tfrac12)~,
            \\ (c_1,c_2)u_2 & u_{1+} \in (\tfrac12,1)~, \end{cases} 
        \\
        \hat{v}^0_{1,23} &= \begin{cases} 0 & u_{1+} \in (0,\tfrac12)~,
            \\ c_3u_3 & u_{1+} \in (\tfrac12,1), \end{cases} 
        \\
        \lambda^0_{1,23,2'3'} &= \begin{cases} (0,0) & u_{1+} \in (0,\tfrac12)~,
            \\ (u_2-u_{2'})(c_1,c_2) & u_{1+} \in (\tfrac12,1)~, \end{cases} 
        \\
        \hat{\lambda}^0_{1,23,2'3'} &= \begin{cases} 0 & u_{1+} \in (0,\tfrac12)~,
            \\ (u_3-u_{3'})c_3 & u_{1+} \in (\tfrac12,1), \end{cases} 
        \\
        v^1_{1,23,2'3'} &= \begin{cases} (0,0) & u_{1+} \in (0,\tfrac12)~,
            \\ (u_2-u_{2'})(c_4,c_5)u_3 & u_{1+} \in (\tfrac12,1)~, \end{cases} 
        \\
        \lambda^1_{1,23,2'3',2''3''} &= \begin{cases} (0,0) & u_{1+} \in (0,\tfrac12)~,
            \\ (u_{2'}-u_{2''})(u_3-u_{3'})(c_4,c_5) & u_{1+} \in (\tfrac12,1)~, \end{cases} 
        \\
        [v^2]_{1,23,2'3',2''3''} &= \begin{cases} [0] & u_{1+} \in (0,\tfrac12)~,
            \\ [(u_{2'}-u_{2''})(u_3-u_{3'})c_6u_4] & u_{1+} \in (\tfrac12,1)~. \end{cases}
    \end{align}
    
    \begin{align}
        A_{1}  &= -(c_1,c_2)u_{1+}\rmd u_2~,
        \\
        A_{23}  &= -m^{-1}(c_1,c_2)u_{1-}\rmd u_2~,
        \\ 
        \hat{A}_{1} &= -c_3u_{1+}\rmd u_3~,
        \\
        \hat{A}_{23} &= -m^{-1}c_3u_{1-}\rmd u_3~,
        \\ 
        \Lambda_{1,23} &= (c_4,c_5)(u_{1+}-u_{1-})u_2\rmd u_3~,
        \\ 
        B_1 &= ((-c_4,-c_5)u_{1+} + (-c_2c_3,c_1c_3)(u_{1+}^2+u_{1+})/2)\rmd u_2 \wedge \rmd u_3~,
        \\ 
        B_{23} &= m^{-1}((-c_4,-c_5)u_{1-} + (-c_2c_3,c_1c_3)(u_{1-}^2+u_{1-})/2)\rmd u_2 \wedge \rmd u_3~,
        \\ 
        \Xi_{1,23,2'3'} &= c_6(u_{1+}-u_{1-})(u_{2'}-u_{2})u_3\rmd u_4~,
        \\ 
        \Sigma_{1,23} &= -c_6(u_{1+}-u_{1-})u_2\rmd u_3 \wedge \rmd u_4~,
        \\ 
        C_1 &= c_6 u_{1+} \rmd u_2 \wedge \rmd u_3 \wedge \rmd u_4~,
        \\ 
        C_{23} &= m^{-1}c_6 u_{1-} \rmd u_2 \wedge \rmd u_3 \wedge \rmd u_4~.
    \end{align}
    Note the curvature of this connection, written over $U_1$, is
    \begin{align}
        F &= (-c_1,-c_2)\rmd u_1 \wedge \rmd u_2~,
        \\ 
        \hat{F} &= -c_3\rmd u_1 \wedge \rmd u_3~,
        \\
        H_0 &= (-c_4-c_2c_3/2,-c_5+c_1c_3/2)\rmd u_1 \wedge \rmd u_2 \wedge \rmd u_3~,
        \\ 
        G_0 &= c_6 \rmd u_1 \wedge \rmd u_2 \wedge \rmd u_3 \wedge \rmd u_4~.
    \end{align}
    The proof of \ref{prop:affine_connections} then yields explicit formulas for the connections on the affine left and right $T_2$-duality backgrounds.

    \section{Reduced \texorpdfstring{$T_2$}{T2}-duality}
    
    Right $T_2$-duality backgrounds involve 2-gerbes on top of 1-gerbes, which may be unfamiliar from string and M-theory. However, if one considers restricted families of left $T_2$-duality backgrounds, then it turns out that their corresponding $T_2$-duals can be reduced to simpler spaces, and previously known constructions, particularly geometric T-duality, arise in this way. 
    
    \subsection{Reduced toroidal 3-groups}
    
    The reduction of $T_2$-duality will be fully implemented by a reduction of the toroidal 3-groups in the spans of toroidal principal 3-bundles. We therefore start with these.
    \begin{definition}[Reduction and reduced toroidal 3-groups]
        A \emph{reduction} is a quadruple of free $\IZ$-modules
        \begin{equation}
            N \subset \IZ^d
            ~,~~~
            \hat N_1 \subset \Lambda^2 \IZ^{d,\ast}
            ~,~~~
            \hat N_2 \subset \IZ^{d,\ast}
            ~,~~~
            \hat N_3 \subset \IZ
        \end{equation}
        such that 
        \begin{equation}
            \iota_v\hat v \in \hat N_2 
            \eand
            \iota_v\lambda \in \hat N_3
        \end{equation}
        for all $v \in N$, $\hat v \in \hat N_1$, and $\lambda \in \hat N_2$.
        
        Restricting the discrete lattices $\Lambda_{0,1,2}$ and the vector spaces $V_{0,1,2}$ in the definition of $\sfTDFT_d$ to 
        \begin{equation}
            \Lambda_0=N\oplus \hat N_1
            ~,~~~
            \Lambda_1=\hat N_2
            ~,~~~
            \Lambda_2=\hat N_3~,~~~V_i=\IR\otimes \Lambda_i
        \end{equation}
        defines the reduced toroidal 3-group $\sfTDFTR_d$, and, via the images of $\phi$ and $\hat \phi$ from~\eqref{eq:span_3_groups}, also reduced toroidal 3-groups $\sfTBLFTR_d$ and $\sfTBRFTR_d$, which fit into the diagram
        \begin{equation}\label{eq:red_diagram}
            \begin{tikzcd}
                \sfTBLFTR_d \ar[d] & \ar[l] \sfTDFTR_d \ar[r] \ar[d]&  \sfTBRFTR_d \ar[d]
                \\
                \sfTBLFT_d & \ar[l] \sfTDFT_d \ar[r] &  \sfTBRFT_d
            \end{tikzcd}
        \end{equation}
    \end{definition}    
    \begin{proposition}\label{prop:reduced}
        Consider a reduction $(N,\hat N_{1,2,3})$. Then the following statements hold:
        \begin{myenumerate}
            \item the horizontal maps in~\eqref{eq:red_diagram} are weak homotopy equivalences;
            \item a principal $\sfTBLFTR_d$-bundle determines a principal $N \otimes \mathbb R / N$-bundle $P \rightarrow M$ and a 2-gerbe $R \rightarrow P$ with band $\hat N_3 \otimes \mathbb R/\hat N_3$;
            \item a principal $\sfTBRFTR_d$-bundle determines a principal $\hat N_1 \otimes \mathbb R / \hat N_1$-bundle $\hat{P} \rightarrow M$, a 1-gerbe $\hat Q\rightarrow \hat P$ with band $\hat N_2 \otimes \mathbb R/\hat N_2$, and a 2-gerbe $\hat R\rightarrow \hat Q$ with band $\hat N_3 \otimes \mathbb R/\hat N_3$. 
        \end{myenumerate} 
    \end{proposition}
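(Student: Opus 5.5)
The plan is to run the proofs of \cref{prop:span_3_groups}, \cref{prop:leftbackgroundsf2} and \cref{prop:rightbackgroundsf2} again with the lattices restricted; the closure conditions in the definition of a reduction are exactly what make the formulas there well defined.

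First I check that $\sfTDFTR_d$ is a toroidal 3-group. By \ref{def:categorical-3-torus} it suffices that the pairings restrict to maps $\langle-,-\rangle_0:\Lambda_0\otimes\Lambda_0\to\Lambda_1$ and $\langle-,-\rangle_1:\Lambda_0\otimes\Lambda_1\to\Lambda_2$. From \eqref{eq:pairingst2dnf2} these pairings are $\iota_{v}\hat v$ and $\iota_v\lambda$ with $v\in N$, $\hat v\in\hat N_1$, $\lambda\in\hat N_2$. The reduction conditions state precisely that these land in $\hat N_2$ and $\hat N_3$. The identity \eqref{eq:jacobi} is inherited from $\sfTDFT_d$.

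Next I define $\sfTBLFTR_d$ and $\sfTBRFTR_d$ as the images, with the same complexes as in \eqref{eq:TBLFT} and \eqref{eq:TBRFT} but with $\IZ^d,\Lambda^2\IZ^{d,\ast},\IR^d,\IT^{d,\ast},\IR/\IZ,\ldots$ replaced by $N,\hat N_1,N\otimes\IR,\hat N_2\otimes\IR/\hat N_2,\hat N_3\otimes\IR/\hat N_3,\ldots$. I then check that the formulas \eqref{eq:homr} for $B\hat\phi$ send reduced cocycles to reduced cocycles. Every term is built from contractions of elements of $N\otimes\IR$, $\hat N_1\otimes\IR$ and $\hat N_2\otimes \IR$, so the terms stay in the correct lattice tensor $\IR$. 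The classes are then taken modulo $\hat N_2$ and $\hat N_3$, which is legitimate because integral contractions land in the sublattices. The only point needing care is the term $\langle-,\langle-,-\rangle\rangle^\low$. For this I would verify that its integrality properties hold for $\hat N_3$ whenever $N$ and $\hat N_1$ are the arguments. I expect this to be the main obstacle. The plan is to rewrite $\langle u,\langle v,\hat w\rangle\rangle^\low$ as a sum of products $u^a v^b \hat w_{ab}$ over $a<b$ in a basis adapted to $N$. If that fails, I would add the needed condition on $\hat N_3$ implicitly through the reduction axioms, or use that the combinations appearing in the cocycle relations are antisymmetrizations lying in $\iota_N\iota_N\hat N_1\subset\hat N_3$.

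For statement (1), I compute the fibers over the simplicial unit as in \ref{prop:span_3_groups}. For $\phi$ the fiber is $B(\hat N_1\otimes\IR)\times B^2(\hat N_2\otimes\IR)$. For $\hat\phi$ it is $B(N\otimes\IR)\times E(B^2\hat N_2)$. Both have contractible realization, and both maps are degreewise surjective submersions, so their realizations are weak homotopy equivalences.

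For statements (2) and (3), I copy the simplicial manifolds $R_\bullet$ and $\hat R_\bullet$ from the proofs of \ref{prop:leftbackgroundsf2} and \eqref{eq:lhat}, with $\IR^d$ replaced by $N\otimes\IR$, $\Lambda^2\IR^{d,\ast}$ by $\hat N_1\otimes\IR$, $\IR^{d,\ast}$ by $\hat N_2\otimes\IR$, and $\IR/\IZ$ by $\hat N_3\otimes\IR/\hat N_3$. Comparing with \ref{def:highgerbes} layer by layer, the first relation gives the torus bundle with fiber $N\otimes\IR/N$ (respectively $\hat N_1\otimes\IR/\hat N_1$). The next relations give the gerbes with the stated bands, and well-definedness again reduces to the lattice conditions above.
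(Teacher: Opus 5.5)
Your route is the paper's. Its proof reruns the earlier constructions with the lattices restricted and calls the check that the structure maps restrict "straightforward", so it gives no help at the one step you leave open. That step is a genuine gap, and it already enters in your first paragraph. The $\alpha$ of a categorical 3-torus is not determined by the pairings $\langle-,-\rangle_{0,1}$ alone. It also depends on the $\low$-construction, which by \cref{not:low} is tied to the standard basis of $\IR^d$.

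None of your fallbacks closes the gap:
\begin{itemize}
\item The issue is not only integrality. $\alpha$ and the $\low$-terms in \eqref{eq:cocycleT2D}, \eqref{eq:homr}, $R_\bullet$ and \eqref{eq:lhat} are evaluated on real $v^0$ or $\hat v^0$. They must therefore lie in $V_2=\IR\otimes\hat N_3$ before reduction modulo $\hat N_3$ even makes sense. Your claim that every term is a contraction and so stays in the right real span is false precisely for these terms.
\item You cannot \emph{rewrite} them in a basis adapted to $N$, because $\low$ is not basis-invariant.
\item The relations contain single $\low$-terms (for instance $\alpha$ itself), not only the antisymmetrizations $\iota_u\iota_v\hat w$.
\end{itemize}
In fact the reduction axioms do not suffice. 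Take $d=2$, $N=\IZ(e_1+e_2)$, $\hat N_1=\IZ\,e^1\wedge e^2$, $\hat N_2=\IZ(e^2-e^1)$ and any proper sublattice $\hat N_3\subsetneq\IZ$. The axioms hold because $\iota_{e_1+e_2}(e^2-e^1)=0$. Yet for $v^0_a=t_a(e_1+e_2)$ and $\hat\lambda^0=e^1\wedge e^2$, the $\low$-term in $\alpha$ equals $t_1t_2$, in either argument order.
\begin{itemize}
\item For $\hat N_3=0$ this value does not lie in $V_2=0$, so $\alpha$ does not restrict.
\item For $\hat N_3=2\IZ$, take $h_1=h_2=(e_1+e_2,0,0)$ and $h_3=(0,e^1\wedge e^2,0)$. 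Then $\alpha(\sft(h_1),\sft(h_2),h_3)=[-1]\neq 0$ in $\IR/2\IZ$, which violates \cref{def:toroidal_3-group}.
\end{itemize}

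So you need an ingredient beyond the stated axioms. There are two options.
\begin{itemize}
\item Add the compatibility hypothesis $\hat w^\low(u,v)\in\hat N_3$ for all $u,v\in N$ and $\hat w\in\hat N_1$. By multilinearity this also places all $\low$-terms with real arguments in $\IR\otimes\hat N_3$. It holds, for example, when $N$ is spanned by standard basis vectors: then only the values $\hat w(e_a,e_b)=\iota_{e_b}\iota_{e_a}\hat w$ with $e_a,e_b\in N$ enter, and these lie in $\hat N_3$ by the reduction axioms.
\item Define the $\low$-construction of the reduced groups through a $\IZ$-basis of $N$. This makes every $\low$-term $\hat N_3$-integral by the reduction axioms. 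The price is that the reduced groups are no longer literal restrictions of $\sfTDFT_d$, and the vertical maps in \eqref{eq:red_diagram} are no longer the evident inclusions.
\end{itemize}
With either fix, your fiber computation for (1) and your rerun of $R_\bullet$ and $\hat R_\bullet$ for (2) and (3) go through as planned.
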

    \begin{proof}
        This follows from \ref{cor:f2t2dual}, \ref{prop:left-affine} and \ref{prop:right-affine} together with the straightforward check that the structure maps for the Lie almost 2-crossed modules restrict to the evident subgroups determined by $N$, $\hat N_1$, $\hat N_2$ and $\hat N_3$.
    \end{proof}
    
    The next subsections present concrete examples of reductions. First, however, let us define an affine reduction.
    \begin{definition}[Affine reduction and 3-groups]
        An \emph{affine reduction} is analogously a reduction $(N,\hat N_{1,2,3})$ together with a group
        \begin{equation}
            \sfGL(d,\IZ)^{\rm red} \coloneqq  \{g \in \sfGL(d,\IZ) \:|\: gN \subset N, \,g\hat N_1 \subset \hat N_1, \,g \hat N_2 \subset \hat N_2, \,g\hat N_3 \subset \hat N_3 \}~.
        \end{equation} 
        Substituting the reduction into the group $\sfTDFTA_d$ as above, we obtain reduced 3-groups fitting into the diagram
        \begin{equation}
            \begin{tikzcd}
                \sfTBLFTRA_d \ar[d] & \ar[l] \sfTDFTRA_d \ar[r] \ar[d]&  \sfTBRFTRA_d \ar[d]
                \\
                \sfTBLFTA_d & \ar[l] \sfTDFTA_d \ar[r] &  \sfTBRFTA_d
            \end{tikzcd},
        \end{equation}
    \end{definition}
    \noindent We then evidently have the following.
    \begin{proposition}
        The statements of \ref{prop:reduced} naturally extend to affine reductions.
    \end{proposition}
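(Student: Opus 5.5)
The plan is to go through the three items of \ref{prop:reduced} and replace each group by its affine counterpart $\sfTBLFTRA_d$, $\sfTDFTRA_d$, $\sfTBRFTRA_d$. The first step is to check that these are well-defined sub-almost 2-crossed modules of $\sfTBLFTA_d$, $\sfTDFTA_d$, $\sfTBRFTA_d$. By definition, $\sfGL(d,\IZ)^{\rm red}$ preserves $N$ and $\hat N_{1,2,3}$, and hence also the real spans $\IR\otimes N$, and so on. The actions $(g,v^0,\hat v^0)\acton(\cdots)$ involve only $g$ acting on lattice elements, together with the contractions $\iota_{v}\hat v$ and $\iota_v\lambda$. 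The reduction axioms keep these in $\hat N_2$ and $\hat N_3$, extended $\IR$-linearly. The Peiffer lifting, $\alpha$ and $\beta$ are built from the same pairings, with $g$'s inserted, so they also take values in the reduced groups. The same restriction argument gives the formulas \eqref{eq:homraff} for the weak homomorphism $\hat\phi$. So the reduced diagram exists and commutes, and setting $g=\unit$ recovers \eqref{eq:red_diagram}.

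For the first item, I would repeat the fibre argument from \ref{prop:span_3_groups}. The strict projection $B\sfTDFTRA_d\to B\sfTBLFTRA_d$ is a degree-wise surjective submersion. Its fibre over the simplicial unit is $B(\IR\otimes\hat N_1)\times B^2(\IR\otimes \hat N_2)$, and the $\sfGL(d,\IZ)^{\rm red}$ factor does not appear in this fibre because the projection is the identity on it. This fibre is contractible. For $\hat\phi$, the fibre is $B(\IR\otimes N)\times E(B^2\hat N_2)$, again contractible; the point is that $g_{ij}$ only twists the formulas without changing the fibre. The main obstacle, I expect, is to verify that $B\hat\phi$ is still a surjective submersion in each degree once the $g_{ij}$ twist is included. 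I would handle this by exhibiting explicit local sections. Fixing $(g_{ij})$, the map is affine in the remaining continuous variables, with invertible ``linear part'' given by solving \eqref{eq:homraff} for $v^1$ and $v^2$. This inverse exists because the $g_{ij}$ act invertibly.

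The second and third items follow as in \ref{prop:left-affine} and \ref{prop:right-affine}. Substituting the reduced cocycle data into the simplicial manifolds $R_\bullet$ and $\hat R_\bullet$ produces the following. On the left, one obtains an affine $(N\otimes\IR/N)$-bundle with structure group $\sfGL(d,\IZ)^{\rm red}\ltimes (N\otimes\IR/N)$, and a 2-gerbe with band $\hat N_3\otimes\IR/\hat N_3$. On the right, one obtains an affine $(\hat N_1\otimes\IR/\hat N_1)$-bundle, an affine 1-gerbe with band $\hat N_2\otimes \IR/\hat N_2$ (for the crossed module with $\sfGL(d,\IZ)^{\rm red}$ acting), and a 2-gerbe with band $\hat N_3\otimes\IR/\hat N_3$. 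All the defining relations only involve the structure maps already shown to restrict, so nothing further needs to be checked.
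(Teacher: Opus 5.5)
Your outline is the argument the paper relies on. The paper offers nothing beyond ``evidently'', i.e.\ the restriction check asserted in the proof of \ref{prop:reduced}, combined with \ref{prop:left-affine}, \ref{prop:right-affine} and the fibre argument of \ref{prop:span_3_groups}. Your remarks that $\sfGL(d,\IZ)^{\rm red}$ preserves the real spans and that the $g_{ij}$ do not alter the fibres of $B\phi$ and $B\hat\phi$ are fine.

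The step that fails as you justify it is the claim that $\alpha$ and $\beta$ restrict because they are ``built from the same pairings''. The paper's ``straightforward check'' glosses over the same point. In fact $\alpha=(0,[-\hat\lambda^{0,\low}(g_2^{-1}v^0_2,g_2^{-1}g_1^{-1}v^0_1)])$ in $\sfTDFTA_d$ and $\sfTBLFTA_d$, and $\beta=[\hat v^{0,\low}(g\lambda^0_1,g\lambda^0_2)]$ in $\sfTBRFTA_d$. Both use the basis-dependent lower-triangular part, as do the $\low$-terms in the cocycle relations, in $R_\bullet$, $\hat R_\bullet$ and in \eqref{eq:homraff}. The reduction axioms only control $\iota_v\hat v$ and $\iota_v\lambda$, hence only the antisymmetrization $\hat\lambda(n,n')=\hat\lambda^{\low}(n,n')-\hat\lambda^{\low}(n',n)$.

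Note also that for $\hat N_3=k\IZ$ with $k\geq 2$ the reduced $\sfL$ contains $\IR/k\IZ$, which is not a subgroup of $\IR/\IZ$. So in general these are not sub-almost 2-crossed modules. What must be checked is that the toroidal axioms hold modulo the smaller lattice $\hat N_3$. In particular, $\alpha(\sft(h_1),\sft(h_2),h_3)=0$ and $\beta(\sft(h_1),h_2,h_3)=0$ require $\hat\lambda^{\low}(n,n')\in\hat N_3$ for all $\hat\lambda\in\hat N_1$ and $n,n'\in N$. This can fail. Take $d=2$, $N=\IZ(e_1+e_2)$, $\hat N_1=\Lambda^2\IZ^{2,\ast}$, $\hat N_2=\IZ(e^2-e^1)$, $\hat N_3=2\IZ$. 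The reduction axioms hold, since $\iota_{e_1+e_2}(e^1\wedge e^2)=e^2-e^1$ and $\iota_{e_1+e_2}(e^2-e^1)=0$. But $(e^1\wedge e^2)^{\low}(e_1+e_2,e_1+e_2)=1\notin 2\IZ$. So already at $g=\unit$, with $h_1=h_2=(e_1+e_2,0,0)$ and $h_3=(0,e^1\wedge e^2,0)$ in $\sfTDFTRA_2$, one gets $\alpha(\sft(h_1),\sft(h_2),h_3)=(0,[-1])\neq 0$.

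You therefore need to add a compatibility hypothesis such as $\hat\lambda^{\low}(N,N)\subset\hat N_3$ for $\hat\lambda\in\hat N_1$, or compute $(-)^{\low}$ in a basis adapted to $N$. The hypothesis is automatic if $\hat N_3=\IZ$, or if $N$ is spanned by multiples of standard basis vectors, as in all of the paper's examples. With this in place the affine twist adds nothing new: $g\in\sfGL(d,\IZ)^{\rm red}$ enters the $\low$-terms only through arguments $g\lambda^0\in N$ and $g^{-1}v^0\in\IR\otimes N$. The rest of your outline then goes through.

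A minor correction to the step you flagged as the main obstacle. $B\hat\phi$ forgets $v^0_{ij}$ and contains terms bilinear in $(v^0,\hat v^0)$, so it is not affine with an invertible linear part. Instead, build a local section as follows:
\begin{enumerate}
\item choose $v^0$ with $v^0_{ij}-v^0_{ik}+g_{ij}v^0_{jk}=-\lambda^0_{ijk}$ (e.g.\ $v^0_{0j}=0$);
\item choose local lifts of the $[w^1_{ijk}]$;
\item solve \eqref{eq:homraff} for $v^1$ and $v^2$ by translation.
\end{enumerate}
Since the $g_{ij}$ are locally constant, their invertibility plays no role.
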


    \subsection{Embedding of T-duality into \texorpdfstring{$T_2$}{T2}-duality}\label{ssec:embedding_T-duality}
    
    As a first application of reduced $T_2$-duality, let us reproduce geometric T-duality. Consider a splitting $\IZ^d = \IZ^{d-1} \oplus \IZ$ and use it to construct the reduction
    \begin{equation}
        N = \IZ^{d-1}
        ~,~~~
        \hat N_1 = \IZ^{d-1,\ast} \otimes \IZ^{\ast}
        ~,~~~
        \hat N_2 = \IZ^{\ast}
        ~,~~~
        \hat N_3 = 0~.
    \end{equation}
    Then the reduced group $\sfTDFTR_d$ has the following underlying complex:
    \begin{equation}
        \begin{tikzcd}[column sep=0.8cm,row sep=0.2cm]
            \IZ^\ast \arrow[r,"\sft"] 
            & \IZ^{d-1}\times \IZ^{d-1,\ast}\times \IR^\ast \arrow[r,"\sft"] 
            & \IR^{d-1}\times \IR^{d-1,\ast} 
            \\
            \lambda_1\arrow[r,mapsto] & (0,0,\lambda_1)~,
            \\
            &(\lambda_0,\lambda_1,u_1) \arrow[r,mapsto] & (\lambda_0,\lambda_1)~.
        \end{tikzcd}
    \end{equation}
    This group is equivalent to the Lie 2-group $\sfTD^{F_2}_{d-1}$ with complex 
    \begin{equation}
        \begin{tikzcd}[column sep=0.8cm,row sep=0.2cm]
            \IZ^{d-1}\times \IZ^{d-1,\ast}\times \sfU(1) \arrow[r,"\sft"] 
            & \IR^{d-1}\times \IR^{d-1,\ast}~,
            \\
            (\lambda_0,\lambda_1,\varphi) \arrow[r,mapsto] & (\lambda_0,\lambda_1)~,
        \end{tikzcd}
    \end{equation}
    which is the structure 2-group of the principal 2-bundle on top of the span controlling geometric T-duality in~\cite{Nikolaus:2018qop} and~\cite{Waldorf:2022tib}. Together with \ref{prop:reduced}, this implies that geometric T-duality can be embedded as reduced $T_2$-duality as follows.
    \begin{theorem}
        Every geometric T-duality along torus fibers of dimension $d-1$ embeds into a $T_2$-duality along torus fibers of dimension $d$. 
        
        This extends to affine geometric T-duality in the sense of~\cite{Kim:2022opr}, which embeds into affine geometric $T_2$-duality, where 
        \begin{equation}
            \sfGL(d,\IZ)^{\rm red} = \sfGL(d-1,\IZ)~.
        \end{equation}
    \end{theorem}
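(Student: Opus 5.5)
The plan is to compare the reduced 3-group $\sfTDFTR_d$ for the reduction $N=\IZ^{d-1}$, $\hat N_1=\IZ^{d-1,\ast}\otimes\IZ^\ast$, $\hat N_2=\IZ^\ast$, $\hat N_3=0$ with the 2-group $\sfT\sfD^{F_2}_{d-1}$ of~\cite{Nikolaus:2018qop}. Then geometric T-duality correspondences, which are principal $\sfT\sfD^{F_2}_{d-1}$-bundles by~\cite{Nikolaus:2018qop,Waldorf:2022tib}, are transported to principal $\sfTDFTR_d$-bundles. Pushing these forward along the vertical maps of~\eqref{eq:red_diagram} gives principal $\sfTDFT_d$-bundles, and \ref{thm:spans_are_t2_duality} turns them into $T_2$-dual pairs.

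\textbf{Step 1: the weak equivalence.} Since $\hat N_3=0$, the top layer $\Lambda_1\times V_2/\Lambda_2$ is just $\IZ^\ast$. The truncation used after \ref{def:categorical-3-torus} replaces $\IZ^\ast\to \IZ^{d-1}\times\IZ^{d-1,\ast}\times\IR^\ast$ by the quotient $\IZ^{d-1}\times\IZ^{d-1,\ast}\times \IR^\ast/\IZ^\ast$. This gives exactly the crossed module $\sfT\sfD^{F_2}_{d-1}$. The next check is that the Peiffer lifting $-\iota_{\lambda_0}\hat\lambda_0$ and the action $\iota_{v^0}\hat\lambda^0$ reproduce the action and lifting of $\sfT\sfD^{F_2}_{d-1}$ under this quotient. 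The associator $\alpha$ takes values in $V_2/\Lambda_2=0$, so it vanishes and causes no trouble. On deloopings, the resulting strict map $B\sfTDFTR_d\to B\sfT\sfD^{F_2}_{d-1}$ has fibre $E(B\IZ^\ast)$-like, hence contractible. It is also a degreewise surjective submersion, so the argument of \ref{prop:span_3_groups} applies and gives a weak homotopy equivalence. As a consequence, every $\sfT\sfD^{F_2}_{d-1}$-bundle lifts to a $\sfTDFTR_d$-bundle, just as in \ref{cor:any_bundle_from_T2D}.

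\textbf{Step 2: identifying the backgrounds.} By \ref{prop:reduced}, the left bundle gives a $\IT^{d-1}$-bundle $P'$. The 2-gerbe has band $\hat N_3\otimes\IR/\hat N_3=0$, so it is trivial. Viewed through the inclusion into $\sfTBLFT_d$, the left background becomes $P=P'\times S^1$ together with a 2-gerbe $R$. The key claim is that $R$ is the pullback along the projection to $P'$ of the original gerbe $\scG$, cupped with the generator of $H^1(S^1)$. This should follow from the $Q$-description in \ref{prop:lift_left} and \ref{rk:explainQ}: the $\IT^\ast$-component of $Q$ is exactly the gerbe on $P'$. On the right side, $\hat P$ is a $\IT^{d-1,\ast}$-bundle, namely the T-dual torus bundle, and $\hat Q$ is a $\IT^\ast$-gerbe on it, namely the T-dual gerbe. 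The 2-gerbe $\hat R$ is then built from $\hat Q$ and the extra circle.

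\textbf{Step 3: the correspondence.} Restricted to the reduced components, the isomorphism $\psi$ constructed in \ref{thm:spans_are_t2_duality} should reduce to the Poincaré line bundle isomorphism of~\cite{Nikolaus:2018qop}. It remains to check this component by component.

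The affine statement follows the same pattern, using \ref{prop:affine}. The condition $\sfGL(d,\IZ)^{\rm red}=\sfGL(d-1,\IZ)$ comes from requiring stabilisation of the splitting, and $\sfGL(d-1,\IZ)$ acts on $\sfT\sfD^{F_2}_{d-1}$ compatibly, as in~\cite{Kim:2022opr}.

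The main obstacle is Step 2: making precise in what sense the T-duality data ``embeds''. This means verifying that the reduced left and right backgrounds are genuinely determined by, and determine, the T-duality pair, rather than only matching topological classes. I would try to settle this with explicit cocycle comparison rather than a spectral-sequence argument.
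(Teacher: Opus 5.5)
Your core argument (identify $\sfTDFTR_d$ with $\sfT\sfD^{F_2}_{d-1}$ via the truncation, lift the T-duality correspondence, push it into $\sfTDFT_d$ along the vertical maps of \eqref{eq:red_diagram}, apply \ref{thm:spans_are_t2_duality}, and treat the affine case identically) is exactly the paper's proof, which simply notes that the $\sfT\sfD^{F_2}_{d-1}$-bundle (with connection) lifts trivially to a $\sfTDFT_d$-bundle encoding the desired $T_2$-duality. Your Steps 2--3 go beyond what the paper proves; if you pursue them, note that the truncation fibre is $E(B^2\IZ^\ast)$ (still contractible), and that the embedded right side has no extra circle: $\hat P$ and $\hat Q$ acquire $\Lambda^2\IT^{d-1,\ast}$- and $\IT^{d-1,\ast}$-directions, and $\hat R$ records the Chern class of $P'$ through the $B\IT^{d-1,\ast}$-directions, consistent with $\sigma_2(c(\hat R))=-[c(P)]$ in \ref{th:existencet2duals}.
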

    \begin{proof}
        Encode the T-duality into a span of principal 2-bundles as in~\cite{Nikolaus:2018qop}, or, in differentially refined form, as in~\cite{Kim:2022opr}. The corresponding principal $\sfTD_{d-1}$-bundle (with connection) on top of the span can trivially be lifted to a principal $\sfTDFT_d$-bundle, which encodes the desired $T_2$-duality. 
        
        The stated reduction of the monodromy group in the affine case follows directly from the chosen splitting, and the remaining argument follows analogously.
    \end{proof}
    
    Let us be slightly more explicit in the construction. A principal $\sfTDFTR_d$-bundle with connection yields curvature forms 
    \begin{equation}
        (F,\hat{F}) \in \Omega^2(M,\mathbb R^{d-1})\oplus\Omega^2(M,\mathbb R^{d-1,\ast} )
        \eand
        H_0 \in \Omega^3(M,\mathbb R^{\ast})~,
    \end{equation}
    satisfying
    \begin{equation}
        \rmd F = 0~,~~~
        \rmd\hat{F}=0~,~~~
        \rmd H_0 + \langle F , \hat{F} \rangle = 0~.
    \end{equation}
    Through the span, it determines the following data:
    \begin{myenumerate}
        \item a principal $\mathbb T^{d-1}$-bundle  $P \rightarrow M$ with connection $A$ of curvature $F$;
        \item a 1-gerbe $Q\rightarrow P$ with band $\sfU(1)$ and connection of curvature $H_0 + \langle A , \hat{F} \rangle$;
        \item a $\mathbb T^{d-1,\ast}$-bundle with connection $\hat P\rightarrow M$ with connection $\hat A$ of curvature $\hat{F}$;
        \item a 1-gerbe $\hat Q\rightarrow \hat P$ with band $\sfU(1)$ and connection of curvature $H_0 + \langle F , \hat{A} \rangle$.\\[-0.5cm]
    \end{myenumerate}
    As indicated in \ref{rk:explainQ}, the $1$-gerbe $Q$ is to be interpreted as a fiber integration, or dimensional reduction, of the 2-gerbe $R$ appearing in the corresponding $T_2$-duality. This is exactly the mechanism by which the Kalb-Ramond field in string theory is expected to arise from a C-field in 11 dimensions, as we elaborate in \ref{sec:rrfields}.
    
    \begin{example}
        As an example of such an embedding, let us specialize the example of \ref{ssec:exs2} by setting $c_1=0$. 
        
        At the level of $T_2$-duality, the left background is a principal $\IT^2$-bundle $P_{(0,c_2)} \rightarrow S^2$ with first Chern class $(0,c_2)\in \IZ^2$, carrying the 2-gerbe $R_{(0,c_2,c_3)} \rightarrow P_{(0,c_2)}$ with Dixmier--Douady class $c_3$. Its $T_2$-dual is given by the $\mathbb T^{\ast}$-bundle $\hat{P}_{c_3} \rightarrow S^2$ with first Chern class $c_3 \in H^2(S^2,\IZ)$, the gerbe $\hat Q_{(-c_2,0)}$ with band $\IT^{2,\ast}$ and Dixmier--Douady class $(-c_2,0) \in \IZ = H^3(\hat{P}_{c_3},\IZ)$, and a $2$-gerbe $\hat{R} \rightarrow \hat{Q}_{(-c_2,0)}$ on top of it.
        
        Now the 2-gerbe $R_{(0,c_2,c_3)} \rightarrow P_{(0,c_2)}$ reduces by fiber integration along the trivial circle fiber to a 1-gerbe $Q_{c_3} \rightarrow P_{c_2}$ with Dixmier--Douady class $c_3$, where $P_{c_2} \rightarrow S^2$ is the $\IT^2$-bundle with Chern class $c_2\in \IZ^2$. By examining the Chern and Dixmier--Douady classes, we see that the data of $\hat{P}_{c_3}$, together with the projection of $\hat{Q}_{(-c_2,0)}$ onto its first $\IT^{2,\ast}$ band is the T-dual of $(P_{c_2},Q_{c_3})$. Thus, indeed we see that this $T_2$-duality determines an underlying $T$-duality.
    \end{example}
    
    \subsection{\texorpdfstring{$T_2$}{T2}-duality of type \texorpdfstring{$F_3$}{F3}}\label{ssec:F3-T2-duality}
    
    As seen in \ref{ssec:characterization}, a left $T_2$-duality background $(M,P,R)$ admitting a $T_2$-dual must satisfy that $c(R) \in F_2H^4(P,\IZ)$. One may wonder what happens when, in fact, $c(R) \in F_3H^4(P,\IZ)$. The result is the following.
    
    \begin{proposition}
        Let $(M,P,R)$ be a left $T_2$-duality background with $c(R) \in F_3H^4(P,\IZ)$. Then, it admits a $T_2$-dual $(M,\hat{P},\hat{Q},\hat{R})$ in which the following is satisfied.
        \begin{enumerate}
            \item $\hat{P} \rightarrow M$ is trivial.
            \item $\hat{Q} \rightarrow \hat{P}$ descends to a 1-gerbe $\hat{Q}^{\rm red} \rightarrow M$ with band $\mathbb T^{d,\ast}$, and $\hat{R} \rightarrow \hat{Q}$ descends to a 2-gerbe $\hat{R}^{\rm red} \rightarrow \hat{Q}^{\rm red}$ with $c(\hat{R}^{\rm red}) \in  F_2H^4(\hat{Q}^{\rm red},\IZ)$.
            \item $\sigma_3(c(R)) = [c(\hat{Q}^{\rm red})]$ and $\sigma_2(c(\hat{R}^{\rm red})) = [c(P)]$.
        \end{enumerate}
    \end{proposition}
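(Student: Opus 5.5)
We work at the level of Čech cocycles and run the construction from \cref{cor:f2t2dual} with a special choice of lift. By \cref{prop:leftbackgroundsf2}, write $(M,P,R)$ as a principal $\sfTBLFT_d$-bundle $\scP$ with cocycle data $(v^0_{ij},\lambda^0_{ijk},\hat\lambda^0_{ijk},\lambda^1_{ijkl},[v^2_{ijkl}])$.

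\textbf{Step 1: remove $\hat\lambda^0$.} The proof of \cref{prop:leftbackgroundsf2} identifies $\hat\lambda^0_{ijk}\in\Lambda^2\IZ^{d,\ast}=H^2(\IT^d,\IZ)$ with the gluing 2-isomorphisms of $R$ along the torus fibers. Its class in $H^2(M,\Lambda^2\IZ^{d,\ast})$ represents $\sigma_2(c(R))$. The hypothesis $c(R)\in F_3H^4(P,\IZ)$ says this symbol vanishes in $E_\infty^{2,2}$, so $\sigma_2(c(R))\in\operatorname{im}d_2$.

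First, I would show that one may refine the cover and apply a 1-coboundary of $\sfTBLFT_d$-bundles so that $\hat\lambda^0_{ijk}=0$ on the nose. This rests on the fact that $d_2:E_2^{0,3}\to E_2^{2,2}$ is contraction with $c(P)$. Modifying the trivializations of $R$ over $P_1$ by a section of $\caH^3(\pi_P)$ shifts the representative of $\sigma_2$ exactly by such a $d_2$-image. Concretely, this is a coboundary whose $\sfH$-component is built from $v^0_{ij}$ and a constant $\Lambda^3\IZ^{d,\ast}$-section. I expect this to be the main obstacle: producing this coboundary explicitly in the toroidal framework, including the $R_{ijkl}$ corrections, and checking the resulting shift of $\lambda^1$ and $v^2$.

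\textbf{Step 2: choose the lift.} With $\hat\lambda^0=0$, lift $\scP$ along $\phi$ to a $\sfTDFT_d$-bundle $\scP_\rmC$ by taking $\hat v^0_{ij}=0$. The third relation in \eqref{eq:cocycleT2D} then reads $\delta v^1=-\lambda^1$, and the fourth relation of the $\sfTBLFT_d$ cocycle gives $\delta\lambda^1=0$. Hence $[v^1_{ijk}]\in\IT^{d,\ast}$ is a genuine $\IT^{d,\ast}$-valued Čech 2-cocycle on $M$, and we choose $v^1$ as any real lift. Applying $\hat\phi$ via \eqref{eq:homr}, we get $\hat v^0_{ij}=0$, so $\hat P$ is trivial; this proves item 1.

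Moreover, $[w^1_{ijk}]=[v^1_{ijk}+\langle\lambda^0_{ijk},v^0_{ik}\rangle]$ is independent of the $\Lambda^2\IT^{d,\ast}$ fiber coordinate $\hat v^0_i$. Inspecting \eqref{eq:lhat}, the gerbe relation for $\hat Q$ then involves $\hat v^0_i$ only through $\langle\lambda^0_{ijk},\hat v^0_i\rangle$. To descend, I would use a 2-coboundary on $\hat R_\bullet$, namely the shift $w^1_{ij}\mapsto w^1_{ij}+\langle v^0_{ij},\hat v^0_j\rangle$, to absorb these terms. The same shift, fed into the $w^2$ relation, should remove all $\hat v^0_i$ dependence from $\hat R$. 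This exhibits $\hat Q=\mathrm{pr}^\ast\hat Q^{\rm red}$ and $\hat R=\mathrm{pr}^\ast\hat R^{\rm red}$.

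The class of $\hat Q^{\rm red}$ is represented by the $\IZ^{d,\ast}$-valued 3-cocycle obtained by integral lifting of $w^1$, namely $-\lambda^1_{ijkl}+\delta\langle\lambda^0,v^0\rangle$. Finally, $\hat R^{\rm red}$ is of type $F_2$ by \cref{prop:rightbackgroundsf2} applied with $\hat P$ trivial, giving item 2. $T_2$-duality itself follows from \cref{thm:spans_are_t2_duality}.

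\textbf{Step 3: the symbol identities.} For item 3, $\sigma_3(c(R))\in E_\infty^{3,1}$ is a quotient of $H^3(M,\IZ^{d,\ast})$, represented by $\lambda^1$ once $\hat\lambda^0=0$ (by the same gluing description as in \cref{prop:leftbackgroundsf2}). This matches $c(\hat Q^{\rm red})$ up to the $d_2$-exact correction $\langle\lambda^0,v^0\rangle$, which is the contraction with $c(P)$.

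The identity $\sigma_2(c(\hat R^{\rm red}))=[c(P)]$ is the third line of \eqref{eq:momentumexchange} from \cref{th:existencet2duals}, specialized to $\hat P$ trivial; I would recheck its sign against the conventions in the reduced setting. Alternatively, one can read it off directly from the $\langle\lambda^0,w^1\rangle$ term in the $w^2$ relation, exactly as for the Poincaré gerbe.
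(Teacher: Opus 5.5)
\textbf{Overall.} Your route differs from the paper's. The paper reads items 1--2 off \eqref{eq:momentumexchange} and argues item 3 as in the proof of \cref{th:existencet2duals}. You instead re-run the span construction with a special lift, which is essentially a \v{C}ech version of the reduced-3-group model the paper sketches right after the proof. Two of your steps, however, fail as written.

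\textbf{Step 1 (removing $\hat\lambda^0$).} No 1-coboundary of $\sfTBLFT_d$-bundles can remove $\hat\lambda^0$ unless its class already vanishes. The group $\sfG=\IR^d$ acts trivially on the discrete group $\sfH=\IZ^d\times\Lambda^2\IZ^{d,\ast}$, and $\sft(\sfL)=0$. Hence a coboundary acts by $\hat\lambda^0_{ijk}\mapsto\hat\lambda^0_{ijk}+\hat\lambda^0_{ik}-\hat\lambda^0_{jk}-\hat\lambda^0_{ij}$ with locally constant $\hat\lambda^0_{ij}\in\Lambda^2\IZ^{d,\ast}$. An $\sfH$-component built from $v^0_{ij}$ and $\eta$, such as $\iota_{v^0_{ij}}\eta$, is not allowed, and $[\hat\lambda^0]\in H^2(M,\Lambda^2\IZ^{d,\ast})$ is an isomorphism invariant of $\scP$.

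Your hypothesis only gives $[\hat\lambda^0]=\iota_{c(P)}\eta$ for some $\eta\in H^0(M,\Lambda^3\IZ^{d,\ast})$. For $d\leq 2$ this class vanishes and your step is fine. For $d\geq 3$ it is nonzero in general: already over $S^2$ with $d=3$, such a bundle presents an $R$ with $c(R)\in F_3H^4(P,\IZ)=0$.

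The re-trivialization by $\eta$ you describe is an isomorphism of 2-gerbes that is not induced by any morphism of $\sfTBLFT_d$-bundles. It relates $\scP$ to a non-isomorphic $\sfTBLFT_d$-bundle presenting the same $(P,R)$. So you must construct a new presentation. One way is to redo the gluing argument of \cref{prop:leftbackgroundsf2} with $R$ trivialized over $P_2$, which is possible since $c(R)\in F_3$. Then the 2-isomorphisms are trivial and $\hat\lambda^0=0$ from the outset. This is exactly the claim, stated without proof after the proof, that type-$F_3$ backgrounds come from the reduced group with $N=\IZ^d$, $\hat N_1=0$. The curvature data show this is the intended reduction; the displayed values of $N$ and $\hat N_1$ are interchanged.

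\textbf{Step 2 (descent of $\hat Q$).} The descent $\hat Q=\mathrm{pr}^\ast\hat Q^{\rm red}$ cannot be obtained by any shift. The term $\langle\lambda^0_{ijk},\hat v^0_i\rangle$ in \eqref{eq:lhat} has integral lift $\langle\lambda^0_{jkl},\hat v^0_j-\hat v^0_i\rangle$. This is the cup product of $c(P)$ with the tautological class of the $\Lambda^2\IT^{d,\ast}$-fiber, i.e.\ the K\"unneth component $\delta c(P)$ of $c(\hat Q)$. In other words it is $\sigma_2(c(\hat Q))=[\delta c(P)]$ from \eqref{eq:momentumexchange}, and for trivial $\hat P$ there is no $d_2$ to quotient by. Since $\delta$ is split injective for $d\geq 2$, $\hat Q$ comes from $M$ only if $c(P)=0$. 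Compare \cref{ssec:exs2} with $c_3=0$, where $\hat Q_{(-c_2,c_1)}$ lives over $S^2\times S^1$. Your proposed shift merely trades $\langle\lambda^0_{ijk},\hat v^0_i\rangle$ for $\langle v^0_{ij},\hat v^0_j-\hat v^0_k\rangle$, which is still fiber-dependent.

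Item 2 can therefore only be proved in the reduced sense, which is also all that \eqref{eq:momentumexchange} supports. With $\hat\lambda^0=\hat v^0=0$, your cocycle belongs to the reduced group with $\hat N_1=0$, whose right image has trivial $\sfG$. So $(\lambda^0_{ijk},[w^1_{ijk}],[w^2_{ijkl}])$ define $\hat Q^{\rm red}\to M$ and $\hat R^{\rm red}\to\hat Q^{\rm red}$ directly. These are the restrictions of $\hat Q$ and $\hat R$ to the zero section $\hat v^0=0$, not objects from which $\hat Q$ and $\hat R$ are pulled back.

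\textbf{Step 3 (symbols).} Comparing \eqref{eq:lhat} with \eqref{eq:hatR-n} shows that the last term of $w^1$ in \eqref{eq:homr} should read $\langle\lambda^0_{ijk},\hat v^0_{ik}\rangle$; your $\langle\lambda^0,v^0\rangle$ is not a defined pairing. Hence $w^1=v^1$ and $c(\hat Q^{\rm red})=[-\lambda^1]$. Item 3 then needs no correction term, only a sign check; note the $-[c(P)]$ in \eqref{eq:momentumexchange}.
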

    \begin{proof}
        The first two statements follow directly from \eqref{eq:momentumexchange}. The third one follows from reasoning as in the proof of \ref{th:existencet2duals}, and we omit it for brevity.
    \end{proof}
    
    In terms of Lie 3-groups, this phenomenon  can be modeled as follows. Take the reduction 
    \begin{equation}
        N = 0
        ~,~~~
        \hat N_1 = \IZ^d
        ~,~~~
        \hat N_2 = \IZ^{d,\ast}
        ~,~~~
        \hat N_3 = \IZ~.
    \end{equation}
    In this case, a principal $\sfTDFTR_d$-bundle with connection determines curvature forms 
    \begin{equation}
        F \in \Omega^2(M,\mathbb R^d)
        ~,~~~H_0 \in \Omega^3(M,\mathbb R^{d,\ast})
        ~,~~~G_0 \in \Omega^4(M,\mathbb R)
    \end{equation}
    with
    \begin{equation}
        \rmd F = 0
        ~,~~~
        \rmd H_0 =0
        ~,~~~
        \rmd G_0 - \langle F , H_0 \rangle = 0~.
    \end{equation}
    Using the span, it determines the following data:
    \begin{myenumerate}
        \item a principal $\mathbb T^d$-bundle $P\rightarrow M$ with connection $A$ of curvature $F$;
        \item a $2$-gerbe $R\rightarrow P$ with connection of curvature $G_0 - \langle A , H_0 \rangle$;
        \item a 1-gerbe $\hat Q\rightarrow M$ with band $\mathbb T^{d,\ast}$ and connection $(\hat{\Lambda},\hat{B})$ of curvature $H_0$;
        \item a 2-gerbe $\hat R\rightarrow \hat Q$ with connection of curvature $(\langle F , \hat{\Lambda} \rangle, G_0 - \langle F , \hat{B} \rangle)$.
    \end{myenumerate}
    In fact, one can show, similarly as in the proof of \ref{prop:leftbackgroundsf2}, that principal $\sfTDFTR_d$-bundles are in one-to-one correspondence with topological left $T_2$-duality backgrounds of type $F_3$. For this reason, we write 
    \begin{equation}
        \sfTBLFTR_d=\sfTBLFTT_d
        ~,~~~
        \sfTDFTR_d = \sfTDFTT_d
        ~,~~~\sfTBRFTR_d=\sfTBRFTT_d
    \end{equation}
    in this case. For the generalization to the affine setting, we have again $\sfGL(n,\IZ)^{\rm red} = \sfGL(n,\IZ)$.
    
    Let us also give a concrete example of such an embedding.
    \begin{example}
        Let $P = S^3 \times \mathbb T^1\rightarrow S^3$ be the trivial circle bundle\footnote{which is the only topological circle bundle over $S^3$} over $S^3$ and let $R \rightarrow P$ be a $2$-gerbe with Dixmier--Douady class $c \in \IZ \cong H^4(P,\IZ)$. This class lies in the $F_3$ step of the filtration associated to the fibration $P \rightarrow M$. Thus, by the above discussion, we can construct a reduced $T_2$-dual of the form $(\hat Q,\hat R)$, where $\hat Q$ is a 1-gerbe with Dixmier--Douady class $c \in \IZ \cong H^3(S^3,\IZ)$ and $\hat R\rightarrow \hat Q$ is a $2$-gerbe. 
        
        For simplicity, we restrict to the case $c=1$. In the case $d=1$, we have $T_2\mathbb D_1^{F_3} = T_2\mathbb D_1^{F_2}$, so that we may use the cocycle data from~\eqref{eq:cocycledatat2d} and~\eqref{eq:cocycledatacont2d} to describe a principal $T_2\mathbb D_1^{F_3}$-bundle with connection over $S^3$.
        
        Write $S^3 = (\mathbb CP^1 \times [0,1]) / \sim$, where the equivalence relation is $([z_0:z_1],0) \sim ([w_0:w_1],0)$ and $([z_0:z_1],1) \sim ([w_0:w_1],1)$ for $[z_0:z_1], \,[w_0:w_1] \in \mathbb CP^1$. We define an open cover of $S^3$ by the following three sets:
        \begin{align}
            U_1 &\coloneqq  \{[[z_0:z_1],t] \in S^3 \:|\: t \neq 1\}~,
            \\
            U_{0+} &\coloneqq  \{[[z_0:z_1],t] \in S^3 \:|\:z_0 \neq 0, \: t \neq 0\}~,
            \\
            U_{0-} &\coloneqq  \{[[z_0:z_1],t] \in S^3 \:|\:z_1 \neq 0, \: t \neq 0\}~.
        \end{align}
        The non-trivial cocycle data are just the data of the gerbe with connection $\widehat{\mathcal{P}} \rightarrow S^3$:
        \begin{equation}
            \begin{aligned}
                v^1_{0+0-1} &= t \cdot \text{Im} \log(z_1/z_0)/2\pi = t \cdot \log((z_1/z_0)/|z_1/z_0|)/2\pi i~, 
                \\ 
                \Lambda_{0+1} &= t \cdot \frac{-1}{1 + |z_1/z_0|^2} \cdot \rmd\log((z_1/z_0)/|z_1/z_0|)/2\pi i~,
                \\ 
                \Lambda_{0-1} &= t \cdot \frac{-1}{1 + |z_0/z_1|^2} \cdot \rmd\log((z_0/z_1)/|z_0/z_1|)/2\pi i~, 
                \\
                \Lambda_{0+0-} &= \rmd t \cdot \text{Im} \log(z_1/z_0)/2\pi~, 
                \\ 
                B_{0+} &= \rmd t \wedge \frac{1}{1 + |z_1/z_0|^2} \cdot \rmd\log((z_1/z_0)/|z_1/z_0|)/2\pi i~, 
                \\ 
                B_{0-} &= \rmd t \wedge \frac{1}{1 + |z_0/z_1|^2} \cdot \rmd\log((z_0/z_1)/|z_0/z_1|)/2\pi i~, 
                \\ 
                B_1 &= t \cdot \frac{(z_0\rmd z_1-z_1\rmd z_0) \wedge (\overline{z}_0\rmd\overline{z}_1-\overline{z}_1\rmd\overline{z}_0)}{2\pi i(|z_0|^2+|z_1|^2)^2}~.
            \end{aligned}
        \end{equation}
        The proofs of \ref{prop:connection_leftbackgroundsf2} and \ref{prop:connection_rightbackgroundsf2} yield then explicit formulas for the connection forms on the left and right backgrounds.
    \end{example}
    
    
    \subsection{More examples of reductions}
    The next example of reduction is a variation of the case of T-duality, in which the $2$-gerbe winds the additional torus direction but not the T-duality directions.
    \begin{example}\label{ex:dbranetdual}
        Choose any splitting $\IZ^d = \IZ^{d-1} \oplus \IZ$ and use it to define the reduction
        \begin{equation}
            N = \IZ^{d}
            ~,~~~
            \hat N_1 = 0
            ~,~~~
            \hat N_2 = \IZ^{\ast}
            ~,~~~
            \hat N_3 = \IZ~.
        \end{equation}
        A principal $\sfTDFTR_d$-bundle with connection has curvature forms 
        \begin{equation}
            F_{d-1} \in \Omega^2(M,\mathbb R^{d-1})
            ~,~~~
            F_0 \in \Omega^2(M,\mathbb R)
            ~,~~~
            H_0 \in \Omega^3(M,\mathbb R^{\ast})
            ~,~~~
            G_0 \in \Omega^4(M,\mathbb R)
        \end{equation}
        satisfying
        \begin{equation}
            \rmd F_{d-1} = 0
            ~,~~~
            \rmd F_0 =0
            ~,~~~
            \rmd H_0 = 0
            ~,~~~
            \rmd G_0 - \langle F_0 , H_0 \rangle = 0~.
        \end{equation}
        Through the span, it determines the following data:
        \begin{myenumerate}
            \item a principal $\mathbb T^{d-1}$-bundle $P_{d-1}\rightarrow M$ with connection $A_{d-1}$  of curvature $F_{d-1}$;
            \item a principal $\mathbb T^1$-bundle $P_0\rightarrow M$ with connection $A_0$ of curvature $F_{0}$;
            \item a $2$-gerbe $R\rightarrow P_0$ with connection of curvature $G_0 - \langle A_0 , H_0 \rangle$;
            \item a 1-gerbe $\hat Q\rightarrow M$ (with band $\mathbb T^{1,\ast}$) with connection $(\hat{\Lambda},\hat{B})$ of curvature $H_0$;
            \item a 2-gerbe $\hat R\rightarrow \hat Q$ with connection of curvature $(\langle F_0 , \hat{\Lambda} \rangle, G_0 - \langle F , \hat{B} \rangle)$.
        \end{myenumerate}
        An analogous affine case with $\sfGL(d,\IZ)^{\rm red} = \sfGL(d-1,\IZ)$ can also be considered.
    \end{example}

    The following example shows that reduced affine $T_2$-duality constructs a dual space for a pair given by a Siegel bundle (in the sense of~\cite{Lazaroiu:2021vmb}) and an affine $2$-gerbe on top of it.
    \begin{example}
        Let $d = 2n_v$ be even, and let $\omega \in \Lambda^2 \IZ^{d,\ast}$ be such that its $\mathbb R$-linear extension is symplectic.
        We use this to define the reduction
        \begin{equation}
            N = \IZ^{2n_v}
            ~,~~~
            \hat N_1 = \{k\omega \:|\: k \in \IZ\} \cong \IZ
            ~,~~~
            \hat N_2 = \IZ^{2n_v,\ast}
            ~,~~~
            \hat N_3 = \IZ
            ~.
        \end{equation}
        Then a principal $\sfTDFTR_d$-bundle with connection has curvatures
        \begin{equation}
            F \in \Omega^2(M,\mathbb R^{2n_v})
            ~,~~~
            \hat{F} \in \Omega^2(M,\mathbb R)
            ~,~~~
            H_0 \in \Omega^3(M,\mathbb R^{2n_v,\ast})
            ~,~~~
            G_0 \in \Omega^4(M,\mathbb R)
        \end{equation}
        satisfying
        \begin{equation}
            \rmd F = 0
            ~,~~~
            \rmd\hat{F} =0
            ~,~~~
            \rmd H_0 + \langle F , \hat{F} \rangle = 0
            ~,~~~
            \rmd G_0 - \langle F , H_0 \rangle = 0~.
        \end{equation}
        Through the span, it determines the following data:
        \begin{myenumerate}
            \item a principal $\mathbb T^{2n_v}$-bundle $P\rightarrow M$ with connection $A$ of curvature $F$;
            \item a 1-gerbe $Q\rightarrow P$ with band $\mathbb T^{2n_v,\ast}$ and connection of curvature $H_{0} + \langle A , \hat{F} \rangle$;
            \item a $2$-gerbe $R\rightarrow P$ with connection of curvature $G_0 - \langle A , H_0 \rangle$;
            \item a principal $\mathbb T^1$-bundle $\hat P\rightarrow M$ with connection $\hat A$ of curvature $\hat{F}$;
            \item a 1-gerbe $\hat Q\rightarrow \hat P$ with band $\mathbb T^{2n_v,\ast}$ and connection $(\hat{\Lambda},\hat{B}) \rightarrow P$ of curvature $H_0 + \langle F , \hat{A} \rangle$;
            \item a 2-gerbe with connection of curvature $(\langle F , \hat{\Lambda} \rangle, G_0 - \langle F , \hat{B} \rangle)$.
        \end{myenumerate}
        For the affine case, we have here $\sfGL(d,\IZ)^{\rm red} = \sfSp(d,\IZ)$.
    \end{example}
    
    In the following example, left and right $T_2$-duality backgrounds become somewhat more symmetric.
    \begin{example}
        Let $d= 3$ or $d= 7$ and consider the map $\IZ^d \rightarrow \Lambda^2 \IZ^{d,\ast}$ given by $u \mapsto \langle u, -\times -\rangle$ for $\times$ the cross product and $\langle -,-\rangle$ the standard inner product of $\mathbb R^d$. We use this to identify $\IZ^d\subset \Lambda^2 \IZ^{d,\ast}$ and define a reduction
        \begin{equation}
            N = \IZ^d
            ~,~~~
            \hat N_1=\IZ^d\subset \Lambda^2 \IZ^{d,\ast}
            ~,~~~
            \hat N_2=\IZ^{d,\ast}
            ~,~~~
            \hat N_3=\IZ~.
        \end{equation}
        In this case the curvature of a connection on a $T_2\mathbb D_d^{F_2,red}$-bundle can be identified with 
        \begin{equation}
            F \in \Omega^2(M,\mathbb R^d)
            ~,~~~
            \hat{F} \in \Omega^2(M,\mathbb R^d)
            ~,~~~
            H_0 \in \Omega^3(M,\mathbb R^d)
            ~,~~~
            G_0 \in \Omega^4(M,\mathbb R)
        \end{equation}
        with
        \begin{equation}
            \rmd F = 0
            ~,~~~
            \rmd\hat{F} = 0
            ~,~~~
            \rmd H_0 - F \wedge_{\times} \hat{F} = 0
            ~,~~~
            \rmd G_0 - \langle F , H_0 \rangle = 0~,
        \end{equation}
        where $\wedge_{\times}$ is defined in terms of $\times$ and wedge product. In this case left and right $T_2$-duality backgrounds determined by the span are more (but not fully) symmetric:
        \begin{myenumerate}
            \item a principal $\mathbb T^{d}$-bundle $P\rightarrow M$ with connection $A$ of curvature $F$;
            \item a 1-gerbe $Q\rightarrow P$ with band $\mathbb T^{d,\ast}$ and connection of curvature $H_{0} + \langle A , \hat{F} \rangle$;
            \item a $2$-gerbe $R\rightarrow P$ with connection of curvature $G_0 - \langle A , H_0 \rangle$;
            \item a principal $\mathbb T^d$-bundle $\hat P\rightarrow M$ with connection of curvature $\hat{F}$;
            \item a 1-gerbe $\hat Q\rightarrow \hat P$ with band $\mathbb T^{d,\ast}$ and connection $(\hat{\Lambda},\hat{B}) \rightarrow P$ of curvature $H_0 + \langle F , \hat{A} \rangle$;
            \item a 2-gerbe $\hat R\rightarrow \hat Q$ with connection of curvature $(\langle F , \hat{\Lambda} \rangle, G_0 - \langle F , \hat{B} \rangle)$.
        \end{myenumerate}
        For the affine case, we have $\sfGL(3,\IZ)^{\rm red} = \sfSO(3,\IZ)$ and $\sfGL(7,\IZ)^{\rm red} = G_2(\IZ)$.
    \end{example}
    
    \section{Making contact with physics}\label{sec:physics}
    
    After developing the mathematical details of $T_2$-duality, let us now connect our discussion to physical expectations in two ways. First, we explore the parallels between geometric $T_2$-duality for $d=2$ and the dualities linking the bosonic sectors of 11d supergravity (SUGRA) and type IIB SUGRA. We then discuss the generalization of the sigma model derivation of the Buscher rules for $T$-duality to $T_2$-duality. That is, we establish semi-classical physical equivalence between sigma models defined in terms of $T_2$-dual information. In both cases, we restrict ourselves to topologically trivial torus fibrations and $p$-gerbes; the topologically non-trivial case is much more involved and therefore beyond the scope of this paper.
    
    \subsection{The case \texorpdfstring{$d=2$}{d=2}: from 11d to type IIB SUGRA}\label{sec:rrfields}
    
    When restricted to $d=2$, the toroidal 3-group $\sfTDFT_d$ has a torus fiber of the form 
    \begin{equation}
        (\IR^2/\IZ^2)\times (\Lambda^2\IR^{2,\ast})/(\Lambda^2\IZ^{2,\ast}) \cong \IT^2\times \IT^\ast~,
    \end{equation}
    and it is tempting to identify $\IT^2$ with the product of a circular M-theory direction and a T-duality direction in 11d supergravity and $\IT^\ast$ with a dual T-duality direction in type IIB supergravity. 
    
    We first consider the left and right $T_2$-duality backgrounds arising from trivial principal $\sfTDFT_2$-bundles.
    \begin{lemma}\label{lem:trivial_TDF2-bundle}
        A connection on a topologically trivial principal $\sfTDFT_2$-bundle over a $9$-dimensional manifold $M$ is given by forms
        \begin{equation}
            a \in \Omega^1(M,\mathbb R^2)
            ~,~~~
            c^1 \in \Omega^1(M, \Lambda^2 \mathbb R^{2,\ast})
            ~,~~~
            c^2 \in \Omega^2(M,\mathbb R^{2,\ast})
            ~,~~~
            c^3 \in \Omega^3(M,\mathbb R)~.
        \end{equation}
        The corresponding left $T_2$-duality background, together with the additional 1-gerbe arising from the lifting, consists of a trivial 1-gerbe with band $\IT^{2,\ast}$ and a trivial 2-gerbe with band $\Lambda^2 \IT^2\cong \sfU(1)$, both over the trivial $\IT^2$-bundle:
        \begin{equation}
            P=M\times \IT^2
            ~,~~~
            Q=P\times B\IT^{2,\ast}
            ~,~~~
            R=P\times B^2\IT^1~.
        \end{equation}
        These gerbes carry connections which in the notation of~\eqref{eq:conn_1} read as
        \begin{equation}
            \begin{aligned}
                A&\coloneqq  \rmd v^0 + a
                ~,\\
                (B,\Lambda)&\coloneqq  (c^2 - \langle A , c^1\rangle,\rmd v^1)
                ~,\\
                (C,\Sigma,\Xi)&\coloneqq  (c^3 + \langle A , c^2 \rangle - \tfrac{1}{2}\langle A ,\langle A ,c^1 \rangle \rangle , 0 ,\rmd v^2)~,
            \end{aligned}
        \end{equation}
        respectively.
        
        The corresponding right $T_2$-duality background consists of a trivial principal $\Lambda^2\mathbb T^{2,\ast}$-bundle which carries a trivial 1-gerbe with band $\IT^{2,\ast}$ which in turn carries a trivial 2-gerbe with band $\IT^1$:
        \begin{subequations}\label{eq:right_background_85}
            \begin{equation}
                \hat P=M\times \Lambda^2\mathbb T^{2,\ast}
                ~,~~~
                \hat Q=\hat P\times B\IT^{2,\ast}
                ~,~~~
                \hat R=\hat Q\times B^2\IT^1~.
            \end{equation}
            These carry connections which in the notation of~\eqref{eq:conn_2} read as
            \begin{equation}
                \begin{aligned}
                    \hat{A} &\coloneqq  \rmd\hat{v}^0 + c^1~,
                    \\
                    (\hat B,\hat{\Lambda}) &\coloneqq  ( c^2 + \langle a ,\rmd\hat{v}^0 \rangle,\rmd w^1)~,
                    \\
                    (\hat{C},\hat{\Sigma},\hat{\Xi}) &\coloneqq  (c^3 - \tfrac{1}{2}\langle a ,\langle a ,\rmd\hat{v}^0 \rangle \rangle,\langle a ,\rmd w^1 \rangle , \rmd w^2)~. 				
                \end{aligned}
            \end{equation}
        \end{subequations}
    \end{lemma}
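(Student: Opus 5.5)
The plan is to specialize the general differential cocycle data~\eqref{eq:cocycledatacont2d} to a trivial bundle, and then read off the backgrounds from \cref{prop:connection_leftbackgroundsf2,prop:connection_rightbackgroundsf2}.

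First I fix the trivial cocycle. Topological triviality lets us take the surjective submersion $Y=M$ with the identity. All Čech data of $\scP_\rmC$ then vanish: $v^0_{ij}=\hat v^0_{ij}=0$, $\lambda^0=\hat\lambda^0=0$, $v^1=0$, $\lambda^1=0$, $v^2=0$. The relations in~\eqref{eq:cocycledatacont2d} on $Y^{[2]}=M$ and higher fiber products collapse to $A_j=A_i$ and so on. Hence, after applying a gauge transformation that sets $\Lambda_{ij}=\Sigma_{ij}=\Xi_{ijk}=0$, a connection is just a global quadruple
\begin{equation*}
A_i=a,\quad \hat A_i=c^1,\quad B_i=c^2,\quad C_i=c^3 .
\end{equation*}
This yields the first claim. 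The dimension $9$ plays no role beyond the physical interpretation.

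Next I construct the backgrounds. With all cocycles zero, the simplicial manifolds $R_\bullet$, $Q_\bullet$ and $\hat R_\bullet$ from \cref{prop:leftbackgroundsf2,prop:lift_left,prop:rightbackgroundsf2} and~\eqref{eq:hatR-n} become products of $M$ with the standard simplicial models: $E$ of $\IR^d\to\IT^d$, $B\IT^{2,\ast}$, $B^2\sfU(1)$, and correspondingly for $\hat P,\hat Q,\hat R$. For $d=2$ I identify $\Lambda^2\IT^{2,\ast}\cong\sfU(1)$. The remaining coordinates $v^0,v^1,v^2$ and $\hat v^0,w^1,w^2$ are the global fiber coordinates, where the resolution index is suppressed since $Y=M$.

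I then substitute into~\eqref{eq:conn_1} and~\eqref{eq:conn_2}. On the left, $A=\rmd v^0+a$, $B=c^2-\langle A,c^1\rangle$, $\Lambda=\rmd v^1$ and $\Xi=\rmd v^2$ follow at once. On the right, $\hat A$, $\hat B$, $\hat\Lambda=\rmd w^1$ and $\hat\Xi=\rmd w^2$ are similarly immediate.

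The main obstacle is the $\tfrac12$-terms, together with the claimed $\Sigma=0$ and $\hat\Sigma=\langle a,\rmd w^1\rangle$. Naively,~\eqref{eq:conn_1} gives $C=c^3+\langle A,c^2\rangle-\langle A,\langle A,c^1\rangle\rangle^\low$. For $d=2$ I will write out $\langle-,\langle-,-\rangle\rangle^\low$ via \ref{not:low}, with $\hat v$ a multiple of $e^1\wedge e^2$. For equal arguments the lower-triangular part should equal $\tfrac12\langle A,\langle A,c^1\rangle\rangle$ up to a term of the form $\rmd(\cdots)$ or one symmetric in the two slots. If the match is only up to such an exact correction, I absorb it by a gauge transformation in the shifted convention, which relies on the flexibility noted at the end of \ref{ssec:toroidal-3-bundles}. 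For $\hat C=c^3-\langle A_i,B_i\rangle$, the stated form $c^3-\tfrac12\langle a,\langle a,\rmd\hat v^0\rangle\rangle$ needs precisely such a 3-form gauge shift of $C_i$ before the identification. I would therefore choose the representative $C_i$ (and $\hat C$) by a global 2-gauge parameter $\Sigma$, so that both displayed expressions hold simultaneously. Finally, I verify consistency by computing the curvatures: these must agree with \cref{prop:connection_leftbackgroundsf2,prop:connection_rightbackgroundsf2}, namely $G_0-\langle A,H_0\rangle-\langle A,\langle A,\hat F\rangle\rangle^\low$ and $(\langle F,\hat\Lambda\rangle, G_0-\langle F,\hat B\rangle)$.
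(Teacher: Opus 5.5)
Your substitution into \eqref{eq:conn_1} and \eqref{eq:conn_2} is the paper's route, and it works for every form except $\hat C$. Two of your ``obstacles'' are not obstacles.

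With vanishing cocycle data, $\Sigma=0$ and $\hat\Sigma=\langle a,\rmd w^1\rangle$ can be read off directly. The $\tfrac12$ in $C$ is an exact pointwise identity, valid for every $d$. Since $A$ is a $1$-form, $A^a\wedge A^b$ is antisymmetric in $a,b$, so contracting it with $\omega^\low$ only sees the antisymmetric part of $\omega^\low$. By $B(k_1,k_2)=B^\low(k_1,k_2)-B^\low(k_2,k_1)$ from \ref{not:low}, that part is $\tfrac12\omega$. Hence $\langle A,\langle A,c^1\rangle\rangle^\low=\tfrac12\langle A,\langle A,c^1\rangle\rangle$ exactly, with no exact correction and no gauge freedom needed.

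The genuine gap is $\hat C$. You rightly note that \eqref{eq:conn_2} gives $\hat C=c^3-\langle a,c^2\rangle$, not the stated $c^3-\tfrac12\langle a,\langle a,\rmd\hat v^0\rangle\rangle$; the paper's proof, which only cites the two propositions, does not address this. Your fix by a gauge shift of $C_i$ cannot work, for three reasons.

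First, a shift $C_i\mapsto C_i+\rmd\Sigma$ changes $C$ and $\hat C$ by the same form, so it merely renames $c^3$. The discrepancy, however, lies in $C-\hat C$, with one and the same $c^3$ in both.

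Second, in any gauge the output of \eqref{eq:conn_2} is a form on $Y$, pulled back to $\hat Q_0$. The stated $\hat C$ instead depends on the fiber coordinate $\hat v^0$ of $\hat P$.

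Third, the difference $\langle a,c^2\rangle-\tfrac12\langle a,\langle a,\rmd\hat v^0\rangle\rangle$ is not closed: its exterior derivative contains $\langle F,\langle a,\rmd\hat v^0\rangle\rangle$ (up to sign), which points along the fiber. So the two candidates have different curvature, and your final curvature check cannot reconcile them.

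The stated $\hat C$ is what you get by transporting $C$ across the duality isomorphism $\psi$ of \cref{thm:spans_are_differential_t2_duality}. For vanishing cocycles, $B^\psi=-\langle\rmd v^0,\langle\rmd v^0,\hat v^0\rangle\rangle^\low$, so $\rmd B^\psi=-\tfrac12\langle\rmd v^0,\langle\rmd v^0,\rmd\hat v^0\rangle\rangle$. Now insert $G_3$ from \eqref{eq:defdifft2dual}, together with $A=\rmd v^0+a$, $\hat A=\rmd\hat v^0+c^1$, $\hat B=c^2+\langle a,\rmd\hat v^0\rangle$ and the $\tfrac12$-identity. All $c^1$- and $c^2$-terms cancel, and the pure $\rmd v^0$-terms assemble into $\rmd B^\psi$:
\[
C-G_3+\rmd B^\psi=c^3-\tfrac12\langle a,\langle a,\rmd\hat v^0\rangle\rangle~.
\]
Thus the relation $\rmd B^\psi=\hat C-C+G_3$ for the connection on $\psi$ forces exactly the stated $\hat C$.

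You therefore need either this argument or an explicit correction of the $\hat C$-formula in \eqref{eq:conn_2}. The gauge-fixing step as proposed fails.
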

    \begin{proof}
        The explicit form of the connections follows directly from the proofs of \ref{prop:connection_leftbackgroundsf2} and \ref{prop:connection_rightbackgroundsf2}. 
    \end{proof}
    
    Let us now compare this connection information with the data obtained from dimensional reduction of a bosonic 11d SUGRA background, following the analysis of~\cite{Bergshoeff:1995as}. The 11d SUGRA background itself consists of an $11$-dimensional manifold $P$, regarded as a trivial principal $\IT^2$-bundle over a manifold $M$ with connection~$A$,
    \begin{subequations}\label{eq:11d_background}
        \begin{equation}
            P=M\times \IT^2~,~~~A = \rmd v + a~.
        \end{equation}
        The manifold $M$ comes with a metric $g^9$, and the total space $P$ is endowed with the usual Kaluza--Klein metric on $\pi_{11d}:M\times \mathbb T^2 \rightarrow M$,
        \begin{equation}
            g = \pi^\ast_{11d}g^9 + h(A \odot A)~,
        \end{equation}
        where $h\in C^{\infty}(M,\odot^2\mathbb R^{2,\ast})$ is related to the dilaton. There is also a 3-form potential $C\in \Omega^3(M\times \IT^2,\IR)$ of the form 
        \begin{equation}
            C = c^3 + \langle A , c^2 \rangle - \tfrac{1}{2}\langle A ,\langle A , c^1 \rangle \rangle
        \end{equation}
        with
        \begin{equation}
            c^1 \in \Omega^1(M, \Lambda^2 \mathbb R^{2,\ast})
            ~,~~~
            c^2 \in \Omega^2(M,\mathbb R^{2,\ast})
            ~,~~~
            c^3 \in \Omega^3(M,\mathbb R)~.
        \end{equation}
    \end{subequations}
    
    We then identify M-theory and type IIA directions and assume that $h$ is diagonal,
    \begin{equation}
        \mathbb T^2 = \mathbb T_{\rm M} \times \mathbb T_{\rm IIA}
        \eand
        h = h_M + h_{\rm IIA}~.
    \end{equation}
    This induces a splitting\footnote{Identifying the M-theory and type IIA directions also splits $\IR^{2,\ast}$ into $\IR^\ast_{\rm IIA}\times \IR^\ast_{\rm M}$.}
    \begin{equation}
        A = A_M+A_{\rm IIA}
        \eand 
        c^2 = c^2_M+c^2_{\rm IIA}~,
    \end{equation}
    and we obtain a reduced type IIA background $\pi_{\rm IIA}: M\times \mathbb T_{\rm IIA}\rightarrow M$ with the following fields:
    \begin{myenumerate}
        \item a Kaluza--Klein metric $g^{l} \coloneqq  \pi^\ast_{\rm IIA} g^9 + h_{\rm IIA}(A_{\rm IIA} \odot A_{\rm IIA})$;
        \item a dilaton $\phi^l$ with $e^{4\phi^l/3} = h_M$;
        \item a Ramond--Ramond (RR) $1$-form $c^{l}_1 \coloneqq  a_M \in \Omega^1(M \times \mathbb T_{\rm IIA},\mathbb R_M^{\ast})$;
        \item a Kalb--Ramond (KR) $2$-form $b^{l}_2 \coloneqq  c^2_M - \langle A_{\rm IIA} ,c^1 \rangle \in \Omega^2(M \times \mathbb T_{\rm IIA}, \mathbb R_M^{\ast})$;
        \item an RR $3$-form $c_3^{l} \coloneqq  c^3 + \langle A_{\rm IIA} , c^2_{\rm IIA} \rangle \in \Omega^3(M \times \mathbb T_{\rm IIA},\mathbb R)$.
    \end{myenumerate}
    
    T-dualizing these fields, we obtain a reduced type IIB background $\pi_{\rm IIB}:M \times \mathbb T_{\rm IIB}\cong M \times \Lambda^2\mathbb T^{2,\ast}$ with the following fields:
    \begin{myenumerate}
        \item a 1-form connection on the bundle $\hat{A} \coloneqq  \rmd\hat{v} + \hat{a}$;
        \item a metric $g^{r} \coloneqq  \pi_{\rm IIB}^\ast g^9+ h_{\rm IIA}^{-1}(\hat{A} \odot \hat{A})$;
        \item a dilaton $\phi^r$ with $e^{2\phi^{r}} = h_{\rm IIA}^{-1}e^{2\phi^l}$;
        \item a KR $2$-form\footnote{This choice differs from the common choice $c^2_M + \langle a_{\rm IIA} , \hat{A} \rangle$, but corresponds to the definition of differential T-duality. Indeed, $b^r_2-b^l_2 = \langle A , \hat{A} \rangle - \langle \rmd v , \rmd\hat{v} \rangle$ and therefore the Poincar\'e line bundle with connection, whose curvature is $\langle \rmd v , \rmd\hat{v} \rangle$, can be understood as an isomorphism between the trivial gerbes with connections $b^l_2$, $b^r_2$, which as such has the desired curvature $\langle A , \hat{A} \rangle$.} $b^{r}_2 \coloneqq  c^2_M + \langle a_{\rm IIA} , \rmd\hat{v} \rangle \in \Omega^2(M \times \mathbb T_{\rm M}^{\ast} \wedge \mathbb T_{\rm IIA}^{\ast}, \mathbb R_M^{\ast})$;
        \item a RR $2$-form $c^{r}_2 \coloneqq  c^2_{\rm IIA} - \langle a_{\rm M} , \rmd\hat{v} \rangle \in \Omega^2(M \times \mathbb T_{\rm M}^{\ast} \wedge \mathbb T_{\rm IIA}^{\ast}, \mathbb R_{\rm IIA}^{\ast})$;
        \item a RR $4$-form $c_4^{r} \coloneqq  c^3 \wedge \rmd\hat{v} \in \Omega^4(M \times \mathbb T_{\rm M}^{\ast} \wedge \mathbb T_{\rm IIA}^{\ast}, \mathbb R_{\rm M}^{\ast} \otimes \mathbb R_{\rm IIA}^{\ast})$.
    \end{myenumerate}
    We note that the $\sfSL(2,\IZ)$-action on $\mathbb T^2$ induces by this construction the S-duality symmetry of the type IIB~field content~\cite{Bergshoeff:1995as,Hull:1994ys}.
    
    Let us now match the field content found in~\ref{lem:trivial_TDF2-bundle} with the one above. 
    \begin{proposition}\label{prop:t2d2trivialtopology}
        Consider the connections and 2-gerbes introduced in \ref{lem:trivial_TDF2-bundle}. Upon splitting $\mathbb T^2 = \mathbb T_M \times \mathbb T_{\rm IIA}$, we observe the following.
        \begin{myenumerate}
            \item The connection form $C$ on $R$ has the form of a supergravity 3-form potential that can be dimensionally reduced to a type IIA background and T-dualized to a type IIB background.
            \item The 1-gerbe $Q$ and the connection it carries descend to $M \times \mathbb T_{\rm IIA}$, where $B_M$ coincides with the IIA KR 2-form.
            \item The torus bundle $P$ is a $\mathbb T^1$-bundle over a type IIA background whose connection form coincides with the RR $1$-form of that background.
            \item The connection $\hat{A}$ on $\hat{P}$ coincides with the 1-form connection $\hat A$ appearing in the type IIB background.
            \item The $\mathbb R_\rmM^{\ast}$-component of the connection form $\hat{B}$ coincides with the type IIB KR 2-form.
            \item The $\mathbb R_{\rm IIA}^{\ast}$-component of the connection form $\hat{B}$ coincides with the type IIB RR 2-form.
            \item The $\sfSL(2,\IZ)$-action defined in \ref{ex:t2f2s2} corresponds to the S-duality action on type~IIB fields under this identification.
        \end{myenumerate}
    \end{proposition}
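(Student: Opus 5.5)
The proof is a term-by-term comparison of the explicit formulas of \cref{lem:trivial_TDF2-bundle} with the dimensionally reduced and T-dualized supergravity fields listed above. Throughout, fix the splitting $\mathbb T^2=\mathbb T_\rmM\times\mathbb T_{\rm IIA}$ and the induced splittings $A=A_\rmM+A_{\rm IIA}$, $a=a_\rmM+a_{\rm IIA}$, $c^2=c^2_\rmM+c^2_{\rm IIA}$. We also identify $\Lambda^2\IR^{2,\ast}\cong\IR^\ast_\rmM\otimes\IR^\ast_{\rm IIA}$, so that $\langle A,c^1\rangle$ splits accordingly.

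For items (1)--(3), start from the left background. The connection form $C=c^3+\langle A,c^2\rangle-\tfrac12\langle A,\langle A,c^1\rangle\rangle$ on $R$ is literally the 11d potential \eqref{eq:11d_background}; this gives (1). Expanding along the splitting reproduces $c^l_3$ as the $A_\rmM$-independent part and $b^l_2$ as the coefficient of $A_\rmM$. For (2), take the $\IR^\ast_\rmM$-component of $B=c^2-\langle A,c^1\rangle$. Since $c^1$ is antisymmetric, the contraction with $A_\rmM$ drops out of this component, leaving $c^2_\rmM-\langle A_{\rm IIA},c^1\rangle=b^l_2$. This component depends only on $M\times\mathbb T_{\rm IIA}$ (its $\Lambda$ is $\rmd v^1$, a trivial gluing), so $Q$ descends, as in \cref{rk:explainQ}. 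Item (3) follows by viewing $P=(M\times\mathbb T_{\rm IIA})\times\mathbb T_\rmM$, whose connection is $A_\rmM=\rmd v_\rmM+a_\rmM$, with $a_\rmM=c^l_1$.

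For items (4)--(6), use the right background \eqref{eq:right_background_85}. $\hat A=\rmd\hat v^0+c^1$ matches the IIB connection once the T-dual connection component $\hat a$ is identified with $c^1$. This identification is forced by standard differential T-duality of the IIA data, because the winding field of $b^l_2$ along $\mathbb T_{\rm IIA}$ is $c^1$; this gives (4). Next, $\hat B=c^2+\langle a,\rmd\hat v^0\rangle$. Its $\IR^\ast_\rmM$-component is $c^2_\rmM+\langle a_{\rm IIA},\rmd\hat v\rangle=b^r_2$, which is (5). Its $\IR^\ast_{\rm IIA}$-component is $c^2_{\rm IIA}-\langle a_\rmM,\rmd\hat v\rangle=c^r_2$, which is (6). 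The sign in (6) comes from the antisymmetry of $\Lambda^2$.

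Item (7) is the one needing care and is the main obstacle. I would first check that $\sfGL(2,\IZ)\supset\sfSL(2,\IZ)$ acts on the data $(a,c^1,c^2,c^3)$ through the $\sfTDFTA_2$-structure maps. On $\Lambda^2\IR^{2,\ast}$ this action is trivial because $\det=1$, so $\hat A$ and $c^3$ are invariant. On $\hat B\in\Omega^2(\cdot,\IR^{2,\ast})$ it is the dual rotation. By (5) and (6), this is exactly the $\sfSL(2,\IZ)$ doublet rotation of $(b^r_2,c^r_2)$, and the accompanying rotation of $h$ yields the axio-dilaton transformation of \cite{Bergshoeff:1995as}. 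The subtlety is the $\langle-,\langle-,-\rangle\rangle^\low$ terms, which are not equivariant. I expect these to cancel for the $\rmd\hat v$-dependent pieces because $\langle a,\langle a,\rmd\hat v\rangle\rangle$ is symmetric-antisymmetric, hence $\tfrac12$-invariant. I would verify this by direct substitution in a basis before concluding.
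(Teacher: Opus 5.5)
Your proposal is correct and follows the same route as the paper. The paper's proof is a term-by-term inspection of the formulas in \cref{lem:trivial_TDF2-bundle} against the IIA and IIB field lists, and the only subtlety it flags is the relative sign between (5) and (6), which comes from identifying $\IR^\ast_\rmM$ and $\IR^\ast_{\rm IIA}$ with $\Lambda^2\IR^{2,\ast}$ via $\rmd v_\rmM\mapsto \rmd v_\rmM\wedge \rmd v_{\rm IIA}$ and $\rmd v_{\rm IIA}\mapsto \rmd v_{\rm IIA}\wedge \rmd v_\rmM$, i.e.\ exactly the antisymmetry you invoke. The check you defer in (7) is immediate: in this topologically trivial setting the lemma only contains the symmetrized terms $\tfrac12\langle A,\langle A,c^1\rangle\rangle$ and $\tfrac12\langle a,\langle a,\rmd\hat v^0\rangle\rangle$, which are manifestly $\sfSL(2,\IZ)$-invariant, so no non-equivariant $\langle-,\langle-,-\rangle\rangle^\low$ terms ever enter.
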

    \begin{proof}
        The relation with the IIA and IIB field content follows by inspection. The only subtlety  is that, in order to decompose $B$ into its two components, we use isomorphisms $\mathbb R_{\rm IIA}^{\ast} \rightarrow \Lambda^2 \mathbb R^{2,\ast}$, $\rmd v_{\rm IIA} \mapsto \rmd v_{\rm IIA} \wedge \rmd v_\rmM$ and $\mathbb R_{\rm M}^{\ast} \rightarrow \Lambda^2 \mathbb R^{2,\ast}$, $\rmd v_{\rm M} \mapsto \rmd v_{\rm M} \wedge \rmd v_{\rm IIA} = - \rmd v_{\rm IIA} \wedge \rmd v_{\rm M}$, yielding the characteristic minus sign for the RR 2-form.
    \end{proof}
    
    There are two limitations to the above results, which we hasten to add:
    \begin{remark}
        Our above discussion only involves principal bundles, 1-gerbes, and 2-gerbes. As such, it can only capture D$p$-branes with $p=0,\ldots,2$. One can, in principle, also partially capture D$3$-branes, but a full analysis requires working in a ``democratic formulation'' also involving the 5-gerbe with connection 6-form coupling to an M5-brane (as well as the dual graviton)~\cite{Cremmer:1997ct,Godazgar:2013dma,Bandos:1997gd,Bandos:2003et,Bekaert:2002uh}.
    \end{remark}
    \begin{remark}
        Our analysis does not directly translate to the topologically non-trivial sector. For example, Witten's $G_4$-flux quantization condition~\cite{Witten:1996md} shows that it is not the curvature of the 2-gerbe $R$ that should be quantized, i.e.~take values in integral cohomology, but it is its difference from $\tfrac14 p_1$ where $p_1$ is the first Pontryagin class of the frame bundle. This implies that in order to capture general topologically non-trivial configurations, we would also have to include (at least) the group $\sfSO(10,1)$ into the toroidal 3-group $\sfTDFT_2$ to access the above shift by the frame bundle. Our constructions as presented above can therefore only describe configurations with trivial frame bundle.
        
        More generally, however, the full set of conditions to be imposed on an 11d SUGRA background remains under discussion, with a strong contender given by Hypothesis H, see e.g.~\cite{Fiorenza:2020dcp,Sati:2020nob}.
    \end{remark} 
    
    Fixing either of the above restrictions is beyond the scope of this paper and has to be left to future work.
    
    \subsection{Physical equivalence of 3d sigma models with \texorpdfstring{$T_2$}{T2}-dual data}\label{ssec:sigmamodel}
    
    Consider again a topologically trivial bosonic 11d SUGRA background $(M,P,A,g,C)$ as defined in the previous section in~\eqref{eq:11d_background}.\footnote{The arguments below, however, extend to arbitrary $d$.} In addition, consider a 3d Riemannian manifold $\Sigma_3$ with metric $\eta$. Then the bosonic part of the M2-brane sigma model~\cite{Bergshoeff:1987cm,Howe:1977ut} is the action functional\footnote{We have suppressed the cosmological constant term, as it will not participate in our discussion.} 
    \begin{equation}\label{eq:left3dsigmamodel}
        S \coloneqq  \int_{\Sigma_3} \Big( \tfrac12 g(X)(\rmd X,\star_\eta \rmd X)+\rmi X^\ast C\Big)
    \end{equation}
    on maps $X:\Sigma_3\rightarrow P$, where we have suppressed the brane tension and  $\rmd X \in \Gamma(T^{\ast}\Sigma_3 \otimes X^{\ast}TP)$ is the differential of $X$.
    
    We proceed to show how $T_2$-duality relates~\eqref{eq:left3dsigmamodel} to a different but physically equivalent 3d sigma model, following the familiar derivation of the Buscher rules of T-duality in 2d sigma models~\cite{Buscher:1987qj,Fridling:1983ha,Hull:1989jk,Belov:2007qj}. Since the procedure has some interesting new features, such as the appearance of an additional gerbe, we will present the computations in detail.
    
    Concretely, we construct a span of field theories
    \begin{equation}
        \begin{tikzcd}
            & S_\rmP=\int_{\Sigma_3} \caL_\rmP \arrow[dl]\arrow[dr]
            \\
            S=\int_{\Sigma_3} \caL 
            & & \hat S=\int_{\Sigma_3} \hat \caL
        \end{tikzcd}
    \end{equation}
    where $\caL$, $\caL_\rmP$, $\hat \caL$ are the Lagrangians of the physically equivalent actions $S,S_\rmP,\hat S$.\footnote{The arrows describe a homotopy transfer of the underlying $L_\infty$-algebras inducing a semi-classical equivalence between the field theories, see~\cite{JalaliFarahani:2023sfq} for a general discussion. Including also the transformation of the determinant would yield full physical equivalence.}
    
    To be as concrete and transparent as possible, we adopt ``physics notation'' and compute with respect to coordinates $y^i$, $i=1,2,3$, on $\Sigma_3$ and $x^M=(x^\mu,\mathring{x}^a)$ with $\mu=1,\ldots,\rmdim(M)$, and $a=1,2$, on $P$, where $\mathring{x}^a$ parameterize the directions of the $\IT^2$-fibers of $P$. We work with a general metric 
    \begin{equation}
        \tfrac12 g_{MN}\rmd x^M\odot\rmd x^N=\tfrac12 g_{\mu\nu}\rmd x^\mu\odot\rmd x^\nu+g_{\mu a}\rmd x^\mu\odot \rmd \mathring{x}^a+\tfrac12 g_{ab}\rmd \mathring{x}^a\odot \rmd \mathring{x}^b~,
    \end{equation}
    and a general $C$-field
    \begin{equation}
        \begin{aligned}
            \tfrac1{3!} C_{MNK}\rmd X^M\wedge \rmd X^N\wedge \rmd X^K&=
            \tfrac1{3!} C_{\mu\nu\kappa}\rmd x^\mu\wedge \rmd x^\nu\wedge \rmd x^\kappa+\tfrac1{2!} C_{\mu\nu a}\rmd x^\mu\wedge \rmd x^\nu\wedge \rmd \mathring{x}^a
            \\
            &\hspace{1cm}+\tfrac1{2!} C_{\mu ab}\rmd x^\mu\wedge \rmd \mathring{x}^a\wedge \rmd \mathring{x}^b~.
        \end{aligned}
    \end{equation}
    On $\Sigma_3$, we have a metric
    \begin{equation}
        \eta=\tfrac12 \eta_{ij}\rmd y^i\odot \rmd y^j~,
    \end{equation}
    and we use the shorthands 
    \begin{equation}
        \dpar_i\coloneqq \frac{\dpar}{\dpar y^i}~,~~~\dpar^i\coloneqq \eta^{ij}\dpar_j\coloneqq(\eta^{-1})^{ij}\dpar_j~,
    \end{equation}
    with the usual Einstein summation convention applied. Finally, we decompose the sigma model map $X:\Sigma_3\rightarrow P$ into its components $(X^\mu,\mathring{X}^a)$.     
    
    The Lagrangian of the action~\eqref{eq:left3dsigmamodel} then reads
    \begin{equation}
        \begin{aligned}
            \caL&=\tfrac12 g_{\mu\nu}\dpar_i X^\mu \dpar^i X^\nu+g_{\mu a}\dpar_i X^\mu \dpar^i \mathring{X}^a+\tfrac12 g_{ab} \dpar_i \mathring{X}^a \dpar^i \mathring{X}^b+\tfrac\rmi{3!}\eps^{ijk}\dpar_i X^\mu \dpar_j X^\nu \dpar_k X^\kappa C_{\mu\nu\kappa}
            \\
            &\hspace{1cm}+\tfrac\rmi2\eps^{ijk} \dpar_i X^\mu \dpar_j X^\nu \dpar_k \mathring{X}^a C_{\mu\nu a}+\tfrac\rmi2\eps^{ijk}\dpar_i X^\mu \dpar_j \mathring{X}^a \dpar_k \mathring{X}^b C_{\mu ab}~.
        \end{aligned}
    \end{equation}
    The next step in the Buscher procedure is to gauge the symmetry in the fiber directions of $P$. That is, we introduce a connection 1-form $\alpha\in \Omega^1(\Sigma_3,\IR^2)$, with components $\alpha_i^a$, and we replace\footnote{We note that if we gauged only one ``M-theory''-direction, we would reproduce the derivation of the type IIA D2-brane action of~\cite{Townsend:1995af}.}
    \begin{equation}
        \dpar_i~~~\rightarrow~~~D_i
        ~~~\mbox{with}~~~
        D_i \mathring{X}^a\coloneqq\dpar_i \mathring{X}^a+\alpha_i^a
    \end{equation}
    everywhere.\footnote{This procedure works, since we assume that $g$ and $C$ do not depend on the fiber directions; see~\cite{Hull:1989jk} for a full discussion of the required conditions.} This procedure leads to a physically equivalent theory if the curvature of $\alpha$ vanishes,
    \begin{equation}
        f_{ij}^a\coloneqq \dpar_i \alpha_j^a-\dpar_j \alpha_i^a=0~,
    \end{equation}
    and this condition is enforced by adding a Lagrange multiplier $\hat\beta\in \Omega^1(\Sigma_3,\IR^{2,\ast})$. In contrast to the case of the 2d sigma model, we need a 1-form Lagrange multiplier, which can be seen as a coordinate on a gerbe over $M$. 
    
    Altogether, the parent action is then
    \begin{equation}
        \begin{aligned}
            \caL_\rmP&=\tfrac12 g_{\mu\nu}\dpar_i X^\mu \dpar^i X^\nu+g_{\mu a}\dpar_i X^\mu D^i \mathring{X}^a+\tfrac12 g_{ab} D_i \mathring{X}^a D^i \mathring{X}^b+\tfrac\rmi{3!}\eps^{ijk}\dpar_i X^\mu \dpar_j X^\nu \dpar_k X^\kappa C_{\mu\nu\kappa}
            \\
            &\hspace{1cm}+\tfrac\rmi2\eps^{ijk} \dpar_i X^\mu \dpar_j X^\nu D_k \mathring{X}^a C_{\mu\nu a}+\tfrac\rmi2\eps^{ijk}\dpar_i X^\mu D_j \mathring{X}^a D_k \mathring{X}^b C_{\mu ab}+\tfrac12 \eps^{ijk}g_{ab}\hat f_{ij}^a D_k \mathring{X}^b~,
        \end{aligned}
    \end{equation}
    where we have used the identity\footnote{As usual, we assume that standard boundary conditions hold, so that we can integrate by parts.}
    \begin{equation}
        \int_{\Sigma_3} \eps^{ijk}g_{ab}\hat \beta_i^a \dpar_j \alpha_k^b
        =-\int_{\Sigma_3} \eps^{ijk}g_{ab}(\dpar_j \hat \beta_i^a)\alpha_k^b
        \eqcolon\int_{\Sigma_3} \tfrac12 \eps^{ijk}g_{ab}\hat f_{ij}^a \alpha_k^b~.
    \end{equation}
    We have also used the fact that $\tfrac12 \eps^{ijk}g_{ab} \hat f_{ij}^a \dpar_k \mathring{X}^b$ is a total derivative, so that we could further identify
    \begin{equation}
        \int_{\Sigma_3} \tfrac12 \eps^{ijk}g_{ab}\hat f_{ij}^a \alpha_k^b=\int_{\Sigma_3} \tfrac12 \eps^{ijk}g_{ab}\hat f_{ij}^a D_k \mathring{X}^b~.
    \end{equation}
    
    The action $S_\rmP$ is indeed a gauge theory, invariant under the gauge symmetry $\alpha\rightarrow \tilde \alpha=\alpha+\rmd \lambda$ (and $\mathring{X}\rightarrow \tilde{\mathring{X}}=\mathring{X}-\lambda$) for any $\lambda\in C^\infty(\Sigma_3,\IT^2)$. Because the connection $\alpha$ is flat, we can assume it to be pure gauge and use the gauge symmetry to set it to zero. This process yields a physically equivalent field theory, cf.~\cite{JalaliFarahani:2023sfq}, and reduces $S_\rmP$ back to $S$.
    
    The reduction from $S_\rmP$ to $\hat S$ is performed by integrating out the gauge potential $\alpha$. Since this potential appears at most quadratically, we can perform the corresponding Gaussian integral and obtain the semi-classically equivalent field theory $\hat S$. Explicitly, this amounts to solving the equation of motion for $\alpha$ and substituting the solution back into the action. This is also the well-known 3d abelian duality between a compact scalar and a $\sfU(1)$ gauge field.
    
    We note that $\caL_\rmP$ is of the form
    \begin{equation}
        \caL_\rmP=\tfrac12 g_{\mu\nu}\dpar_i X^\mu \dpar^i X^\nu+\tfrac\rmi{3!}\eps^{ijk}\dpar_i X^\mu \dpar_j X^\nu \dpar_k X^\kappa C_{\mu\nu\kappa}+\tfrac12 G_{ia,jb} D^i \mathring{X}^a D^j \mathring{X}^b+D^i\mathring{X}^a R_{ia}~,
    \end{equation}
    where 
    \begin{equation}
        R_{ia}=g_{\mu a}\dpar_i X^\mu+\tfrac\rmi2\eps_{ijk} \dpar^j X^\mu \dpar^k X^\nu C_{\mu\nu a}+\tfrac12 \eps_{ijk}g_{ab}\hat f^{jk b}~.
    \end{equation}
    Varying $S_\rmP$ with respect to $\alpha_i^a$ and assuming that $G_{ia,jb}$ has an inverse $G^{ia,jb}$, we obtain the equation of motion
    \begin{equation}\label{eq:eom}
        D^i \mathring{X}^a=-G^{ia,jb} R_{jb}~.
    \end{equation}
    Substituting this back into $\caL_\rmP$, we obtain the dual Lagrangian 
    \begin{equation}
        \hat \caL=\tfrac12 g_{\mu\nu}\dpar_i X^\mu \dpar^i X^\nu+\tfrac\rmi{3!}\eps^{ijk}\dpar_i X^\mu \dpar_j X^\nu \dpar_k X^\kappa C_{\mu\nu\kappa}-\tfrac12 G^{ia,jb} R_{ia}R_{jb}~.
    \end{equation}
    
    Let us further split
    \begin{equation}
        G_{ia,jb}=\eta_{ij}g_{ab}+r_{ia,jb}~,~~~r_{ia,jb}=\rmi\eps_{ijk}\dpar^k X^\mu C_{\mu ab}~.
    \end{equation}
    Writing $g^{ab}$ for the inverse of $g_{ab}$ and assuming that $r_{ia,jb}$ is small, we then have from the expansion as a geometric series
    \begin{equation}
        G^{ia,jb}=\eta^{ij}g^{ab}-\eta^{ik}g^{ac}r_{kc,ld}\eta^{lj}g^{db}+\ldots~.
    \end{equation}
    This allows us to expand the term involving this inverse metric in powers of $r_{ia,jb}$ or, equivalently, in orders of $C_{\mu ab}$:
    \begin{equation}
        \hat \caL=\tfrac12 g_{\mu\nu}\dpar_i X^\mu \dpar^i X^\nu+\tfrac\rmi{3!}\eps^{ijk}\dpar_i X^\mu \dpar_j X^\nu \dpar_k X^\kappa C_{\mu\nu\kappa}+\sum_{n=0}^\infty\caL_R^{(n)}~.
    \end{equation}
    To lowest order, we have
    \begin{equation}
        \begin{aligned}
            \caL_R^{(0)}&=-\tfrac12 g_{\mu a}g^{ab}g_{\nu b} \dpar_i X^\mu \dpar^i X^\nu+\tfrac\rmi2\eps_{ijk}g_{\mu a}g^{ab}\dpar^i X^\mu \dpar^j X^\nu \dpar^k X^\kappa C_{b\nu\kappa}+\tfrac12\eps_{ijk}g_{\mu a}g^{ab}\dpar^i X^\mu \hat f^{jk}_b
            \\
            &\hspace{0.5cm}
            +\tfrac12 g^{ab}\dpar^j X^\mu \dpar^k X^\nu\dpar_j X^\kappa \dpar_k X^\lambda C_{\mu\nu a}C_{\kappa\lambda b}
            -\rmi\dpar^j X^\nu \dpar^k X^\kappa C_{\nu\kappa a}\hat f_{jk}^a
            -\tfrac12 g_{ab} \hat f_{ij}^a \hat f^{ijb}~.
        \end{aligned}
    \end{equation}
    
    To make contact with our discussion of $T_2$-duality, we now specialize to the left background~\eqref{eq:11d_background}, where $\rmdim(M)=9$. For the metric, we have 
    \begin{equation}
        g_{\mu\nu}=g^9_{\mu\nu}~,~~~g_{\mu a}=h a_{\mu a}~,~~~g_{ab}=h \delta_{ab}~,
    \end{equation}
    and the $C$-field has the components
    \begin{equation}
        C_{\mu\nu\kappa}=c^3_{\mu\nu\kappa}~,~~~C_{\mu\nu a}=c^2_{\mu\nu a}~,~~~C_{\mu ab}=c^1_{\mu ab}~.
    \end{equation}
    We also assumed that $h$ is constant in our derivation. Clearly, the action $S$ is the expected sigma model action, with this left background as a target space. 
    
    The action with Lagrangian $\caL_R^{(0)}$ also shares many of the expected features. First of all, we note that the expected target space of the sigma model is $\hat P\times B \IT^{2,\ast}$, and we would like to match the field content. Recall that a generalized map from a manifold $\Sigma_3$ into $B\IT^{2,\ast}$ is a map from a suitable Čech groupoid resolution of $\Sigma_3$  into $B\IT^{2,\ast}$ and defines a principal $\IT^{2,\ast}$-bundle. After differential refinement, we therefore expect a $\sfLie(\IT^{2,\ast})$-valued connection 1-form, and the fields in the Lagrangian $\caL_R^{(0)}$ indeed include the gauge connection $\hat\beta\in \Omega^1(\Sigma_3,\IR^{2,\ast})$, which appears through its field strength $\hat f_{ij a}$. Moreover, we expect the connection 2-form $\hat B=c^2$ on the gerbe $\hat P\times B\IT^{2,\ast}$ to enter the action as a shift of the field strength $\hat f^a_{ij}$, and this is explicitly visible in the Lagrangian $\caL_R^{(0)}$. It remains to integrate in further, auxiliary fields that account for the map $\mathring{\hat{X}}$ from $\Sigma_3$ into the fibers of $\hat P=M\times \Lambda^2 \IT^{2,\ast}\cong M\times S^1$.
    
    To this end, we first observe that we expect these directions to be non-dynamical (e.g.~from a simple degree count of scalar fields, dualizing connections to scalars). We then consider the term containing $C_{\mu ab}$ in the parent action. Introducing additional fields $\mathring{\hat{X}}_{ab}\in \Omega^0(\Sigma_3,\Lambda^2\IR^{2,\ast})$, an additional gauge connection $\hat \alpha_{ab}\in \Omega^1(\Sigma_3,\Lambda^2\IR^{2,\ast})$, and a metric component $\hat g_{\mu ab}\in \Omega^0(\Sigma_3,\IR^d\otimes \Lambda^2\IR^{2,\ast})$, we can introduce the covariant derivative
    \begin{equation}
        \hat D_i \mathring{\hat{X}}_{ab}\coloneqq\dpar_i \mathring{\hat{X}}_{ab}+\hat \alpha_{i ab}
    \end{equation}
    and consider the Lagrangian 
    \begin{equation}
        \caL_{\rm alt}=\tfrac12 \hat D_i \mathring{\hat{X}}_{ab} \hat D^i \mathring{\hat{X}}^{ab}+\hat g_{\mu ab} \dpar_i X^\mu \hat D^i \mathring{\hat{X}}^{ab}+\hat D_i \mathring{\hat{X}}_{ab} T^{i ab}+\tfrac12 T^{i ab} T_{i ab}+\tfrac12\hat g_{\mu ab} \hat g_\nu^{ab} \dpar_i X^\mu \dpar^i X^\nu~.
    \end{equation}
    The connection $\hat \alpha_{ab}$ appears quadratically and yields the equation of motion
    \begin{equation}
        \hat D_i \mathring{\hat{X}}_{ab}=-\hat g_{\mu ab} \dpar_i X^\mu -T_{i ab}~,
    \end{equation}
    so we can integrate it out, obtaining
    \begin{equation}
        \tilde \caL_{\rm alt}=-T^{iab}\hat g_{\mu ab} \dpar_i X^\mu ~.
    \end{equation}
    For 
    \begin{equation}
        T^{i ab}=\tfrac{\rmi}{2}\eps^{ijk}D_j \mathring{X}^a D_k \mathring{X}^b\eand \hat g_{\mu ab}=C_{\mu ab}~,
    \end{equation}
    this reproduces the term in $\caL_\rmP$ containing $C_{\mu ab}$. Therefore, we can replace the latter term with $\caL_{\rm alt}$ and obtain a natural sigma model action whose target space is the correspondence space $M\times \IT^2\times \Lambda^2\IT^{2,\ast}\times B\IT^{2,\ast}$.\footnote{A shortcoming of this replacement, however, is that the gauge potential $\alpha$, which appears quadratically in $T_{i ab}$ now appears cubically, and can no longer be integrated out.} All fields appear in the expected form; in particular,
    \begin{equation}
        \hat g_{\mu ab}=C_{\mu ab}=c^1_{\mu ab}
    \end{equation}
    now describes the Kaluza--Klein connection coupling the components of $X$ in the base manifold with those in the fibers. We also note that the correction to the metric, 
    \begin{equation}
        g_{\mu\nu}\rightarrow g_{\mu\nu}+\hat g_{\mu ab} \hat g_\nu^{ab}~,
    \end{equation}
    has the expected form for generalized metrics.
    
    We conclude that the topological coupling to $C$ in the original action becomes a coupling to $\hat{C}$ as in \eqref{eq:right_background_85} in the dual action, and that the key to making the computation work is the relation between $C$ and $\hat{C}$ that is axiomatized in the topologically non-trivial setting by \ref{def:t2dualitydiff}. 
    
    Finally, we note that while the above discussion was carried out in the physically most relevant case of two-dimensional torus fibrations in the left $T_2$-duality background, it readily extends to arbitrary torus fibrations. Concretely, had we started from a sigma model for a left $T_2$-duality background containing a principal $\IT^d$-bundle, the gauge potential, and correspondingly its curvature, would have taken values in $\IR^d$, the Lie algebra of $\IT^d$. The Lagrange multiplier would have taken values in $\IR^{d,\ast}$, being the local coordinate of a 1-gerbe with band $\IT^{d,\ast}$, as found in the right $T_2$-duality background.
    
    \appendices
    
    \section{Notation}\label{app:notation}
    
    We collect here several notational conventions used throughout the paper.
    
    \begin{notation}\label{not:torus}
        We write $\mathbb T^d \coloneqq  \mathbb R^d/\IZ^d$ for the standard $d$-torus and $\mathbb T^{d,\ast} \coloneqq  \mathbb R^{d,\ast}/\IZ^{d,\ast}$ for its dual torus. By a slight abuse of notation, we write $\Lambda^k\mathbb T^{d,\ast}$ to denote $\Lambda^k\mathbb R^{d,\ast}/\Lambda^k\IZ^{d,\ast}$.
    \end{notation}
    
    \begin{notation}\label{not:low}
        We identify $\Lambda^2 \IR^{d,\ast}$ with the vector space of skew-symmetric bilinear maps $B: \IR^d\times \IR^d\rightarrow \IR$. For $B \in \Lambda^2 \mathbb R^{d,\ast}$, we write
        \begin{equation}
            B^\low: \IR^d \times \IR^d \rightarrow \IR
        \end{equation}
        for the bilinear map defined on the standard basis $\{e_1,\ldots,e_d\}$ of $\mathbb R^d$ by
        \begin{equation}  
            B^\low(e_i,e_j) \coloneqq  \begin{cases}
                B(e_i,e_j)      & \quad \text{if } \quad i<j  \\
                0 & \quad \text{otherwise },
            \end{cases} 
        \end{equation}
        and $\low$ refers to the lower corner of a matrix. (This is evidently a basis-dependent construction.)
        
        We have the following relations.
        \begin{align}
            B(k_1,k_2) &= B^\low(k_1,k_2) - B^\low(k_2,k_1)~,~~~k_{1,2}\in \IR^d~,
            \\
            (B_1+B_2)^\low &= B_1^\low + B_2^\low~.
        \end{align}
        
        This construction restricts to $\Lambda^2 \IZ^{d,\ast}$, and we also use the notation $\langle -, \langle -, -\rangle \rangle^\low$ for modifying the map 
        \begin{equation}
            \langle -, \langle -,-\rangle \rangle: \mathbb R^d \otimes \mathbb R^d \otimes \Lambda^2 \mathbb R^{d,\ast} \rightarrow \mathbb R~.
        \end{equation}
    \end{notation}
    
    We also stress that we always use additive notation for abelian groups throughout the paper.
    
    \section{Background on simplicial manifolds}\label{app:simplicial}
    
    We briefly review crucial facts about simplicial manifolds used in the main text. Let $[n] \coloneqq  \{0, \,\ldots, \,n\}$ for $n \in \mathbb N$.
    \begin{definition}[Simplicial manifolds and their morphisms, cf.~\cite{BOTT197643}]\label{def:simpman}
        A \emph{simplicial manifold} $M_{\bullet}$ consists of the following data:
        \begin{myenumerate}
            \item for all $n \in \mathbb N$, a manifold of $n$-simplices $M_n$;
            \item for all $n_1,\,n_2 \in \mathbb N$ and every non-decreasing function $f:[n_1] \rightarrow [n_2]$, a smooth map $f^{\ast}:M_{n_2} \rightarrow M_{n_1}$ such that $(f \circ g)^{\ast} = g^{\ast} \circ f^{\ast}$;
        \end{myenumerate}
        A \emph{morphism of simplicial manifolds} $\psi_{\bullet}: M_{\bullet} \rightarrow N_{\bullet}$ is a collection of smooth maps $\psi_n:M_n \rightarrow N_n$ such that for any non-decreasing function $f:[n_1] \rightarrow [n_2]$ we have a commutative diagram
        \begin{equation}
            \begin{tikzcd}
                M_{n_2} \rar["{f^{\ast}_M}"] \dar["{\psi_{n_2}}"] & M_{n_1} \dar["{\psi_{n_1}}"] \\
                N_{n_2} \rar["{f^{\ast}_N}"] & N_{n_1}
            \end{tikzcd}.
        \end{equation}
    \end{definition}
    
    Fiber products and pullbacks are defined degree-wise. Let $\psi^i_{\bullet}: M^i_{\bullet} \rightarrow N_{\bullet}$, $i=1,\,2$ be morphisms with $\psi_n^1$ and $\psi_n^2$ transverse for all $n\in \mathbb N$. The fiber product $(M^1 \times_N M^2)_{\bullet}$ is the simplicial manifold with $(M^1 \times_N M^2)_{n} \coloneqq  M^1_n \times_{N_n} M^2_n$ and pullback maps defined from the pullback maps of $M^1_{\bullet}$ and $M^2_{\bullet}$ in the obvious way.
    
    Given a simplicial manifold $M_{\bullet}$ and $j, \,n \in \mathbb N$ with $0 \leq j \leq n$, we write $d^n_j:M_n \rightarrow M_{n-1}$ and $s^n_j:M_n \rightarrow M_{n+1}$ for $d^n_j = (\delta^n_j)^{\ast}$ and $s^n_j = (\sigma^n_j)^{\ast}$, where $\delta^n_j: [n-1] \rightarrow [n]$ and $\sigma^n_j:[n+1] \rightarrow [n]$ are defined by
    \begin{align}
        \delta^n_j(i) \coloneqq  \begin{cases}
            i   & \mbox{for}~ i < j~,
            \\
            i+1 & \mbox{for}~i \geq j~,
        \end{cases}
        \quad \quad 
        \sigma^n_j(i) \coloneqq  \begin{cases}
            i   & \mbox{for}~ i \leq j~,\\
            i-1 & \mbox{for}~ i > j~.
        \end{cases}
    \end{align}
    For $n \in \mathbb N$, write 
    \begin{align}
        \begin{split}
            |\Delta^n| &\coloneqq  \{(x_0,\ldots,x_n) \in [0,1]^{n+1} \:|\: \textstyle \sum_i x_i = 1\}
        \end{split}
    \end{align}
    for the \emph{geometric $n$-simplex}. For $0 \leq j \leq n$, write 
    \begin{equation}
        d_j^{\Delta}: |\Delta^{n-1}| \rightarrow |\Delta^n|
        \eand
        s_j^{\Delta}: |\Delta^{n+1}| \rightarrow |\Delta^n|
    \end{equation}
    for the continuous maps
    \begin{equation}
        \begin{aligned}
            d_j^{\Delta}(x_0,\ldots,x_{n-1}) &\coloneqq (x_0,\ldots,x_{j-1},0,x_j,\ldots,x_{n-1})~,
            \\
            s_j^{\Delta}(x_0,\ldots,x_{n+1}) &\coloneqq (x_0,\ldots,x_j+x_{j+1},\ldots,x_{n+1})~.
        \end{aligned}
    \end{equation}
    
    \begin{definition}[Geometric realization]
        The \emph{geometric realization} of a simplicial manifold $M_{\bullet}$ is the topological space $|M_{\bullet}|$  defined by
        \begin{equation}
            |M_{\bullet}| \coloneqq  \bigsqcup_{n \in \mathbb N} M_n \times |\Delta^n| \: / \sim~,
        \end{equation}
        where the equivalence relation is
        \begin{align}
            (p,d_j^{\Delta}(x)) &\sim (d_j(p),x)~, \quad \quad p \in M_n, \: x \in |\Delta^{n-1}|, \: \:j = 0,\,\ldots,\,n~, \label{eq:fatgeomreal}
            \\
            (p,s_j^{\Delta}(x)) &\sim (s_j(p),x)~, \quad \quad p \in M_n, \: x \in |\Delta^{n+1}|, \: \:j = 0,\,\ldots,\,n~.
        \end{align}
        The \emph{fat geometric realization} $||M_{\bullet}||$ of $M_{\bullet}$ is similarly defined, but quotienting only by relation~\eqref{eq:fatgeomreal}.
    \end{definition}
    
    A smooth manifold $M$ may trivially be regarded as a constant simplicial manifold $M_{\bullet}$ with $M_n = M$ for all $n \in \mathbb N$ and $f^{\ast} = \sfid$ for all non-decreasing $f:[n_1] \rightarrow [n_2]$. The geometric realization of $M_{\bullet}$ is homeomorphic to $M$ itself. There are also topological spaces $X$ which do not admit a smooth manifold structure, but which are homeomorphic to the geometric realization of some simplicial manifold $X_{\bullet}$. In such cases, topological properties of $X$ can be studied using smooth objects on $X_{\bullet}$. To illustrate this, we introduce the following definition.
    
    \begin{definition}[Simplicial de~Rham double complex]
        Let $M_{\bullet}$ be a simplicial manifold and let $V$ be a vector space. The \emph{simplicial de~Rham double complex} on $M_{\bullet}$ is the $\mathbb N$-bigraded complex
        \begin{equation}
            (\{\Omega^{p}(M_{q},V)\}_{p,q \in \mathbb N},\rmd,\delta)~,
        \end{equation}
        where $\Omega^{p}(M_{q},V)$ denotes the space of $p$-forms on $M_q$, $\rmd: \Omega^p(M_q,V) \rightarrow \Omega^{p+1}(M_q,V)$ is the exterior derivative on $M_q$, and $\delta: \Omega^p(M_q,V) \rightarrow \Omega^{p}(M_{q+1},V)$ is defined by
        \begin{equation}
            \delta\coloneqq  \sum_{j=0}^{q+1} (-1)^j d_j^{q+1,\ast}~.
        \end{equation}
        The \emph{de~Rham cohomology} of $M_{\bullet}$ with coefficients in $V$ is the cohomology $H^{\bullet}_{\rm dR}(M_{\bullet},V)$ of the total complex associated to the double complex of differential forms with differential
        \begin{equation}
            D\omega \coloneqq \delta \omega +(-1)^{p+q+1} \rmd\omega
        \end{equation}
        for $\omega\in \Omega^p(M_q,V)$.
    \end{definition}
    
    \begin{theorem}[\cite{BOTT197643}]\label{th:simpderham}
        Let $V$ be a real vector space and let $M_{\bullet}$ be a (paracompact) simplicial manifold. Then 
        \begin{equation}\label{eq:simpderham}
            H^{\bullet}(||M_{\bullet}||,V) = H^{\bullet}_{\rm dR}(M_{\bullet},V)~,
        \end{equation}
        where the left-hand side of~\eqref{eq:simpderham} denotes singular cohomology of $||M_{\bullet}||$.
    \end{theorem}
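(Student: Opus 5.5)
We prove the de~Rham theorem for simplicial manifolds by comparing both sides of \eqref{eq:simpderham} with a third double complex built from smooth singular cochains. Throughout, we use that $V$ is a real vector space and that each $M_q$ is paracompact.

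\emph{Step 1 (the de~Rham double complex).} Consider the double complex $(\Omega^p(M_q,V),\rmd,\delta)$ and filter it by the simplicial degree $q$. The $E_1$-page, obtained by first taking $\rmd$-cohomology, is $H^p_{\rm dR}(M_q,V)$ with differential induced by $\delta$. By the classical de~Rham theorem on each paracompact manifold $M_q$, this equals $H^p(M_q,V)$.

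\emph{Step 2 (the singular double complex).} Build the analogous double complex $C^p_{\rm sm}(M_q,V)$ of smooth singular cochains, with the singular coboundary and the same alternating face operator $\delta$. Integration of forms over smooth simplices, $\int:\Omega^p(M_q,V)\rightarrow C^p_{\rm sm}(M_q,V)$, commutes with both differentials. Here one checks the sign convention $D=\delta+(-1)^{p+q+1}\rmd$: integration intertwines $\rmd$ with the coboundary by Stokes' theorem, and it commutes with pullback along the face maps. Hence $\int$ is a morphism of double complexes. On $E_1$-pages it is the classical de~Rham isomorphism in each simplicial degree. Both spectral sequences live in the first quadrant and converge. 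By the comparison theorem, $\int$ therefore induces an isomorphism on total cohomology.

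\emph{Step 3 (identification with the fat realization).} We must show that the total cohomology of the smooth singular double complex is $H^\bullet(||M_\bullet||,V)$. First replace smooth singular cochains by continuous ones; smooth approximation gives a quasi-isomorphism in each column, so the totals agree by the same spectral sequence argument. Next, filter $||M_\bullet||$ by its skeleta $||M_\bullet||^{(q)}$, the images of $\bigsqcup_{k\le q}M_k\times|\Delta^k|$. Since only face relations are imposed, $||M_\bullet||^{(q)}/||M_\bullet||^{(q-1)}\cong (M_q)_+\wedge(|\Delta^q|/\partial|\Delta^q|)$. The relative cohomology of this pair is therefore $H^{\bullet-q}(M_q,V)$. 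This gives a spectral sequence with $E_1^{q,p}=H^p(M_q,V)$ converging to $H^\bullet(||M_\bullet||,V)$, since cohomology of the colimit is controlled by the Milnor sequence and the $\lim^1$ term vanishes, the filtration being finite in each total degree on $E_1$. The $d_1$ differential is the alternating sum of the face maps, because the attaching maps are the $d_j^\Delta$.

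\emph{Step 4 (matching the two spectral sequences).} It remains to produce a map of filtered complexes between the singular double complex and the singular cochains of $||M_\bullet||$ realizing this identification. The natural choice is the Eilenberg--Zilber/prism map. It sends a $q$-simplex-indexed singular $p$-cochain to a $(p+q)$-cochain on $M_q\times|\Delta^q|$ via the cross product with the fundamental class of $(|\Delta^q|,\partial|\Delta^q|)$. One must check that this construction is compatible with the gluing along the face maps. This construction is the expected main obstacle: we have to verify that the map is a chain map up to the correct signs and that it induces the identity on $E_1$. For the latter, use that the cross product with the relative fundamental class is the suspension isomorphism. Once these checks are done, the comparison theorem yields \eqref{eq:simpderham}. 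Composing Steps 2–4 gives the natural isomorphism, which is compatible with products if desired.
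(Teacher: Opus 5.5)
Your Steps 1--3 are sound, and the architecture is the classical one behind the cited result of Bott--Shulman--Stasheff (the paper itself gives no proof). Integration is a morphism of first-quadrant double complexes that is the de~Rham isomorphism in every column. The skeletal filtration of the fat realization yields Segal's spectral sequence with $E_1^{q,p}=H^p(M_q,V)$ and $d_1$ the alternating sum of faces. But abstractly isomorphic $E_1$-pages do not identify the abutments, so the whole argument rests on Step~4. There you only announce a map, and the map you describe points in a direction that does not exist at the cochain level.

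Cochains pull back but do not push forward, and a singular simplex of $\|M_\bullet\|$ in general does not factor through any single $M_q\times|\Delta^q|$. Hence relative cochains $a\times u_q$ on the pieces, however they match along faces, do not define a singular cochain on $\|M_\bullet\|$. Excision identifies the relative \emph{cohomology} of $(\|M_\bullet\|^{(q)},\|M_\bullet\|^{(q-1)})$ with that of $(M_q\times|\Delta^q|,M_q\times\partial|\Delta^q|)$, not the cochains. Nor is the family face-compatible in the first place. Extended by zero, it fails the gluing condition one level up, since $(d_j\times\mathrm{id})^\ast(a\times u_q)=d_j^\ast a\times u_q$ need not vanish on $M_{q+1}\times|\Delta^q|$.

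The missing idea is to build the comparison in the opposite direction. Let $\pi_q\colon M_q\times|\Delta^q|\to\|M_\bullet\|$ be the characteristic maps and $\iota_q\in C_q(|\Delta^q|)$ the identity simplex. For $c\in C^n(\|M_\bullet\|,V)$ set $\Phi(c)_q\coloneqq\pi_q^\ast c/\iota_q\in C^{n-q}(M_q,V)$, with the slant product formed via the Eilenberg--Zilber shuffle map. Three facts make $\Phi$ a chain map into the total complex (after matching signs with $D$):
\begin{itemize}
\item the Leibniz rule for the shuffle map;
\item $\partial\iota_q=\sum_j(-1)^j(d_j^\Delta)_\ast\iota_{q-1}$;
\item the gluing identities $\pi_q\circ(\mathrm{id}\times d^\Delta_j)=\pi_{q-1}\circ(d_j\times\mathrm{id})$.
\end{itemize}
$\Phi$ sends cochains vanishing on the $(s-1)$-skeleton into the columns $q\geq s$, so it is filtered. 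On the associated graded it is slant product with the relative fundamental cycle of $(|\Delta^q|,\partial|\Delta^q|)$. Up to sign, this is the inverse of the suspension isomorphism of your Step~3 composed with excision.

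Finally, $\Phi$ descends to maps $C^\bullet(\|M_\bullet\|^{(s)},V)\to\mathrm{Tot}_{q\leq s}$, the quotient of the total complex by the columns $q>s$. Induction on $s$ with the five lemma, plus the degreewise stabilization you already obtained from the Milnor sequence, gives the quasi-isomorphism. With Step~4 replaced in this way, your proof goes through.
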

    
    \begin{remark}\label{rk:simpderham}
        Versions of \ref{th:simpderham} with coefficients in an arbitrary abelian Lie group can be formulated using sheaf cohomology on simplicial manifolds, see~\cite{BOTT197643,GrzegorzAndrzejczak1989}.
    \end{remark}
    
    \begin{example}\label{ex:bg}
        Let $\sfG$ be a Lie group. We define simplicial manifolds $E\sfG_{\bullet}$ and $B\sfG_{\bullet}$ degree-wise by
        \begin{equation}\label{eq:ordinary_first_delooping}
            \begin{aligned}
                (B\sfG)_n &= \big\{ (g_{ij})_{i < j \in [n]} \in \sfG^{\binom{n+1}{2}} \:| \: \forall i<j<k \in [n], \, g_{ij}g_{jk} = g_{ik}\big\}\cong \sfG^n~,
                \\
                (E\sfG)_n &= \big\{ ((g_i)_{i \in [n]},(g_{ij})_{i < j \in [n]}) \in \sfG^{n+1} \times \sfG^{\binom{n+1}{2}} \:| \: \forall i<j \in [n], \, g_ig_{ij} = g_{j}\big\} \cong \sfG^{n+1}~.
            \end{aligned}
        \end{equation}
        Then, for a non-decreasing function $f:[n_1] \rightarrow [n_2]$, we define
        \begin{equation}\label{eq:simpmapspullback}
            \begin{aligned}
                f^{\ast}:(E\sfG)_{n_2} &\rightarrow (E\sfG)_{n_1}~,
                \\
                ((g_i)_{i \in [n_2]},(g_{ij})_{i < j \in [n_2]}) &\mapsto ((f^{\ast}g_i)_{i \in [n_1]},(f^{\ast}g_{ij})_{i < j \in [n_1]})~,
                \\
                f^{\ast}g_i \coloneqq  g_{f(i)}~, &\quad f^{\ast}g_{ij} \coloneqq  g_{f(i)f(j)}~,
            \end{aligned}
        \end{equation}
        (and similarly for $B\sfG_{\bullet}$), where $g_{f(i)f(j)} = \unit$ if $f(i) = f(j)$. 
        
        We note that there are morphisms of simplicial manifolds
        \begin{align}
            \begin{split}
                \sfG \xrightarrow{~\psi_{\bullet}~} (E\sfG)_{\bullet} \xrightarrow{~\chi_{\bullet}~}(B\sfG)_{\bullet}~,
            \end{split}
        \end{align}
        where $\psi_n(g) = ((g_i)_i,(g_{ij})_{i,j})$ with $g_i = g$ and $g_{ij} = \unit$, while $\chi_n((g_i)_i,(g_{ij})_{i,j}) = (g_{ij})_{i,j}$. The geometric realization of this sequence is the universal bundle of $\sfG$, as constructed for example in~\cite{Hatcher:2002bk}. Together with \ref{th:simpderham}, this model is used in~\cite{BOTT197643} to describe the cohomology of $B\sfG$ in terms of invariant polynomials on the Lie algebra $\mathfrak{g}$ of $\sfG$. Note that the pullback maps of $E\sfG_{\bullet}$ and $B\sfG_{\bullet}$, as well as $\psi_{\bullet}$ and $\chi_{\bullet}$, are less transparent under the identification $(B\sfG)_n \cong \sfG^n$ and $(E\sfG)_n \cong \sfG^{n+1}$.
    \end{example}
    
    \begin{example}[\cite{Li:2309.00901}]\label{ex:bltsimpman}
        Let $\sfT$ be an abelian Lie group and let $p\geq 1$. We define simplicial manifolds $(B^p\sfT)_{\bullet}$ and $(EB^{p-1}\sfT)_{\bullet}$ by
        \begin{subequations}
            \begin{equation}
                \begin{aligned}
                    (B^p\sfT)_n &\coloneqq  \Big\{ (t_{i_0,\ldots,i_p })_{i_0 < i_1 < \ldots< i_p \in [n]} \in \sfT^{\binom{n+1}{p+1}} 
                    \\
                    &\hspace{1cm}\:\Big|\:\forall i_0 < \ldots < i_{p+1} \in [n], \: \textstyle \sum_{j} (-1)^j t_{i_0,\ldots,\hat{i}_j,\ldots,i_{p+1}} = 0 \Big\} \cong \sfT^{\binom{n}{p}}~,
                \end{aligned}
            \end{equation}
            \begin{equation}\label{eq:defblt}
                \begin{aligned}
                    (EB^{p-1}\sfT)_n &\coloneqq  \Big\{ ((t_{i_0,\ldots,i_{p-1}})_{i_0 < i_1 < \ldots< i_{p-1} \in [n]},(t_{i_0,\ldots,i_{p}})_{i_0 < i_1 < \ldots < i_{p} \in [n]}) \in \sfT^{\binom{n+1}{p}} \times \sfT^{\binom{n+1}{p+1}} 
                    \\
                    &\hspace{1cm}\Big|\:\forall i_0 < \ldots < i_{p} \in [n], \: \textstyle \sum_{j} (-1)^j t_{i_0,\ldots,\hat{i}_j,\ldots,i_{p}} = t_{i_0\ldots i_p} \Big\} \cong \sfT^{\binom{n+1}{p}}~,
                \end{aligned}
            \end{equation}
        \end{subequations}
        where we are using additive notation on $\sfT$. The pullback map of $B^p\sfT_{\bullet}$ corresponding to a non-decreasing function $f:[n_1] \rightarrow [n_2]$ is given by
        \begin{align}
            \begin{split}
                (t_{i_0,\ldots,i_l})_{i_0 < i_1 <\ldots< i_l \in [n_2]} 
                &\mapsto (f^{\ast}t_{i_0,\ldots,i_l})_{i_0 < i_1 <\ldots< i_l \in [n_1]}~,
                \\
                f^{\ast}t_{i_0,\ldots,i_l} &\coloneqq  t_{f(i_0),\ldots,f(i_l)}
            \end{split}
        \end{align}
        (and similarly for $EB^{p-1}\sfT_{\bullet}$). We use the convention that a component with repeated indices is zero.
        
        There is a sequence of simplicial manifolds
        \begin{equation}
            B^{p-1}\sfT_{\bullet} \xrightarrow{~~~} EB^{p-1}\sfT_{\bullet} \xrightarrow{~~~} B^p\sfT_{\bullet}~,
        \end{equation}
        and its geometric realization is the universal bundle of a topological group with the same homotopy type as $B^{p-1}\sfT$.
        
        We define the \emph{Maurer--Cartan form} on $B^{p}\sfT_{\bullet}$ to be the differential form 
        \begin{equation}
            t_{i_0,\ldots,i_l}^{\ast}\theta^\sfT \in \Omega^1((B^p\sfT)_{p},\mathfrak{t})~,
        \end{equation}
        where $\mathfrak{t}$ is the Lie algebra of $T$ and $\theta^\sfT \in \Omega^1(\sfT,\mathfrak{t})$ is the Maurer--Cartan form on $\sfT$. Note that this form is $\rmd$-closed, while~\eqref{eq:defblt} implies it is $\delta$-closed; hence by \ref{th:simpderham}, it defines a class in $H^{p+1}_{\rm dR}(B^p\sfT_{\bullet},\mathfrak{t}) = H^{p+1}(B^p\sfT,\mathfrak{t})$. For example, if $\sfT = \mathbb T^d$ is the $d$-torus, then we recall that $H^{\bullet}(B^p\sfT,\mathbb R)$ is the free graded symmetric algebra on $d$ generators of degree $p+1$, and these generators can be represented in de~Rham cohomology by the $d$~components of the Maurer--Cartan form.
    \end{example}

    \section{Differential cohomology for principal \texorpdfstring{$\sfTDFTA_d$}{T2DF2A}-bundles}\label{app:adjustment}
    
    We begin with the BRST groupoid describing local connection forms and (higher) gauge transformations over a local patch $Y$. This is an action groupoid, describing composition of (higher) gauge transformations and their actions. The computations and derivations follow the steps presented in~\cite{Gagliardo:2025oio}, generalized to toroidal 3-groups and specialized to the case of $\sfTDFTA_d$. Some differences in the formulas arise from the redefinition of gauge parameters mentioned in \ref{ssec:toroidal-3-bundles}.
    
    We list only the transformation laws needed below. The remaining composition and whiskering maps can be derived as in~\cite{Gagliardo:2025oio}; as an illustration, we give the composition of 1-morphisms explicitly. A local connection is given by 
    \begin{subequations}
        \begin{equation}
            (A,\hat{A}) \in \Omega^1(Y,\mathbb R^d) \times \Omega^1(Y,\Lambda^2\mathbb R^{d,\ast})
            ~,~~~
            B \in \Omega^2(Y,\mathbb R^{d,\ast})
            ~,~~~
            C \in \Omega^3(Y,\mathbb R)~.
        \end{equation}
        Gauge transformations are parameterized by 
        \begin{equation}
            \begin{aligned}
                (g,v^0,\hat{v}^0)&\in \Omega^0(Y,\sfGL(d,\IZ) \times \mathbb R^d \times \Lambda^2\mathbb R^{d,\ast})~,
                \\
                \Lambda &\in \Omega^1(Y,\mathbb R^{d,\ast})~,
                \\
                \Sigma&\in \Omega^2(Y,\mathbb R)~,
            \end{aligned}
        \end{equation}
        and they act according to
        \begin{equation}
            \begin{aligned}
                \tilde A&=g^{-1}(A+\rmd v^0)~,
                \\
                \tilde{\hat{A}}&=g^{-1}(\hat A+\rmd \hat{v}^0)~,
                \\
                \tilde B&=g^{-1}(B+\rmd\Lambda + \langle A , \rmd\hat{v}^0 \rangle_0)~,
                \\
                \tilde C&= C+\rmd\Sigma + \langle \rmd v^0 , g \tilde B \rangle_1 + \langle A ,  \langle \rmd v^0 , \hat A\rangle_0 \rangle_1^\low~.
            \end{aligned}
        \end{equation}
        These 1-gauge transformations compose as follows:
        \begin{equation}
            (g_2,v^0_2,\hat v^0_2,\Lambda_2,\Sigma_2)\circ (g_1,v^0_1,\hat v^0_1,\Lambda_1,\Sigma_1)=(g_3,v^0_3,\hat v^0_3,\Lambda_3,\Sigma_3)
        \end{equation}
        with
        \begin{equation}
            \begin{aligned}
                (g_3,v^0_3,\hat v^0_3)&=(g_1g_2,v^0_1+g_1v^0_2,\hat v^0_1+g_1\hat v^0_2)~,
                \\
                \Lambda_3&=\Lambda_1+g_1\Lambda_2-\langle \rmd v^0_1,g_1\hat v^0_2\rangle_0~,
                \\
                \Sigma_3&=\Sigma_1+\Sigma_2+\langle \rmd v^0_1,g_1\Lambda_2\rangle_1
                -\langle \rmd v^0_1,\langle \rmd v^0_1,g_1\hat v^0_2\rangle_0\rangle_1^\low~.
            \end{aligned}
        \end{equation}
        
        Next, 2-gauge transformations are parameterized by 
        \begin{equation}
            \begin{aligned}
                (\lambda^0,\hat{\lambda}^0,v^1)&\in \Omega^0(Y,\IZ^d \times \Lambda^2\IZ^{d,\ast} \times \mathbb R^{d,\ast})~,
                \\
                \Xi &\in \Omega^1(Y,\mathbb R)~,
            \end{aligned}
        \end{equation}
        and they act according to
        \begin{equation}
            \begin{aligned}
                (\tilde g,\tilde v^0,\tilde{\hat{v}}^0)&=(g,\lambda^0+v^0,\hat \lambda^0+\hat v^0)~,
                \\
                \tilde \Lambda&=\Lambda+\rmd v^1~,
                \\
                \tilde \Sigma&=\Sigma+\rmd \Xi~.
            \end{aligned}
        \end{equation}
        Finally, the 3-gauge transformations are parameterized by 
        \begin{equation}
            (\lambda^1,[v^2])\in \Omega^0(Y,\IZ^{d,\ast} \times \mathbb R/\IZ)
        \end{equation}
        and act according to
        \begin{equation}
            \begin{aligned}
                (\tilde \lambda^0,\tilde{\hat{\lambda}}^0,\tilde v^1)&=(\lambda^0,\hat \lambda^0,v^1+\lambda^1)~,
                \\
                \tilde \Xi&=\Xi-\rmd v^2 ~.
            \end{aligned}
        \end{equation}
    \end{subequations}
    One can check that these actions and the implied compositions have the required behavior. 
    
    We now pass from local connection data to global differential cocycles by descent. That is, we glue together local connection forms on $Y$ using gauge transformations over $Y^{[2]}$. Over $Y^{[3]}$, we then use 2-gauge transformations to consistently glue together the 1-gauge transformations, etc. In this manner, we recover the Čech data for toroidal principal 3-bundles as presented in \ref{ssec:toroidal-3-bundles} for the toroidal 3-group $\sfTDFTA_d$. 
    
    Explicitly, the differential cocycle data for a principal $\sfTDFTA_d$-bundle over a manifold $M$ subordinate to $\sigma:Y\rightarrow M$ consists of the Čech data
    \begin{subequations}
        \begin{align}
            \begin{split}\label{eq:cocycledatat2daff}
                (g_{ij},v^0_{ij},\hat{v}^0_{ij})&\in\Omega^0(Y^{[2]},\sfGL(d,\IZ) \times \mathbb R^d \times \Lambda^2\mathbb R^{d,\ast})~,
                \\
                (\lambda^0_{ijk},\hat{\lambda}^0_{ijk},v^1_{ijk})&\in\Omega^0(Y^{[3]},\IZ^d \times \Lambda^2\IZ^{d,\ast} \times \mathbb R^{d,\ast})~,
                \\
                (\lambda^1_{ijkl},[v^2_{ijkl}])&\in \Omega^0(Y^{[4]},\IZ^{d,\ast} \times \mathbb R/\IZ)
            \end{split}
        \end{align}
        and the differential refinement
        \begin{equation}\label{eq:cocycledatacont2daff}
            \begin{aligned}
                (A_i,\hat{A}_i) &\in \Omega^1(Y,\mathbb R^d) \times \Omega^1(Y,\Lambda^2\mathbb R^{d,\ast})~,
                \\
                B_i &\in \Omega^2(Y,\mathbb R^{d,\ast})
                ~,~~~
                &\Lambda_{ij} &\in \Omega^1(Y^{[2]},\mathbb R^{d,\ast})~,
                \\
                C_i &\in \Omega^3(Y,\mathbb R)
                ~,~~~
                &\Sigma_{ij} &\in \Omega^2(Y^{[2]},\mathbb R)
                ~,~~~
                &\Xi_{ijk} &\in \Omega^1(Y^{[3]},\mathbb R)~.
            \end{aligned}
        \end{equation}
        These satisfy the equations
        \begin{equation}
            \begin{aligned}
                g_{ij}g_{jk} &= g_{ik}~,
                \\
                v^0_{ij} - v^0_{ik} + g_{ij}v^0_{jk} &= -\lambda^0_{ijk}~,
                \\
                \hat{v}^0_{ij} - \hat{v}^0_{ik} + g_{ij}\hat{v}^0_{jk} &= -\hat{\lambda}^0_{ijk}~,
                \\
                \lambda^0_{ijk}-\lambda^0_{ijl}+\lambda^0_{ikl}-g_{ij}\lambda^0_{jkl}&=0~,
                \\
                \hat\lambda^0_{ijk}-\hat\lambda^0_{ijl}+\hat\lambda^0_{ikl}-g_{ij}\hat\lambda^0_{jkl}&=0~,
                \\
                v^1_{ijk} - v^1_{ijl} + v^1_{ikl} - g_{ij}v^1_{jkl} &= \langle v^0_{ij}, g_{ij}\hat{\lambda}^{0}_{jkl}  \rangle_0 -\lambda^1_{ijkl}~,
                \\
                [v^2_{ijkl} - v^2_{ijkm} + v^2_{ijlm} - v^2_{iklm} + v^2_{jklm}] &
                \\
                =[-\langle v^0_{ij}, g_{ij}\lambda^1_{jklm} \rangle_1 + \langle \lambda^0_{ijk}, &\langle v^0_{ik}, g_{ik}\hat{\lambda}^0_{klm} \rangle_0 \rangle_1^\low + \langle v^0_{ij}, \langle g_{ij} v^0_{jk}, g_{ik}\hat{\lambda}^0_{klm} \rangle_0 \rangle_1^\low]
            \end{aligned}
        \end{equation}
        and 
        \begin{equation}
            \begin{aligned}
                A_j&= g_{ij}^{-1}(A_i+\rmd v^0_{ij})~,
                \\
                \hat A_j&= g_{ij}^{-1}(\hat A_i+\rmd \hat v^0_{ij})~,
                \\
                B_j &= g_{ij}^{-1}(B_i +\rmd\Lambda_{ij} + \langle A_i , \rmd\hat{v}^0_{ij} \rangle_0)~,
                \\
                C_{j} &=C_i + \rmd\Sigma_{ij} + \langle \rmd v^0_{ij} , g_{ij}B_j \rangle_1 + \langle A_i ,  \langle \rmd v^0_{ij}, \hat A_i \rangle_0 \rangle_1^\low~,
                \\
                \Lambda_{ij} - \Lambda_{ik} + g_{ij}\Lambda_{jk} &= \rmd v^1_{ijk} + \langle \rmd v^0_{ij}, g_{ij}\hat{v}^0_{jk} \rangle_0~,
                \\
                \Sigma_{ij} - \Sigma_{ik} + \Sigma_{jk} &= \rmd\Xi_{ijk} - \langle \rmd v^0_{ij} , g_{ij}\Lambda_{jk} \rangle_1 + \langle \rmd v^0_{ij} , \langle \rmd v^0_{ij}, g_{ij}\hat{v}^0_{jk} \rangle_0 \rangle_1^\low~,
                \\
                \Xi_{ijk} - \Xi_{ijl} + \Xi_{ikl} - \Xi_{jkl} &= \rmd v^2_{ijkl} - \langle \rmd v^0_{ij}, g_{ij}v^1_{jkl} \rangle_1 + \langle v^0_{ij}, \langle \rmd v^0_{ij}, g_{ij}\hat{\lambda}^0_{jkl} \rangle_0 \rangle_1^\low~.
            \end{aligned}
        \end{equation}
        
        Write $\mathbb R^{d,\ltimes}$, $\Lambda^2 (\mathbb R^{d,\ast})^{\ltimes}$, and $(\mathbb R^{d,\ast})^{\ltimes}$ for the flat vector bundles associated to the underlying $\sfGL(d,\IZ)$-bundle and the canonical representations on the corresponding vector spaces. The curvature of the connection on the $\sfTDFTA_d$-bundle then decomposes into the forms
        \begin{equation}
            \begin{aligned}
                (F,\hat F) &\coloneqq  (\rmd A_i,\rmd \hat A_i)~~~\in~~\Omega^2(M,\mathbb R^d\times \Lambda^2\mathbb R^{d,\ast})~,
                \\
                H_0 &\coloneqq  \rmd B_i - \langle F, \hat{A}_i \rangle_0~~~\in~~\Omega^3(M,\mathbb R^{d,\ast})~,
                \\
                G_0 &\coloneqq  \rmd C_i + \langle A_i , H_0 \rangle_1+\langle A_i,\langle A_i,\hat F\rangle_0\rangle_1^\low~~~\in~~\Omega^4(M,\mathbb R)~,
            \end{aligned}
        \end{equation}
        which satisfy the Bianchi identities
        \begin{align}
            \begin{split}
                (\rmd F,\rmd\hat{F})&= (0,0)~,
                \\
                \rmd H_0 + \langle F , \hat{F} \rangle_0 &= 0~,
                \\
                \rmd G_0 - \langle F , H_0 \rangle_1 &= 0~.
            \end{split}
        \end{align}
    \end{subequations}
    
    Bundle isomorphisms, or 1-gauge transformations, are parameterized by elements\footnote{To be concise, we switch to a condensed notation $\hat g_i=(g_i,v^0_i,\hat v^0_i)$, etc.}
    \begin{subequations}
        \begin{equation}
            \begin{aligned}
                \hat g_{i}&\in \Omega^0(Y,\sfGL(d,\IZ) \times \mathbb R^d \times \Lambda^2\mathbb R^{d,\ast})~,
                \\
                \hat h_{ij}&\in \Omega^0(Y^{[2]},\IZ^d \times \Lambda^2\IZ^{d,\ast} \times \mathbb R^{d,\ast})
                ~,~~~
                &\Lambda_i &\in \Omega^1(Y,\mathbb R^{d,\ast})~,
                \\
                \hat \ell_{ijk}&\in \Omega^0(Y^{[3]},\IZ^{d,\ast} \times \mathbb R/\IZ)
                ~,~~~
                &\Xi_{ij}&\in \Omega^1(Y^{[2]},\mathbb R)
                ~,~~~
                &\Sigma_{i}&\in \Omega^2(Y,\mathbb R)~,
            \end{aligned}
        \end{equation}
        and they act according to
        \begin{equation}
            \begin{aligned}
                \tilde{\hat{g}}_{ij}&=\hat g_i^{-1}\sft(\hat h_{ij})\hat g_{ij}\hat g_j~,
                \\
                \tilde{\hat{h}}_{ijk}&=\hat g_i^{-1}\acton (\hat h_{ik}+\hat h_{ijk}+\sft(\hat \ell_{ijk})-\hat g_{ij}\acton \hat h_{jk}-\hat h_{ij})~,
            \end{aligned}
        \end{equation}
        \begin{equation}
            \begin{aligned}
                \tilde{\hat{\ell}}_{ijkl}&=\hat g_{i}^{-1}\acton \Big((\hat h_{ij}\,\hat g_{ij}\acton \hat h_{jk})\acton \Big(\hat \ell^{-1}_{ijk}+\hat g_{ij}\acton \hat \ell_{jkl}
                \\
                &\hspace{2cm}\times((\hat g_{ij}\hat g_{jk})\acton \hat h_{kl})\acton \Big((\hat h_{ijk}^{-1}\acton \hat\ell^{-1}_{ikl})+\hat \ell_{ijkl}+\left(\hat g_{ij}\acton \hat h_{jkl}^{-1}\right)\acton \hat \ell_{ijl}\Big)\Big)
                \\
                &\hspace{2cm}-\left\{\hat h_{ij},\hat g_{ij}\acton \hat h_{jl}\right\}-\left\{\hat h_{ij},\hat g_{ij}\acton \hat h_{jkl}\right\}+\left\{\hat h_{ij},(\hat g_{ij}\hat g_{jk})\acton \hat h_{kl}\right\}
                \\
                &\hspace{2cm}\left.+\left\{\hat h_{ij},\hat g_{ij}\acton \hat h_{jk}\right\}\right)-\left\{\hat h_{ijk},(\hat g_{ij}\hat g_{jk})\acton \hat h_{kl}\right\}\Big)+R_{ijkl}~,
            \end{aligned}
        \end{equation}
        and
        \begin{equation}
            \begin{aligned}
                \tilde A_i&=g^{-1}_i(A_i+\rmd v^0_i)~,
                \\
                \tilde{\hat{A}}_i&=g^{-1}_i(\hat A_i+\rmd \hat{v}_i^0)~,
                \\
                \tilde B_i&=g^{-1}_i(B_i+\rmd\Lambda_i + \langle A_i , \rmd\hat{v}^0_i \rangle_0)~,
                \\
                \tilde C_i&= C_i+\rmd\Sigma_i + \langle \rmd v^0_i , g_i \tilde B_i \rangle_1 + \langle A_i ,  \langle \rmd v^0_i,\hat A_i  \rangle_0 \rangle_1^\low~,
                \\
                \tilde \Lambda_{ij}&=g_i^{-1}(g_{ij}\Lambda_j+\Lambda_{ij}-\Lambda_i)+\rmd v^1_{ij}+\langle g_i^{-1}\rmd v^0_{i},\hat v^0_{ij}\rangle_0-\tilde g_{ij}\langle (g_{ij}g_j)^{-1}\rmd v^0_{ij},g_j^{-1}\hat v^0_j\rangle_0~,
                \\
                \tilde \Sigma_{ij}&=\Sigma_j+\Sigma_{ij}-\Sigma_i+\rmd \Xi_{ij}-
                \langle \rmd v^0_{ij} , g_i\Lambda_{ij} \rangle_1 + \langle \rmd v^0_{ij}, \langle \rmd v^0_{ij}, g_i\hat{v}^0_{ij} \rangle_0 \rangle_1^\low~,
                \\
                \tilde \Xi_{ijk}&=\Xi_{ik}+\Xi_{ijk}-\Xi_{ij}-\Xi_{jk}-\rmd v^2_{ijk} +\langle \rmd v^0_{ij}, g_{ij} v^1_{jk} \rangle_1 -\langle v^0_{ij}, \langle \rmd v^0_{ij}, g_{ij}\hat{\lambda}^0_{jk} \rangle_0 \rangle_1^\low
            \end{aligned}
        \end{equation}
        for all $ij\in Y^{[2]}$, $ijk\in Y^{[3]}$, and $ijkl\in Y^{[4]}$, where 
        \begin{equation}
            \begin{aligned}
                R_{ijkl}&=\alpha(\sft(\hat h_{ij}),\hat g_{ij},\hat g_j\acton\tilde{\hat{h}}_{jkl})+\alpha(\hat g_i^{-1},\sft(\hat h_{ij}),\hat g_j\acton \tilde{\hat{h}}_{jkl})-\alpha(\tilde{\hat {g}}_{ij},\hat g_j^{-1},\hat g_j\acton \tilde{\hat{h}}_{jkl})
                \\
                &\hspace{1cm}+\alpha(\sft(\hat h_{ijk}),\hat g_{ij}\hat g_{jk},\hat h_{kl})+\alpha(\hat g_{ij},\hat g_{jk},\hat h_{kl})~.
            \end{aligned}
        \end{equation}
    \end{subequations}
    
    Bundle 2-isomorphisms, or 2-gauge transformations, are parameterized by elements 
    \begin{subequations}
        \begin{equation}
            \begin{aligned}
                \hat h_{i}&\in \Omega^0(Y,\IZ^d \times \Lambda^2\IZ^{d,\ast} \times \mathbb R^{d,\ast})~,
                \\
                \hat \ell_{ij}&\in \Omega^0(Y^{[2]},\IZ^{d,\ast} \times \mathbb R/\IZ)
                ~,~~~
                &\Xi_{i}&\in \Omega^1(Y,\mathbb R)~,
            \end{aligned}
        \end{equation}
        and they act according to 
        \begin{equation}
            \begin{aligned}
                \tilde{\hat{g}}_{i}&=\sft(\hat h_{i})\hat g_i~,
                \\
                \tilde{\hat{h}}_{ij}&=\hat h_i+\hat h_{ij}+\sft(\hat \ell_{ij})-\hat g_{ij}\acton \hat h_j~,
            \end{aligned}
        \end{equation}
        \begin{equation}
            \begin{aligned}
                \tilde{\hat{\ell}}_{ijk}&=((\hat g_{ij}\hat g_{jk})\acton \hat h_c)\acton \Big( \hat h_{ijk}^{-1}\acton
                \hat \ell_{ik}^{-1}+\hat \ell_{ijk}+(\hat g_{ij}\acton \hat h_{jk}^{-1})\acton\hat \ell_{ij}+\hat g_{ij}\acton \hat \ell_{jk}\,\Big)            
                \\
                &\hspace{0.5cm}-\{\hat h_{ijk},(\hat g_{ij}\hat g_{jk})\acton \hat h_k\}-\{\hat h_i,\hat h_{ik}\}
                -\{\hat h_i,\hat h_{ijk}\}+\{\hat h_i,\hat g_{ij}\acton\hat h_{jk}\}
                +\{\hat h_i,\hat h_{ij}\}+R_{ijk}~,
            \end{aligned}
        \end{equation}
        and
        \begin{equation}
            \begin{aligned}
                \tilde \Lambda_i&=\Lambda_i+\rmd v^1_i+\langle \rmd v^0_i,g_i\hat v^0_i\rangle_0~,
                \\
                \tilde \Sigma_i&=\Sigma_i+\rmd \Xi_i-
                \langle \rmd v^0_i , g_i\Lambda_i \rangle_1 + \langle \rmd v^0_i, \langle \rmd v^0_i, g_i\hat{v}^0_i \rangle_0 \rangle_1^\low~,
                \\
                \tilde \Xi_{ij}&=\Xi_{ij}+\Xi_i-\Xi_j-\rmd v^2_{ij} +\langle \rmd v^0_{ij}, g_{ij} v^1_j \rangle_1 -\langle v^0_{ij}, \langle \rmd v^0_{ij}, g_{ij}\hat{\lambda}^0_j \rangle_0 \rangle_1^\low
            \end{aligned}
        \end{equation}
        for all $i\in Y$, $ij\in Y^{[2]}$, and $ijk\in Y^{[3]}$, where 
        \begin{equation}
            R_{ijk}=\alpha(\hat g_i^{-1},\sft(-\hat h_i),-\hat g_i\acton \tilde{\hat{h}}_{ijk})+\alpha(\sft(\hat h_{ijk}),\hat g_{ij}\hat g_{jk},\hat h_k)+\alpha(\hat g_{ij},\hat g_{jk},\hat h_k)~.
        \end{equation}
    \end{subequations}
    
    Finally, bundle 3-isomorphisms or 3-gauge transformations are parameterized by elements 
    \begin{subequations}
        \begin{align}
            \begin{split}
                \hat \ell_{i}&\in \Omega^0(Y,\IZ^{d,\ast} \times \mathbb R/\IZ)~,
            \end{split}
        \end{align}
        and act according to 
        \begin{equation}
            \begin{aligned}
                \tilde{\hat{h}}_{i}&=\sft(\hat\ell_{i})+\hat h_i~,
                \\
                \tilde{\hat{\ell}}_{ij}&=-(-\hat h_i-\hat h_{ij})\acton \hat \ell_i+\hat \ell_{ij}+(-\hat g_{ij}\acton \hat h_j)\acton (\hat g_{ij}\acton \hat \ell_j)~,
                \\
                \tilde \Xi_i&=\Xi_i-\rmd v^2_i
            \end{aligned}
        \end{equation}
        for all $i\in Y$ and $ij\in Y^{[2]}$.
    \end{subequations}
    
    \section*{Acknowledgements} 
    
    GG has been supported by an STFC studentship ST/Y509206/1. Part of this work
    was carried out while RTD was affiliated with the Instituto de Ciencias Matemáticas and was supported by grant SEV-2015-0554-19-2 from Agencia Estatal de
    Investigación de España
    
    \section*{Declarations}
    
		\textbf{Conflict of interest.}
		The authors have no relevant financial or non-financial interests to disclose.\\[5pt]
		\textbf{Data statement.}
		No additional research data beyond the data presented and cited in this work are needed to validate the research findings in this work.\\[5pt]
		\textbf{Licence statement.}
		For the purpose of open access, the authors have applied a Creative Commons Attribution (CC-BY) license to any author-accepted manuscript version arising.
	
    \bibliographystyle{latexeu2}
    
    \bibliography{bigone}
    
\end{document}